\def\A{\mathcal{A}}
\def\B{\mathcal{B}}
\def\C{\mathcal{C}}
\def\D{\mathcal{D}}
\def\G{\mathcal{G}}
\def\H{\mathcal{H}}
\def\L{\mathcal{L}}
\def\M{\mathcal{M}}
\def\P{\mathcal{P}}
\def\Q{\mathcal{Q}}
\def\R{\mathcal{R}}
\def\T{\mathcal{T}}
\def\X{\mathcal{X}}
\def\Y{\mathcal{Y}}
\def\g{\textbf{g}}
\theoremstyle{plain}
\newtheorem{theorem}{Theorem}[section]
\newtheorem{lemma}[theorem]{Lemma}
\newtheorem{prop}[theorem]{Proposition}
\theoremstyle{definition}
\newtheorem{definition}[theorem]{Definition}
\newtheorem{claim}[theorem]{Claim}
\newtheorem{remark}[theorem]{Remark}
\newtheorem{fact}[theorem]{Fact}
\newtheorem*{theorem*}{Theorem}
\newcommand {\minusspace} {\: \! \!}
\newcommand {\Fn} [2] {\ensuremath{ #1 \minusspace \Br{ #2 } }}
\newcommand{\reals}{{\mathbb R}}
\newcommand{\complex}{{\mathbb C}}
\newcommand {\set} [1] {\ensuremath{ \left\lbrace #1 \right\rbrace }}
\newcommand {\br} [1] {\ensuremath{ \left( #1 \right) }}
\newcommand {\Br} [1] {\ensuremath{ \left[ #1 \right] }}
\newcommand {\norm} [1] {\ensuremath{ \left\| #1 \right\| }}
\newcommand {\normsub} [2] {\ensuremath{ \norm{#1}_{#2} }}
\newcommand {\onenorm} [1] {\normsub{#1}{1}}
\newcommand {\twonorm} [1] {\normsub{#1}{2}}
\newcommand {\abs} [1] {\ensuremath{ \left| #1 \right| }}
\newcommand {\bra} [1] {\ensuremath{ \left\langle #1 \right| }}
\newcommand {\ket} [1] {\ensuremath{ \left| #1 \right\rangle }}
\newcommand {\ketbratwo} [2] {\ensuremath{ \left| #1 \middle\rangle \middle\langle #2 \right| }}
\newcommand {\ketbra} [1] {\ketbratwo{#1}{#1}}
\newcommand {\innerproduct} [2] {\ensuremath{\left \langle #1 , #2 \right \rangle}}
\newcommand{\nnorm}[1]{{\left\vert\kern-0.25ex\left\vert\kern-0.25ex\left\vert #1
    \right\vert\kern-0.25ex\right\vert\kern-0.25ex\right\vert}}
\newcommand {\defeq} {\ensuremath{ \stackrel{\mathrm{def}}{=} }}
\newcommand {\prob} [1] {\Fn{\Pr}{#1}}
\DeclareMathOperator*{\bigE}{\mathbb{E}}
\newcommand {\expec} [2] {\Fn{\bigE_{\substack{#1}}}{#2}}
\newcommand {\var} [1] {\Fn{\mathrm{Var}}{#1}}
\newcommand {\influence} {\ensuremath{ \mathrm{Inf} }}
\newcommand{\supp}[1]{\mathrm{supp}\br{#1}}
\newcommand {\Tr} {\ensuremath{ \mathrm{Tr} }}
\newcommand {\id} {\ensuremath{\mathds{1}}}
\newcommand{\conjugate}[1]{\overline{#1}}
\newcommand{\anticommutator}[2]{\set{#1,#2}}
\newcommand {\email} [1] {\href{mailto:#1}{\texttt{#1}}}
\newcommand {\mytitle} {A doubly exponential upper bound on noisy {EPR} states for binary games}
\newcommand {\Penghui} {Penghui Yao}
\newcommand{\SKL}{State Key Laboratory for Novel Software Technology}
\newcommand{\NJU}{ Nanjing University}
\newcommand {\authorblock} [3] {
	\begin{minipage}[t]{0.3\linewidth}
		\centering
		{#1}\\[0.8ex]
		{\footnotesize {#2}\\[-0.7ex]
			\email{#3}}
	\end{minipage}\vspace{1ex}
}
\begin{document}
	
	\begin{titlepage}
		\title{\textbf{\mytitle}\\[2ex]}
		
		\author{
  \authorblock{\Penghui}{\SKL, \NJU}{pyao@nju.edu.cn}
		}

		\clearpage\maketitle

\thispagestyle{empty}

\abstract{
This paper initiates the study of a class of entangled games,  mono-state games, denoted by $(G,\psi)$, where $G$ is a two-player one-round game and $\psi$ is a bipartite state independent of the game $G$. In the mono-state game $\br{G,\psi}$, the players are only allowed to share arbitrary copies of $\psi$. This paper provides a doubly exponential upper bound on the copies of $\psi$ for the players to approximate the value of the game to an arbitrarily small constant precision for any mono-state binary game $(G,\psi)$, if $\psi$ is a noisy EPR state, which is a two-qubit state with completely mixed states as marginals and maximal correlation less than $1$. In particular, it includes $(1-\epsilon)\ketbra{\Psi}+\epsilon\frac{\id_2}{2}\otimes\frac{\id_2}{2}$, an EPR state with an arbitrary depolarizing noise $\epsilon>0$. The structure of the proofs is built on the recent framework about the decidability of the non-interactive simulations of joint distributions~\cite{7782969,doi:10.1137/1.9781611975031.174,Ghazi:2018:DRP:3235586.3235614} with significant extension, which is completely different from all previous optimization-based approaches~\cite{Cleve:2004:CLN:1009378.1009560,doi:10.1137/090772885,Navascu_s_2008} or "Tsirelson's problem"-based approaches~\cite{doi:10.1142/S0129055X12500122,slofra:2016,slofstra_2019}.
  The paper develops a series of new techniques about the Fourier analysis on matrix spaces and proves a quantum invariance principle and a  hypercontractive inequality for random operators. This novel approach provides a new angle to study the decidability of the complexity class MIP$^*$, a longstanding open problem in quantum complexity theory.}

\end{titlepage}

		\thispagestyle{empty}
		
\tableofcontents

\newpage

\setcounter{page}{1}
\section{Introduction}

The concept of {\em interactive proof systems} is nowadays fundamental to the theory of computing. It was first proposed by Babai~\cite{Babai:1985:TGT:22145.22192} and Goldwasser, Micali and Rackoff~\cite{Goldwasser:1985:KCI:22145.22178} and later extended to the multi-prover setting in~\cite{Ben-Or:1988:MIP:62212.62223}. The study of the interactive proof systems from different lens is at the heart of theory of computing, including the elegant characterizations $\mathrm{IP}=\mathrm{PSPACE}$~\cite{Shamir:1992:IP:146585.146609,Shen:1992:ISS:146585.146613} for single-prover interactive proof systems and  $\mathrm{MIP}=\mathrm{NEXP}$~\cite{Babai1991} for multiprover interactive proof systems. The later one has further led to the celebrated PCP theorem~\cite{Arora:1998:PVH:278298.278306, Arora:1998:PCP:273865.273901}.

The study on the power of interactive proof systems in the context of quantum computing also has a rich history. The model of single-prover quantum interactive proof systems was first studied by Watrous~\cite{Watrous:2003:PCQ:763677.763679}, followed by a series of work~\cite{Kitaev:2000:PAE:335305.335387,Marriott:2005:QAG:1391802.1391807,Gutoski:2007:TGT:1250790.1250873,Jain:2009:TQI:1747597.1748069}, finally led to the seminar result $\mathrm{QIP}=\mathrm{PSPACE}$~\cite{Jain:2011:QP:2049697.2049704}. Quantum multiprover interactive proof systems are more complicated. A key assumption on the classical multiprover interactive proof systems is that the provers are not allowed to communicate, which means that their only distributed resource is the shared randomness. In quantum multiprover interactive proof systems, this assumption is relaxed and allow the provers to share {\em entanglement} with the corresponding complexity class $\mathrm{MIP}^*$~\cite{Cleve:2004:CLN:1009378.1009560}. Surprisingly, understanding the power of $\mathrm{MIP}^*$ turns out to be extremely difficult. A trivial lower bound on $\mathrm{MIP}^*$ is $\mathrm{IP}$, or equivalently $\mathrm{PSPACE}$, which can be easily seen by ignoring all but one provers. By extending the techniques in~\cite{Babai1991} to the quantum setting, Ito and Vidick proved the containment of $\mathrm{NEXP}$ in $\mathrm{MIP}^*$~\cite{Ito:2012:MIP:2417500.2417883}. If the gap between the completeness and the soundness is exponentially small, then the lower bound can be improved to $\mathrm{QMA}_{\mathrm{EXP}}$, a quantum computational complexity class analog to $\mathrm{NEXP}$, and further to $\mathrm{NEEXP}$, the class of nondeterministic double-exponential time by Ji~\cite{Ji:2016:CVQ:2897518.2897634,Ji:2017:CQM:3055399.3055441}. Surprisingly, in a very recent work~\cite{NWright:2019}, Natarajan and Wright proved that $\mathrm{NEEXP}\subseteq\mathrm{MIP}^*$. Namely, the class still contains $\mathrm{NEEXP}$ even if the gap between the completeness and the soundness is constant. In contrast, little is known about the upper bound on $\mathrm{MIP}^*$. In his breakthrough results, Slofstra proved that it is {\em undecidable} to determine whether a multiprover interactive proof system has an entangled strategy that is accepted with probability $1$~\cite{slofra:2016,slofstra_2019}. His proof was later simplified by Dykema, Paulsen, and Prakash in~\cite{Dykema2019} and Fitzsimons, Ji, Vidick and Yuen in~\cite{FJVYuen:2019}.
	
This paper concerns {\em two-player one-round games}, a core model precisely capturing the power of multiprover interactive proof systems. A two-player one-round game $G=\br{\X,\Y,\A,\B,\mu,V}$, where $\X,\Y,\A,\B$ are finite sets, $\mu$ is a distribution over $\X\times\Y$ and $V:\X\times\Y\times\A\times\B\rightarrow\set{0,1}$ is a predicate and all of these are public, is run by three parties: a  "referee" and two non-communicating players. The referee samples a pair of questions $\br{x,y}$ according to $\mu$, and sends $x$ and $y$ to the two players, separately. The two players have to provide an answer each to the referee from $\A$ and $\B$, respectively, say $\br{a,b}$. The referee accepts the answers he receives if and only if $V\br{x,y,a,b}=1$. The only restriction to the players' strategies is that the players are not allowed to exchange any information once the game has started. In the classical setting, the value of the game $\omega\br{G}$, the highest probability that the referee accept the game, is
\[\omega\br{G}=\max_{h_A:\X\rightarrow\A\atop h_B:\Y\rightarrow\B}\sum_{xy}\mu\br{x,y}V(x,y,h_A(x),h_B(y)).\]
It is $\mathrm{NP}$-hard to approximate $\omega(G)$ within a multiplicative constant thanks to the PCP theorem~\cite{Arora:1998:PVH:278298.278306, Arora:1998:PCP:273865.273901}. The {\em entangled games}, which are same as the classical games except that the players are allowed to share arbitrary entangled states before they receive the questions, were first introduced by Cleve, Hoyer, Toner and Watrous~\cite{Cleve:2004:CLN:1009378.1009560} with the {\em entangled value} of a game, which is the highest probability that the referee accepts in a game when the players share entanglement, denoted by $\omega^*\br{G}$. It can be expressed as
\begin{equation}\label{eqn:omegastarG}
  \omega^*\br{G}=\lim_{n\rightarrow\infty}\max_{\psi_{AB}\in\D_n\atop \set{P^x_a}_{x,a},\set{Q^y_b}_{y,b}}\sum_{xy}\mu\br{x,y}\sum_{ab}V(x,y,a,b)\Tr\br{P^x_a\otimes Q^y_b}\psi_{AB},
\end{equation}
where $\D_n$ is the set of $n$-dimensional density operators, $\set{P^x_a}_a$ and $\set{Q^y_b}_b$ are POVM for any $x
\in\X,y\in\Y$, respectively. Namely, $\sum_{a\in\A}P^x_a=\id$, $\sum_{b\in\B}Q^y_b=\id$, $P^x_a\geq 0$ and  $Q^y_b\geq 0$.

In~\cite{Cleve:2004:CLN:1009378.1009560} Cleve et.al. discovered that the model of entangled games gave a re-interpretation of {Bell's inequalities}~\cite{PhysicsPhysiqueFizika.1.195} and an equivalent formulation CHSH games~\cite{PhysRevLett.23.880}, a central role in quantum mechanics from all aspects. The CHSH game is a simple two-player one-round game and the violation of Bell's inequalities by quantum mechanics implies that the classical value of CHSH game is strictly smaller than its entangled value. A large body of the subsequent work has been devoted to boost the gap between $\omega(G)$ and $\omega^*(G)$ and now we know of games of which $\omega^*(G)=1$ while $\omega^*(G)$ can be arbitrary small~\cite{doi:10.1137/S0097539795280895,aravind:2002}. However,the complexity of computing $\omega^*(G)$ is much more involved same as $\mathrm{MIP}^*$. It was shown in~\cite{Kempe:2008:EGH:1470582.1470594,Ito:2009:OTO:1602931.1603187} that approximating $\omega^*(G)$ to a inverse-polynomial precision is $\mathrm{NP}$-hard. Later, Vidick proved that $\omega^*(G)$ of three players is $\mathrm{NP}$-hard to approximate to a constant factor~\cite{doi:10.1137/140956622}. Recently, Ji proved that it is $\mathrm{QMA}_{\mathrm{EXP}}$-hard to approximate $\omega^*(G)$ of multiplayer games to a inverse-exponential precision~\cite{Ji:2016:CVQ:2897518.2897634}, which is further improved to be $\mathrm{NEXP}$-hard~\cite{Ji:2017:CQM:3055399.3055441,Ito:2012:MIP:2417500.2417883}. Very recently,  Natarajan and Vidick have proved that it is QMA-hard to approximate $\omega^*(G)$ to a constant precision under a randomized reduction~\cite{NVidick:2018}. Similar to the complexity class $\mathrm{MIP}^*$, the progress on the upper bound on the complexity of $\omega^*(G)$ is much less. For a few known classes of games, computing $\omega^*(G)$ is easier than computing $\omega(G)$. Cleve et.al. in~\cite{Cleve:2004:CLN:1009378.1009560} gave a polynomial-time algorithm to exactly compute $\omega^*(G)$ of XOR games $G$ building on the work of Tsirelson~\cite{Cirel'son1980}. Kempe, Regev and Toner later present a polynomial-time algorithm for unique games with a factor $6$ approximation to $1-\omega^*(G)$~\cite{doi:10.1137/090772885}. Interestingly, both of the two classes of games are believed to be $\mathrm{NP}$-hard under certain complexity assumptions~\cite{Hastad:2001:OIR:502090.502098,Khot:2002:PUG:509907.510017}. To the best of my knowledge, all the algorithms that compute $\omega^*(G)$ of certain class of games, including those mentioned above, are based on semidefinite programs. In particular, a hierarchy of semidefinite programs was proposed in~\cite{Navascu_s_2008}, whose optimal values converge to $\omega^*(G)$, while the speed of the convergence is unknown. On the other hand, Slofstra's results~\cite{slofra:2016,slofstra_2019} imply that determining whether $\omega^*(G)=1$ is undecidable. Whether approximating $\omega^*(G)$ is decidable is still widely open. The main difficulty in computing $\omega^*(G)$ is that there is no upper bound on the dimension of the preshared entangled states, because if we knew an upper bound, we could approximate the optimal value by using the $\epsilon$-net over all possible strategies and then brute force search. On the other hand, It is known that a positive answer to a so-called "Tsirelson's problem" (see e.g.~\cite{doi:10.1142/S0129055X12500122}) implies the existence of an algorithm approximating $\omega^*(G)$ of any entangled game, while Tsirelson's problem is equivalent to Conne's Embedding Conjecture~\cite{10.2307/1971057}, a longstanding open problem in functional analysis~\cite{doi:10.1063/1.3514538,Ozawa2013}.

\subsection*{Our contribution}

This paper initiates the study of {\em mono-state games}, a new class of entangled games denoted by $\br{G,\psi}$, where $G$ is a two-player one-round game and $\psi$ is a bipartite state (possibly mixed) independent of the game $G$. In the mono-state game $(G,\psi)$, the players are only allowed to share arbitrary copies the state $\psi$. The value of the game, denoted by $\omega^*\br{G,\psi}$, can be expressed as
\begin{equation}\label{eqn:omegastarGpsi}
  \omega^*\br{G,\psi}=\lim_{n\rightarrow\infty}\max_{ \set{P^x_a}_{x,a},\set{Q^y_b}_{y,b}}\sum_{xy}\mu\br{x,y}\sum_{ab}V(x,y,a,b)\Tr~\br{P^x_a\otimes Q^y_b}\psi^{\otimes n},
\end{equation}
where $\set{P^x_a}_a$ and $\set{Q^y_b}_b$ are POVM for any $x
\in\X$ and $y\in\Y$, respectively.

  To the best of my knowledge, the decidability of mono-state games has not been studied yet. It is easy to see that the highest probability that the referee accepts is equal to the classical value if $\psi$ is a separable state. However, the situation is more involved when $\psi$ is entangled as the amount entanglement increases and tends to infinity when having more copies of $\psi$, which is potentially helpful for the referee to accept the game with higher probability. Indeed, Man\v{c}inska and Vidick in~\cite{Mancinska:2015} constructed a mono-state game where the referee accepts with probability tends to $1$ when the copies of the shared states tend to infinity, while sharing any bounded dimensional entanglement, the probability that the referee accepts is bounded away from $1$.

  This paper takes a step towards understanding the decidability of the mono-state games. The following is an informal statement of the main result.

\begin{theorem*}[Main result, informal]
Given a mono-state binary game $\br{G,\psi}$, where $\psi$ is a noisy EPR state and a parameter $\epsilon$,  there exists an explicitly computable $D$ such that it suffices for the players to share $D$ copies of $\psi$ to achieve the probability of winning at least $\omega\br{G,\psi}-\epsilon$. Hence, the game $\br{G,\psi}$ is decidable.
\end{theorem*}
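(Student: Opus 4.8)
The plan is to transport the regularization/invariance/dimension-reduction approach to the decidability of non-interactive simulation~\cite{7782969,doi:10.1137/1.9781611975031.174,Ghazi:2018:DRP:3235586.3235614} to the operator algebra $\bigotimes_{i=1}^{n}\mat{2}{\complex}$ attached to the $n$ shared copies of $\psi$. First I would bring $\psi$ to normal form: because its marginals are maximally mixed, suitable local unitaries (absorbed into the players' measurements) put it into the Bell-diagonal form $\frac14\br{\id\otimes\id+\sum_{j}\lambda_j\,\sigma_j\otimes\sigma_j}$ with $\lambda_j\ge 0$, and the hypothesis that $\psi$ is a noisy EPR state becomes $\rho\defeq\max_j\lambda_j<1$. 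Since $G$ is binary, a strategy is given, for each question $x$ (resp.\ $y$), by a single observable $A^x=P^x_0-P^x_1$ (resp.\ $B^y=Q^y_0-Q^y_1$) with $-\id\le A^x\le\id$ on $\br{\complex^2}^{\otimes n}$, and the game value is an affine function of the scalars $\Tr\br{\br{A^x\otimes B^y}\psi^{\otimes n}}$. Expanding in the tensor-Pauli basis, $A^x=\sum_{\sigma\in\set{0,1,2,3}^n}\widehat{A^x}(\sigma)\,\sigma$ with real coefficients, and setting $\lambda_0\defeq 1$, one gets
\[
\Tr\br{\br{A^x\otimes B^y}\psi^{\otimes n}}=\sum_{\sigma}\widehat{A^x}(\sigma)\,\widehat{B^y}(\sigma)\prod_{i=1}^{n}\lambda_{\sigma_i},
\]
so every non-identity Pauli appearing in a coordinate damps the contribution by a factor of at most $\rho$. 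This is the mechanism that makes the high-degree part of the strategies essentially irrelevant.

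Next I would discard high degrees. By the displayed identity, Cauchy--Schwarz, and $\twonorm{A^x},\twonorm{B^y}\le 1$ (normalized Hilbert--Schmidt norm, using $(A^x)^2\le\id$), replacing $A^x,B^y$ by their degree-$\le d$ truncations perturbs each scalar above by at most $\rho^{d+1}$, hence perturbs the game value by $O\br{\rho^{d}}$; so $d=O\br{\log(1/\epsilon)/\log(1/\rho)}$ suffices. Truncation destroys positivity, so instead we apply the qubit-wise depolarizing channel $\Gamma_{1-\delta}$, which multiplies $\widehat{A^x}(\sigma)$ by $(1-\delta)^{\abs{\sigma}}$ and, being unital and completely positive, preserves $-\id\le\,\cdot\,\le\id$; choosing $\delta$ small makes the smoothed observables indistinguishable from the originals at the level of the bilinear forms while concentrating their weight on low degrees. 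For the smoothed observables the hypercontractive inequality for random operators — one of the tools we develop — bounds the total influence $\sum_i\sum_{\sigma:\sigma_i\neq 0}\abs{\widehat{A^x}(\sigma)}^2$ by $O(1/\delta)$, so with a threshold $\tau$ the set $H$ of coordinates that are $\tau$-influential for at least one of the finitely many observables $\set{A^x}_x\cup\set{B^y}_y$ has $\abs{H}=O\br{(\abs\X+\abs\Y)/(\delta\tau)}=O(1)$.

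The heart of the argument is the quantum invariance principle. With the bounded influential set $H$ in hand, I would write each smoothed, low-degree observable as an operator-valued polynomial in the ``bulk'' Pauli variables $\set{\sigma_i:i\notin H}$ with coefficients in $\bigotimes_{i\in H}\mat{2}{\complex}$, and invoke the invariance principle to replace those bulk variables, jointly for the two players, by Gaussian variables with the matching covariance $\br{\lambda_j}$, showing that all the relevant bilinear forms — and hence the game value — are preserved up to an error that is small once $d$ is large and the individual bulk influences are small. In the resulting Gaussian model the coefficient polynomials are still low-degree, so a Gaussian dimension-reduction step in the spirit of~\cite{Ghazi:2018:DRP:3235586.3235614} replaces the unboundedly many Gaussian variables by $n_0=n_0\br{\epsilon,\rho,\abs\X,\abs\Y}$ of them at the cost of another small error, and running the invariance principle backwards returns an honest quantum strategy — after clipping eigenvalues into $[-1,1]$ to restore validity — that uses only $D=O\br{\abs H+n_0}$ copies of $\psi$. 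Tracking errors, each stage forces the parameters of the previous one to be exponentially smaller, which is the source of the doubly-exponential dependence of $D$ on $1/\epsilon$.

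Finally, $\omega^*\br{G,\psi}$ is the supremum over $n$ of the value obtainable from $n$ copies and this value is non-decreasing in $n$; the construction above therefore gives $\omega^*\br{G,\psi}\le\omega^*_D\br{G,\psi}+\epsilon\le\omega^*\br{G,\psi}+\epsilon$, where $\omega^*_D$ denotes the value restricted to exactly $D$ copies. The latter is the optimum of an explicit optimization over Hermitian contractions on the fixed $2^{D}$-dimensional space $\br{\complex^2}^{\otimes D}$, hence computable to precision $\epsilon$; so $D$ copies suffice up to $\epsilon$ and $\omega^*\br{G,\psi}$ is computable. I expect the main obstacle to be precisely the quantum invariance principle together with the positivity bookkeeping: one must control Schatten norms of operator-valued polynomials under the change of variables, state the invariance against a Lipschitz ``rounding'' test functional so that every intermediate object can be returned to a legitimate measurement with only a controlled loss in value, and establish a hypercontractive estimate in this noncommutative setting strong enough to bound the number of influential coordinates.
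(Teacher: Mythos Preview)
Your plan follows the paper's architecture closely: smooth with the depolarizing noise operator, regularize to isolate a bounded influential set $H$, apply a quantum invariance principle on the bulk coordinates to pass to Gaussian random operators, reduce the Gaussian dimension, return via invariance, and round. Two points need correction.

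First, a minor misattribution: the bound on the total influence of the smoothed observables is not a consequence of hypercontractivity. It is the elementary identity $\influence(P)=\sum_\sigma\abs{\sigma}\abs{\widehat P(\sigma)}^2\le\deg(P)\cdot\nnorm{P}_2^2$, applied to the low-degree part $P^{\le d}$ (which is all that matters once smoothing has made the tail negligible). The hypercontractive inequality for random operators enters elsewhere, inside the error analysis of the invariance principle, to control $N_4$ by $N_2$.

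Second, and more substantively, your pipeline omits two steps between dimension reduction and ``running the invariance principle backwards''. The dimension reduction of~\cite{Ghazi:2018:DRP:3235586.3235614} sends a function $p$ of the bulk Gaussians to $p_M(x)=p\br{Mx/\twonorm{x}}$ on $n_0$ Gaussians; the result is not a polynomial, is certainly not multilinear, and its individual coordinate influences have no reason to be small. But the reverse invariance step substitutes each Gaussian variable by a basis matrix, which is only meaningful for \emph{multilinear} expressions (otherwise $g_j^2$ would become $\B_1^2=\id$) and only has small error when every individual influence is small. The paper therefore inserts, after dimension reduction, (i) a second smoothing step on the random operators to restore bounded degree, and (ii) a multilinearization step that replaces each $g_j$ by a normalized sum of $t$ fresh Gaussians and truncates to the multilinear part, simultaneously achieving multilinearity and thinning out the individual influences. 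Without these two steps the backward substitution you describe would not go through; once you add them, your outline coincides with the paper's proof.
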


The class of noisy EPR states will be defined later, which includes $(1-\epsilon)\ketbra{\Psi}+\epsilon\frac{\id_2}{2}\otimes\frac{\id_2}{2}$, an EPR state with arbitrary small $\epsilon>0$ depolarizing noise. All the previous works studying the upper bound on the complexity of entangled games are either via convex optimiation~\cite{Cleve:2004:CLN:1009378.1009560,doi:10.1137/090772885,Navascu_s_2008} or based on Tsirelson's problem~\cite{doi:10.1142/S0129055X12500122,slofra:2016,slofstra_2019}. This paper generalizes the framework of Fourier analysis on the Boolean functions, a well studied and fruitful topic in theoretical computer science~\cite{Odonnell08}, to matrix spaces, and reduces the problem to {\em quantum non-interactive simulations of joint distributions}. It provides a new angle and novel tools to study the entangled two-prover one-round games and the complexity class $\mathrm{MIP}^*$. Moreover, a series of results about the Fourier analysis on matrix spaces have been developed in this paper, which may be useful for other topics such as quantum property testing, quantum machine learning, etc.

Non-interactive simulations of joint distributions is a fundamental problem in information theory and communication complexity. Consider two non-communicating players Alice and Bob. Suppose they are provided a sequence of independent samples $\br{x_1,y_1},\br{x_2,y_2},\ldots$ from a joint distribution $\mu$ on $\X\times\Y$, where Alice observes $x_1,x_2,\ldots$ and Bob observes $y_1,y_2,\ldots$. Without communicating with each other, what joint distribution $\nu$ can Alice and Bob jointly simulate? The research on this problem dates back to the classic works by G\'acs and K\"orner~\cite{Gacs:1973} and Wyner~\cite{Wyner:1975:CIT:2263311.2268812} and Witsenhausen~\cite{doi:10.1137/0128010}, followed by fruitful subsequent work (see, for example,~\cite{7452414} and the references therein). Recently, Ghazi, Kamath and Sudan in~\cite{7782969} studied the decidability of the non-interactive simulations of joint joint distributions by introducing a   made partial progress by introducing a framework built on the theory of Fourier analysis on discrete functions and Hermite analysis on Gaussian space~\cite{MosselOdonnell:2010,Mossel:2010,Odonnell08}. With such a framework, the decidability is settled in subsequent works~\cite{doi:10.1137/1.9781611975031.174,Ghazi:2018:DRP:3235586.3235614}.

In quantum universe, it is natural to consider the {\em non-interactive simulations of quantum states}, which is also named the {\em local state transformations}. Suppose the two non-communicating players Alice and Bob are provided arbitrary copies of bipartite quantum states $\psi_{AB}$. Without communicating with each other, what bipartite quantum state $\phi_{AB}$ can Alice and Bob jointly create? Delgosha and Beigi first studied this problem and gave a criterion for the impossibility of local state transformation of $\psi_{AB}$ to $\phi_{AB}$ exactly~\cite{Delgosha2014}. Other than this result, not much about this problem is known. The proofs of the decidability of the non-interactive simulations of joint distributions in~\cite{7782969,doi:10.1137/1.9781611975031.174,Ghazi:2018:DRP:3235586.3235614} heavily use Fourier analysis and Hermite analysis, which has been intensively studied and has fruitful applications in theoretical computer science~\cite{Odonnell08}. The hypercontractive inequality, a key component in Fourier analysis and Hermite analysis, has also been extended to the quantum setting from various aspects resulting several interesting applications~\cite{4690981,doi:10.1063/1.4769269,Temme_2014,King2014,Delgosha2014,doi:10.1063/1.4933219}.
However, the understanding of the Fourier analysis on matrix spaces, in particular, the Fourier analysis on quantum operations is much less compared with the one on Boolean functions or continuous functions. This paper essentially resolves the decidability of local state transformation when $\psi_{AB}$ is a noisy EPR state and $\phi_{AB}$ is a classical two-bit distribution, by systematically developing the Fourier analysis on random matrix spaces. In particular, this paper proves a {\em quantum invariance principle} and a {\em hypercontractive inequality for random operators}, successfully generalizing the framework established in~\cite{7782969,doi:10.1137/1.9781611975031.174,Ghazi:2018:DRP:3235586.3235614} to the quantum setting. The tools developed in this paper are interesting on their own right and are believed to have further  applications.

\subsection{Proof Overview}

To explain the ideas of the proof in high level, we start with a pair of measurements performed by Alice and Bob, denoted by  $\br{\set{P,\id-P},\set{Q,\id-Q}}$, respectively, where $P, Q\in\H_2^{\otimes n}$ and $0\leq P,Q\leq\id$ and $n$ copies of  noisy EPR states $\psi_{AB}^{\otimes n}$. The proof is to construct a universal bound $D$, which is independent of the measurements, and a transformation $f_n,g_n:\H_2^{\otimes n}\rightarrow\H_2^{n_0}$, such that the requirements in Figure~\ref{fig:requirements} are satisfied.

\begin{figure}
\begin{mdframed}
\textbf{Requirements.}
\begin{enumerate}
	\item $0\leq f_n\br{P}\leq\id~\mbox{and}~0\leq g_n\br{Q}\leq\id;$
	\item $\Tr~P\psi_A^{\otimes n}\approx\Tr~f_n\br{P}\psi_A^{\otimes D} ~\mbox{and}~ \Tr~Q_n\psi_B^{\otimes n}\approx\Tr~g_n\br{Q}\psi_B^{\otimes D};$
	\item $\Tr~\br{P\otimes Q}\psi_{AB}^{\otimes n}\approx\Tr~\br{f_n\br{P}\otimes g_n\br{Q}}\psi_{AB}^{\otimes D}.$
\end{enumerate}
\end{mdframed}
\caption{Requirements}\label{fig:requirements}
\end{figure}

The first item implies that $\set{f_n\br{P},\id-f_n\br{P}}$ and $\set{g_n\br{Q},\id-g_n\br{Q}}$ are both valid measurements. The second item imples that the probability that Alice outputs $1$ is almost unchanged under the transformation $f_n$.  Same for the probability that Bob outputs $1$. The last item implies that the probability that both Alice and Bob output $1$ is almost unchanged. As Alice's and Bob's outputs are both binary, it concludes that the distribution of the joint output is almost unchanged.

The construction of the transformations $f_n$ and $g_n$ are based on the framework introduced in~\cite{7782969}, which, in turn, is built on the results in Fourier analysis and Hermite analysis developed in~\cite{MosselOdonnell:2010,Mossel:2010}. Analogously, we choose an orthornormal basis in $\M_2$, which is of dimension $4$, and apply the theory of Fourier analysis to the expansions of $P$ and $Q$ on this basis. Let $\B=\set{\B_0,\B_1,\B_2,\B_3}$ be an orthonormal basis in $\M_2$ with all elements being Hermitian and $\B_0=\id$, whose existence is guaranteed by Lemma~\ref{lem:paulibasis}. It is easy to verify that the set $\set{\B_{\sigma}:\sigma\in\set{0,1,2,3}^n}$, where $\B_{\sigma}\defeq\B_{\sigma_1}\otimes\B_{\sigma_2}\otimes\ldots\otimes\B_{\sigma_n}$, forms an orthonormal basis in $\M_2^{\otimes n}$. Let $\A=\set{\A_0,\A_1,\A_2,\A_3}$ and $\B=\set{\B_0,\B_1,\B_2,\B_3}$  be two orthonormal basis in $\M_2$, which are speficifed later. The expansions of $P$ and $Q$ on the basis $\A$ and $\B$ can be expressed as
\[P=\sum_{\sigma\in\set{0,1,2,3}^n}\widehat{P}\br{\sigma}\A_{\sigma}~\mbox{and}~Q=\sum_{\sigma\in\set{0,1,2,3}^n}\widehat{Q}\br{\sigma}\B_{\sigma},\]
which are considered to be the Fourier expansions of $P$ and $Q$, respectively. The construction of the transformations $f_n$ and $g_n$ consists of the following several steps, which are summarized in the Figure~\ref{fig:construction}, where $L^2\br{\H_2^{\otimes h},\gamma_t}$ is a set of random operators defined in Subsection~\ref{subsec:randomoperators}.

\begin{center}
\end{center}
\begin{figure}
\tikzset{every picture/.style={line width=0.75pt}} 

\begin{tikzpicture}[x=0.75pt,y=0.75pt,yscale=-1,xscale=1]

\draw   (221.21,50.22) -- (539.5,50.22) -- (539.5,100.67) -- (221.21,100.67) -- cycle ;
\draw    (380.3,51.34) -- (379.6,101) ;

\draw    (305.97,28.33) -- (305.92,49.33) ;
\draw [shift={(305.92,51.33)}, rotate = 270.13] [color={rgb, 255:red, 0; green, 0; blue, 0 }  ][line width=0.75]    (10.93,-3.29) .. controls (6.95,-1.4) and (3.31,-0.3) .. (0,0) .. controls (3.31,0.3) and (6.95,1.4) .. (10.93,3.29)   ;

\draw    (460.1,27) -- (460.1,48.5) ;
\draw [shift={(460.1,50.5)}, rotate = 270] [color={rgb, 255:red, 0; green, 0; blue, 0 }  ][line width=0.75]    (10.93,-3.29) .. controls (6.95,-1.4) and (3.31,-0.3) .. (0,0) .. controls (3.31,0.3) and (6.95,1.4) .. (10.93,3.29)   ;

\draw    (309.64,102.33) -- (309.36,128.33) ;
\draw [shift={(309.33,130.33)}, rotate = 270.62] [color={rgb, 255:red, 0; green, 0; blue, 0 }  ][line width=0.75]    (10.93,-3.29) .. controls (6.95,-1.4) and (3.31,-0.3) .. (0,0) .. controls (3.31,0.3) and (6.95,1.4) .. (10.93,3.29)   ;

\draw    (459.97,100.33) -- (459.69,128.33) ;
\draw [shift={(459.67,130.33)}, rotate = 270.58] [color={rgb, 255:red, 0; green, 0; blue, 0 }  ][line width=0.75]    (10.93,-3.29) .. controls (6.95,-1.4) and (3.31,-0.3) .. (0,0) .. controls (3.31,0.3) and (6.95,1.4) .. (10.93,3.29)   ;

\draw   (221.21,210.55) -- (538.5,210.55) -- (538.5,260.5) -- (221.21,260.5) -- cycle ;
\draw   (220.71,131.22) -- (539,131.22) -- (539,181.67) -- (220.71,181.67) -- cycle ;
\draw    (460.47,181.33) -- (460.19,209.33) ;
\draw [shift={(460.17,211.33)}, rotate = 270.58] [color={rgb, 255:red, 0; green, 0; blue, 0 }  ][line width=0.75]    (10.93,-3.29) .. controls (6.95,-1.4) and (3.31,-0.3) .. (0,0) .. controls (3.31,0.3) and (6.95,1.4) .. (10.93,3.29)   ;

\draw    (309.97,181.33) -- (309.69,209.33) ;
\draw [shift={(309.67,211.33)}, rotate = 270.58] [color={rgb, 255:red, 0; green, 0; blue, 0 }  ][line width=0.75]    (10.93,-3.29) .. controls (6.95,-1.4) and (3.31,-0.3) .. (0,0) .. controls (3.31,0.3) and (6.95,1.4) .. (10.93,3.29)   ;

\draw    (380.2,211.11) -- (380.6,261) ;

\draw   (221.71,289.72) -- (540,289.72) -- (540,340.17) -- (221.71,340.17) -- cycle ;
\draw    (310.47,259.83) -- (310.19,287.83) ;
\draw [shift={(310.17,289.83)}, rotate = 270.58] [color={rgb, 255:red, 0; green, 0; blue, 0 }  ][line width=0.75]    (10.93,-3.29) .. controls (6.95,-1.4) and (3.31,-0.3) .. (0,0) .. controls (3.31,0.3) and (6.95,1.4) .. (10.93,3.29)   ;

\draw    (460.47,260.33) -- (460.19,288.33) ;
\draw [shift={(460.17,290.33)}, rotate = 270.58] [color={rgb, 255:red, 0; green, 0; blue, 0 }  ][line width=0.75]    (10.93,-3.29) .. controls (6.95,-1.4) and (3.31,-0.3) .. (0,0) .. controls (3.31,0.3) and (6.95,1.4) .. (10.93,3.29)   ;

\draw    (309.97,339.83) -- (309.69,367.83) ;
\draw [shift={(309.67,369.83)}, rotate = 270.58] [color={rgb, 255:red, 0; green, 0; blue, 0 }  ][line width=0.75]    (10.93,-3.29) .. controls (6.95,-1.4) and (3.31,-0.3) .. (0,0) .. controls (3.31,0.3) and (6.95,1.4) .. (10.93,3.29)   ;

\draw    (460.47,340.33) -- (460.19,368.33) ;
\draw [shift={(460.17,370.33)}, rotate = 270.58] [color={rgb, 255:red, 0; green, 0; blue, 0 }  ][line width=0.75]    (10.93,-3.29) .. controls (6.95,-1.4) and (3.31,-0.3) .. (0,0) .. controls (3.31,0.3) and (6.95,1.4) .. (10.93,3.29)   ;

\draw   (221.21,371.22) -- (539.5,371.22) -- (539.5,421.67) -- (221.21,421.67) -- cycle ;
\draw    (380.7,371.61) -- (380.01,421.27) ;

\draw   (220.87,451.88) -- (539.17,451.88) -- (539.17,502.33) -- (220.87,502.33) -- cycle ;
\draw    (309.3,421.5) -- (309.02,449.5) ;
\draw [shift={(309,451.5)}, rotate = 270.58] [color={rgb, 255:red, 0; green, 0; blue, 0 }  ][line width=0.75]    (10.93,-3.29) .. controls (6.95,-1.4) and (3.31,-0.3) .. (0,0) .. controls (3.31,0.3) and (6.95,1.4) .. (10.93,3.29)   ;

\draw    (459.3,421.17) -- (459.02,449.17) ;
\draw [shift={(459,451.17)}, rotate = 270.58] [color={rgb, 255:red, 0; green, 0; blue, 0 }  ][line width=0.75]    (10.93,-3.29) .. controls (6.95,-1.4) and (3.31,-0.3) .. (0,0) .. controls (3.31,0.3) and (6.95,1.4) .. (10.93,3.29)   ;

\draw    (380.37,452.28) -- (379.68,501.94) ;

\draw   (221.54,531.22) -- (539.83,531.22) -- (539.83,581.67) -- (221.54,581.67) -- cycle ;
\draw    (309.97,502.83) -- (309.69,530.83) ;
\draw [shift={(309.67,532.83)}, rotate = 270.58] [color={rgb, 255:red, 0; green, 0; blue, 0 }  ][line width=0.75]    (10.93,-3.29) .. controls (6.95,-1.4) and (3.31,-0.3) .. (0,0) .. controls (3.31,0.3) and (6.95,1.4) .. (10.93,3.29)   ;

\draw    (459.97,503.17) -- (459.69,531.17) ;
\draw [shift={(459.67,533.17)}, rotate = 270.58] [color={rgb, 255:red, 0; green, 0; blue, 0 }  ][line width=0.75]    (10.93,-3.29) .. controls (6.95,-1.4) and (3.31,-0.3) .. (0,0) .. controls (3.31,0.3) and (6.95,1.4) .. (10.93,3.29)   ;

\draw    (461.3,582.5) -- (461.02,610.5) ;
\draw [shift={(461,612.5)}, rotate = 270.58] [color={rgb, 255:red, 0; green, 0; blue, 0 }  ][line width=0.75]    (10.93,-3.29) .. controls (6.95,-1.4) and (3.31,-0.3) .. (0,0) .. controls (3.31,0.3) and (6.95,1.4) .. (10.93,3.29)   ;

\draw    (309.3,582.83) -- (309.02,610.83) ;
\draw [shift={(309,612.83)}, rotate = 270.58] [color={rgb, 255:red, 0; green, 0; blue, 0 }  ][line width=0.75]    (10.93,-3.29) .. controls (6.95,-1.4) and (3.31,-0.3) .. (0,0) .. controls (3.31,0.3) and (6.95,1.4) .. (10.93,3.29)   ;

\draw    (381.03,531.61) -- (380.34,581.27) ;

\draw   (221.54,611.22) -- (539.83,611.22) -- (539.83,661.67) -- (221.54,661.67) -- cycle ;

\draw    (310.8,660.33) -- (310.52,688.33) ;
\draw [shift={(310.5,690.33)}, rotate = 270.58] [color={rgb, 255:red, 0; green, 0; blue, 0 }  ][line width=0.75]    (10.93,-3.29) .. controls (6.95,-1.4) and (3.31,-0.3) .. (0,0) .. controls (3.31,0.3) and (6.95,1.4) .. (10.93,3.29)   ;

\draw    (460.3,661.5) -- (460.02,689.5) ;
\draw [shift={(460,691.5)}, rotate = 270.58] [color={rgb, 255:red, 0; green, 0; blue, 0 }  ][line width=0.75]    (10.93,-3.29) .. controls (6.95,-1.4) and (3.31,-0.3) .. (0,0) .. controls (3.31,0.3) and (6.95,1.4) .. (10.93,3.29)   ;

\draw (150.67,80.5) node  [align=left] {Smooth};
\draw (600,20.67) node  [align=left] {$\H_2^{\otimes n}$\\ $ 0\leq P,Q\leq\id$};
\draw (307.33,20.67) node  [align=left] {$P$};
\draw (460.67,20.67) node  [align=left] {$Q$};
\draw (149.67,160) node  [align=left] {Regularization};
\draw (343.67,116.33) node  [align=left] {$P^{\br{1}}$};
\draw (500.33,116.33) node  [align=left] {$Q^{\br{1}}$};
\draw (600,116.33) node  [align=left] {$\H_2^{\otimes n}$ \\ $0\leq P^{(1)},Q^{(1)}\leq\id$};
\draw (140.17,240) node  [align=left] {Invariance principle};
\draw (139.67,320) node  [align=left] {Dimension reduction};
\draw (568.17,195.67) node  [align=left] {};
\draw (600.17,276) node  [align=left] {$L^2\br{\H_2^{\otimes h},\gamma_{3\br{n-h}}}$};
\draw (340.67,195.67) node  [align=left] {$P^{(1)}$};
\draw (489.17,276) node  [align=left] {$\mathbf{Q}^{(2)}$};
\draw (499.17,195.67) node  [align=left] {$Q^{(1)}$};
\draw (335.67,276) node  [align=left] {$\mathbf{P}^{(2)}$};
\draw (335.67,356.17) node  [align=left] {$\mathbf{P}^{(3)}$};
\draw (493.67,356.17) node  [align=left] {$\mathbf{Q}^{(3)}$};
\draw (149.67,401.5) node  [align=left] {Smooth};
\draw (310,397.33) node  [align=left] {Lemma~\ref{lem:smoothing of strategies}};
\draw (460.67,77.33) node  [align=left] {Lemma~\ref{lem:smoothing of strategies}};
\draw (380.85,158) node  [align=left] {Lemma~\ref{lem:regular}};
\draw (459.33,156.67) node  [align=left] {};
\draw (317.33,236.67) node  [align=left] {Lemma~\ref{lem:jointinvariance}};
\draw (460.67,236.67) node  [align=left] {Lemma~\ref{lem:jointinvariance}};
\draw (308.67,77.33) node  [align=left] {Lemma~\ref{lem:smoothing of strategies}};
\draw (460.67,397.33) node  [align=left] {Lemma~\ref{lem:smoothing of strategies}};
\draw (600.33,356.17) node  [align=left] {$L^2\br{\H_2^{\otimes h},\gamma_{3\br{n-h}}}$};
\draw (151,470.83) node  [align=left] {Multilinearlization};
\draw (494,438) node  [align=left] {$\mathbf{Q}^{(4)}$};
\draw (600,438) node  [align=left] {$L^2\br{\H_2^{\otimes h},\gamma_{n_0}}$};
\draw (340,438) node  [align=left] {$\mathbf{P}^{(4)}$};
\draw (305.33,480.33) node  [align=left] {Lemma~\ref{lem:multiliniearization}};
\draw (460,480.33) node  [align=left] {Lemma~\ref{lem:multiliniearization}};
\draw (132.17,561.33) node  [align=left] {Invariance principle};
\draw (339.67,516.67) node  [align=left] {$\mathbf{P}^{(5)}$};
\draw (499.33,516.67) node  [align=left] {$\mathbf{Q}^{(5)}$};
\draw (600,518.67) node  [align=left] {$L^2\br{\H_2^{\otimes h},\gamma_{n_0t}}$};
\draw (340.33,598) node  [align=left] {$\mathbf{P}^{(6)}$};
\draw (500.33,598) node  [align=left] {$\mathbf{Q}^{(6)}$};
\draw (131.5,639.33) node  [align=left] {Rounding};
\draw (399,640) node  [align=left] {Round to measurement operators};
\draw (309.67,710) node  [align=left] {$\widetilde{P}$};
\draw (461.67,710) node  [align=left] {$\widetilde{Q}$};
\draw (380.85,314.94) node  [align=left] {Lemma~\ref{lem:dimensionreduction}};
\draw (311.33,558.67) node  [align=left] {Lemma~\ref{lem:invariancejointgaussian}};
\draw (459.33,558.67) node  [align=left] {Lemma~\ref{lem:invariancejointgaussian}};
\draw (600,598) node  [align=left] {$\H_2^{\otimes h+n_0t}$};
\draw (600,688) node  [align=left] {$\H_2^{\otimes h+n_0t}$\\$0\leq\widetilde{P},\widetilde{Q}\leq\id$};
\end{tikzpicture}
\caption{Construction of the transformations}\label{fig:construction}
\end{figure}
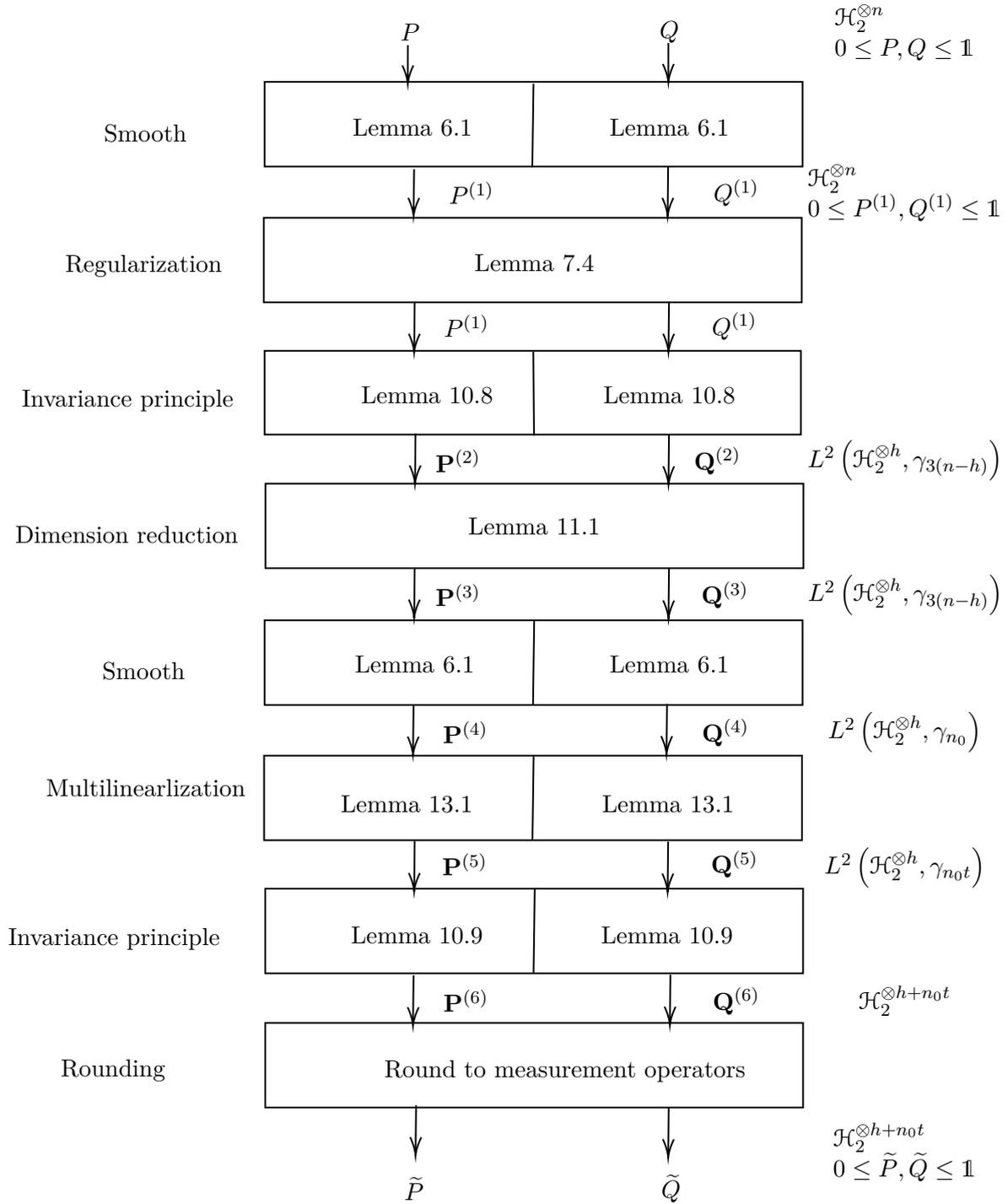
\begin{center}
\end{center}

\begin{itemize}
  \item \textbf{Smoothing operators}. We first convert the operators $\br{P,Q}$ to low-degree operators via smoothing operations. Here we borrow the standard techniques in the Fourier analysis on Boolean functions~\cite{Odonnell08} by applying a {\em noise operator} $\T_{\rho}$ for some constant $\rho\in(0,1)$, whose analog in Gaussian space is also called {\em Ornstein-Uhlenbeck operator}, to both $P$ and $Q$. We obtain

  \[\T_{\rho}P=\sum_{\sigma\in[4]_{\geq 0}^n}\widehat{P}\br{\sigma}\rho^{|\sigma|}\A_{\sigma}~\mbox{and}~\T_{\rho}Q=\sum_{\sigma\in[4]_{\geq 0}^n}\widehat{Q}\rho^{|\sigma|}\br{\sigma}\B_{\sigma},\]
  where $|\sigma|=\abs{\set{i:\sigma_i\neq 0}}$ and $[4]_{\geq 0}=\set{0,1,2,3}$.
  It is not hard to verify that the item 1 and item 2 of the requirements in Figure~\ref{fig:requirements} are satisfied. To meet the item 3, we need the notion of {\em quantum maximal correlation}, which was introduced by Beigi~~\cite{Beigi:2013}, extending the {\em maximal correlation coefficients}~\cite{hirschfeld:1935,Gebelein:1941,Renyi1959} in classical information theory. Note that after smoothing the operators, the weights of high-degree part of both $P$ and $Q$, namely, the part with high $\abs{\sigma}$, decrease exponentially. Thus, both of the operators can be approximated by low-degree operators.

  \item \textbf{Joint regularity}. In this step, we choose a bounded-sized subset of coordinates $H$ such that all the coordinates not in $H$ are low influential. The {\em influence} of a coordinate to a Hermitian operator is given in Definition~\ref{def:influencegaussian}, analogous to the notion of the influence in the analysis of boolean functions~\cite{Odonnell08}, first introduced by Montanaro in~\cite{doi:10.1063/1.4769269}. It, informally speaking, measures how much the coordinate can influence the operator. This step follows by quantizing a well known result in the classical boolean analysis. For any bounded function, the total influence, i.e., the summation of the influences of all coordinates, is upper bounded by the degree of the function. Note that the both operators can be approximated by low-degree operators after the first step.

  \item \textbf{Quantum invariance principle}. In this step, we replace all the low-influential coordinates by correlated Gaussian variables, resulting {\em random operators}. Prior to the replacement, we need to specify the choices of the basis $\A$ and $B$. It can be shown that there exist the basis $\A$ and $\B$ such that
  \[\Tr\br{\A_i\otimes\B_j}\psi_{AB}=\delta_{i,j}c_i,\]
  for $1=c_0\geq c_1\geq c_2\geq c_3\geq 0$, where $\A_0=\B_0=\id$.
  Hence
  \begin{equation}\label{eqn:pqreg}
  	\Tr\br{P\otimes Q}\psi_{AB}^{\otimes n}=\sum_{\sigma\in[4]_{\geq 0}^n}c_{\sigma}\widehat{P}\br{\sigma}\widehat{Q}\br{\sigma},
  \end{equation}
  where $c_{\sigma}=c_{\sigma_1}\cdot c_{\sigma_2}\cdots c_{\sigma_n}$ for any $\sigma\in[4]_{\geq 0}^n$.
  Now we introduce independent joint random variables $\set{\br{\mathbf{g}_{i,j},\mathbf{h}_{i,j}}}_{0\leq i\leq 3,j\notin H}$, where $\mathbf{g}_{0,j}=\mathbf{h}_{0,j}=1$ and $\br{\mathbf{g}_{i,j},\mathbf{h}_{i,j}}\sim\G_{c_i}$ for $i\geq 1,j\notin H$ and $\G_{\rho}$ is a $\rho$-correlated two-dimensional Gaussian distribution. Then we substitute all the basis outside $H$ in the expansion of $P$ and $Q$ by $\br{\mathbf{g}_{i,j},\mathbf{h}_{i,j}}$, accordingly, and obtain the random operators

  \begin{equation}\label{eqn:intro2}
  	\mathbf{P}=\sum_{\sigma}\widehat{P}\br{\sigma}\prod_{i\notin H}\mathbf{g}_{i,\sigma_i}\A_{\sigma_H}~\mbox{and}~\mathbf{Q}=\sum_{\sigma}\widehat{Q}\br{\sigma}\prod_{i\notin H}\mathbf{g}_{i,\sigma_i}\B_{\sigma_H}.
  \end{equation}
	  It is easy to verify that the item 2 and item 3 of the requirements in Figure~\ref{fig:requirements} are satisfied in expectation. However, $\mathbf{P}$ and $\mathbf{Q}$ are in general not measurement operator at all, which means that the item 1 of the requirements is violated. To meet item 1, it suffices to show that both of the random operators are close to measurement operators in $\ell_2$ distance in expectation. To prove it, we establish a hypercontractive inequality for random operators and quantum invariance principle, which will be explained with more details in the next subsection.

  \item \textbf{Dimension reduction}. The random operators in Eq.~\eqref{eqn:intro2} can be written as
  \begin{equation}\label{eqn:intro3}
  	\mathbf{P}=\sum_{\sigma_H}p_{\sigma_H}\br{\vec{\mathbf{g}}}\A_{\sigma_H}~\mbox{and}~\mathbf{Q}=\sum_{\sigma_H}q_{\sigma_H}\br{\vec{\mathbf{h}}}\B_{\sigma_H}
  \end{equation}
  And
  \[\expec{}{\Tr\br{\mathbf{P}\otimes\mathbf{Q}}\psi_{AB}^{\otimes h}}=\sum_{\sigma_H}c_{\sigma_H}\expec{}{p_{\sigma_H}\br{\vec{\mathbf{g}}}q_{\sigma_H}\br{\vec{\mathbf{h}}}}\]
  Applying the {\em dimension reduction} for polynomials of Gaussian variables, proved recently in~\cite{Ghazi:2018:DRP:3235586.3235614} to $\br{p_{\sigma_H},q_{\sigma_H}}$, the number of Gaussian random variables are reduced to a bounded number. Meanwhile, all the items in Figure~\ref{fig:requirements} are still satisfied.

  \item \textbf{Smoothing random operators}. To replace the Gaussian random variables by the operators in $\A$ and $B$, we follow the transformation similar to the ones in the previous steps. Thus, we apply the noise operators again to the both operators to reduce the weight of the high degree parts.

  \item \textbf{Multilinearlization}. Note that the degrees of the functions of the Gaussian variables are unbounded. They may even not be polynomials. Thus, we need the multilinearization lemma in~\cite{Ghazi:2018:DRP:3235586.3235614} to reduce the power of the Gaussian random operators to either $0$ or $1$, i.e, the polyonomials $p_{\sigma_H}$'s and $q_{\sigma_H}$'s in Eq.~\eqref{eqn:intro3} are all multilinear.
  \item \textbf{Quantum invariance principle and rounding}. In the final step, we substitute the Gaussian variables by the proper chosen basis operators and further round both of the operators to the measurement operators. Again, we need to apply a quantum invariance principle to ensure that all the requirements in Figure~\ref{fig:requirements} are satisfied.
\end{itemize}

Note that in mono-state games the players adopt different measurement operators for different inputs. Thus, we may have the issue of the consistency when applying the transformations above to mono-state games. Suppose the players' strategies are $\br{\set{P,\id-P},\set{Q,\id-Q}}$ and $\br{\set{P,\id-P},\set{Q',\id-Q'}}$ for the pairs of questions $(x,y)$ and $\br{x,y'}$, respectively. We need to ensure the resulting measurement operators on Alice's side must be same in the both cases. Namely, the transformation for each player should be independent of the other player. Fortunately, the correlation between the two players only occurs when choosing the set $H$ in the step of regularization, which can be resolved by applying an union bound on the all possible question pairs.

\subsection{Quantum invariance principle and quantum hypercontractive inequality}

The main difficulty in this construction is proving the quantum invariance principle, which is used for the interchanges between the basis operators and the Gaussian random variables. Let's recall the classical invariance principle in~\cite{MosselOdonnell:2010}. Let $f:\set{0,1}^n\rightarrow\reals$ be a low-degree multilinear polynomial with small influence for all coordinates. The invariance principle asserts that
\begin{equation}\label{eqn:inv}
\expec{\mathbf{x}\sim\set{0,1}^n}{\Psi\br{f\br{\mathbf{x}}}}\approx\expec{\mathbf{g}\sim\gamma_n}{\Psi\br{f\br{\mathbf{g}}}}
\end{equation}
for any $\Psi:\reals\rightarrow\reals$ with a constant Lipschitz coefficient.

 It was proved via approximating $\Psi$ by a $\C^{\infty}$ function, say $\Psi'$, and then applying the Taylor expansion to the both sides of Eq.~\eqref{eqn:inv} to reduce the difference between the both sides to the norm of third order terms in the Taylor expansion of $\Psi$. Further applying a hypercontractive inequality for random variables, it can be proved to be small for the functions with all coordinates being low influential.

 Generalizing such a mechanism to quantum operations is not an easy task due to the non-commutativity of the operators. In this paper, we adopt Fr\'echet derivatives, a notion of derivatives in Banach space, for which a similar form of Taylor expansion exists.  The differentiability of the real functions and the one of matrix-functions with respect to Fr\'echet derivatives~\cite{SENDOV2007240} share many properties. We follow the same mechanism by substituting the basis elements with Gaussian variables and obtain {\em random operators}, hybrids of operators and random variables. The Taylor expansions of matrix-valued functions are in general complicated again due to the nature of the non-commutativity. Fortunately, it suffices to prove the quantum invariance principle for $\C^3$ function for our purpose. To prove an operator $P$ is close to a measurement operators, it suffices to show that
 \[\zeta\br{P}\approx 0,\]
 where
 \[\zeta\br{x}=\begin{cases}
 \br{x-1}^2~&\mbox{if $x\geq 1$}\\
 0~&\mbox{if $0\leq x<1$}\\
 x^2~&\mbox{otherwise.}
 \end{cases}\]
 We introduce a $\C^3$ approximation of $\zeta$, denoted by $\zeta_{\lambda}$, and show that $\zeta\br{P}$ is upper bounded by the 3rd order term in the Taylor expansion of $\zeta_{\lambda}$, which is, in turn, upper bounded by the power 3 of its $4$-norm.

 A following difficulty is to prove a hypercontractive inequality for random random operators, which is expected to show that the $4$-th norm of a low-degree random operator can be upper bounded by its $2$-norm. As a random operator is a hybrid of operators and the Gaussian variables, the proof is a delicate combination of the hypercontractive inequality for unital qubit channels due to King~\cite{King2014} and the hypercontractive inequality for Gaussian variables due to Wolff~\cite{PawelWolff2007}.

\subsection{Organization of the paper}
Section~\ref{sec:pre} summarizes some useful definitions and basic facts on quantum mechanics, the analysis on Gaussian spaces, matrices spaces and random operators. The main results and the proofs are stated in Section~\ref{sec:mainresult}. Further work and some open problems are listed in Section~\ref{sec:openproblems}. Section~\ref{sec:markov} gives the definition of quantum maximal correlation, a key concept in this work, with several crucial properties. The smoothing operators lemma, joint regularity lemma, hypercontractive inequality for random operators, quantum invariance principle, dimension reduction for random operators, smoothing random operators and multilinearization for random operators are proved in Sections~\ref{sec:smoothoperators},\ref{sec:jointregular},\ref{sec:hypercontractive},\ref{sec:invariance}, \ref{sec:dimensionreduction},\ref{sec:smoothrandom} and \ref{sec:multilinear}, respectively.  Section~\ref{sec:derivative} introduces the notion of F\'echet derivatives. Some basic facts on F\'echet derivatives are summarized in Appendix~\ref{sec:frechet}. Appendix~\ref{sec:analysis} presents some useful facts on the analytical properties of the functions $f:\reals^n\rightarrow\M_d$.
\subsection*{Acknowledgments}

This work is supported by the National Key R\&D Program of China 2018YFB1003202 and a China Youth 1000-Talent grant and Anhui Initiative in Quantum Information Technologies Grant No. AHY150100. Part of the work was done when the author was a Hartree postdoctoral fellow at QuICS, University of Maryland. The author thanks Thomas Vidick pointing out that a union bound on the question sets was missing in the previous version. The author is grateful to Minglong Qin for his thorough reading and valuable feedback. The author also thanks Hong Zhang for helpful discussion and thanks Pritish Kamath and Ashley Montanaro for the correspondence. The author also thanks the anonymous reviewers' helpful feedback.

	\section{Preliminaries}\label{sec:pre}
	For an integer $n\geq 1$, let $[n]$ and $[n]_{\geq 0}$ represent the sets $\set{1,\ldots, n}$ and $\set{0,\ldots, n-1}$, respectively. Given a finite set $\X$ and a natural number  $k$, let $\X^k$ be the set $\X\times\cdots\times\X$, the Cartesian product of $\X$, $k$ times. Given $a=a_1,\ldots, a_k$ and a set $S\subseteq[k]$, we write $a_S$ to represent the projection of $a$ to the coordinates specified in $S$. For any $i\in[k]$, $a_{<i}$ represents $a_1,\ldots, a_{i-1}$. $a_{\leq i},a_{>i}, a_{\geq i}$ are defined similarly. Let $\mu$ be a probability distribution on $\X$, and $\mu\br{x}$ represent the probability of $x\in\X$ according to $\mu$. Let $X$ be a random variable distributed according to $\mu$. We use the same symbol to represent a random variable and its distribution whenever it is clear from the context. The expectation of a function $f$ on $\X$ is defined as $\expec{}{f(X)}\defeq\expec{\mathbf{x}\sim X}{f(\mathbf{x})}=\sum_{x\in\X}\prob{X=x}\cdot f\br{x}=\sum_x\mu\br{x}\cdot f\br{x}$, where $\mathbf{x}\sim X$ represents that $\mathbf{x}$ is drawn according to $X$. For two distributions $p$ and $q$, the $\ell_1$-distance between $p$ and $q$ is defined to be $\onenorm{p-q}\defeq\sum_x\abs{p\br{x}-q\br{x}}$.

	In this paper, the lower-cased letters in bold $\mathbf{x},\mathbf{y},\cdots$ are reserved for random variables. The capital letters in bold, $\mathbf{P},\mathbf{Q},\ldots$ are reserved for random operators.

	\subsection{Gaussian spaces}
	
	For any integer $n>0$, let $\gamma_n$ represent the standard $n$-dimensional normal distribution. All the functions considered in this paper are in $L^2\br{\complex,\gamma_n}$ unless explicitly mentioned. We say $f\in L^2\br{\reals,\gamma_n}$ if $f\br{x}\in\reals$ for all $x$. We equipped $L^2\br{\complex,\gamma_n}$ with an inner product $\innerproduct{f}{g}_{\gamma_n}\defeq\expec{\mathbf{x}\sim\gamma_n}{\conjugate{f\br{\mathbf{x}}}g\br{\mathbf{x}}}$. Given $p\geq 1$ and $f\in L^2\br{\complex,\gamma_n}$, the {\em $p$-norm} of $f$ is defined to be $\norm{f}_p\defeq\br{\int_{\reals^n}\abs{f(x)}^p\gamma_n\br{dx}}^{\frac{1}{p}}$. Then $\innerproduct{f}{f}=\twonorm{f}^2$. The set of {\em Hermite polynomials} forms an orthonormal basis in $L^2\br{\complex,\gamma_1}$ with respect to the inner product $\innerproduct{\cdot}{\cdot}_{\gamma_1}$. The Hermite polynomial $H_r:\reals\rightarrow\reals$ for $r\in\mathbb{Z}_{\geq 0}$ is defined as
	\begin{equation}\label{eqn:hermitebasis}
		H_0\br{x}=1; H_1\br{x}=x; H_r\br{x}=\frac{(-1)^r}{\sqrt{r!}}e^{x^2/2}\frac{d^r}{dx^r}e^{-x^2/2}.
	\end{equation}

For any $\sigma\in\br{\sigma_1,\ldots,\sigma_n}\in\mathbb{Z}_{\geq 0}^n$, define
$H_{\sigma}:\reals^n\rightarrow\reals$ as
\begin{equation}\label{eqn:hermite}
	H_{\sigma}\br{x}\defeq\prod_{i=1}^nH_{\sigma_i}\br{x_i}.
\end{equation}
And $\abs{\sigma}\defeq\sum_i\sigma_i$. It is easy to verify that the set $\set{H_{\sigma}:\sigma\in\mathbb{Z}_{\geq 0}^n}$ forms an orthonormal basis in $L^2\br{\complex,\gamma_n}$. Every function $f\in L^2\br{\complex,\gamma_n}$ has an {\em Hermite expansion}  as
\[f\br{x}=\sum_{\sigma\in\mathbb{Z}_{\geq 0}^n}\widehat{f}\br{\sigma}\cdot H_{\sigma}\br{x},\]
 where $\widehat{f}\br{\sigma}$'s are the {\em Hermite coefficients} of $f$, which can be obtained by $\widehat{f}\br{\sigma}=\innerproduct{H_{\sigma}}{f}_{\gamma_n}$. The degree of $f$ is defined to be $\deg\br{f}\defeq\max\set{\abs{\sigma}:~\widehat{f}\br{\sigma}\neq 0}$.  Analogous to Fourier analysis, we have {\em Parseval's identity}, that is , $\twonorm{f}^2=\sum_{\sigma\in\mathbb{Z}_{\geq 0}^n}\abs{\widehat{f}\br{\sigma}}^2$. We say $f\in L^2\br{\complex,\gamma_n}$ is {\em multilinear} if $\widehat{f}\br{\sigma}$ is non-zero only if $\sigma\in\set{0,1}^n$. We need the following basic notions of {\em variance} and {\em influence of a coordinate} on a function.

 \begin{definition}\label{def:influencegaussian}
 	Given a function $f\in L^2\br{\complex,\gamma_n}$,
 	the {\em variance} of $f$ is defined to be
 	\begin{equation}\label{eqn:variance}
 	\var{f}\defeq\expec{\mathbf{x}\sim \gamma_n}{\abs{f\br{\mathbf{x}}-\expec{}{f}}^2}.
 	\end{equation}
 	For any set $S\subseteq[n]$, the {\em conditional variance} $\var{f\br{\mathbf{x}}|\mathbf{x}_S}$ is defined via
 	\begin{equation}\label{eqn:conditionalvariance}
 	\var{f\br{\mathbf{x}}|\mathbf{x}_S}\defeq\expec{\mathbf{x}\sim \gamma_n}{\abs{f\br{\mathbf{x}}-\expec{}{f\br{\mathbf{x}}|\mathbf{x}_S}}^2\text{\Large$\mid$}\mathbf{x}_S}.
 	\end{equation}
 	The {\em influence of $i$-th coordinate} on $f$, denoted by $\influence_i\br{f}$, is defined by
 	\begin{equation}\label{eqn:influencegaussian}
 		\influence_i\br{f}\defeq\expec{\mathbf{x}\sim \gamma_n}{\var{f\br{\mathbf{x}}|\mathbf{x}_{-i}}}.
 	\end{equation}
 	The {\em total influence} of $f$ is defined by
 	\[\influence\br{f}=\sum_i\influence_i\br{f}.\]
  	\end{definition}
 The following fact summarizes the basic properties of variance and influence. Readers may refer to~\cite{Odonnell08} for a thorough treatment.
 \begin{fact}\label{fac:influencegaussian}~\cite{Odonnell08,MosselOdonnell:2010}
 	Given $f\in L^2\br{\complex,\gamma_n}$, it holds that
 	\begin{enumerate}
 \item $\widehat{f}\br{\sigma}\in \reals$ if $f\in L^2\br{\reals,\gamma_n}$;
 		\item $\var{f}=\sum_{\sigma\neq 0^n}\abs{\widehat{f}\br{\sigma}}^2$;
 		\item $\influence_i\br{f}=\sum_{\sigma:\sigma_i\neq 0}\abs{\widehat{f}\br{\sigma}}^2$, and hence for all $i$, $\influence_i\br{f}\leq\var{f}$;
 		\item $\influence\br{f}=\sum_{\sigma}\abs{\set{i:\sigma_i>0}}\abs{\widehat{f}\br{\sigma}}^2$.
 		\item $\influence\br{f}\leq\deg\br{f}\var{f}$.
 	\end{enumerate}
 \end{fact}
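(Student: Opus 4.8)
The plan is to read all five statements off the Hermite expansion of $f$ using Parseval's identity together with the product structure $\gamma_n=\gamma_1^{\otimes n}$ and $H_{\sigma}=\prod_{j}H_{\sigma_j}$. For item~1, I would write $\widehat{f}\br{\sigma}=\innerproduct{H_{\sigma}}{f}_{\gamma_n}=\expec{\mathbf{x}\sim\gamma_n}{\conjugate{H_{\sigma}\br{\mathbf{x}}}f\br{\mathbf{x}}}$ and note from \eqref{eqn:hermitebasis} and \eqref{eqn:hermite} that each $H_{\sigma}$ is a polynomial with real coefficients, hence $H_{\sigma}\br{\mathbf{x}}\in\reals$ for $\mathbf{x}\in\reals^n$; if $f\in L^2\br{\reals,\gamma_n}$ the integrand is real, and so is $\widehat{f}\br{\sigma}$. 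For item~2, since $H_{0^n}\equiv 1$ we have $\expec{}{f}=\widehat{f}\br{0^n}$, so $f-\expec{}{f}=\sum_{\sigma\neq 0^n}\widehat{f}\br{\sigma}H_{\sigma}$, and Parseval's identity gives $\var{f}=\twonorm{f-\expec{}{f}}^2=\sum_{\sigma\neq 0^n}\abs{\widehat{f}\br{\sigma}}^2$.

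Item~3 is the only place requiring a short argument beyond Parseval. I would first establish $\expec{}{f\br{\mathbf{x}}\mid\mathbf{x}_{-i}}=\sum_{\sigma:\sigma_i=0}\widehat{f}\br{\sigma}H_{\sigma}\br{\mathbf{x}}$: integrating the $i$-th coordinate out of the Hermite expansion and using that $\expec{\mathbf{x}_i\sim\gamma_1}{H_{\sigma_i}\br{\mathbf{x}_i}}=\innerproduct{H_0}{H_{\sigma_i}}_{\gamma_1}$ equals $1$ when $\sigma_i=0$ and $0$ otherwise (orthonormality of the one-variable Hermite polynomials), together with $\gamma_n=\gamma_1^{\otimes n}$ and $H_{\sigma}=\prod_{j}H_{\sigma_j}$, yields this identity. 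Hence $f\br{\mathbf{x}}-\expec{}{f\br{\mathbf{x}}\mid\mathbf{x}_{-i}}=\sum_{\sigma:\sigma_i\neq 0}\widehat{f}\br{\sigma}H_{\sigma}\br{\mathbf{x}}$, and by the tower property $\influence_i\br{f}=\expec{\mathbf{x}\sim\gamma_n}{\abs{f\br{\mathbf{x}}-\expec{}{f\br{\mathbf{x}}\mid\mathbf{x}_{-i}}}^2}$, so Parseval's identity gives $\influence_i\br{f}=\sum_{\sigma:\sigma_i\neq 0}\abs{\widehat{f}\br{\sigma}}^2$. Since $\set{\sigma:\sigma_i\neq 0}\subseteq\set{\sigma:\sigma\neq 0^n}$, comparing with item~2 gives $\influence_i\br{f}\leq\var{f}$.

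Items~4 and~5 are then bookkeeping. Summing the identity of item~3 over $i\in[n]$ and exchanging the order of summation gives $\influence\br{f}=\sum_i\sum_{\sigma:\sigma_i\neq 0}\abs{\widehat{f}\br{\sigma}}^2=\sum_{\sigma}\abs{\set{i:\sigma_i>0}}\abs{\widehat{f}\br{\sigma}}^2$. For item~5, the term $\sigma=0^n$ contributes nothing, and for every $\sigma\neq 0^n$ with $\widehat{f}\br{\sigma}\neq 0$ each positive $\sigma_i$ is at least $1$, so $\abs{\set{i:\sigma_i>0}}\leq\sum_i\sigma_i=\abs{\sigma}\leq\deg\br{f}$; therefore $\influence\br{f}\leq\deg\br{f}\sum_{\sigma\neq 0^n}\abs{\widehat{f}\br{\sigma}}^2=\deg\br{f}\var{f}$ by item~2. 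I do not expect any real obstacle here: the conditional-expectation identity used in item~3 is the only step that invokes more than Parseval's identity, and it is an immediate consequence of the product structure of $\gamma_n$ and of the Hermite basis; everything else is rearrangement of absolutely convergent sums.
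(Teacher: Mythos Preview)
Your proof is correct and entirely standard. The paper does not actually prove this statement: it is recorded as a Fact with citations to \cite{Odonnell08,MosselOdonnell:2010} and no argument is given, so there is nothing to compare against beyond noting that your derivation is exactly the expected one.
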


\begin{definition}\label{def:ornstein}
  Given $0\leq\rho\leq 1$ and $f\in L^2\br{\complex,\gamma_n}$, we define the \em Ornstein-Uhlenbeck operator $U_{\rho}$ to be
  \[U_{\rho}f\br{z}\defeq\expec{\mathbf{x}\sim \gamma_n}{f\br{\rho z+\sqrt{1-\rho^2}\mathbf{x}}}.\]
\end{definition}

\begin{fact}\label{fac:gaussiannoisy}~\cite[Page 338, Proposition 11.37]{Odonnell08}
	For any $0\leq\rho\leq 1$ and $f\in L^2\br{\complex,\gamma_n}$, it holds that
	\[U_{\rho}f=\sum_{\sigma\in\mathbb{Z}_{\geq0}^n}\widehat{f}\br{\sigma}\rho^{\abs{\sigma}}H_{\sigma}.\]
\end{fact}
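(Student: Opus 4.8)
The plan is to combine linearity with an explicit eigenvalue computation on the Hermite basis $\set{H_\sigma}$. First I would record that $U_\rho$ is a contraction on $L^2\br{\complex,\gamma_n}$: if $\mathbf{z},\mathbf{x}\sim\gamma_n$ are independent then $\rho\mathbf{z}+\sqrt{1-\rho^2}\,\mathbf{x}$ is again distributed as $\gamma_n$, so Jensen's inequality applied to the inner expectation in Definition~\ref{def:ornstein} gives
\[\twonorm{U_\rho f}^2\leq\expec{\mathbf{z},\mathbf{x}\sim\gamma_n}{\abs{f\br{\rho\mathbf{z}+\sqrt{1-\rho^2}\,\mathbf{x}}}^2}=\twonorm{f}^2.\]
In particular $U_\rho$ is bounded and $\complex$-linear, hence commutes with $L^2$-convergent series, so it suffices to establish $U_\rho H_\sigma=\rho^{\abs{\sigma}}H_\sigma$ for every multi-index $\sigma$; the formula for a general $f=\sum_\sigma\widehat f\br{\sigma}H_\sigma$ then follows by applying $U_\rho$ term by term.

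Next I would reduce to one variable. Since $H_\sigma\br{x}=\prod_{i=1}^nH_{\sigma_i}\br{x_i}$, the displacement $x\mapsto\rho z+\sqrt{1-\rho^2}\,x$ acts coordinatewise, and $\gamma_n$ is a product of $n$ copies of $\gamma_1$, the defining expectation factorizes as a product over $i$ of one-variable expectations, so it remains to prove $\expec{\mathbf{x}\sim\gamma_1}{H_r\br{\rho z+\sqrt{1-\rho^2}\,\mathbf{x}}}=\rho^rH_r\br{z}$ for each $r\geq 0$; taking the product over $i$ then yields $U_\rho H_\sigma=\rho^{\sum_i\sigma_i}H_\sigma=\rho^{\abs{\sigma}}H_\sigma$. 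For the one-variable identity I would invoke the exponential generating identity $\sum_{r\geq 0}\frac{t^r}{\sqrt{r!}}H_r\br{x}=e^{tx-t^2/2}$, which is a short Taylor-series manipulation of~\eqref{eqn:hermitebasis}, together with the Gaussian moment generating function $\expec{\mathbf{x}\sim\gamma_1}{e^{s\mathbf{x}}}=e^{s^2/2}$. Averaging the generating series in $\mathbf{x}$, with $s=t\sqrt{1-\rho^2}$,
\[\expec{\mathbf{x}\sim\gamma_1}{e^{t\br{\rho z+\sqrt{1-\rho^2}\,\mathbf{x}}-t^2/2}}=e^{t\rho z-t^2/2}\cdot e^{t^2\br{1-\rho^2}/2}=e^{\br{t\rho}z-\br{t\rho}^2/2}=\sum_{r\geq 0}\frac{\br{t\rho}^r}{\sqrt{r!}}H_r\br{z},\]
and comparing the coefficients of $t^r$ on the two sides delivers the claimed one-variable identity.

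The computation is routine, so I do not expect a genuine obstacle; the only points that deserve a word of care are the two interchanges of an expectation with an infinite sum. Passing $U_\rho$ through the Hermite expansion of $f$ is legitimate because $U_\rho$ is $L^2$-bounded, by the contraction estimate above. For the generating-function step, one checks that $w\mapsto e^{tw-t^2/2}$ lies in $L^2\br{\complex,\gamma_1}$ for every $t$, so that $U_\rho$ may be applied term by term to the $L^2$-convergent expansion $e^{t\,\cdot\,-t^2/2}=\sum_r\frac{t^r}{\sqrt{r!}}H_r$; the resulting identity $\sum_r\frac{t^r}{\sqrt{r!}}\,U_\rho H_r=\sum_r\frac{\br{t\rho}^r}{\sqrt{r!}}H_r$ of $L^2$-functions, valid for all $t$, then forces $U_\rho H_r=\rho^rH_r$ by the uniqueness of Hermite coefficients, which is exactly what is needed to carry the argument through rigorously.
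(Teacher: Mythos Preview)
Your argument is correct and is essentially the standard proof of this fact. Note that the paper does not supply its own proof here: the statement is recorded as a \emph{Fact} and attributed to \cite[Proposition~11.37]{Odonnell08}, so there is no in-paper proof to compare against. Your route---boundedness of $U_\rho$, tensorization to one variable, and the Hermite generating function $\sum_{r\geq 0}\frac{t^r}{\sqrt{r!}}H_r(x)=e^{tx-t^2/2}$---is exactly the textbook derivation, and your care about the two limit interchanges is appropriate.
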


%
	
	We will also be working on {\em vector-valued functions} in this paper.  A vector-valued function $f=\br{f_1,\ldots,f_k}:\reals^n\rightarrow\complex^k$ is in $L^2\br{\complex^k,\gamma_n}$ if $f_i\in L^2\br{\complex,\gamma_n}$ for all $i$. It is in $L^2\br{\reals^k,\gamma_n}$ if $f_i\in L^2\br{\reals,\gamma_n}$ for all $i$.  For any $f,g\in L^2\br{\complex^k,\gamma_n}$, the inner product $\innerproduct{f}{g}_{\gamma_n}\defeq\sum_{t=1}^k\innerproduct{f_t}{g_t}_{\gamma_n}$. The $p$-norm of $f$ is $\norm{f}_p=\br{\sum_{t=1}^k\norm{f_t}_p^p}^{1/p}$.
	For any $f\in L^2\br{\complex^k,\gamma_n}$, the Hermite coefficients of $f$ are the vectors  $\widehat{f}\br{\sigma}\defeq\br{\widehat{f_1}\br{\sigma},\ldots,\widehat{f_k}\br{\sigma}}$. The degree of $f$ is $\deg\br{f}\defeq\max_t\deg\br{f_t}$. We say $f$ is multilinear if each $f_t$ is multilinear. The variance of $f$ is $\var{f}\defeq\sum_t\var{f_t}$. The influence of $i$-th coordinate on $f$ is $\influence_i\br{f}\defeq\sum_i\influence_i\br{f_t}$. The total influence of $f$ is $\influence\br{f}=\sum_i\influence_i\br{f}$.
	The action of Ornstein-Uhlenbeck opeartor on $f$ is defined to be $U_{\rho}f\defeq\br{U_{\rho}f_1,\ldots, U_{\rho}f_k}$.
	
	For any vector $v\in\complex^k$, the norm of $v$ is defined to be $\norm{v}_2\defeq\sqrt{\sum_{i=1}^k\abs{v_i}^2}$.
	
	 It is easy to verify that Fact~\ref{fac:influencegaussian} and Fact~\ref{fac:gaussiannoisy} can be generalized to vector-valued functions by definitions.
	
	\begin{fact}\label{fac:vecfun}
		Given $f\in L^2\br{\complex^k,\gamma_n}$ and $0\leq\rho\leq 1$, it holds that
		\begin{enumerate}
			\item $\widehat{f}\br{\sigma}\in \reals^k$ if $f\in L^2\br{\reals^k,\gamma_n}$;
			\item $\var{f}=\sum_{\sigma\neq 0^n}\norm{\widehat{f}\br{\sigma}}_2^2$;
			\item $\influence_i\br{f}=\sum_{\sigma:\sigma_i\neq 0}\norm{\widehat{f}\br{\sigma}}_2^2$, and hence for all $i$, $\influence_i\br{f}\leq\var{f}$;
			\item $\influence\br{f}=\sum_{\sigma}\abs{\set{i:\sigma_i>0}}\twonorm{\widehat{f}\br{\sigma}}^2$;
			\item $\influence\br{f}\leq\deg\br{f}\var{f}$;
			\item $U_\rho f=\sum_{\sigma\in\mathbb{Z}_{\geq 0}^n}\widehat{f}\br{\sigma}\rho^{\abs{\sigma}}H_{\sigma}$.
			
		\end{enumerate}
	\end{fact}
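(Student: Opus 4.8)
The plan is to reduce every assertion to the corresponding scalar statement in \cref{fac:influencegaussian} and \cref{fac:gaussiannoisy}, applied componentwise to $f=\br{f_1,\ldots,f_k}$, and then to recombine using the definitions of the vector-valued quantities as sums (or maxima) over the $k$ coordinates. Since $k$ is finite and, within each coordinate, the sum over multi-indices $\sigma\in\mathbb{Z}_{\geq 0}^n$ converges absolutely by Parseval's identity, every interchange of a sum over coordinates with a sum over multi-indices is legitimate; this is the only piece of bookkeeping that needs any attention.

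Concretely, for item~1 I would recall that $\widehat{f}\br{\sigma}=\br{\widehat{f_1}\br{\sigma},\ldots,\widehat{f_k}\br{\sigma}}$ by definition, so the claim is immediate from item~1 of \cref{fac:influencegaussian} applied to each $f_t$. For item~2, write $\var{f}=\sum_{t=1}^k\var{f_t}$ (the definition), substitute item~2 of \cref{fac:influencegaussian} in each summand, and swap the two sums to get $\sum_{\sigma\neq 0^n}\sum_{t=1}^k\abs{\widehat{f_t}\br{\sigma}}^2=\sum_{\sigma\neq 0^n}\twonorm{\widehat{f}\br{\sigma}}^2$, where the last equality is just the definition $\twonorm{\widehat{f}\br{\sigma}}^2=\sum_{t=1}^k\abs{\widehat{f_t}\br{\sigma}}^2$. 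Items~3 and~4 follow in exactly the same way from items~3 and~4 of \cref{fac:influencegaussian} together with the definitions $\influence_i\br{f}=\sum_t\influence_i\br{f_t}$ and $\influence\br{f}=\sum_i\influence_i\br{f}$; the inequality $\influence_i\br{f}\leq\var{f}$ is then the observation that restricting the sum from $\sigma\neq 0^n$ to $\sigma$ with $\sigma_i\neq 0$ only drops nonnegative terms.

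For item~5 I would chain the scalar bounds: $\influence\br{f}=\sum_t\influence\br{f_t}\leq\sum_t\deg\br{f_t}\var{f_t}\leq\deg\br{f}\sum_t\var{f_t}=\deg\br{f}\var{f}$, using item~5 of \cref{fac:influencegaussian} componentwise and the definition $\deg\br{f}=\max_t\deg\br{f_t}$. Finally item~6 follows from \cref{fac:gaussiannoisy}: by definition $U_\rho f=\br{U_\rho f_1,\ldots,U_\rho f_k}$, each coordinate equals $\sum_\sigma\widehat{f_t}\br{\sigma}\rho^{\abs{\sigma}}H_\sigma$, and reassembling the $k$ coordinates gives $\sum_\sigma\widehat{f}\br{\sigma}\rho^{\abs{\sigma}}H_\sigma$. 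There is no genuine obstacle here — as the surrounding text indicates, this is a routine ``by definition'' extension; the only point requiring care is to use the identity $\twonorm{\widehat{f}\br{\sigma}}^2=\sum_{t=1}^k\abs{\widehat{f_t}\br{\sigma}}^2$ consistently when passing between the scalar and vector-valued pictures, and to note that each summand is nonnegative so all rearrangements are valid.
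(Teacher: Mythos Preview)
Your proposal is correct and matches the paper's approach exactly: the paper does not write out a proof but simply remarks that \cref{fac:influencegaussian} and \cref{fac:gaussiannoisy} ``can be generalized to vector-valued functions by definitions,'' which is precisely the componentwise reduction you carry out. Your treatment is in fact more detailed than what the paper provides.
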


	We are also working on the joint distribution $\rho$-correlated Gaussian distribution $\br{\reals\times\reals,\G_{\rho}}$. This is a $2$-dimensional Gaussian distribution $\br{X,Y}$, where $X$ and $Y$ are marginal distributed according to $\gamma_1$ with $\expec{}{XY}=\rho$.
	
	\subsection{Quantum mechanics}

	We briefly review the formalism of quantum mechanics over a finite dimensional system. For a more thorough treatment, readers may refer to~\cite{NC00,Wat08}. For a quantum system $A$, we associate a finite dimensional Hilbert space, which, by abuse of notation, is also denoted by $A$. We denote by $\M\br{A}$ and $\H\br{A}$ the set of all linear operators and the set of  Hermitian operators in the space, respectively. The identity operator in $A$ is denoted by $\id_A$.  If the dimension of $A$ is $d$, then we write $\M\br{A}=\M_d$, $\H\br{A}=\H_d$ and $\id_A=\id_d$. The subscripts may be dropped whenever it is clear from the context. A quantum state in the quantum system $A$ is represent by a {\em density operator} $\rho_A$, a positive semi-definite operator over the Hilbert space $A$ with unit trace. We denote by $\D\br{A}$ the set of all density operators in $A$. A quantum state is {\em pure} if the density operator is a rank-one projector $\ketbra{\psi}$, which is also represented by $\ket{\psi}$ for convenience. Composite quantum systems are associated with the {\em (Kronecker) tensor product space} of the underlying spaces, i.e., for quantum systems $A$ and $B$, the composition of the two systems are represented by $A\otimes B$ with the sets of linear operators, Hermitian  operators and density operators denoted by $\M\br{A\otimes B}$, $\H\br{A\otimes B}$ and $\D\br{A\otimes B}$, respectively. We sometimes use the shorthand $AB$ for $A\otimes B$. The sets of the linear operators and the Hermitian operators in the composition of $n$ $d$-dimensional Hilbert spaces are denoted by $\M_d^{\otimes n}$ and  $\H_d^{\otimes n}$, respectively. With slight abuse of notations, we assume that $\M_d^{\otimes n}=\complex$ and $\H_d^{\otimes n}=\reals$ when $n=0$. A {\em quantum channel} from the input system $A$ to the output system $B$ is represented by a {\em completely positive, trace-preserving linear map} (CPTP map).  The set of all quantum channels from a system $A$ to a system $B$ is denoted $\L\br{A,B}$. A {\em quantum operator} on $A$ is a channel acting $A$. The set of quantum operators on $A$ is denoted $\L\br{A}$. An important operation on a composite system $A\otimes B$ is the {\em partial trace} $\Tr_B\rho_{AB}$ which effectively derives the marginal state of the subsystem $A$ from the quantum state $\rho_{AB}$.  The partial trace is given by $\Tr_B\rho_{AB}\defeq\sum_i\br{\id_A\otimes\bra{i}}\rho_{AB}\br{\id_A\otimes\ket{i}}$ where $\set{\ket{i}}$ is an orthonormal basis for $B$. The partial trace is a valid quantum channel in $\L\br{A\otimes B, A}$ Note that the  action is independent of the choices of the basis chosen to represent it , so we unambiguously write $\rho_A=\Tr_B\rho_{AB}$. In this paper, we may also apply the partial trace to an arbitrary operator in $\M\br{A\otimes B}$. A pure state evolution on system $A$ with state $\ket{\psi}$ is represented by a unitary operator $U^A$, denoted by $U^A\ket{\psi}$.  An evolution of register $B$ of a state $\ket{\psi}_{AB}$ under the action of a unitary $U^B$  is represented by $\br{\id^A\otimes U^B}\ket{\psi}_{AB}$.  The superscripts and the subscripts might be dropped whenever it is clear from the context.  A {\em quantum measurement} is represented by a {\em positive-operator valued measure} (POVM), which is a set of positive semi-definite operators $\set{M_i}_{i=1}^n$ satisfying $\sum_{i=1}^nM_i=\id$, where $n$ is the number of possible measurement outcomes. Suppose the state  of the quantum system is $\rho$. Applying the measurement $\set{M_i}_{i=1}^n$, the probability that outputs $i$ is $\Tr M_i\rho$.

In this paper, we need the following fact.
\begin{fact}\label{fac:cauchyschwartz}
Given registers $A, B$, operators $P\in\H\br{A}, Q\in\H\br{B}$ and a bipartite state $\psi_{AB}$, it holds that
\begin{enumerate}
\item $\Tr\br{P\otimes\id_B}\psi_{AB}=\Tr P\psi_A$.

\item $\abs{\Tr\br{P\otimes Q}\psi_{AB}}\leq\br{\Tr P^2\psi_A}^{1/2}\cdot\br{\Tr Q^2\psi_B}^{1/2}$.
\end{enumerate}
\end{fact}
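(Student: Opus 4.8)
For the first identity I would simply unfold the definition of the partial trace. Since $P$ acts trivially on $B$, one has $\partrace{B}{\br{P\otimes\id_B}\psi_{AB}}=P\,\partrace{B}{\psi_{AB}}=P\psi_A$, and then $\Tr\br{P\otimes\id_B}\psi_{AB}=\partrace{A}{P\psi_A}=\Tr P\psi_A$. Equivalently, expanding $\id_B=\sum_i\ketbra{i}$ in any orthonormal basis of $B$ reduces the claim to linearity and cyclicity of the trace. There is nothing substantive to do here.

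For the second item the plan is to exhibit $\Tr\br{P\otimes Q}\psi_{AB}$ as a Hilbert--Schmidt inner product of two operators and then apply the Cauchy--Schwarz inequality. Since $\psi_{AB}$ is a density operator it is positive semidefinite, so let $R\defeq\psi_{AB}^{1/2}$ be its positive semidefinite (hence Hermitian) square root, so that $R^2=\psi_{AB}$. Writing $P\otimes Q=\br{P\otimes\id_B}\br{\id_A\otimes Q}$ and using cyclicity of the trace together with the Hermiticity of $P$, $Q$ and $R$, one rewrites
\[\Tr\br{P\otimes Q}\psi_{AB}=\Tr\br{\br{\br{P\otimes\id_B}R}^{*}\br{\id_A\otimes Q}R}=\innerproduct{\br{P\otimes\id_B}R}{\br{\id_A\otimes Q}R}_{\mathrm{HS}},\]
where $\innerproduct{X}{Y}_{\mathrm{HS}}\defeq\Tr\adjoint{X}Y$. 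Applying the Cauchy--Schwarz inequality for $\innerproduct{\cdot}{\cdot}_{\mathrm{HS}}$ bounds the absolute value of the left-hand side by $\norm{\br{P\otimes\id_B}R}_{\mathrm{HS}}\cdot\norm{\br{\id_A\otimes Q}R}_{\mathrm{HS}}$. Finally, using $R^2=\psi_{AB}$ and the first item applied to $P^2\in\H\br{A}$ and $Q^2\in\H\br{B}$,
\[\norm{\br{P\otimes\id_B}R}_{\mathrm{HS}}^2=\Tr\br{P^2\otimes\id_B}\psi_{AB}=\Tr P^2\psi_A,\qquad\norm{\br{\id_A\otimes Q}R}_{\mathrm{HS}}^2=\Tr Q^2\psi_B,\]
which yields exactly the claimed bound.

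This argument is entirely routine and I do not expect a real obstacle; the only points needing a little care are (a) tracking Hermiticity so the adjoints in the rewriting collapse to $P$, $Q$ with no spurious conjugations, and (b) observing that it is precisely the positive semidefiniteness of $\psi_{AB}$ that both licenses the square-root trick and guarantees $\Tr P^2\psi_A\ge 0$ and $\Tr Q^2\psi_B\ge 0$, so that the square roots on the right-hand side are well defined. One could equally diagonalize $\psi_{AB}=\sum_k\lambda_k\ketbra{v_k}$ with $\lambda_k\ge 0$ and apply the ordinary Cauchy--Schwarz inequality to each $\bra{v_k}\br{P\otimes Q}\ket{v_k}$ followed by Cauchy--Schwarz on the weights $\sqrt{\lambda_k}$, or pass to a purification of $\psi_{AB}$; all of these amount to the same computation.
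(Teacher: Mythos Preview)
Your proposal is correct. The paper states this as a ``Fact'' without proof, treating it as a standard background result; your argument via the Hilbert--Schmidt inner product with $R=\psi_{AB}^{1/2}$ is exactly the routine computation one would expect, and there is nothing to compare against.
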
	

    \subsection{Matrix analysis and matrix spaces}\label{subsec:matrixspace}

	Given an $m\times n$ matrix $M$, $\mathsf{abs}\br{M}$ represents the matrix obtained by substituting each entry of $M$ by its absolute value. for $1\leq p\leq\infty$ the $p$-norm of $M$ is defined to be $\norm{M}_p\defeq\br{\sum_{i=1}^{\min\set{m,n}}s_i\br{M}^p}^{1/p}$, where $\br{s_1\br{M},s_2\br{M},\ldots}$ are the singular values of $M$ sorted in a non-increasing order.  $\norm{M}\defeq\norm{M}_{\infty}=s_1\br{M}$ when $p=\infty$ It is easy to verify that $\norm{M}_p\leq\norm{M}_q$ if $p\geq q$. For $M\in\M_d$, $\abs{M}\defeq\sqrt{M^{\dagger}M}$. The {\em normalized $p$-norm} of $M$ is defined as $\nnorm{M}_p\defeq\br{\frac{1}{d}\sum_{i=1}^ds_i\br{M}^p}^{1/p}$ and $\nnorm{M}\defeq\nnorm{M}_{\infty}=s_1\br{M}$. We have $\nnorm{M}_p\geq\nnorm{M}_q$ if $p\geq q$. For any $M\in\H_d$, $\br{\lambda_1\br{M},\ldots,\lambda_d\br{M}}$ represents the eigenvalues of $M$ in a non-increasing order. Given two matrices $A, B$ of the same dimension, the {\em Hadamard product} of $A$ and $B$ is $A\circ B$, where $\br{A\circ B}\br{i,j}\defeq A\br{i,j}\cdot B\br{i,j}$.
	
	Given $P,Q\in\M_d$, we define
	\begin{equation}\label{eqn:innerproduct}	
	\innerproduct{P}{Q}\defeq\frac{1}{d}\Tr~P^{\dagger}Q.
	\end{equation}

	\begin{fact}\label{fac:innerproduct}
		$\innerproduct{\cdot}{\cdot}$ is an inner product. Then $\br{\innerproduct{\cdot}{\cdot},\M_d}$ forms a Hilbert space of dimension $d^2$. For any $M\in\M_d$, $\nnorm{M}_2^2=\innerproduct{M}{M}$.
	\end{fact}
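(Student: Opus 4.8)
The statement is elementary, so the plan is simply to unwind the definition \eqref{eqn:innerproduct} and verify the three axioms of a complex inner product, then invoke finite-dimensionality for completeness, and finally identify $\innerproduct{M}{M}$ with the normalized Frobenius quantity via the spectral decomposition of $M^{\dagger}M$. I do not expect any genuine obstacle; the only points needing (trivial) care are the conjugate-linearity convention and the fact that ``dimension $d^2$'' refers to the complex vector-space dimension of $\M_d$, not to $d$.

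\textbf{Inner-product axioms.} First I would check conjugate symmetry: since $\br{P^{\dagger}Q}^{\dagger}=Q^{\dagger}P$ and $\Tr\br{X^{\dagger}}=\conjugate{\Tr X}$ for every $X\in\M_d$, we get $\innerproduct{P}{Q}=\frac1d\Tr P^{\dagger}Q=\conjugate{\frac1d\Tr Q^{\dagger}P}=\conjugate{\innerproduct{Q}{P}}$. Linearity in the second argument, $\innerproduct{P}{\alpha Q_1+\beta Q_2}=\alpha\innerproduct{P}{Q_1}+\beta\innerproduct{P}{Q_2}$ for $\alpha,\beta\in\complex$, is immediate from linearity of the trace and of matrix multiplication; combined with conjugate symmetry this yields conjugate-linearity in the first slot, so $\innerproduct{\cdot}{\cdot}$ obeys the usual (physicists') sesquilinearity convention. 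For positive-definiteness I would write $M=\br{M_{ij}}$ and note $\Tr M^{\dagger}M=\sum_{i,j}\abs{M_{ij}}^2\ge 0$, with equality iff every entry of $M$ vanishes, i.e.\ $M=0$. Hence $\innerproduct{M}{M}\ge 0$ with equality precisely when $M=0$, completing the verification that $\br{\innerproduct{\cdot}{\cdot},\M_d}$ is an inner product space.

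\textbf{Hilbert space structure.} Next I would observe that $\M_d$, as a complex vector space, has the $d^2$ matrix units $\set{E_{ij}}_{1\le i,j\le d}$ as a basis, so $\dim_{\complex}\M_d=d^2<\infty$. Any finite-dimensional inner product space over $\complex$ is automatically complete (it is linearly isometric to $\complex^{d^2}$ with its standard inner product, which is complete), so $\br{\innerproduct{\cdot}{\cdot},\M_d}$ is a Hilbert space of dimension $d^2$.

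\textbf{Identification with $\nnorm{\cdot}_2$.} Finally, $M^{\dagger}M$ is positive semidefinite and, by definition of the singular values, has eigenvalues $s_1\br{M}^2,\dots,s_d\br{M}^2$; taking the trace gives $\Tr M^{\dagger}M=\sum_{i=1}^d s_i\br{M}^2$, whence $\innerproduct{M}{M}=\frac1d\sum_{i=1}^d s_i\br{M}^2=\nnorm{M}_2^2$. This also re-confirms positive-definiteness, and shows the norm induced by $\innerproduct{\cdot}{\cdot}$ is exactly the normalized $2$-norm from Subsection~\ref{subsec:matrixspace}.
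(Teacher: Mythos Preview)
Your proof is correct and complete. The paper states this as a \emph{Fact} without proof (it is standard), so there is no ``paper's own proof'' to compare against; your verification of the inner-product axioms, finite-dimensional completeness, and the identification $\innerproduct{M}{M}=\frac{1}{d}\Tr M^{\dagger}M=\frac{1}{d}\sum_i s_i(M)^2=\nnorm{M}_2^2$ is exactly what one would write to justify the statement.
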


	We say $\set{\B_0,\ldots,\B_{d^2-1}}$ is a {\em standard orthonormal basis} in $\M_d$. if  it is an orthonormal basis with all elements being Hermitian and $\B_0=\id_d$.
	
	\begin{fact}\label{fac:unitarybasis}
		Given a standard orthonormal basis $\set{\B_0,\ldots,\B_{d^2-1}}$, the set $\set{\B'_0,\ldots,\B'_{d^2-1}}$ is also a standard orthonormal basis in $\M_d$ if and only if $\B'_0=\id_d$ and there exists a $\br{d^2-1}\times \br{d^2-1}$ orthogonal matrix $U$ such that $\B'_i=\sum_{j=1}^{d^2-1}U_{i,j}\B_j$ for all $1\leq i\leq d^2-1$.
	\end{fact}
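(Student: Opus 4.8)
The plan is a direct verification of both implications inside the Hilbert space $\br{\innerproduct{\cdot}{\cdot},\M_d}$ of Fact~\ref{fac:innerproduct}. Two elementary observations drive everything. First, $\innerproduct{\id_d}{\id_d}=\frac{1}{d}\Tr\id_d=1$, so the normalization $\B_0=\id_d$ is automatically consistent with being a unit vector. Second, for Hermitian $P$ one has $\innerproduct{\id_d}{P}=\frac{1}{d}\Tr P$, so orthogonality of a Hermitian matrix to $\id_d$ is the same as tracelessness; moreover the inner product of two Hermitian matrices, $\frac{1}{d}\Tr\br{P^{\dagger}Q}=\frac{1}{d}\Tr\br{PQ}$, is always real, so every coefficient appearing below is real and ``orthogonal'' is the correct (real) analogue of ``unitary'' here.

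For the ``if'' direction I would assume $\B'_0=\id_d$ and $\B'_i=\sum_{j=1}^{d^2-1}U_{i,j}\B_j$ with $U$ a real $\br{d^2-1}\times\br{d^2-1}$ orthogonal matrix, and check the three defining properties of a standard orthonormal basis. Hermiticity of $\B'_0=\id_d$ is clear, and for $i\geq1$ each $\B'_i$ is a real linear combination of the Hermitian $\B_j$'s, hence Hermitian; the condition $\B'_0=\id_d$ holds by hypothesis. For orthonormality, $\innerproduct{\B'_0}{\B'_0}=1$; for $i\geq1$, $\innerproduct{\B'_0}{\B'_i}=\sum_{j}U_{i,j}\innerproduct{\id_d}{\B_j}=0$ since every $\B_j$ with $j\geq1$ is traceless; and for $i,i'\geq1$, $\innerproduct{\B'_i}{\B'_{i'}}=\sum_{j,k}U_{i,j}U_{i',k}\innerproduct{\B_j}{\B_k}=\sum_{j=1}^{d^2-1}U_{i,j}U_{i',j}=\br{UU^{T}}_{i,i'}=\delta_{i,i'}$ by orthogonality of $U$. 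These $d^2$ orthonormal vectors in the $d^2$-dimensional space $\M_d$ then form a basis.

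For the ``only if'' direction, given a standard orthonormal basis $\set{\B'_0,\ldots,\B'_{d^2-1}}$, we have $\B'_0=\id_d$ by definition, and expanding each $\B'_i$ in the old basis gives $\B'_i=\sum_{j=0}^{d^2-1}c_{i,j}\B_j$ with $c_{i,j}=\innerproduct{\B_j}{\B'_i}\in\reals$. From $\B'_0=\id_d=\B_0$ we get $c_{0,j}=\delta_{0,j}$, and for $i\geq1$ we get $c_{i,0}=\innerproduct{\id_d}{\B'_i}=\frac{1}{d}\Tr\B'_i=0$ because $\B'_i$ is orthogonal to $\B'_0=\id_d$. Hence for $i\geq1$, $\B'_i=\sum_{j=1}^{d^2-1}c_{i,j}\B_j$; setting $U_{i,j}\defeq c_{i,j}$ for $1\leq i,j\leq d^2-1$ defines a real $\br{d^2-1}\times\br{d^2-1}$ matrix, and $\br{UU^{T}}_{i,i'}=\sum_{j=1}^{d^2-1}c_{i,j}c_{i',j}=\innerproduct{\B'_i}{\B'_{i'}}=\delta_{i,i'}$, so $U$ is orthogonal and yields the desired representation.

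There is no serious obstacle here: this is a bookkeeping lemma. The only points that need a moment's attention are that the relevant expansion coefficients are real (so the change of basis among the traceless Hermitian matrices is governed by a real orthogonal, not merely unitary, matrix), and that the $j=0$ component drops out of the expansion of each $\B'_i$ for $i\geq1$ --- which is exactly the place where orthogonality to $\id_d$, i.e. tracelessness, is used.
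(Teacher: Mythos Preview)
Your proof is correct. The paper states this as a Fact without proof, so there is no argument to compare against; your direct verification of both implications using the real-valuedness of $\innerproduct{P}{Q}$ for Hermitian $P,Q$ and the identification of orthogonality to $\id_d$ with tracelessness is exactly the intended routine check.
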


			\begin{fact}\label{fac:paulimutiplecopy}
		Let $\set{\B_i}_{i=0}^{d^2-1}$ be a standard orthonormal basis in $\M_d$, then
		\[\set{\B_{\sigma}\defeq\otimes_{i=1}^n\B_{\sigma_i}}_{\sigma\in[d^2]_{\geq 0}^n}\]
		is a standard orthonormal basis in $\M_d^{\otimes n}$.
	\end{fact}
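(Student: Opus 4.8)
The plan is to check directly the three defining conditions of a standard orthonormal basis for the family $\set{\B_{\sigma}}_{\sigma\in[d^2]_{\geq 0}^n}$ — namely, orthonormality with respect to $\innerproduct{\cdot}{\cdot}$, Hermiticity of every element, and the normalization $\B_{0^n}=\id_{d^n}$ (where $0^n$ plays the role of the index $0$) — and then to upgrade the resulting orthonormal family to a basis by a dimension count.

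First I would record that the normalized trace inner product of Eq.~\eqref{eqn:innerproduct} factorizes over tensor products: for $A_1,\ldots,A_n,C_1,\ldots,C_n\in\M_d$,
\[
\innerproduct{A_1\otimes\cdots\otimes A_n}{C_1\otimes\cdots\otimes C_n}
=\frac{1}{d^n}\Tr\br{\br{A_1^{\dagger}C_1}\otimes\cdots\otimes\br{A_n^{\dagger}C_n}}
=\prod_{i=1}^n\frac{1}{d}\Tr A_i^{\dagger}C_i
=\prod_{i=1}^n\innerproduct{A_i}{C_i},
\]
using $\br{A_1\otimes\cdots\otimes A_n}^{\dagger}=A_1^{\dagger}\otimes\cdots\otimes A_n^{\dagger}$, multiplicativity of the trace over tensor factors, and $\tfrac{1}{d^n}=\prod_{i=1}^n\tfrac{1}{d}$. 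Applying this to $\B_{\sigma}$ and $\B_{\tau}$ for $\sigma,\tau\in[d^2]_{\geq 0}^n$ and invoking orthonormality of $\set{\B_i}_{i=0}^{d^2-1}$ in $\M_d$ gives $\innerproduct{\B_{\sigma}}{\B_{\tau}}=\prod_{i=1}^n\innerproduct{\B_{\sigma_i}}{\B_{\tau_i}}=\prod_{i=1}^n\delta_{\sigma_i,\tau_i}=\delta_{\sigma,\tau}$, so the family is orthonormal. Hermiticity is immediate: $\B_{\sigma}^{\dagger}=\B_{\sigma_1}^{\dagger}\otimes\cdots\otimes\B_{\sigma_n}^{\dagger}=\B_{\sigma_1}\otimes\cdots\otimes\B_{\sigma_n}=\B_{\sigma}$ since each $\B_{\sigma_i}$ is Hermitian; and $\B_{0^n}=\otimes_{i=1}^n\B_0=\otimes_{i=1}^n\id_d=\id_{d^n}$, the identity on $\M_d^{\otimes n}=\M_{d^n}$. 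Finally, the index set $[d^2]_{\geq 0}^n$ has cardinality $\br{d^2}^n=\br{d^n}^2$, which by Fact~\ref{fac:innerproduct} is exactly the dimension of the Hilbert space $\br{\innerproduct{\cdot}{\cdot},\M_d^{\otimes n}}$; an orthonormal family of this size is automatically an orthonormal basis. Combining the four observations yields the claim.

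There is no real obstacle here: the proof is a routine verification. The only point deserving care is keeping track of the normalization factor $1/d$ in Eq.~\eqref{eqn:innerproduct} through the tensor factorization — it is precisely the identity $1/d^n=\prod_{i=1}^n 1/d$ that makes the inner product (rather than merely the unnormalized one) multiplicative across tensor factors, which is what preserves orthonormality instead of orthogonality alone.
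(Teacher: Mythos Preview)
Your proof is correct and is exactly the routine verification one would expect; the paper states this as a Fact without proof, and your argument supplies precisely the standard check (multiplicativity of the normalized inner product over tensors, Hermiticity of tensor products of Hermitians, $\B_{0^n}=\id_{d^n}$, and the dimension count).
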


The following lemma guarantees the existence of the standard orthonormal basis.
	\begin{lemma}\label{lem:paulibasis}
	For any integer $d\geq 2$, there exists a standard orthonormal basis in $\M_d$.
\end{lemma}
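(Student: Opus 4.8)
The plan is to construct the basis by Gram--Schmidt orthonormalization inside the \emph{real} vector space of Hermitian matrices, and then observe that the resulting family is automatically an orthonormal basis of the larger \emph{complex} space $\M_d$. An explicit alternative (generalized Gell--Mann matrices) is also available, which I mention for concreteness.

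First I would record two elementary facts. The set $\H_d$ of $d\times d$ Hermitian matrices is a real vector space of dimension $d^2$, and for $P,Q\in\H_d$ the number $\innerproduct{P}{Q}=\frac1d\Tr P^{\dagger}Q=\frac1d\Tr PQ$ is real (indeed $\overline{\Tr PQ}=\Tr(PQ)^{\dagger}=\Tr QP=\Tr PQ$ for Hermitian $P,Q$), so $\innerproduct{\cdot}{\cdot}$ restricts to a bona fide inner product on the real space $\H_d$. Moreover $\id_d\in\H_d$ and $\innerproduct{\id_d}{\id_d}=\frac1d\Tr\id_d=1$, so $\id_d$ is a unit vector in this space.

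Next I would run Gram--Schmidt on any basis of $\H_d$ whose first vector is $\id_d$. This produces an orthonormal basis $\B_0,\B_1,\ldots,\B_{d^2-1}$ of $\H_d$ with $\B_0=\id_d$, and each $\B_i$, being a real-linear combination of Hermitian matrices, is itself Hermitian. Concretely, one may instead take the identity together with the matrices $E_{jk}+E_{kj}$ and $\mathrm{i}\,(E_{kj}-E_{jk})$ for $1\le j<k\le d$ and the $d-1$ traceless diagonal matrices $\sum_{j\le l}E_{jj}-l\,E_{l+1,l+1}$, $1\le l\le d-1$, each suitably rescaled, where the $E_{jk}$ denote the matrix units; this family has cardinality $1+d(d-1)+(d-1)=d^2$, consists of Hermitian matrices, is pairwise orthogonal, and for $d=2$ it is the normalized Pauli basis.

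Finally I would check that $\B_0,\ldots,\B_{d^2-1}$ is a standard orthonormal basis of $\M_d$ in the sense of Fact~\ref{fac:innerproduct}: it consists of Hermitian matrices with $\B_0=\id_d$ by construction, and orthonormality with respect to $\innerproduct{\cdot}{\cdot}$ is inherited verbatim from $\H_d$. For the spanning property over $\complex$, write an arbitrary $M\in\M_d$ as $M=A+\mathrm{i}B$ with $A=\tfrac12\br{M+M^{\dagger}}\in\H_d$ and $B=\tfrac1{2\mathrm{i}}\br{M-M^{\dagger}}\in\H_d$; expanding $A$ and $B$ in the real span of the $\B_i$ exhibits $M$ in their complex span. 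Since $\M_d$ has complex dimension $d^2$ (Fact~\ref{fac:innerproduct}), an orthonormal spanning set of size $d^2$ is a basis. The argument is entirely routine; the only step that deserves a moment's attention is this last one --- that a real-orthonormal basis of $\H_d$ is automatically a complex-orthonormal basis of $\M_d$ --- for which the Hermitian decomposition $M=A+\mathrm{i}B$ does the work.
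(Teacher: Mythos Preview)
Your proof is correct and follows essentially the same approach as the paper: both arguments work in the real inner-product space $\H_d$ and produce an orthonormal basis of Hermitian matrices containing $\id_d$. The only cosmetic difference is that the paper first writes down an explicit Hermitian orthonormal basis (the matrices $\sqrt{d}\ketbra{j}$, $\sqrt{d/2}(\ketbratwo{j}{k}+\ketbratwo{k}{j})$, $\sqrt{d/2}(\mathrm{i}\ketbratwo{j}{k}-\mathrm{i}\ketbratwo{k}{j})$) and then applies an orthogonal rotation to make $\id_d$ the first element, whereas you start from $\id_d$ and run Gram--Schmidt; your extra remark that a real basis of $\H_d$ automatically spans $\M_d$ over $\complex$ via $M=A+\mathrm{i}B$ is a point the paper leaves implicit.
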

\begin{proof}
It is easy to verify that
the set $$\set{\sqrt{d}\cdot\ketbra{j}}_{0\leq j\leq  d-1}\cup\set{\sqrt{d/2}\br{\ketbratwo{j}{k}+\ketbratwo{k}{j}}}_{0\leq j<k\leq d-1}\cup\set{\sqrt{d/2}\br{\mathrm{i}\cdot \ketbratwo{j}{k}-\mathrm{i}\cdot\ketbratwo{k}{j}}}_{0\leq j<k\leq d-1}$$ forms an orthonormal basis in $\M_d$. Let's denote it by $\set{\A_1,\ldots, \A_{d^2}}$. Suppose
\[\id_d=\sum_{i=1}^{d^2} x_i\A_i.\]
Then $x=\br{x_1,\ldots, x_{d^2}}$ is a unit vector in $\reals^{d^2}$. Let $\br{x^{(0)},\ldots,x^{(d^2-1)}}$ be a set of orthonormal vectors in $\reals^{d^2}$ with $x^{(0)}=x$. Define $\B_i=\sum_{j=1}^{d^2}x^{(i)}_j\A_j$. Then $\set{\B_i}_{0\leq i\leq d^2-1}$ is a standard orthonormal basis in $\M_d$ by Fact~\ref{fac:unitarybasis}.
\end{proof}

	Given a standard orthonormal basis $\B=\set{\B_i}_{i=0}^{d^2-1}$ in $\M_d$, every matrix $M\in\M_d^{\otimes n}$ has a {\em Fourier expansion} with respect to the basis $\B$ given by
	\[M=\sum_{\sigma\in[d^2]_{\geq 0}^{n}}\widehat{M}\br{\sigma}\B_{\sigma},\]
	where $\widehat{M}\br{\sigma}$'s are the {\em Fourier coefficients} of $M$ with respect to the basis $\B$, which can be obtained as $\widehat{M}\br{\sigma}=\innerproduct{\B_{\sigma}}{M}$. The basic properties of $\widehat{M}\br{\sigma}$ are summarized in the following fact, which follow from the orthonormality of $\set{\B_{\sigma}}_{\sigma\in[d^2]_{\geq 0}^n}$.
	\begin{fact}\label{fac:basicfourier}
		Given a standard orthonormal basis $\set{\B_i}_{i=0}^{d^2-1}$ in $\M_d$ and $M,N\in\M_d$, it holds that
		\begin{enumerate}
			\item $\widehat{M}\br{\sigma}$ is real if $M$ is Hermitian;
			\item $\innerproduct{M}{N}=\innerproduct{\id}{M^{\dagger}N}=\innerproduct{MN^{\dagger}}{\id}=\sum_{\sigma}\conjugate{\widehat{M}\br{\sigma}}\widehat{N}\br{\sigma}$;
			\item $\nnorm{M}_2^2=\innerproduct{M}{M}=\innerproduct{M^{\dagger}M}{\id}=\innerproduct{\id}{M^{\dagger}M}=\sum_{\sigma}\abs{\widehat{M}\br{\sigma}}^2$;
			\item $\innerproduct{\id}{M}=\widehat{M}\br{0}$.	
		\end{enumerate}
	\end{fact}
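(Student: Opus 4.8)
The plan is to obtain all four items by mechanically unwinding the definition of the inner product and exploiting the orthonormality of the tensor-power basis. Throughout I work with $M,N$ in $\M_{d}^{\otimes n}\cong\M_{d^{n}}$, on which $\innerproduct{\cdot}{\cdot}$ is the inner product of Eq.~\eqref{eqn:innerproduct} with $d$ replaced by $d^{n}$, and I use that $\set{\B_{\sigma}}_{\sigma\in[d^2]_{\geq 0}^{n}}$ is a standard orthonormal basis by \Cref{fac:paulimutiplecopy}; in particular each $\B_{\sigma}$ is Hermitian, $\innerproduct{\B_{\sigma}}{\B_{\tau}}=\delta_{\sigma,\tau}$, and $\B_{0^{n}}=\id_{d}^{\otimes n}=\id_{d^{n}}$.

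For item~1, write $\widehat{M}\br{\sigma}=\innerproduct{\B_{\sigma}}{M}=\frac{1}{d^{n}}\Tr\br{\B_{\sigma}M}$, using $\B_{\sigma}^{\dagger}=\B_{\sigma}$. Conjugating a trace is the same as applying the adjoint inside it, i.e. $\conjugate{\Tr X}=\Tr\br{X^{\dagger}}$, so $\conjugate{\Tr\br{\B_{\sigma}M}}=\Tr\br{M^{\dagger}\B_{\sigma}^{\dagger}}=\Tr\br{M^{\dagger}\B_{\sigma}}$, which equals $\Tr\br{\B_{\sigma}M}$ when $M=M^{\dagger}$ (by cyclicity); hence $\widehat{M}\br{\sigma}\in\reals$. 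Item~4 is immediate: $\widehat{M}\br{0}=\innerproduct{\B_{0^{n}}}{M}=\innerproduct{\id}{M}$.

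For item~2, expand $M=\sum_{\sigma}\widehat{M}\br{\sigma}\B_{\sigma}$ and $N=\sum_{\tau}\widehat{N}\br{\tau}\B_{\tau}$, and use conjugate-linearity in the first argument together with $\innerproduct{\B_{\sigma}}{\B_{\tau}}=\delta_{\sigma,\tau}$ to get $\innerproduct{M}{N}=\sum_{\sigma}\conjugate{\widehat{M}\br{\sigma}}\widehat{N}\br{\sigma}$. The two middle equalities are one-line trace manipulations: $\innerproduct{\id}{M^{\dagger}N}=\frac{1}{d^{n}}\Tr\br{\id^{\dagger}M^{\dagger}N}=\frac{1}{d^{n}}\Tr\br{M^{\dagger}N}=\innerproduct{M}{N}$, and $\innerproduct{MN^{\dagger}}{\id}=\frac{1}{d^{n}}\Tr\br{\br{MN^{\dagger}}^{\dagger}\id}=\frac{1}{d^{n}}\Tr\br{NM^{\dagger}}=\frac{1}{d^{n}}\Tr\br{M^{\dagger}N}$ by cyclicity. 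Item~3 is then the specialization $N=M$ of item~2 together with $\nnorm{M}_{2}^{2}=\innerproduct{M}{M}$ from \Cref{fac:innerproduct} (applied in dimension $d^{n}$) and the observation that $M^{\dagger}M$ is Hermitian, so $\innerproduct{M^{\dagger}M}{\id}=\innerproduct{\id}{M^{\dagger}M}=\frac{1}{d^{n}}\Tr\br{M^{\dagger}M}$. There is no substantive obstacle; the only things to keep straight are the normalization factor $d^{n}$ when passing from $\M_{d}$ to $\M_{d}^{\otimes n}$, the identity $\conjugate{\Tr X}=\Tr\br{X^{\dagger}}$, and that orthonormality of the product basis is exactly what makes the Parseval-type sums collapse.
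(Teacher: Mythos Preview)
Your proof is correct and matches the paper's approach exactly: the paper simply states that these properties ``follow from the orthonormality of $\set{\B_{\sigma}}_{\sigma\in[d^2]_{\geq 0}^n}$'' without further detail, and you have written out precisely that verification. There is nothing to add.
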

	The {\em variance} of a matrix $M\in\M_d$ is defined to be  $\var{M}\defeq\innerproduct{M}{M}-\innerproduct{M}{\id}\innerproduct{\id}{M}$.
	The following lemma is easily verified.
	\begin{lemma}\label{lem:variance}
	Given a standard orthonormal basis $\set{\B_i}_{i=0}^{d^2-1}$ in $\M_d$ and $M\in\M_d$, it holds that
	$\var{M}=\sum_{\sigma\neq 0}\abs{\widehat{M}\br{\sigma}}^2$.
	\end{lemma}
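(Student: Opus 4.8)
The plan is to prove this by a direct expansion of the defining formula $\var{M}=\innerproduct{M}{M}-\innerproduct{M}{\id}\innerproduct{\id}{M}$ in terms of Fourier coefficients, using only Fact~\ref{fac:basicfourier}. First I would invoke item~3 of that fact to write $\innerproduct{M}{M}=\sum_{\sigma}\abs{\widehat{M}\br{\sigma}}^2$, where the sum ranges over $\sigma\in[d^2]_{\geq 0}$ (or over $[d^2]_{\geq 0}^n$ in the tensor-power version, the argument being identical). Next I would handle the second term: by item~4, since $\B_0=\id_d$, we have $\innerproduct{\id}{M}=\widehat{M}\br{0}$, and by conjugate-symmetry of the inner product (equivalently, by item~2 applied with $N=\id$, using that $\widehat{\id}\br{\sigma}=\delta_{\sigma,0}$) we get $\innerproduct{M}{\id}=\conjugate{\widehat{M}\br{0}}$. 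Hence $\innerproduct{M}{\id}\innerproduct{\id}{M}=\abs{\widehat{M}\br{0}}^2$.

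Combining the two computations, $\var{M}=\sum_{\sigma}\abs{\widehat{M}\br{\sigma}}^2-\abs{\widehat{M}\br{0}}^2=\sum_{\sigma\neq 0}\abs{\widehat{M}\br{\sigma}}^2$, which is exactly the claimed identity. I do not anticipate any genuine obstacle here; the statement is a routine consequence of Parseval-type orthonormality. The only point requiring a moment's care is the complex conjugation in $\innerproduct{M}{\id}$ versus $\innerproduct{\id}{M}$ when $M$ is not assumed Hermitian, but since these two quantities are complex conjugates of each other their product is the nonnegative real number $\abs{\widehat{M}\br{0}}^2$, so the final expression is manifestly real and nonnegative as a variance should be. (If one prefers, one may note that $\var{M}$ depends only on $M-\widehat{M}\br{0}\id$, whose $0$-th Fourier coefficient vanishes, and then apply item~3 directly to that operator.)
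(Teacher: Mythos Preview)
Your proposal is correct and is precisely the direct verification the paper has in mind; the paper does not spell out a proof at all (it just says ``easily verified''), and your expansion via items~3 and~4 of Fact~\ref{fac:basicfourier} together with conjugate-symmetry of $\innerproduct{\cdot}{\cdot}$ is exactly the intended one-line argument.
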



		In this paper, we will be working on a particular basis in $\M_2$, {\em Pauli basis}, defined as  \[\P\defeq\set{\P_0\defeq \id_2, \P_1\defeq\begin{pmatrix}
		0 & 1\\1 & 0
		\end{pmatrix}, \P_2\defeq\begin{pmatrix}
		0 & -i\\i & 0
		\end{pmatrix}, \P_3\defeq\begin{pmatrix}
		1 & 0\\0 & -1
		\end{pmatrix}},\]
which is the set of {\em Heisenberg-Weyl operators} in $\M_2$.

\begin{fact}
  The set of Pauli basis $\P$ is a standard orthonormal basis in $\M_2$.
\end{fact}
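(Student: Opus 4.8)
The plan is to verify directly the three defining conditions of a standard orthonormal basis against the explicit $2\times 2$ matrices, and then to finish with a dimension count. Recall that on $\M_2$ the inner product of Eq.~\eqref{eqn:innerproduct} is $\innerproduct{P}{Q}=\frac{1}{2}\Tr P^{\dagger}Q$, and that by Fact~\ref{fac:innerproduct} the pair $\br{\innerproduct{\cdot}{\cdot},\M_2}$ is a Hilbert space of dimension $4$. So it suffices to exhibit four orthonormal Hermitian elements one of which is $\id_2$.

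First I would note that $\P_0=\id_2$ holds by definition, and that each of $\P_1,\P_2,\P_3$ is manifestly equal to its own conjugate transpose, so $\P_0,\P_1,\P_2,\P_3\in\H_2$; in particular $\innerproduct{\P_i}{\P_j}=\frac{1}{2}\Tr\P_i\P_j$ for all $i,j$. Next I would record the elementary products among the Pauli matrices, obtained by direct $2\times 2$ multiplication: $\P_i^2=\id_2$ for every $i\in\set{0,1,2,3}$, and for distinct $i,j\in\set{0,1,2,3}$ the product $\P_i\P_j$ is traceless (when one of the indices is $0$ this is immediate, and for distinct $i,j\in\set{1,2,3}$ one has $\P_i\P_j=\pm\mathrm{i}\,\P_k$ with $k$ the third index, which is traceless). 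Since $\Tr\id_2=2$, this yields $\innerproduct{\P_i}{\P_j}=\delta_{i,j}$. Hence $\set{\P_0,\P_1,\P_2,\P_3}$ is an orthonormal set of four vectors in a $4$-dimensional Hilbert space, and therefore an orthonormal basis; together with $\P_0=\id_2$ and the Hermiticity of all four elements this is exactly the definition of a standard orthonormal basis, which completes the proof.

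There is no genuine obstacle here: the whole argument is a finite computation with $2\times 2$ matrices. The only point that warrants a moment of care is tracking the normalizing factor $\frac{1}{d}=\frac{1}{2}$ in the definition of the inner product, so that $\innerproduct{\P_i}{\P_i}=\frac{1}{2}\Tr\id_2=1$ rather than $2$.
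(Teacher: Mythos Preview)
Your proof is correct; the paper itself states this as a Fact without proof, treating it as well known, so your direct verification of Hermiticity, $\P_0=\id_2$, and the orthonormality relations $\innerproduct{\P_i}{\P_j}=\frac{1}{2}\Tr\P_i\P_j=\delta_{i,j}$ is exactly the natural way to supply the omitted argument.
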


	\begin{definition}
		Let $\B=\set{\B_i}_{i=0}^{d^2-1}$ be a standard orthonormal basis in $\M_d$, $P,Q\in\M_d^{\otimes n}$ and a subset $S\subseteq[n]$.
		\begin{enumerate}
			\item The degree of $P$ is defined to be $\deg P\defeq\max\set{\abs{\sigma}:\widehat{P}\br{\sigma}\neq 0}.$ where $|\sigma|$ represents the number of nonzeros in $\sigma$.
			\item For any $S\subseteq[n]$, $P_S\defeq\frac{1}{d^{|S^c|}}\Tr_{S^c}P$;
			\item For any $S\subseteq[n]$, $\innerproduct{P}{Q}_S\defeq\frac{1}{d^{|S|}}\Tr_S~P^{\dagger}Q.$
			\item For any $S\subseteq[n]$,  $\text{Var}_S[P]=\br{P^{\dagger}P}_{S^c}-\br{P_{S^c}}^{\dagger}\br{P_{S^c}}$. If $S=\set{i}$, we use $\text{Var}_i[P]$ in short.
			\item For any $i\in[n]$, $\influence_i\br{P}\defeq\innerproduct{\id}{\text{Var}_i[P]}.$
			\item $\influence\br{P}\defeq\sum_i\influence_i\br{P}$.
		\end{enumerate}
	\end{definition}

With the notion of degree, we define the low degree part and the high degree part of an operator.

\begin{definition}\label{def:lowdegreehighdegree}
Given integers $d,t>0$, a standard orthonormal basis $\B=\set{\B_i}_{i=0}^{d^2-1}$ in $\M_d$ and $P\in\M_d^{\otimes n}$, we define
\[P^{\leq t}\defeq\sum_{\sigma\in[d^2]_{\geq 0}^n:\abs{\sigma}\leq t}\widehat{P}\br{\sigma}\B_{\sigma};\]
\[P^{\geq t}\defeq\sum_{\sigma\in[d^2]_{\geq 0}^n:\abs{\sigma}\geq t}\widehat{P}\br{\sigma}\B_{\sigma}\]
and
\[P^{=t}\defeq\sum_{\sigma\in[d^2]_{\geq 0}^n:\abs{\sigma}=t}\widehat{P}\br{\sigma}\B_{\sigma};\]
where $\widehat{P}\br{\sigma}$'s are the Fourier coefficients of $P$ with respect to the basis $\B$.
\end{definition}

    \begin{lemma}\label{lem:pt}
    	The degree of $P$ is independent of the choices of the basis. Moreover, $P^{\leq t}, P^{\geq t}$ and $P^{=t}$ are also independent of the choices of the basis.
    \end{lemma}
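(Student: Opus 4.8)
The claim is that $\deg P$, and the truncated operators $P^{\leq t}, P^{\geq t}, P^{=t}$, do not depend on which standard orthonormal basis $\B = \set{\B_i}_{i=0}^{d^2-1}$ of $\M_d$ we use to form the Fourier expansion. The key structural fact I would lean on is Fact~\ref{fac:unitarybasis}: any two standard orthonormal bases $\B, \B'$ of $\M_d$ are related by $\B'_i = \sum_{j=1}^{d^2-1} U_{i,j}\B_j$ for $1 \le i \le d^2-1$, with $\B'_0 = \B_0 = \id_d$, where $U$ is a $(d^2-1)\times(d^2-1)$ orthogonal matrix. The crucial point is that this change of basis mixes index $0$ with nothing — it fixes $\id_d$ — and mixes the nonzero indices $\set{1,\dots,d^2-1}$ only among themselves. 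So "being a nonzero index in coordinate $i$" is preserved.

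First I would set up the tensored change of basis. Writing $\B_\sigma = \B_{\sigma_1}\otimes\cdots\otimes\B_{\sigma_n}$ and $\B'_\tau$ likewise, I expand each $\B'_{\tau_i}$ in terms of the $\B_{\sigma_i}$'s and multiply out, obtaining $\B'_\tau = \sum_\sigma \left(\prod_{i=1}^n W_{\tau_i,\sigma_i}\right)\B_\sigma$, where $W$ is the $d^2 \times d^2$ matrix that equals $1$ in the $(0,0)$ entry, $0$ on the rest of row/column $0$, and equals $U$ on the block indexed by $\set{1,\dots,d^2-1}$. The decisive observation is that $W_{\tau_i,\sigma_i} = 0$ unless ($\tau_i = 0$ and $\sigma_i = 0$) or ($\tau_i \ne 0$ and $\sigma_i \ne 0$); equivalently, $\prod_i W_{\tau_i,\sigma_i} \ne 0$ forces $\set{i : \tau_i \ne 0} = \set{i : \sigma_i \ne 0}$, hence $\abs{\tau} = \abs{\sigma}$. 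Therefore the Fourier coefficients transform "level by level": $\widehat{P}^{\,\B'}(\tau) = \sum_{\sigma : \abs{\sigma} = \abs{\tau}} \left(\prod_i W_{\tau_i,\sigma_i}\right)\widehat{P}^{\,\B}(\sigma)$, where I use superscripts to mark which basis the coefficients are taken with respect to.

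From this, everything follows. For degree: if $\abs{\sigma} > t$ for no $\sigma$ with $\widehat{P}^{\,\B}(\sigma) \ne 0$ (i.e.\ $\deg_\B P \le t$), then by the level-preserving formula $\widehat{P}^{\,\B'}(\tau) = 0$ for every $\tau$ with $\abs{\tau} > t$, so $\deg_{\B'} P \le t$; by symmetry the two degrees are equal. For the truncations: collecting only the terms with $\abs{\tau} = s$ in the $\B'$-expansion of $P$ and using the level-preserving formula, one sees that $\sum_{\tau : \abs{\tau}=s}\widehat{P}^{\,\B'}(\tau)\B'_\tau = \sum_{\sigma:\abs{\sigma}=s}\widehat{P}^{\,\B}(\sigma)\B_\sigma$, because the inner change-of-basis sum over level-$s$ indices is exactly the orthogonal (hence invertible, norm-preserving, and in particular "faithful") re-expansion of the level-$s$ subspace $\mathrm{span}\set{\B_\sigma : \abs{\sigma}=s}$ — which is the same subspace as $\mathrm{span}\set{\B'_\tau : \abs{\tau}=s}$. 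Hence $P^{=s}$ is basis-independent, and summing over $s \le t$ (resp.\ $s \ge t$) gives the same for $P^{\leq t}$ and $P^{\geq t}$. Alternatively, and perhaps cleanest for the write-up, I would observe that the degree-$\le t$ part is intrinsically characterized: $\mathrm{span}\set{\B_\sigma : \abs{\sigma} \le t}$ equals the span of all $n$-fold tensor products of elements of $\M_d$ in which at most $t$ of the factors are orthogonal to $\id_d$ (equivalently, have zero normalized trace), and this description makes no reference to a basis at all.

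I do not expect a serious obstacle here — the statement is essentially bookkeeping about how a block-structured orthogonal change of basis interacts with the tensor grading. The one point that needs a little care is the bald assertion that the level-$s$ subspaces agree across bases and that the change of basis restricted to each level is invertible; this is immediate from $W$ being block-diagonal with an orthogonal (hence invertible) block on the nonzero indices and the $1\times 1$ identity block on index $0$, so that $\prod_i W_{\tau_i,\sigma_i}$, restricted to pairs $(\tau,\sigma)$ of a fixed common level $s$, is itself an orthogonal matrix (a tensor-type product of blocks of $W$). I would state this as the main lemma and then deduce the four independence claims in one line each.
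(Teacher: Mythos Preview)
Your proposal is correct and follows essentially the same approach as the paper: both rely on Fact~\ref{fac:unitarybasis} to observe that a change of standard orthonormal basis fixes $\id_d$ and mixes only the nonzero-indexed elements among themselves, so that expanding any $\B_\sigma$ in the new basis yields only terms with the same support (and hence the same degree). Your write-up is more explicit about the block structure of the transition matrix $W$ and the level-by-level transformation of Fourier coefficients, whereas the paper simply reduces by linearity to $P=\B_\sigma$ and asserts the conclusion in one line; the content is the same.
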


\begin{proof}
	Let $\set{\B_{\sigma}}_{\sigma\in[d^2]_{\geq 0}}$ and $\set{\B'_{\sigma}}_{\sigma\in[d^2]_{\geq 0}}$ be two standard orthonormal basis in $\M_d$. From Fact~\ref{fac:unitarybasis},  there exists a $\br{d^2-1}\times \br{d^2-1}$ orthogonal matrix $U$ satisfying that $\B_{\sigma}=\sum_{\sigma'=1}^{d^2-1}U_{\sigma,\sigma'}\B'_{\sigma'}$ for any $\sigma\in[d^2-1]$. Suppose $P=P^{=t}$ with respect to the basis $\set{\B_{\sigma}}_{\sigma\in[d]_{\geq 0}}$. By linearity, we may assume that $P=\B_{\sigma}=\bigotimes_{i=1}^n\B_{\sigma_i}$  without loss of generality. It is easy to verify that each term in the expansion of $P$ in terms of the basis $\set{\B_i'}_{i=0}^{d^2-1}$ is of degree $\abs{\sigma}$.
\end{proof}

	\begin{lemma}\label{lem:partialvariance}
			Given $P\in\M_d^{\otimes n}$ a standard orthonormal basis $\B=\set{\B_i}_{i=0}^{d^2-1}$ in $\M_d$ and a subset $S\subseteq[n]$, it holds that
		\begin{enumerate}
			\item $P_S=\innerproduct{\id}{P}_{S^c}=\sum_{\sigma:\sigma_{S^c}=\mathbf{0}}\widehat{P}\br{\sigma}\B_{\sigma_S}$, $\norm{P_S}\leq\norm{P}$ and $\nnorm{P_S}_2\leq\nnorm{P}_2$;
			\item $\innerproduct{\id_{S^c}}{\text{Var}_S[P]}=\sum_{\sigma:\sigma_S\neq\mathbf{0}}\abs{\widehat{P}\br{\sigma}}^2.$
			\item $\influence_i\br{P}=\sum_{\sigma:\sigma_i\neq0}\abs{\widehat{P}\br{\sigma}}^2.$
			\item $\influence\br{P}=\sum_{\sigma}\abs{\sigma}\abs{\widehat{P}\br{\sigma}}^2\leq\deg P\cdot\nnorm{P}^2_2$.
		\end{enumerate}
	\end{lemma}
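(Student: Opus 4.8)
The plan is to push everything down to the Fourier expansion $P=\sum_{\sigma\in[d^2]_{\geq0}^n}\widehat{P}\br{\sigma}\B_\sigma$ and exploit the orthonormality of $\set{\B_\sigma}_{\sigma\in[d^2]_{\geq0}^n}$ (Fact~\ref{fac:paulimutiplecopy}) together with Parseval's identity $\nnorm{P}_2^2=\sum_\sigma\abs{\widehat P\br\sigma}^2$ (Fact~\ref{fac:basicfourier}). Once the behaviour of the (normalized) partial trace on the basis elements is understood, all four items become essentially bookkeeping with Fourier coefficients, with one genuinely analytic point in Item~1.

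For Item~1 I would first note that tracing out a single coordinate sends $\B_j$ to $\Tr\B_j=d\innerproduct{\id}{\B_j}=d\,\delta_{j,0}$, so iterating over $S^c$ gives $\Tr_{S^c}\B_\sigma=d^{\abs{S^c}}\delta_{\sigma_{S^c},\mathbf 0}\B_{\sigma_S}$; dividing by $d^{\abs{S^c}}$ yields $P_S=\sum_{\sigma:\sigma_{S^c}=\mathbf 0}\widehat P\br\sigma\B_{\sigma_S}$, and the identity $P_S=\innerproduct{\id}{P}_{S^c}$ is immediate from the definition of $\innerproduct{\cdot}{\cdot}_{S^c}$. The bound $\nnorm{P_S}_2\leq\nnorm{P}_2$ then drops out of orthonormality, since $\nnorm{P_S}_2^2=\sum_{\sigma:\sigma_{S^c}=\mathbf 0}\abs{\widehat P\br\sigma}^2\leq\sum_\sigma\abs{\widehat P\br\sigma}^2$. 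For the operator-norm bound I would observe that $P_S\otimes\id_{S^c}$ is exactly the image of $P$ under the completely depolarizing channel on the coordinates in $S^c$; this channel is mixed-unitary (an average of conjugations by Heisenberg--Weyl operators on $\M_d^{\otimes\abs{S^c}}$), hence a contraction in operator norm, so $\norm{P_S}=\norm{P_S\otimes\id_{S^c}}\leq\norm{P}$.

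For Item~2 I would apply $\innerproduct{\id_{S^c}}{\cdot}=\frac1{d^{\abs{S^c}}}\Tr(\cdot)$ to $\text{Var}_S[P]=\br{P^\dagger P}_{S^c}-\br{P_{S^c}}^\dagger\br{P_{S^c}}$: unwinding the definition of $\br{\cdot}_{S^c}$, the first term contributes $\frac1{d^n}\Tr\br{P^\dagger P}=\innerproduct{P}{P}=\sum_\sigma\abs{\widehat P\br\sigma}^2$, while the second contributes $\innerproduct{P_{S^c}}{P_{S^c}}=\nnorm{P_{S^c}}_2^2=\sum_{\sigma:\sigma_S=\mathbf 0}\abs{\widehat P\br\sigma}^2$ by Item~1 (applied with $S^c$ in place of $S$); subtracting leaves $\sum_{\sigma:\sigma_S\neq\mathbf 0}\abs{\widehat P\br\sigma}^2$. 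Item~3 is then the special case $S=\set i$ of Item~2, by the definition of $\influence_i$. For Item~4 I would sum Item~3 over $i$ and exchange the order of summation to get $\influence\br{P}=\sum_\sigma\abs{\set{i:\sigma_i\neq0}}\abs{\widehat P\br\sigma}^2=\sum_\sigma\abs\sigma\abs{\widehat P\br\sigma}^2$, and then bound $\abs\sigma\leq\deg P$ on the support of $\widehat P$ to conclude $\influence\br{P}\leq\deg P\cdot\nnorm{P}_2^2$.

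The only step that is not pure Fourier bookkeeping is the operator-norm inequality $\norm{P_S}\leq\norm{P}$ in Item~1, and that is where I expect the (modest) main obstacle: it is cleanest to handle via the mixed-unitary / twirling description of the normalized partial trace rather than through the Fourier expansion, and one should take a little care that the averaging is over a unitary set (so that conjugation preserves the operator norm) and that tensoring with $\id_{S^c}$ leaves the operator norm unchanged.
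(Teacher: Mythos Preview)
Your proposal is correct and, for Items~2--4 and the Fourier-formula part of Item~1, proceeds exactly as the paper does (expand in the basis $\set{\B_\sigma}$, use Parseval, then specialize and sum). The only genuine difference is the operator-norm bound $\norm{P_S}\leq\norm{P}$ in Item~1. The paper argues one coordinate at a time and uses H\"older/duality directly: for a unit vector $\ket{v}$ one has $\bra{v}P_{-1}\ket{v}=\Tr\bigl(P\,(\tfrac{\id_d}{d}\otimes\ketbra{v})\bigr)$, and $\abs{\Tr PQ}\leq\norm{P}\onenorm{Q}$ finishes it. You instead observe that $P_S\otimes\id_{S^c}$ is the image of $P$ under the completely depolarizing channel on $S^c$, which is a convex combination of unitary conjugations (Heisenberg--Weyl twirl), hence an operator-norm contraction. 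Both arguments are standard and short; yours is a bit more structural and makes the contractivity transparent, while the paper's is slightly more elementary in that it avoids invoking the mixed-unitary decomposition.
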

	\begin{proof}
		\begin{enumerate}
			\item $P_S=\frac{1}{d^{|S^c|}}\sum_{\sigma}\widehat{P}\br{\sigma}\Tr_{S^c}\B_{\sigma}=\text{RHS}.$
			
			For the second inequality, it suffices to show that $\norm{P_{-i}}\leq\norm{P}$ for any $i\in[n]$. Without loss of generality, we may assume $i=1$. For any unit vector $\ket{v}$
			\[\bra{v}P_{-1}\ket{v}=\frac{1}{d}\bra{v}\br{\Tr_{\set{1}}P}\ket{v}=\Tr P\br{\frac{\id_d}{d}\otimes\ketbra{v}}\leq\norm{P},\]
			where the last inequality is from the fact that $\abs{\Tr PQ}\leq\norm{P}\onenorm{Q}$ .
			
			For the last inequality,
			\[\nnorm{P_S}^2_2=\sum_{\sigma:\sigma_{S^c}={\mathbf{0}}}\abs{\widehat{P}\br{\sigma}}^2\leq\sum_{\sigma}\abs{\widehat{P}\br{\sigma}}^2=\nnorm{P}^2_2,\]
where the equalities are both from Fact~\ref{fac:basicfourier} item 3.

\item
			From the item 1,
			\[\br{P^{\dagger}P}_{S^c}=\sum_{\sigma,\sigma'}\conjugate{\widehat{P}\br{\sigma}}\widehat{P}\br{\sigma'}\br{\B_{\sigma}\B_{\sigma'}}_{S^c}=\sum_{\sigma,\sigma':\sigma_S=\sigma'_S}\conjugate{\widehat{P}\br{\sigma}}\widehat{P}\br{\sigma'}\B_{\sigma_{S^c}}\B_{\sigma'_{S^c}}.\]
			Meanwhile,
			\[\br{P_{S^c}}^{\dagger}\br{P_{S^c}}=\sum_{\sigma,\sigma':\sigma_S=\sigma'_S=\mathbf{0}}\conjugate{\widehat{P}\br{\sigma}}\widehat{P}\br{\sigma'}\B_{\sigma_{S^c}}\B_{\sigma'_{S^c}}.\]
			Therefore,
			\begin{eqnarray*}
			\innerproduct{\id_{S^c}}{\text{Var}_S[P]}=\sum_{\sigma:\sigma_S\neq\mathbf{0}}\abs{\widehat{P}\br{\sigma}}^2.
			\end{eqnarray*}
			\item It follows from the item 2 and the definition of $\influence_i\br{\cdot}$.
			\item It follows by a direct calculation.
		\end{enumerate}
	\end{proof}

		\begin{definition}\label{def:efronstein}\br{\textbf{Efron-Stein decomposition}} Given integers $n,d>0$, an operator $P\in\M_d^{\otimes n}$, a standard orthonormal basis $\set{\B_i}_{i=0}^{d^2-1}$ and $S\subseteq[n]$, set $P[S]\defeq\sum_{\sigma\in[d^2]_{\geq 0}^n:\supp{\sigma}=S}\widehat{P}\br{\sigma}\B_{\sigma}$, where $\supp{\sigma}\defeq\set{i\in[n]:\sigma_i>0}$. The Efron-Stein decompostion of $P$ is $P=\sum_{S\subseteq[n]}P[S].$
	\end{definition}
Again, the definition of $P[S]$ is independent of the choices of the basis $\set{\B_i}_{i=0}^{d^2-1}$, followed by the same argument for Lemma~\ref{lem:pt}.

	The following proposition follows from the orthogonality of $\B_i$'s.
	\begin{prop}\label{prop:enfronsteinortho}
		Given integers $d,n>0$, $S\neq T\subseteq[n]$ and $P,Q\in\M_d^{\otimes n}$, it holds that $\innerproduct{P[S]}{Q[T]}=0$.
	\end{prop}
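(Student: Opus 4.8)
The plan is to expand both sides in the Fourier basis and invoke orthonormality of $\set{\B_\sigma}_{\sigma\in[d^2]_{\geq 0}^n}$ directly. By definition of the Efron–Stein components,
\[
P[S]=\sum_{\sigma:\supp{\sigma}=S}\widehat{P}\br{\sigma}\B_{\sigma},\qquad
Q[T]=\sum_{\tau:\supp{\tau}=T}\widehat{Q}\br{\tau}\B_{\tau}.
\]
Using bilinearity (conjugate-linearity in the first argument) of the inner product $\innerproduct{\cdot}{\cdot}$ on $\M_d^{\otimes n}$, I would write
\[
\innerproduct{P[S]}{Q[T]}=\sum_{\sigma:\supp{\sigma}=S}\ \sum_{\tau:\supp{\tau}=T}\conjugate{\widehat{P}\br{\sigma}}\,\widehat{Q}\br{\tau}\,\innerproduct{\B_{\sigma}}{\B_{\tau}}.
\]

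The next step is to observe that $\set{\B_{\sigma}}_{\sigma\in[d^2]_{\geq 0}^n}$ is a standard orthonormal basis of $\M_d^{\otimes n}$ (Fact~\ref{fac:paulimutiplecopy}), so $\innerproduct{\B_{\sigma}}{\B_{\tau}}=\delta_{\sigma,\tau}$. Hence every term in the double sum with $\sigma\neq\tau$ vanishes. But in our double sum the index $\sigma$ always satisfies $\supp{\sigma}=S$ while $\tau$ always satisfies $\supp{\tau}=T$, and since $S\neq T$ we can never have $\sigma=\tau$. Therefore every term vanishes and $\innerproduct{P[S]}{Q[T]}=0$.

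There is essentially no obstacle here: the whole content is the orthonormality of the product basis together with the trivial combinatorial fact that $\supp{\sigma}=S\neq T=\supp{\tau}$ forces $\sigma\neq\tau$. The only thing to be mildly careful about is that $\innerproduct{\cdot}{\cdot}$ as defined in Eq.~\eqref{eqn:innerproduct} is conjugate-linear in the first slot, so the coefficient attached to $\widehat{P}\br{\sigma}$ should be its complex conjugate; this does not affect the conclusion since all cross terms are zero regardless. (If desired, one could alternatively note that $P[S]$ and $Q[T]$ lie in the mutually orthogonal subspaces $\mathrm{span}\set{\B_\sigma:\supp{\sigma}=S}$ and $\mathrm{span}\set{\B_\tau:\supp{\tau}=T}$, which intersect trivially when $S\neq T$.)
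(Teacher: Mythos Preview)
Your proof is correct and is exactly the argument the paper has in mind: the paper simply states that the proposition ``follows from the orthogonality of $\B_i$'s,'' which is precisely the orthonormality-of-the-product-basis computation you wrote out.
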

	\begin{prop}\label{prop:efronstein}
		Given integers $d,n>0$,  $P\in\M_d^{\otimes n}$ and $S,T\subseteq[n], S\not\subseteq T$, it holds that
		\[\Tr_{T^c}~P[S]=0.\]
	\end{prop}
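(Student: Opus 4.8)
The plan is to expand $P[S]$ in the chosen standard orthonormal basis and reduce the claim to the vanishing of $\Tr_{T^c}\B_{\sigma}$ for each individual $\sigma$ with $\supp{\sigma}=S$. By linearity of the partial trace,
\[\Tr_{T^c}P[S]=\sum_{\sigma\in[d^2]_{\geq 0}^n:\supp{\sigma}=S}\widehat{P}\br{\sigma}\cdot\Tr_{T^c}\B_{\sigma},\]
so it suffices to show that every summand is zero. (Basis-independence of $P[S]$ itself has already been noted, by the argument for \Cref{lem:pt}, so we are free to work with any standard orthonormal basis $\set{\B_i}_{i=0}^{d^2-1}$.)

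Next I would use that $\B_{\sigma}=\bigotimes_{j=1}^n\B_{\sigma_j}$ is a product operator, so the partial trace over the coordinates in $T^c$ factorizes as
\[\Tr_{T^c}\B_{\sigma}=\br{\prod_{j\in T^c}\Tr\B_{\sigma_j}}\bigotimes_{j\in T}\B_{\sigma_j}.\]
Recall that $\B_0=\id_d$ and that $\set{\B_i}$ is orthonormal with respect to $\innerproduct{P}{Q}=\frac1d\Tr P^{\dagger}Q$; hence for any index $k\neq 0$ we have $\Tr\B_k=\Tr\id_d^{\dagger}\B_k=d\cdot\innerproduct{\id}{\B_k}=0$ (this is just \Cref{fac:basicfourier}, item 4, read backwards).

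Finally I would invoke the hypothesis $S\not\subseteq T$: choose $i\in S\setminus T=S\cap T^c$. Since $\supp{\sigma}=S$ and $i\in S$, we have $\sigma_i\neq 0$, hence $\Tr\B_{\sigma_i}=0$; and since $i\in T^c$, this zero factor appears in the product $\prod_{j\in T^c}\Tr\B_{\sigma_j}$. Therefore $\Tr_{T^c}\B_{\sigma}=0$ for every $\sigma$ with $\supp{\sigma}=S$, and the displayed sum above is $0$. There is no genuine obstacle in this argument; the only points deserving a line of care are the factorization of the partial trace over a tensor-product operator and the (already established) basis-independence of the Efron–Stein component $P[S]$.
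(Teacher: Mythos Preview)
Your proof is correct and follows exactly the same approach as the paper: expand $P[S]$ in the basis and use that $S\cap T^c\neq\emptyset$ forces some factor $\Tr\B_{\sigma_i}=0$ in $\Tr_{T^c}\B_{\sigma}$. The paper's proof is simply the one-line version of yours, leaving the factorization of the partial trace and the orthogonality $\Tr\B_k=0$ for $k\neq 0$ implicit.
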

	\begin{proof}
		\[\Tr_{T^c}~P[S]=\Tr_{T^c}~\br{\sum_{\sigma:\supp{\sigma}=S}\widehat{P}\br{\sigma}\B_{\sigma}}=0,\]
		where the second equality is because $S\cap T^c\neq\emptyset$.
	\end{proof}

	\begin{lemma}\label{lem:jointbasis}
		Given $\psi_{AB}$ with $\psi_A=\frac{\id_{d_A}}{d_A}$ and $\psi_B=\frac{\id_{d_B}}{d_B}$, where $d_A$ and $d_B$  are the dimensions of $A$ and $B$, respectively, there exist  standard orthonormal basis $\set{\X_{\alpha}}_{\alpha\in[d_A^2]_{\geq 0}}$ and $\set{\Y_{\beta}}_{\beta\in[d_B^2]_{\geq 0}}$ in  $\M\br{A}$ and $\M\br{B}$, respectively, such that
		\[\Tr\br{\X_{\alpha}\otimes\Y_{\beta}}\psi_{AB}=0,\]
		whenever $\alpha\neq\beta$.
	\end{lemma}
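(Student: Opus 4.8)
The plan is to recognize $\Tr\br{\X_\alpha \otimes \Y_\beta}\psi_{AB}$ as a real bilinear form on Hermitian operators and to diagonalize it by a singular value decomposition, after first decoupling the identity direction using the hypothesis that both marginals are maximally mixed.

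First I would introduce the real-valued bilinear form $\mathsf{B}\colon \H\br{A} \times \H\br{B} \to \reals$ given by $\mathsf{B}\br{P,Q} \defeq \Tr\br{P \otimes Q}\psi_{AB}$; it is real because $\psi_{AB}$, $P$ and $Q$ are all Hermitian. Let $V_A \defeq \set{P \in \H\br{A} : \Tr P = 0}$ and define $V_B$ analogously; these are real subspaces of dimensions $d_A^2 - 1$ and $d_B^2 - 1$, and since $\innerproduct{\id}{P} = \frac{1}{d_A}\Tr P$ by Eq.~\eqref{eqn:innerproduct}, $V_A$ is exactly the orthogonal complement of $\reals\cdot\id_{d_A}$ inside $\H\br{A}$ (and similarly on the $B$-side). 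The key consequence of $\psi_A = \frac{\id_{d_A}}{d_A}$, $\psi_B = \frac{\id_{d_B}}{d_B}$ together with Fact~\ref{fac:cauchyschwartz}(1) is that $\mathsf{B}\br{\id_{d_A}, Q} = \Tr Q\,\psi_B = \frac{1}{d_B}\Tr Q = 0$ for every $Q \in V_B$, and symmetrically $\mathsf{B}\br{P, \id_{d_B}} = 0$ for every $P \in V_A$; thus $\mathsf{B}$ is block-diagonal with respect to $\H\br{A} = \reals\id_{d_A} \oplus V_A$ and $\H\br{B} = \reals\id_{d_B} \oplus V_B$.

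Next I would diagonalize the restriction of $\mathsf{B}$ to $V_A \times V_B$. Choosing arbitrary $\innerproduct{\cdot}{\cdot}$-orthonormal bases of $V_A$ and of $V_B$, this restriction is represented by a real $\br{d_A^2-1} \times \br{d_B^2-1}$ matrix $M$; writing $M = U\Sigma W^{T}$ for its singular value decomposition with $U, W$ real orthogonal and performing the corresponding orthogonal changes of basis (by $U$ on the $A$-side, by $W$ on the $B$-side), I obtain $\innerproduct{\cdot}{\cdot}$-orthonormal bases $\set{\X_\alpha}_{1 \le \alpha \le d_A^2-1}$ of $V_A$ and $\set{\Y_\beta}_{1 \le \beta \le d_B^2-1}$ of $V_B$ with $\mathsf{B}\br{\X_\alpha,\Y_\beta} = \br{U^{T}MW}_{\alpha\beta} = \Sigma_{\alpha\beta} = 0$ whenever $\alpha \neq \beta$. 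Each $\X_\alpha$ and $\Y_\beta$ is still Hermitian and traceless, being a real linear combination of Hermitian traceless operators. Finally, setting $\X_0 \defeq \id_{d_A}$ and $\Y_0 \defeq \id_{d_B}$, the families $\set{\X_\alpha}_{\alpha \in [d_A^2]_{\geq 0}}$ and $\set{\Y_\beta}_{\beta \in [d_B^2]_{\geq 0}}$ are standard orthonormal bases (the identity is a unit vector orthogonal to all traceless operators, cf.\ Fact~\ref{fac:unitarybasis}), and the three cases $\alpha = 0 \ne \beta$, $\beta = 0 \ne \alpha$ (handled by the block-diagonality above) and $1 \le \alpha \ne \beta$ (handled by the SVD) together yield $\Tr\br{\X_\alpha \otimes \Y_\beta}\psi_{AB} = 0$ for all $\alpha \ne \beta$.

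There is no essential obstacle beyond bookkeeping; the one point requiring care is that all the linear algebra must be carried out inside the \emph{real} vector spaces of Hermitian operators, so that the singular value decomposition uses real orthogonal matrices and Hermiticity (hence also tracelessness) is preserved under the change of basis. It is precisely the hypothesis that both marginals are maximally mixed that allows the identity direction to decouple from the traceless part, and this is exactly what makes it possible to diagonalize while keeping $\X_0 = \id_{d_A}$ and $\Y_0 = \id_{d_B}$ as demanded by the definition of a standard orthonormal basis.
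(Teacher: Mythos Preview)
Your proposal is correct and follows essentially the same approach as the paper: both arguments observe that the maximally mixed marginals force the bilinear form $\Tr\br{P\otimes Q}\psi_{AB}$ to be block-diagonal with respect to the splitting $\reals\id\oplus\{\text{traceless}\}$, then apply a real singular value decomposition to the traceless block and use the resulting orthogonal matrices to rotate to new standard orthonormal bases (cf.\ Fact~\ref{fac:unitarybasis}). Your write-up is slightly more explicit about working over the real vector space of Hermitian operators, but the content is the same.
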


	\begin{proof}
		Let $\set{\A_{\alpha}}_{\alpha\in[d_A^2]_{\geq 0}}$ and $\set{\B_{\beta}}_{\beta\in[d_B^2]_{\geq 0}}$ be arbitrary standard orthonormal basis in $\M_{d_A}$ and $\M_{d_B}$, respectively. Define  $\br{M_{\alpha,\beta}}_{\alpha\in[d_A^2]_{\geq 0},\beta\in[d_B^2]_{\geq 0}}$  where $M_{\alpha,\beta}=\Tr\br{\A_{\alpha}\otimes\B_{\beta}}\psi_{AB}$,
is a $d_A^2\times d_B^2$ real matrix. Then
		\begin{equation}
		M=\begin{pmatrix}
		1 & 0 & \cdots & 0 \\
		0 & & &          &    \\
		\raisebox{15pt}{\vdots}  & & \raisebox{15pt}{{\huge\mbox{{$M'$}}}}  & \\
		0 &  & &
		\end{pmatrix}
		\end{equation}
		Let $M'=U^{\dagger}DV^{\dagger}$ be a singular eigenvalue decomposition of $M'$ where $U, V$ are both orthogonal matrices and $D$ is a diagonal matrix.  For any $\alpha\in[d_A^2]_{\geq 0}$ and $\beta\in[d_B^2]_{\geq 0}$ set
		\[\X_{\alpha}\defeq\begin{cases}\sum_{\alpha'=1}^{d_A^2-1}U_{\alpha,\alpha'}\A_{\alpha'}~&\mbox{if $\alpha\neq 0$}\\
		\id_{d_A}~\mbox{otherwise},&\end{cases}
		~\mbox{and}~
		\Y_{\beta}\defeq\begin{cases}\sum_{\beta'=1}^{d_B^2-1}V_{\beta',\beta}\B_{\beta'}~&\mbox{if $\beta\neq 0$}\\
		\id_{d_B}~&\mbox{otherwise}.\end{cases}.\]
		From Fact~\ref{fac:unitarybasis}, $\set{\X_{\alpha}}_{\alpha=0}^{d_A^2-1}$ and $\set{\Y_{\beta}}_{\beta=0}^{d_B^2-1}$ are standard orthonormal basis in $\M\br{A}$ and $\M\br{B}$, respectively. Then
		\[\Tr\br{\X_{\alpha}\otimes\Y_{\beta}}\psi_{AB}=\begin{cases}\br{UM'V}_{\alpha,\beta}=\delta_{\alpha,\beta}D_{\alpha,\alpha}~&\mbox{if $\alpha,\beta>0$}\\ \delta_{\br{0,0},\br{\alpha,\beta}}~&\mbox{otherwise}.\end{cases}\]

	\end{proof}

\subsection{Random operators}\label{subsec:randomoperators}
    From the previous subsections, we see that the matrix space $\M_d^{\otimes n}$ and Gaussian space $L^2\br{\complex, \gamma_n}$ are both Hilbert spaces. In this subsection, we unify both spaces by {\em random operators}. In this paper, we only concern the case that the dimension $d=2$. However, the results in this subsection can be extended to arbitrary dimension $d$ directly.
    \begin{definition}\label{def:randomoperators}
    	Given integers $h, n>0$, we say $\mathbf{P}$ is a random operator if it is expressed as
    \begin{equation}\label{eqn:randomoperatorexpansion}
      \mathbf{P}=\sum_{\sigma\in[4]_{\geq 0}^h}p_{\sigma}\br{\mathbf{g}}\B_{\sigma},
    \end{equation}
    	where $\set{\B_i}_{i=0}^3$ is a standard orthonormal basis in $\M_2$, $p_{\sigma}\in L^2\br{\complex,\gamma_n}$ for all $\sigma\in[4]_{\geq 0}^h$ and $\mathbf{g}\sim \gamma_n.$
    	$\mathbf{P}\in L^2\br{\M_2^{\otimes h},\gamma_n}$ if $p_{\sigma}\in L^2\br{\complex,\gamma_n}$ for all $\sigma\in[4]_{\geq 0}^h$. Moreover, $\mathbf{P}\in L^2\br{\H_2^{\otimes n},\gamma_n}$ if $p_{\sigma}\in L^2\br{\reals,\gamma_n}$. Define a vector-valued function $p\defeq\br{p_{\sigma}}_{\sigma\in[4]_{\geq 0}^h}:\reals^n\rightarrow\complex^{4^h}$. We say $p$ is the {\em associated vector-valued function of $\mathbf{P}$ under the basis $\set{\B_i}_{i=0}^3$}.
       \end{definition}

   The following is a generalization of $p$-norm in $L^2\br{\M_2^{\otimes h},\gamma_n}$.

   \begin{definition}\label{def:randop}\footnote{To clarify the potential ambiguity, we consider $\nnorm{\mathbf{P}}_p$ to be a random variable and use $N_p\br{\cdot}$ to represent the normalized $p$-norm of a random operator.}\label{def:randoperatorsbasic}
   	Given integers $n,h\geq 0$ and $\mathbf{P}\in L^2\br{\M_2^{\otimes h},\gamma_n}$, for $p\geq 1$, the normalized $p$-norm of $\mathbf{P}$ is $N_p\br{\mathbf{P}}\defeq\expec{}{\nnorm{\mathbf{P}}_p^p}^{\frac{1}{p}}$. The degree of $\mathbf{P}$, denoted by $\deg\br{\mathbf{P}}$, is $\max_{\sigma\in[4]_{
   	\geq 0}^h}\deg\br{p_{\sigma}}$. We say $\mathbf{P}$ is multilinear if $p_{\sigma}\br{\cdot}$ is multilinear for all $\sigma\in[4]_{\geq 0}^h$.
   	 \end{definition}

   \begin{lemma}\label{lem:randoperator}
     Given integers $n,h\geq 0$ let $\mathbf{P}\in L^2\br{\M_2^{\otimes n},\gamma_h}$ with the associated vector-valued function $p$. It holds that  $N_2\br{\mathbf{P}}=\twonorm{p}.$
   \end{lemma}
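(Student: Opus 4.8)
The plan is to reduce the identity to Parseval's identity on the matrix space $\M_2^{\otimes n}$, applied pointwise in the Gaussian variable, followed by an interchange of the expectation over $\gamma_h$ with the finite Fourier sum over $\M_2^{\otimes n}$.

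First I would fix the standard orthonormal basis $\set{\B_i}_{i=0}^3$ of $\M_2$ appearing in the expansion of $\mathbf{P}$ and recall, via Fact~\ref{fac:paulimutiplecopy}, that $\set{\B_\sigma}_{\sigma\in[4]_{\geq 0}^n}$ is then a standard orthonormal basis of $\M_2^{\otimes n}$. Writing $\mathbf{P} = \sum_{\sigma\in[4]_{\geq 0}^n} p_\sigma\br{\mathbf{g}}\B_\sigma$ with $\mathbf{g}\sim\gamma_h$ and associated vector-valued function $p = \br{p_\sigma}_{\sigma}$, I observe that for every fixed realization $z\in\reals^h$ the operator $\mathbf{P}\br{z} = \sum_\sigma p_\sigma\br{z}\B_\sigma$ is an ordinary element of $\M_2^{\otimes n}$ whose Fourier coefficient at $\sigma$ is exactly the scalar $p_\sigma\br{z}$. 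Hence the third item of Fact~\ref{fac:basicfourier} yields the pointwise identity
\[
\nnorm{\mathbf{P}\br{z}}_2^2 \;=\; \innerproduct{\mathbf{P}\br{z}}{\mathbf{P}\br{z}} \;=\; \sum_{\sigma\in[4]_{\geq 0}^n}\abs{p_\sigma\br{z}}^2 .
\]

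Next I would take the expectation over $\mathbf{g}\sim\gamma_h$. By Definition~\ref{def:randoperatorsbasic}, $N_2\br{\mathbf{P}}^2 = \expec{\mathbf{g}\sim\gamma_h}{\nnorm{\mathbf{P}\br{\mathbf{g}}}_2^2}$; substituting the pointwise identity and using linearity of expectation --- legitimate because $\mathbf{P}\in L^2\br{\M_2^{\otimes n},\gamma_h}$ forces every $p_\sigma\in L^2\br{\complex,\gamma_h}$, so the finitely many nonnegative summands are integrable --- gives
\[
N_2\br{\mathbf{P}}^2 \;=\; \sum_{\sigma\in[4]_{\geq 0}^n}\expec{\mathbf{g}\sim\gamma_h}{\abs{p_\sigma\br{\mathbf{g}}}^2} \;=\; \sum_{\sigma\in[4]_{\geq 0}^n}\twonorm{p_\sigma}^2 \;=\; \twonorm{p}^2 ,
\]
where the last step is just the definition of the $2$-norm of the vector-valued function $p$. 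Taking square roots gives $N_2\br{\mathbf{P}} = \twonorm{p}$. I do not expect a genuine obstacle here: the statement is essentially ``Parseval on $\M_2^{\otimes n}$, then Fubini/linearity'', and the only point worth a sentence of justification is the interchange of the finite Fourier sum with the Gaussian expectation, which is immediate from the $L^2$ hypothesis.
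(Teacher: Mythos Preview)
Your proof is correct and follows essentially the same route as the paper: apply Parseval on $\M_2^{\otimes n}$ (Fact~\ref{fac:basicfourier} item 3) pointwise in the Gaussian variable, then take the expectation and identify the result with $\twonorm{p}^2$. The paper's version is the same one-line computation, merely omitting your explicit remark about interchanging the finite sum with the expectation.
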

\begin{proof}
	Consider
	\[N_2\br{\mathbf{P}}^2=\expec{}{\nnorm{\mathbf{P}}_2^2}=\expec{\mathbf{g}\sim \gamma_n}{\sum_{\sigma\in[4]_{\geq 0}^h}\abs{p_{\sigma}\br{\mathbf{g}}}^2}=\twonorm{p}^2,\]
	where the second equality is from Fact~\ref{fac:basicfourier} item 3.
\end{proof}
\begin{lemma}\label{lem:influencerandomoperator}
	Given a multilinear random operator $\mathbf{P}\in L^2\br{\M_2^{\otimes h},\gamma_n}$ with degree $d$ and the associated vector-valued function $p$ under a standard orthonormal basis, it holds that
	\[\influence\br{p}\leq \deg\br{\mathbf{P}}N_2\br{\mathbf{P}}^2.\]
\end{lemma}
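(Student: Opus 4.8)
The plan is to reduce the inequality to the vector-valued analogue of the classical bound ``total influence is at most degree times variance'', which is already recorded as Fact~\ref{fac:vecfun} item 5, applied to the associated vector-valued function $p=\br{p_{\sigma}}_{\sigma\in[4]_{\geq 0}^h}$ of $\mathbf{P}$.

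First I would note that, directly from the definitions, the degree of $p$ viewed as a vector-valued function in $L^2\br{\complex^{4^h},\gamma_n}$ equals $\max_{\sigma}\deg\br{p_{\sigma}}$, which is precisely $\deg\br{\mathbf{P}}$. Then Fact~\ref{fac:vecfun} item 5 gives $\influence\br{p}\leq\deg\br{p}\var{p}=\deg\br{\mathbf{P}}\var{p}$. Next I would bound $\var{p}$ by $\twonorm{p}^2$: using Fact~\ref{fac:vecfun} item 2 together with the definition of the $2$-norm of a vector-valued function, $\var{p}=\sum_{\sigma\neq 0^n}\twonorm{\widehat{p}\br{\sigma}}^2\leq\sum_{\sigma}\twonorm{\widehat{p}\br{\sigma}}^2=\twonorm{p}^2$. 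Finally, Lemma~\ref{lem:randoperator} identifies $\twonorm{p}$ with $N_2\br{\mathbf{P}}$, so chaining the three bounds yields $\influence\br{p}\leq\deg\br{\mathbf{P}}N_2\br{\mathbf{P}}^2$.

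I do not expect any genuine obstacle here; the argument is essentially bookkeeping. The only points meriting attention are to invoke the vector-valued versions of the Gaussian-space facts (degree, variance and influence of $p$ are the coordinatewise maxima/sums over its components $p_{\sigma}$) rather than the scalar ones, and to correctly match $\deg\br{p}$ with $\deg\br{\mathbf{P}}$. Note that the multilinearity hypothesis is not actually used in this particular inequality; it appears in the statement only because multilinearity is the regime in which the lemma is subsequently applied.
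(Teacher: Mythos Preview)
Your proposal is correct and follows essentially the same route as the paper: invoke Fact~\ref{fac:vecfun} item~5 to get $\influence\br{p}\leq\deg\br{p}\var{p}$, bound $\var{p}\leq\twonorm{p}^2$, and then identify $\twonorm{p}=N_2\br{\mathbf{P}}$ via Lemma~\ref{lem:randoperator}. Your observation that multilinearity is not actually needed for this particular bound is also accurate.
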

\begin{proof}
	Consider
	\[\influence\br{p}=\sum_{i=1}^n\influence_i\br{p}\leq\deg\br{p}\var{p}\leq\deg\br{\mathbf{P}}\twonorm{p}^2=\deg\br{\mathbf{P}}N_2\br{\mathbf{P}}^2,\]
	where the first inequality is from Fact~\ref{fac:vecfun} item 5; the second equality is from Lemma~\ref{lem:randoperator}.
\end{proof}
  We say a pair of random operators $\br{\mathbf{P},\mathbf{Q}}\in L^2\br{\M_2^{\otimes h},\gamma_n}\times L^2\br{\M_2^{\otimes h},\gamma_n} $ are {\em joint random operators} if the random variables $\br{\mathbf{g},\mathbf{h}}$ in $\br{\mathbf{P},\mathbf{Q}}$ are drawn from the joint distribution $\G_{\rho}^{\otimes n}$ for $0\leq\rho\leq 1$.
\subsection{Miscellaneous}\label{subsec:misc}

Throughout this paper, any function $f:\reals\rightarrow\reals$ is also viewed as a map $f:\H_d\rightarrow\H_d$ defined as
$f\br{P}=\sum_if\br{\lambda_i}\ketbra{v_i}$, where $P=\sum_i\lambda_i\ketbra{v_i}$ is a spectral decomposition of $P$.

	Given a convex body $\Delta\subseteq\reals^k$, we say a map $\R:\reals^k\rightarrow\reals^k$ is the rounding map of $\Delta$ if for any $x\in\reals^k$, $\R\br{x}$ is the element in $\Delta$ that is closest to $x$ in $\twonorm{\cdot}$ distance. The following well-known fact states that the Lipschitz coefficient of a rounding map is at most $1$.
	
	\begin{fact}\label{fac:rounding}
		Let $\Delta$ be a convex set in $\reals^k$ with the rounding map $\R$. It holds that
		\[\twonorm{\R\br{x}-\R\br{y}}\leq\twonorm{x-y},\]
		for any $x,y\in\reals^k$.

Thus, if $\Delta$ contains the element $\br{0,\ldots, 0}$, then $\R$ is a contraction map. Namely, $\twonorm{\R\br{x}}\leq\twonorm{x}$ for any $x\in\reals^k$.
	\end{fact}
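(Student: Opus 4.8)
The plan is to invoke the standard variational (obtuse-angle) characterization of the Euclidean projection onto a closed convex set and then combine two instances of it; this is classical, so the proof is short. Write $p\defeq\R\br{x}$ and $q\defeq\R\br{y}$; since $\Delta$ is a nonempty closed convex set (a convex body in our setting), both are well defined and unique because $z\mapsto\twonorm{z-x}^2$ is strictly convex and coercive. The first step is to recall, or re-derive, that the minimizer $p$ satisfies
\[\innerproduct{x-p}{z-p}\leq 0\qquad\text{for every }z\in\Delta,\]
and likewise $\innerproduct{y-q}{z-q}\leq 0$ for every $z\in\Delta$.

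Next I would specialize the first inequality to $z=q$ and the second to $z=p$, obtaining $\innerproduct{x-p}{q-p}\leq 0$ and $\innerproduct{y-q}{p-q}\leq 0$. Adding the two and rearranging gives $\innerproduct{\br{x-y}-\br{p-q}}{q-p}\leq 0$, that is,
\[\twonorm{p-q}^2\leq\innerproduct{x-y}{p-q}\leq\twonorm{x-y}\cdot\twonorm{p-q},\]
where the last step is Cauchy--Schwarz. Dividing through by $\twonorm{p-q}$ (the case $p=q$ being trivial) yields $\twonorm{\R\br{x}-\R\br{y}}=\twonorm{p-q}\leq\twonorm{x-y}$, which is the asserted non-expansiveness. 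For the final sentence, note that if $\br{0,\ldots,0}\in\Delta$ then $\br{0,\ldots,0}$ is trivially its own closest point in $\Delta$, so $\R\br{0,\ldots,0}=\br{0,\ldots,0}$; substituting $y=\br{0,\ldots,0}$ into the inequality just proved gives $\twonorm{\R\br{x}}\leq\twonorm{x}$ for every $x\in\reals^k$, so $\R$ is a contraction map.

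There is no genuine obstacle; the only point that deserves a line of justification is the variational inequality $\innerproduct{x-p}{z-p}\leq 0$. This holds because, for any $z\in\Delta$ and $t\in[0,1]$, convexity places $p+t\br{z-p}$ in $\Delta$, so the function $t\mapsto\twonorm{p+t\br{z-p}-x}^2$ is minimized over $[0,1]$ at $t=0$; its right derivative at $0$ equals $2\innerproduct{p-x}{z-p}$, which must therefore be $\geq 0$, giving the claim after a sign flip. (Existence and uniqueness of the projection use that $\Delta$ is closed and that the squared Euclidean norm is strictly convex and coercive.)
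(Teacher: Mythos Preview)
Your argument is correct and is precisely the standard proof of non-expansiveness of the Euclidean projection onto a closed convex set via the obtuse-angle variational inequality. The paper itself gives no proof of this fact; it is stated as well known, so there is nothing to compare against beyond noting that your write-up supplies the usual justification.
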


\section{Main results}\label{sec:mainresult}
\begin{theorem}\label{thm:nijs}
  Given $0\leq \rho<1$, $\epsilon\in(0,1)$, integers $n,s>0$, a bipartite state $\psi_{AB}$ with $\psi_A=\psi_B=\frac{\id_2}{2}<1$ and the maximal correlation $\rho=\rho\br{\psi_{AB}}$ defined in Definition~\ref{def:maximalcorrelation}, sequences of possibly repetitive Hermitian operators $P_1,\ldots, P_s\in\H_2^{\otimes n}$ and $Q_1,\ldots, Q_s\in\H_2^{\otimes n}$ satisfying $0\leq P_u,Q_u\leq\id$ for any $1\leq u\leq s$, there exists an explicitly computable $D=D\br{\rho,\epsilon,s}$ and maps $f,g:\H_2^{\otimes n}\rightarrow \H_2^{\otimes D}$ with $\widetilde{P}_u=f\br{P_u}$ and $\widetilde{Q}_u=g\br{Q_u}$ for  $1\leq  u\leq s$, such that the following holds.
  \begin{enumerate}
    \item $0\leq\widetilde{P}_u\leq\id$ and $0\leq \widetilde{Q}_u\leq\id$.
    \item $\abs{\frac{1}{2^n}\Tr~P_u-\frac{1}{2^{D}}\Tr~\widetilde{P}_u}\leq\epsilon$ and $\abs{\frac{1}{2^n}\Tr~Q_u-\frac{1}{2^{D}}\Tr~\widetilde{Q}_u}\leq\epsilon$.
    \item $\abs{\Tr~\br{P_u\otimes Q_u}\psi_{AB}^{\otimes n}-\Tr~\br{\widetilde{P}_u\otimes \widetilde{Q}_u}\psi_{AB}^{\otimes D}}\leq\epsilon$.
  \end{enumerate}
   In particular, one may choose $D=\exp\br{\mathrm{poly}\br{s,\exp\br{\mathrm{poly}\br{\frac{1}{\epsilon},\frac{1}{1-\rho}}}}}$.
\end{theorem}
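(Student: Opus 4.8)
The plan is to realize $f$ and $g$ as compositions of the seven transformations displayed in Figure~\ref{fig:construction}, each chosen so that it perturbs the three quantities in items 1--3 by at most a controlled amount while simultaneously simplifying the operators, ending with operators living on a bounded number of copies. We first fix, once and for all, standard orthonormal bases $\set{\A_i}_{i=0}^3$ and $\set{\B_i}_{i=0}^3$ on $\M_2$ furnished by Lemma~\ref{lem:jointbasis} applied to the single-copy state $\psi_{AB}$ (legitimate since $\psi_A=\psi_B=\id_2/2$), so that $\Tr\br{\A_i\otimes\B_j}\psi_{AB}=\delta_{i,j}c_i$ with $1=c_0\geq c_1\geq c_2\geq c_3\geq 0$, whence $\Tr\br{P\otimes Q}\psi_{AB}^{\otimes n}=\sum_{\sigma}c_{\sigma}\widehat P\br{\sigma}\widehat Q\br{\sigma}$ for the Fourier expansions in these bases. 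The hypothesis that the quantum maximal correlation of Definition~\ref{def:maximalcorrelation} satisfies $\rho\br{\psi_{AB}}<1$ enters precisely here, forcing $c_i\leq\rho<1$ for $i\geq1$, which is what makes weight on high Fourier levels contract under noise. Throughout, we insist that the transformation on Alice's side depend only on the list $P_1,\ldots,P_s$ and the one on Bob's side only on $Q_1,\ldots,Q_s$, so that repeated operators are mapped consistently; the only place the two sides would otherwise interact is the choice of the low-influence coordinate set below, and this we resolve by taking the union over all $s$ operators, which costs a factor $\mathrm{poly}\br{s}$ in the final dimension.

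First, apply the noise operator $\T_r$ for a constant $r\in(0,1)$ close to $1$ to each $P_u,Q_u$; by the matrix analogue of Fact~\ref{fac:gaussiannoisy} this multiplies every degree-$\abs{\sigma}$ Fourier coefficient by $r^{\abs{\sigma}}$, so the operators become approximately low-degree, while items 1 and 2 survive because $\T_r$ is unital, trace-preserving and maps $[0,\id]$ into itself, and item 3 survives by Lemma~\ref{lem:smoothing of strategies} using the quantum maximal correlation bound. Since a low-degree operator has bounded total influence (matrix analogue of Fact~\ref{fac:influencegaussian}, item 5), only boundedly many coordinates can be more than $\tau$-influential; collecting these over all $u$ gives a set $H$ of bounded size $h$ (Lemma~\ref{lem:regular}). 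Now invoke the first \textbf{quantum invariance principle} (Lemma~\ref{lem:jointinvariance}): replace each basis factor $\A_{\sigma_j},\B_{\sigma_j}$ with $j\notin H$ by the $c_i$-correlated Gaussian pair $\br{\mathbf{g}_{i,j},\mathbf{h}_{i,j}}$, obtaining random operators $\mathbf{P}_u,\mathbf{Q}_u\in L^2\br{\H_2^{\otimes h},\gamma_{3\br{n-h}}}$. By construction the Gaussian substitution reproduces the moment and cross-moment structure, so items 2 and 3 hold in expectation; but $\mathbf{P}_u$ need no longer satisfy $0\leq\cdot\leq\id$. To repair item 1 one shows $\bigE\,\zeta\br{\mathbf{P}_u}\approx 0$, where $\zeta$ is the squared $\ell_2$-distance to the order interval $[0,\id]$, and then replaces $\mathbf{P}_u$ by its rounding onto the measurement operators, which by Fact~\ref{fac:rounding} is a contraction and hence changes nothing else by more than the same small amount.

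Working now entirely on the Gaussian side, apply in order: the dimension-reduction theorem of~\cite{Ghazi:2018:DRP:3235586.3235614} to the pair of vector-valued polynomials $\br{p_{\sigma_H},q_{\sigma_H}}$ to bring the number of Gaussian variables down to a bounded $n_0$ (all three items preserved since the joint distribution and a further invariance estimate are preserved; Lemma~\ref{lem:dimensionreduction}); a second round of smoothing, this time with the Ornstein--Uhlenbeck operator $U_r$, to make the degrees bounded; the multilinearization lemma of~\cite{Ghazi:2018:DRP:3235586.3235614} to reduce each $p_{\sigma_H},q_{\sigma_H}$ to a multilinear polynomial (Lemma~\ref{lem:multiliniearization}); and finally the second \textbf{quantum invariance principle} (Lemma~\ref{lem:invariancejointgaussian}) to substitute the Gaussians back by the basis operators of $t$ fresh copies of $\psi_{AB}$ per variable, followed by one last rounding to genuine measurement operators $\widetilde P_u=f\br{P_u}$, $\widetilde Q_u=g\br{Q_u}$ on $\H_2^{\otimes D}$ with $D$ of the form $\br{h+n_0 t}\cdot\mathrm{poly}\br{s}$. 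Propagating the error budgets forces the influence threshold $\tau$ and the dimension-reduction accuracy to be exponentially small in $\mathrm{poly}\br{1/\epsilon,1/\br{1-\rho}}$, which makes $h$, and then $n_0$ via dimension reduction, singly exponential, so that composing the two exponentiations with the $\mathrm{poly}\br{s}$ factor yields $D=\exp\br{\mathrm{poly}\br{s,\exp\br{\mathrm{poly}\br{\frac{1}{\epsilon},\frac{1}{1-\rho}}}}}$.

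The crux of the whole argument is the \textbf{quantum invariance principle}, because the classical Taylor-expansion proof breaks down for operator-valued functions that do not commute. The plan is to replace ordinary derivatives by Fr\'echet derivatives on the Banach space $\H_2^{\otimes h}$ (developed in Section~\ref{sec:derivative}), for which a Taylor formula with integral remainder still holds, and to arrange matters so that only $\C^3$ test functions are ever needed: to certify that $\mathbf{P}_u$ is close to a measurement operator it suffices to bound $\bigE\,\zeta_{\lambda}\br{\mathbf{P}_u}$ for a $\C^3$ mollification $\zeta_{\lambda}$ of $\zeta$, and the gap between the operator and Gaussian sides is then controlled by the third-order Fr\'echet-remainder term. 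Bounding that remainder reduces to bounding the normalized $4$-norm $N_4\br{\cdot}$ of a low-degree random operator by its $2$-norm, which is the second hard ingredient: a \textbf{hypercontractive inequality for random operators}. I expect the proof of this inequality to be the main technical obstacle; it is obtained by interleaving King's hypercontractivity for unital qubit channels~\cite{King2014} on the operator coordinates with Wolff's Gaussian hypercontractivity~\cite{PawelWolff2007} on the Gaussian coordinates, and the delicate point is to make this product structure survive the non-commutativity while reconciling the two distinct effective noise rates coming respectively from the $c_i$'s and from $r$.
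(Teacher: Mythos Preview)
Your proposal is essentially the paper's own proof: the same seven-step pipeline (smooth, regularize, invariance to Gaussian, dimension-reduce, smooth, multilinearize, invariance back, round), invoking the same lemmas in the same order, with the same union-bound over the $s$ operators to fix a single set $H$, and the same identification of the Fr\'echet-derivative Taylor expansion plus the King--Wolff hybrid hypercontractivity as the technical core.

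One small correction of timing: you write that after the first invariance principle one ``replaces $\mathbf{P}_u$ by its rounding onto the measurement operators'' and \emph{then} proceeds to dimension reduction. Taken literally this does not work, because rounding destroys the low-degree polynomial structure in the Gaussian variables that Lemma~\ref{lem:dimensionreduction} (and later Lemma~\ref{lem:multiliniearization}) requires. The paper instead never rounds in the middle: it carries the quantity $\expec{}{\Tr\,\zeta(\cdot)}$ as a running budget through every intermediate step (items labelled ``$\zeta$'' in each block of the proof of Theorem~\ref{thm:nijs}), and performs a single rounding only at the very end, after the second invariance principle has produced genuine operators in $\H_2^{\otimes(h+n_0t)}$. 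Your final paragraph already has the right picture; just delete the intermediate rounding and propagate the $\zeta$-bound instead.
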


\begin{proof}
  Let $\delta,\tau $ be parameters which are chosen later. The proof is composed of several steps.
  \begin{itemize}
    \item \textbf{Smoothing operators}. For $u\in[s]$, we apply Lemma~\ref{lem:smoothing of strategies} to operators $P_u$ and $Q_u$ to get $P_u^{(1)}$ and $Q_u^{(1)}$ and $ d_1=\frac{2\log^2(1/\delta)}{C\br{1-\rho}\delta}$ for some constant $C$ satisfying that
        \bigskip
        \begin{enumerate}
          \item $0\leq P_u^{(1)}\leq\id~\mbox{and}~0\leq Q_u^{(1)}\leq\id;$
          \item $\Tr~P_u^{(1)}=\Tr~P_u~\mbox{and}~\Tr~Q_u^{(1)}=\Tr~Q_u;$
          \item $\nnorm{P_u^{(1)}}_2\leq\nnorm{P_u}_2~\mbox{and}~\nnorm{Q_u^{(1)}}_2\leq\nnorm{Q_u}_2;$
          \item $\abs{\Tr\br{P_u^{(1)}\otimes Q_u^{(1)}}\psi^{\otimes n}_{AB}-\Tr\br{P_u\otimes Q_u}\psi^{\otimes n}_{AB}}\leq\delta;$
          \item $\nnorm{\br{P_u^{(1)}}^{>{d_1}}}^2_2\leq\delta~\mbox{and}~\nnorm{\br{Q_u^{(1)}}^{>{d_1}}}^2_2\leq\delta.$
          \item If $P_u=P_v$, then $P_u^{(1)}=P_v^{(1)}$. If $Q_u=Q_v$, then $Q_u^{(1)}=Q_v^{(1)}$.
        \end{enumerate}
        \bigskip

    \item\textbf{Regularization}. For any $u\in[s]$, applying Lemma~\ref{lem:regular} to $P_u^{(1)}$ and $Q_u^{(1)}$ with $\delta\leftarrow\delta, \epsilon\leftarrow\tau, d\leftarrow d_1$, we obtain a set $H_u\subseteq[n]$ of size $h_u=\abs{H}\leq\frac{2d_1}{\tau}$ such that

        \[\br{\forall i\notin H_u}~\influence_i\br{\br{P_u^{(1)}}^{\leq d_1}}\leq\tau,~\mbox{and}~\influence_i\br{\br{Q_u^{(1)}}^{\leq d_1}}\leq\tau.\]
        Set $H=\cup_uH_u$. Then $h=\abs{H}\leq\frac{2sd_1}{\tau}$. It holds that for any $u\in[s]$,

            \begin{enumerate}
    	\item $\br{\forall i\notin H}~\influence_i\br{\br{P_u^{(1)}}^{\leq d_1}}\leq\tau,~\mbox{and}~\influence_i\br{\br{Q_u^{(1)}}^{\leq d_1}}\leq\tau;$
    	\item $\Tr~\br{P_u^{(1)}\otimes Q_u^{(1)}}\psi_{AB}^{\otimes h}=\sum_{\sigma\in[4]_{\geq 0}^h}c_{\sigma}\Tr~\br{P^{(1)}_{u,\sigma}\otimes Q^{(1)}_{u,\sigma}}\psi_{AB}^{\otimes (n-h)},$

   	where
    	$\br{c_i}_{i=0}^3$ are the singular values of the matrix $\mathsf{Corr}\br{\psi_{AB}}$ defined in Definition~\ref{def:covariancematrix} and $c_{\sigma}=c_{\sigma_1}\cdot c_{\sigma_2}\cdots c_{\sigma_h}$ and
    	\[P^{(1)}_{\sigma}=\sum_{\tau\in[4]_{\geq 0}^n:\tau_H=\sigma}\widehat{P}\br{\sigma}\A_{\sigma}~\mbox{and}~Q_{\sigma}^{(1)}=\sum_{\tau\in[4]_{\geq 0}^n:\tau_H=\sigma}\widehat{Q}\br{\sigma}\B_{\sigma},\]
    \end{enumerate}
    for standard orthonormal basis $\set{\A_i}_{i=0}^3$ and $\set{\B_i}_{i=0}^3$.

    \item\textbf{Invariance from $\H_2^{\otimes n}$ to $L^2\br{\H_2^{\otimes h},\gamma_{3\br{n-h}}}$}. ~For any $u\in[s]$, applying Lemma~\ref{lem:jointinvariance} to $P_u^{(1)}$ and $Q_u^{(1)}$ and $H$, we obtain  degree-$d_1$ multilinear joint random operators $\br{\mathbf{P}_u^{(2)},\mathbf{Q}_u^{(2)}}\in L^2\br{\M_2^{\otimes h},\gamma_{3\br{n-h}}}\times  L^2\br{\M_2^{\otimes h},\gamma_{3\br{n-h}}}$ with joint random variables $\br{\mathbf{g}_i,\mathbf{h}_i}_{i=1}^{3\br{n-h}}\sim \G_{\rho}^{\otimes 3(n-h)}$ such that,
        \bigskip
    \begin{enumerate}
    \item $2^{n-h}\expec{}{\Tr~\mathbf{P}_u^{(2)}}=\Tr~P_u^{(1)}~\mbox{and}~2^{n-h}\expec{}{\Tr~\mathbf{Q}_u^{(2)}}=\Tr~Q_u^{(1)};$

    \item $N_2\br{\mathbf{P}_u^{(2)}}\leq\nnorm{P^{(1)}}_2~\mbox{and}~N_2\br{\mathbf{Q}_u^{(2)}}\leq\nnorm{Q^{(1)}}_2;$
    	
    \item $\expec{}{\Tr~\zeta\br{\mathbf{P}_u^{(2)}}}\leq O\br{2^h\br{\br{3^{d_1}\sqrt{\tau}d_1}^{2/3}+\sqrt{\delta}}}$ and
    	
    $\expec{}{\Tr~\zeta\br{\mathbf{Q}_u^{(2)}}}\leq O\br{2^h\br{\br{3^{d_1}\sqrt{\tau}d_1}^{2/3}+\sqrt{\delta}}},$
    where $\zeta\br{\cdot}$ is defined in Eq.~\eqref{eqn:zeta};

	    \item $\Tr~\br{P_u^{(1)}\otimes Q_u^{(1)}}\psi_{AB}^{\otimes n}=\expec{}{\Tr~\br{\br{\mathbf{P}_u^{(2)}\otimes\mathbf{Q}_u^{(2)}}\psi_{AB}^{\otimes h}}}.$
	
	    \item If $P_u^{(1)}=P_v^{(1)}$, then $\mathbf{P}_u^{(2)}=\mathbf{P}_v^{(2)}$. If $Q_u^{(1)}=Q_v^{(1)}$, then $\mathbf{Q}_u^{(2)}=\mathbf{Q}_v^{(2)}$.
	    	
    \end{enumerate}
    \bigskip
    \item\textbf{Dimension reduction}. ~For any $u\in[s]$, applying Lemma~\ref{lem:dimensionreduction} to $\br{\mathbf{P}_u^{(2)},\mathbf{Q}_u^{(2)}}$ with $\delta\leftarrow\delta/2s, d\leftarrow d_1, n\leftarrow 3(n-h),\alpha\leftarrow 1/4s$, and the union bound on the $u$'s, we obtain
        joint random operators $\br{\mathbf{P}_{u}^{(3)},\mathbf{Q}_{u}^{(3)}}\in L^2\br{\H_2^{\otimes h},\gamma_{n_0}}\times L^2\br{\H_2^{\otimes h},\gamma_{n_0}}$ with the random variables drawn from $\G_{\rho}^{\otimes n_0}$ such that for all $u\in[s]$ the following holds.
        \bigskip
        \begin{enumerate}
         \item $\abs{\expec{}{\Tr~\mathbf{P}_u^{(3)}}-\expec{}{\Tr~\mathbf{P}_u^{(2)}}}\leq\delta2^hN_2\br{\mathbf{P}_u^{(2)}}$
         and
         $\abs{\expec{}{\Tr~\mathbf{Q}_u^{(3)}}-\expec{}{\Tr~\mathbf{Q}_u^{(2)}}}\leq\delta2^hN_2\br{\mathbf{Q}_u^{(2)}}.$
           \item $N_2\br{\mathbf{P}_u^{(3)}}\leq\br{1+\delta}N_2\br{\mathbf{P}_u^{(2)}}
               ~\mbox{and}~ N_2\br{\mathbf{Q}_u^{(3)}}\leq\br{1+\delta}N_2\br{\mathbf{Q}^{(2)}_u}.$
             \item $\expec{}{\Tr~\zeta\br{\mathbf{P}_u^{\br{3}}}}\leq 2\sqrt{s}\expec{}{\Tr~\zeta\br{\mathbf{P}_u^{\br{2}}}}~\mbox{ and}~ \expec{}{\Tr~\zeta\br{\mathbf{Q}_u^{\br{3}}}}\leq 2\sqrt{s}\expec{}{\Tr~\zeta\br{\mathbf{Q}_u^{\br{2}}}}.$
  \item $\abs{\expec{}{\Tr\br{\mathbf{P}_u^{\br{3}}\otimes \mathbf{Q}_u^{\br{3}}}\psi_{AB}^{\otimes h}}-\expec{}{\Tr\br{\mathbf{P}_u^{(2)}\otimes \mathbf{Q}_u^{(2)}}\psi_{AB}^{\otimes h}}}\leq\delta N_2\br{\mathbf{P}_u^{(2)}}N_2\br{\mathbf{Q}_u^{(2)}}.$
  \item If $\mathbf{P}_u^{(2)}=\mathbf{P}_v^{(2)}$, then $\mathbf{P}_u^{(3)}=\mathbf{P}_v^{(3)}$. If $\mathbf{Q}_u^{(2)}=\mathbf{Q}_v^{(2)}$, then $\mathbf{Q}_u^{(3)}=\mathbf{Q}_v^{(3)}$.
        \end{enumerate}
        \bigskip
  Here $n_0=\frac{4^{3h+4}d_1^{O\br{d_1}}s^2}{\delta^2}$.
    \item\textbf{Smoothing random operators}. For any $u\in[s]$, applying Lemma~\ref{lem:smoothgaussian} to $\br{\mathbf{P}_u^{(3)},\mathbf{Q}_u^{(3)}}$ with $ h\leftarrow h, n\leftarrow n_0$ we obtain joint random operators $\br{\mathbf{P}_u^{(4)},\mathbf{Q}_u^{(4)}}\in L^2\br{\H_2^{\otimes h},\gamma_{n_0}}\times L^2\br{\H_2^{\otimes h},\gamma_{n_0}}$ such that,
        \bigskip
    \begin{enumerate}
     	\item $\deg\br{\mathbf{P}_u^{(4)}}\leq d_2~\mbox{and}~\deg\br{\mathbf{Q}_u^{(4)}}\leq d_2.$
\item $\expec{}{\Tr~\br{\mathbf{P}_u^{(4)}}}=\expec{}{\Tr~\br{\mathbf{P}_u^{(3)}}}~\mbox{and}~\expec{}{\Tr~\br{\mathbf{Q}_u^{(4)}}}=\expec{}{\Tr~\br{\mathbf{Q}_u^{(3)}}},$
			\item $N_2\br{\mathbf{P}_u^{(4)}}\leq N_2\br{\mathbf{P}_u^{(3)}}~\mbox{ and}~N_2\br{\mathbf{Q}_u^{(4)}}\leq N_2\br{\mathbf{Q}_u^{(3)}}.$
\item $\expec{}{\Tr~\zeta\br{\mathbf{P}_u^{(4)}}}\leq2\br{\expec{}{\Tr~\zeta\br{\mathbf{P}_u^{(3)}}}+\delta2^hN_2\br{\mathbf{P}_u^{(3)}}^2}$  and

    $\expec{}{\Tr~\zeta\br{\mathbf{Q}_u^{(4)}}}\leq2\br{\expec{}{\Tr~\zeta\br{\mathbf{P}_u^{(3)}}}+\delta2^hN_2\br{\mathbf{Q}_u^{(3)}}^2}.$
\item $\abs{\expec{}{\Tr\br{\mathbf{P}_u^{(3)}\otimes\mathbf{Q}_u^{(3)}}\psi_{AB}^{\otimes h}}-\expec{}{\Tr\br{\mathbf{P}_u^{(4)}\otimes\mathbf{Q}_u^{(4)}}\psi_{AB}^{\otimes h}}}\leq \delta N_2\br{\mathbf{P}_u^{(3)}}N_2\br{\mathbf{Q}_u^{(3)}}.$
\item If $\mathbf{P}_u^{(3)}=\mathbf{P}_v^{(3)}$, then $\mathbf{P}_u^{(4)}=\mathbf{P}_v^{(4)}$. If $\mathbf{Q}_u^{(3)}=\mathbf{Q}_v^{(3)}$, then $\mathbf{Q}_u^{(4)}=\mathbf{Q}_v^{(4)}$.
    \end{enumerate}
    \bigskip
    Here $d_2=O\br{\frac{\log^2\frac{1}{\delta}}{\delta\br{1-\rho}}}$.

    \item\textbf{Multilinearization}. For any $u\in[s]$, suppose
    \[\br{\mathbf{P}_u^{(4)},\mathbf{Q}_u^{(4)}}=\br{\sum_{\sigma\in[4]_{\geq 0}^h}p^{(4)}_{u,\sigma}\br{\mathbf{g}}\A_{\sigma},\sum_{\sigma\in[4]_{\geq 0}^h}q^{(4)}_{u,\sigma}\br{\mathbf{h}}\B_{\sigma}}_{\br{\mathbf{g},\mathbf{h}}\sim\G_{\rho}^{\otimes n_0}}.\]

    Applying Lemma~\ref{lem:multiliniearization} with $d\leftarrow d_2,h\leftarrow h, n\leftarrow n_0,\delta\leftarrow\tau$, we obtain multilinear joint random operators
        \[\br{\mathbf{P}_u^{(5)},\mathbf{Q}_u^{(5)}}=\br{\sum_{\sigma\in[4]_{\geq 0}^h}p^{(5)}_{u,\sigma}\br{\mathbf{x}}\A_{\sigma},\sum_{\sigma\in[4]_{\geq 0}^h}q^{(5)}_{u,\sigma}\br{\mathbf{y}}\B_{\sigma}}_{\br{\mathbf{x},\mathbf{y}}\sim\G_{\rho}^{\otimes n_0t}},\]
       with $t=O\br{\frac{d_2^2}{\tau^2}}$ such that the following holds.
       \bigskip
        \begin{enumerate}
          \item $\deg\br{\mathbf{P}_u^{(5)}}\leq d_2$ and $\deg\br{\mathbf{Q}_u^{(5)}}\leq d_2$.
          \item For all $\br{i,j}\in[n_0]\times [t]$
          \[\influence_{in_0+j}\br{p^{(5)}_{u,\sigma}}\leq\tau\influence_i\br{p^{(4)}_{u,\sigma}}~\mbox{          and}~
          \influence_{in_0+j}\br{q^{(5)}_{u,\sigma}}\leq\tau\influence_i\br{q^{(4)}_{u,\sigma}}.\]
          \item
          $\expec{}{\Tr~\mathbf{P}_u^{(5)}}=\expec{}{\Tr~\mathbf{P}_u^{(4)}}~\mbox{and}~\expec{}{\Tr~\mathbf{Q}_u^{(5)}}=\expec{}{\Tr~\mathbf{Q}_u^{(4)}}.$
          \item $N_2\br{\mathbf{P}_u^{(5)}}\leq N_2\br{\mathbf{P}_u^{(4)}}~\mbox{and}~N_2\br{\mathbf{Q}_u^{(5)}}\leq N_2\br{\mathbf{Q}_u^{(4)}}.$

\item  $\abs{\expec{}{\Tr~\zeta\br{\mathbf{P}_u^{(5)}}}-\expec{}{\Tr~\zeta\br{\mathbf{P}_u^{(4)}}}}\leq\tau2^{h+2}N_2\br{\mathbf{P}_u^{(4)}}^2$

and

$\abs{\expec{}{\Tr~\zeta\br{\mathbf{Q}_u^{(5)}}}-\expec{}{\Tr~\zeta\br{\mathbf{Q}_u^{(4)}}}}\leq\tau2^{h+2}N_2\br{\mathbf{Q}_u^{(4)}}^2.$
\item $\abs{\expec{}{\Tr\br{\mathbf{P}_u^{(5)}\otimes\mathbf{Q}_u^{(5)}}\psi_{AB}^{\otimes h}}-\expec{}{\Tr\br{\mathbf{P}_u^{(4)}\otimes\mathbf{Q}_u^{(4)}}\psi_{AB}^{\otimes h}}}\leq \tau N_2\br{\mathbf{P}_u^{(4)}}N_2\br{\mathbf{Q}_u^{(4)}}.$
\item If $\mathbf{P}_u^{(4)}=\mathbf{P}_v^{(4)}$, then $\mathbf{P}_u^{(5)}=\mathbf{P}_v^{(5)}$. If $\mathbf{Q}_u^{(4)}=\mathbf{Q}_v^{(4)}$, then $\mathbf{Q}_u^{(5)}=\mathbf{Q}_v^{(5)}$.
        \end{enumerate}
        \bigskip
\item\textbf{Invariance from $L^2\br{\H_2^{\otimes h},\gamma_{n_0t}}$ to $\H_2^{\otimes h+n_0t}$}. From the item 1 and item 2 above and Lemma~\ref{lem:partialvariance} item 4 and Lemma~\ref{lem:randoperator}, we have
\[\sum_{\sigma}\influence_i\br{p_{u,\sigma}^{(5)}}\leq\tau d_2N_2\br{\mathbf{P}_u^{(4)}}^2.\]
    Similarly, we have
    \[\sum_{\sigma}\influence_i\br{q_{u,\sigma}^{(5)}}\leq\tau d_2N_2\br{\mathbf{Q}_u^{(4)}}^2.\]
    For any $u\in[s]$, we apply Lemma~\ref{lem:invariancejointgaussian} to $\br{\mathbf{P}_u^{(5)},\mathbf{Q}_u^{(5)}}$ with $n\leftarrow n_0t,h\leftarrow h,d\leftarrow d_2$,
    $$\tau\leftarrow\tau_0\defeq\max_u\set{\max\set{\tau d_2N_2\br{\mathbf{P}_u^{(4)}}^2,\tau d_2N_2\br{\mathbf{Q}_u^{(4)}}^2}~:~u\in[s]}$$
    to get $\br{P_u^{(6)},Q_u^{(6)}}\in\H_2^{\otimes h+n_0t}\times \H_2^{\otimes h+n_0t}$ satisfying that

    \bigskip
    \begin{enumerate}
      \item $2^{n_0t}\expec{}{\Tr~\mathbf{P}_u^{(5)}}=\Tr~P_u^{(6)}~\mbox{and}~2^{n_0t}\expec{}{\Tr~\mathbf{Q}_u^{(5)}}=\Tr~Q_u^{(6)};$

      \item $N_2\br{\mathbf{P}_u^{(5)}}=\nnorm{P_u^{(6)}}_2~\mbox{and}~N_2\br{\mathbf{Q}_u^{(5)}}=\nnorm{Q_u^{(6)}}_2.$

      \item $\abs{\expec{}{2^{n_0t}\Tr~\zeta\br{\mathbf{P}_u^{(5)}}}-\Tr~\zeta\br{P_u^{(6)}}}\leq O\br{2^{n_0t+h}\br{3^{d_2}d_2\sqrt{\tau_0}}^{2/3}},$
          and

      $\abs{\expec{}{2^{n_0t}\Tr~\zeta\br{\mathbf{Q}_u^{(5)}}}-\Tr~\zeta\br{Q_u^{(6)}}}\leq O\br{2^{n_0t+h}\br{3^{d_2}d_2\sqrt{\tau_0}}^{2/3}}.$

      \item $\Tr~\br{P_u^{(6)}\otimes Q_u^{(6)}}\psi_{AB}^{\otimes h+n_0t}=\expec{}{\Tr~\br{\mathbf{P}_u^{(5)}\otimes \mathbf{Q}_u^{(5)}}\psi_{AB}^{\otimes h}}.$
      \item If $\mathbf{P}_u^{(5)}=\mathbf{P}_v^{(5)}$, then $P_u^{(6)}=P_v^{(6)}$. If $\mathbf{Q}_u^{(5)}=\mathbf{Q}_v^{(5)}$, then $Q_u^{(6)}=Q_v^{(6)}$.
    \end{enumerate}
    \bigskip

\item\textbf{Rounding to measurement operators}. Finally, we let $\widetilde{P}_u$ and $\widetilde{Q}_u$ be the roundings of $P_u^{(6)}$ and $Q_u^{(6)}$ to the Hermitian matrices between $0$ and $\id$, respectively. Namely, suppose $P_u^{(6)}=\sum_ip_i\ketbra{u_i}$ and $Q_u^{(6)}=\sum_iq_i\ketbra{v_i}$ are the spectral decompositions of $P_u^{(6)}$ and $Q_u^{(6)}$, respectively. Then
$\widetilde{P}_u=\sum_i\widetilde{p_i}\ketbra{u_i}$ and $\widetilde{Q}_u=\sum_i\widetilde{q_i}\ketbra{v_i}$, where
$$\widetilde{x}\defeq\begin{cases}
1~&\mbox{if $x\geq 1$}\\
x~&\mbox{if $0\leq x\leq 1$}\\
0~&\mbox{otherwise}
\end{cases}.$$
Then
\[\frac{1}{2^{n_0t+h}}\abs{\Tr~P_u^{(6)}-\Tr~\widetilde{P}_u}\leq\nnorm{P_u^{(6)}-\widetilde{P}_u}_1\leq\nnorm{P_u^{(6)}-\widetilde{P}_u}_2=\br{\frac{1}{2^{n_0t+h}}\Tr~\zeta\br{P_u^{(6)}}}^{1/2};\]
\[\frac{1}{2^{n_0t+h}}\abs{\Tr~Q_u^{(6)}-\Tr~\widetilde{Q}_u}\leq\nnorm{Q_u^{(6)}-\widetilde{Q}_u}_1\leq\nnorm{Q_u^{(6)}-\widetilde{Q}_u}_2=\br{\frac{1}{2^{n_0t+h}}\Tr~\zeta\br{Q_u^{(6)}}}^{1/2},\]
where the equalities are from the definition of the function $\zeta\br{\cdot}$. And
\begin{eqnarray*}
	&&\abs{\Tr~\br{P_u^{(6)}\otimes Q_u^{(6)}}\psi_{AB}^{\otimes n_0t+h}-\Tr~\br{\widetilde{P}_u\otimes \widetilde{Q}_u}\psi_{AB}^{\otimes n_0t+h}}\\
	&\leq&\abs{\Tr~\br{P_u^{(6)}\otimes \br{Q_u^{(6)}-\widetilde{Q}_u}}\psi_{AB}^{\otimes n_0t+h}}+\abs{\Tr~\br{\br{P_u^{(6)}-P_u}\otimes \widetilde{Q}_u}\psi_{AB}^{\otimes n_0t+h}}\\
	&\leq&\nnorm{P_u^{(6)}}_2\nnorm{Q_u^{(6)}-\widetilde{Q}_u}_2+\nnorm{\widetilde{Q}_u}_2\nnorm{P_u^{(6)}-\widetilde{P}_u}_2\\
	&=&\nnorm{P_u^{(6)}}_2\br{\frac{1}{2^{n_0t+h}}\Tr~\zeta\br{Q_u^{(6)}}}^{1/2}+\nnorm{Q_u^{(6)}}_2\br{\frac{1}{2^{n_0t+h}}\Tr~\zeta\br{P_u^{(6)}}}^{1/2},
\end{eqnarray*}
where the second equality is from Fact~\ref{fac:cauchyschwartz} and the  equality is from the definition of $\zeta\br{\cdot}$.

  \end{itemize}
Choosing $\delta=\frac{\epsilon^2}{10000}, \tau=\epsilon^3/\br{s^4\exp\br{\frac{4\log^2\frac{1}{\delta}}{\br{1-\rho}\delta}}}$ and $D=n_0t+h$, we conclude the desired result.
\end{proof}

\begin{theorem}
  Given parameters $0<\epsilon,\rho<1$ a mono-state binary game $\br{G,\psi_{AB}}$ with question sets $\X,\Y$, where $\psi_{AB}$ is a noisy EPR state, i.e.,  $\psi_A=\psi_B=\frac{\id_2}{2}$ and the maximal correlation $\rho=\rho\br{\psi_{AB}}<1$ as defined in Definition~\ref{def:maximalcorrelation}, there exists an explicitly computable bound $D=D\br{\abs{\X},\abs{\Y},\epsilon,\rho}$ such that it suffices for the players to share $D$ copies of $\psi_{AB}$ to achieve the probability of winning the game at least $\omega(G,\psi_{AB})-\epsilon$. In particular, one may choose $D=\exp\br{\mathrm{poly}\br{\abs{\X},\abs{\Y},\exp\br{\mathrm{poly}\br{\frac{1}{\epsilon},\frac{1}{1-\rho}}}}}$.
\end{theorem}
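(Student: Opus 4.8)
The plan is to derive this statement from Theorem~\ref{thm:nijs} applied to the measurement operators of a near-optimal strategy. First I would recall that in a binary game the answer sets are $\A=\B=\set{0,1}$, so a strategy on $N$ copies of $\psi_{AB}$ is specified by Hermitian operators $\set{P^x}_{x\in\X}$ and $\set{Q^y}_{y\in\Y}$ with $0\leq P^x,Q^y\leq\id$ (take $P^x_1=P^x$, $P^x_0=\id-P^x$, and similarly for Bob). Since the value $\omega^*\br{G,\psi_{AB}}$ of Eq.~\eqref{eqn:omegastarGpsi} is a non-decreasing limit (extra copies can always be ignored), for every $\eta>0$ there is an $N$ and such a strategy achieving value at least $\omega^*\br{G,\psi_{AB}}-\eta$. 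Using $\psi_A=\psi_B=\frac{\id_2}{2}$, hence $\psi_A^{\otimes N}=\frac{\id}{2^N}$, one expands
\[\Tr\br{P^x_a\otimes Q^y_b}\psi_{AB}^{\otimes N}\]
for each $\br{a,b}\in\set{0,1}^2$ as a $\pm 1$-combination of the four elementary quantities $1$, $\frac{1}{2^N}\Tr P^x$, $\frac{1}{2^N}\Tr Q^y$, and $\Tr\br{P^x\otimes Q^y}\psi_{AB}^{\otimes N}$. The game value thereby becomes a fixed linear functional of these quantities whose coefficient weight, after summing the $\mu$-weights which total $1$, is bounded by an absolute constant independent of $\abs{\X},\abs{\Y}$.

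Next I would package the strategy into the form required by Theorem~\ref{thm:nijs}: set $s=\abs{\X}\cdot\abs{\Y}$, index the pairs by $u=(x,y)$, and put $P_u=P^x$, $Q_u=Q^y$ — a repetitive list, since $P^x$ occurs once for every $y\in\Y$. Apply Theorem~\ref{thm:nijs} with this $s$, with $n\leftarrow N$, and with $\epsilon$ replaced by a small $\epsilon'$ that is a fixed constant multiple of $\epsilon$. This yields $D=D(\rho,\epsilon',s)$ and operators $\widetilde{P}_u,\widetilde{Q}_u\in\H_2^{\otimes D}$. The consistency clauses of Theorem~\ref{thm:nijs} (``if $P_u=P_v$ then $\widetilde{P}_u=\widetilde{P}_v$'', likewise for $Q$) are exactly what is needed: they guarantee that $\widetilde{P}_u$ depends only on $x$ and $\widetilde{Q}_u$ only on $y$, so one may define $\widetilde{P}^x$ and $\widetilde{Q}^y$ unambiguously. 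Item~1 of Theorem~\ref{thm:nijs} gives $0\leq\widetilde{P}^x,\widetilde{Q}^y\leq\id$, so $\set{\widetilde{P}^x,\id-\widetilde{P}^x}$ and $\set{\widetilde{Q}^y,\id-\widetilde{Q}^y}$ are valid two-outcome POVMs, i.e. a legitimate strategy on $D$ copies of $\psi_{AB}$.

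Finally I would bound the change in the game value. By items~2 and~3 of Theorem~\ref{thm:nijs}, each elementary quantity $\frac{1}{2^N}\Tr P^x$, $\frac{1}{2^N}\Tr Q^y$, $\Tr\br{P^x\otimes Q^y}\psi_{AB}^{\otimes N}$ moves by at most $\epsilon'$ on passing to $\widetilde{P}^x,\widetilde{Q}^y$ on $D$ copies. Since the game value is the fixed bounded-weight linear functional of these quantities described above, the value of the new strategy differs from that of the original by at most $O(\epsilon')$, hence is at least $\omega^*\br{G,\psi_{AB}}-\eta-O(\epsilon')$. Choosing $\eta=\epsilon/2$ and $\epsilon'$ a suitably small constant multiple of $\epsilon$ gives a strategy on $D$ copies of value at least $\omega^*\br{G,\psi_{AB}}-\epsilon$; substituting $s=\abs{\X}\abs{\Y}$ and $\epsilon'=\Theta(\epsilon)$ into $D=\exp\br{\mathrm{poly}\br{s,\exp\br{\mathrm{poly}\br{1/\epsilon',1/(1-\rho)}}}}$ from Theorem~\ref{thm:nijs} yields the stated $D=\exp\br{\mathrm{poly}\br{\abs{\X},\abs{\Y},\exp\br{\mathrm{poly}\br{1/\epsilon,1/(1-\rho)}}}}$. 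The only subtle point — and precisely what Theorem~\ref{thm:nijs} was designed to absorb — is this per-player consistency: without the ``if $P_u=P_v$'' clauses the transformed operators could depend on the partner's question and fail to assemble into a well-defined strategy. Everything else is routine bookkeeping in expanding the binary game value.
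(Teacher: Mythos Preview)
Your proposal is correct and follows essentially the same route as the paper: both arrange the strategy operators into length-$s=\abs{\X}\abs{\Y}$ repetitive lists indexed by question pairs, invoke Theorem~\ref{thm:nijs} with a rescaled $\epsilon$, use the fact that $\widetilde{P}_u=f(P_u)$ (so equal inputs give equal outputs) to reconstruct a per-question strategy, and then bound the change in each outcome probability $\Tr(P^x_a\otimes Q^y_b)\psi_{AB}^{\otimes n}$ via items~2 and~3. The only cosmetic difference is that the paper bounds each $\abs{\nu_{xy}(a,b)-\widetilde{\nu}_{xy}(a,b)}$ directly (using $\epsilon\leftarrow\epsilon/8$) rather than phrasing the value as a linear functional, and it does not spell out the ``near-optimal strategy'' step you include; but the substance is identical.
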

\begin{proof}
  Suppose the players share $n$ copies of $\psi_{AB}$ and employ the strategies $$\br{\set{P^x_a}_{x\in\X,a\in\set{0,1}},\set{Q^y_b}_{y\in\Y,b\in\set{0,1}}}$$ with the winning probability $\omega$. Without loss of generality, we assume that $\X=\set{1,2,\ldots,\abs{\X}}$ and $\Y=\set{1,2,\ldots,\abs{\Y}}$.

  Apply Theorem~\ref{thm:nijs} to the following two sequences of measurement operators
   $$\br{\underbrace{P_0^1,\ldots, P_0^1}_{|\Y|~\text{times}},\underbrace{P_0^2,\ldots, P_0^2}_{|\Y|~\text{times}},\ldots, \underbrace{P_0^{\abs{X}},\ldots, P_0^{\abs{X}}}_{|\Y|~\text{times}}}$$
   and
   $$\br{Q_0^1,\ldots, Q_0^{\abs{\Y}},Q_0^1,\ldots, Q_0^{\abs{\Y}},\ldots,Q_0^1,\ldots, Q_0^{\abs{\Y}}},$$
   with parameter $\epsilon\leftarrow\epsilon/8,s\leftarrow\abs{\X}\cdot\abs{\Y}$. Let $f$ and $g$ be the maps induced by Theorem~\ref{thm:nijs}. Set $\widetilde{P^x_0}\defeq f\br{P^x_0}$ and $\widetilde{Q^y_0}\defeq g\br{Q^y_0}$ for $x\in\X$ and $y\in\Y$. We claim that the strategy $$\br{\set{\widetilde{P^x_0},\widetilde{P^x_1}\defeq\id_D-\widetilde{P^x_0}}_{x\in\X},\set{\widetilde{Q^y_0},\widetilde{Q^y_1}\defeq\id_D-\widetilde{Q^y_0}}_{y\in\Y}}$$ wins the game with probability $\widetilde{\omega}\geq \omega-\epsilon$.
  Theorem~\ref{thm:nijs} item 1 guarantees that the operators above are valid measurements. Let $\nu_{xy}\br{a,b}\defeq\Tr\br{P^x_a\otimes Q^y_b}\psi_{AB}^{\otimes n}$ and $\widetilde{\nu}_{xy}\br{a,b}\defeq\Tr\br{\widetilde{P^x_a}\otimes \widetilde{Q^y_b}}\psi_{AB}^{\otimes D}$. From Theorem~\ref{thm:nijs}, for any $x\in\X$ and $y\in\Y$,
  \begin{eqnarray*}
        &&\abs{\nu_{xy}\br{0,0}+\nu_{xy}\br{0,1}-\widetilde{\nu}_{xy}\br{0,0}-\widetilde{\nu}_{xy}\br{0,1}} \leq\epsilon/8; \\
    &&\abs{\nu_{xy}\br{0,0}+\nu_{xy}\br{1,0}-\widetilde{\nu}_{xy}\br{0,0}-\widetilde{\nu}_{xy}\br{1,0}} \leq\epsilon/8;\\
    &&\abs{\nu_{xy}\br{0,0}-\widetilde{\nu}_{xy}\br{0,0}}\leq\epsilon/8.
  \end{eqnarray*}
  where the first and the second inequalities are implied by the item 2 in Theorem~\ref{thm:nijs}. The last inequality is from the item 3 in Theorem~\ref{thm:nijs}. The three inequalities above together imply that $\abs{\nu_{xy}\br{a,b}-\widetilde{\nu}_{xy}\br{a,b}}\leq\epsilon/4$ for any $a,b\in\set{0,1}$. Thus
  \begin{eqnarray*}
    &&\abs{\omega-\tilde{\omega}}=\abs{\sum_{xy}\mu\br{x,y}\br{\nu_{xy}\br{a,b}-\widetilde{\nu}_{xy}\br{a,b}}V(x,y,a,b)} \\&\leq&\sum_{xy}\mu\br{x,y}\sum_{a,b}\abs{\nu_{xy}\br{a,b}-\widetilde{\nu}_{xy}\br{a,b}}\\
    &\leq&\epsilon.
  \end{eqnarray*}

\end{proof}

\section{Open questions}\label{sec:openproblems}

In this work, we prove the decidability of mono-state binary games $\br{G,\psi}$ for any noisy EPR state $\psi$, by reducing the problem to the decidability of the quantum non-interactive simulations of joint distributions. Several interesting open questions are followed by this work.

An immediate open question is the decidability of general mono-state games. To remove the restrictions in the main result, it seems that several new ideas are required. For instance, if the shared state $\psi$ has maximal correlation $1$, we cannot use the the framework of the non-interactive simulation, because such a state possibly can generate any distribution without communication, such as EPR states. For the case that $\psi$ is a high dimensional state, we need a hypercontractive inequality for qudit quantum channels. For non-binary games, we need to work on several-matrix-variable functions possibly with much more involved calculation.

There are many other "tensored" quantities in quantum information theory and quantum complexity theory  not known to be computable, such as the regularisations of the various entanglement measures~\cite{Plbnio:2007:IEM:2011706.2011707}, quantum information complexity~\cite{Touchette:2015:QIC:2746539.2746613}, etc.

	\section{Markov super-operators, noise operators and maximal correlation}\label{sec:markov}
	Given $\psi\in\D_d$, $\psi>0$, $P,Q\in\M_d$, we define
	\[\innerproduct{P}{Q}_{\psi}\defeq\frac{1}{2}\Tr\br{P^{\dagger}Q+QP^{\dagger}}\psi,\]
	for any $P,Q\in\M_d$.

	The following fact can be easily verified.
		\begin{fact}\label{fac:innerproductpsi}
		$\innerproduct{\cdot}{\cdot}_{\psi}$ is an inner product and $\norm{\cdot}_{\psi}\defeq\sqrt{\innerproduct{P}{P}_{\psi}}$ is a norm whenever $\psi$ is a positive definite density operator.
	\end{fact}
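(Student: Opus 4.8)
The plan is to verify directly that $\innerproduct{\cdot}{\cdot}_{\psi}$ satisfies the three axioms of a complex inner product on $\M_d$ — sesquilinearity, conjugate symmetry, and positive definiteness — and then to invoke the standard fact that the square root of a positive-definite quadratic form is a norm. Each verification is a short manipulation with the trace, the only inputs being the Hermiticity of $\psi$, cyclicity of the trace, and, for definiteness, the strict positivity $\psi>0$.

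First I would observe that $Q\mapsto P^{\dagger}Q+QP^{\dagger}$ is linear and $M\mapsto\Tr\br{M\psi}$ is linear, so $\innerproduct{P}{\cdot}_{\psi}$ is linear; conjugate-linearity in the first slot is then automatic once conjugate symmetry is shown. For conjugate symmetry I would use $\conjugate{\Tr M}=\Tr M^{\dagger}$ together with $\psi^{\dagger}=\psi$ and cyclicity: $\conjugate{\Tr\br{P^{\dagger}Q\psi}}=\Tr\br{\psi Q^{\dagger}P}=\Tr\br{Q^{\dagger}P\psi}$ and $\conjugate{\Tr\br{QP^{\dagger}\psi}}=\Tr\br{\psi PQ^{\dagger}}=\Tr\br{PQ^{\dagger}\psi}$, whence $\conjugate{\innerproduct{P}{Q}_{\psi}}=\innerproduct{Q}{P}_{\psi}$; in particular $\innerproduct{P}{P}_{\psi}\in\reals$.

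For positive definiteness I would use the positive-definite square root $\psi^{1/2}$, which exists and is invertible because $\psi>0$. Writing $\psi=\psi^{1/2}\psi^{1/2}$ and using cyclicity gives $\Tr\br{P^{\dagger}P\psi}=\norm{P\psi^{1/2}}_2^2$ and $\Tr\br{PP^{\dagger}\psi}=\norm{P^{\dagger}\psi^{1/2}}_2^2$, so $\innerproduct{P}{P}_{\psi}=\frac12\br{\norm{P\psi^{1/2}}_2^2+\norm{P^{\dagger}\psi^{1/2}}_2^2}\geq 0$, with equality forcing $P\psi^{1/2}=0$ and therefore $P=0$ by invertibility of $\psi^{1/2}$ — this is the one place where the hypothesis $\psi>0$, rather than merely $\psi\geq 0$, is genuinely used (otherwise $\norm{\cdot}_{\psi}$ is only a seminorm). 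Having established that $\innerproduct{\cdot}{\cdot}_{\psi}$ is an inner product, the claim that $\norm{P}_{\psi}=\sqrt{\innerproduct{P}{P}_{\psi}}$ is a norm follows from the textbook argument: definiteness and nonnegativity are inherited, absolute homogeneity is immediate from sesquilinearity, and the triangle inequality comes from Cauchy--Schwarz $\abs{\innerproduct{P}{Q}_{\psi}}\leq\norm{P}_{\psi}\norm{Q}_{\psi}$, valid for any inner product. There is no real obstacle here; the only points requiring care are the bookkeeping of $\psi^{\dagger}=\psi$ and cyclic invariance of the trace in the conjugate-symmetry step, and remembering that strict positivity of $\psi$ is precisely what rules out nonzero $P$ with $\innerproduct{P}{P}_{\psi}=0$.
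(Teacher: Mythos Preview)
Your proposal is correct and is precisely the direct verification the paper has in mind; the paper itself does not supply a proof, merely noting that the fact ``can be easily verified.'' Your checks of sesquilinearity, conjugate symmetry via $\psi^{\dagger}=\psi$ and trace cyclicity, and positive definiteness via the invertible square root $\psi^{1/2}$ are exactly the expected routine.
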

	
 The inner product defined in Eq.~\eqref{eqn:innerproduct} can be viewed as the inner product above with $\psi=\frac{\id_d}{d}$, where $d$ is the dimension.
	For any integer $n>0$, $\psi^{\otimes n}$ induces an inner product in $\M_d^{\otimes n}$. To keep the notations short, the inner product is represented as
	\[\innerproduct{P}{Q}_{\psi}=\frac{1}{2}\Tr~\br{P^{\dagger}Q+QP^{\dagger}}\psi^{\otimes n},\]
	for $P,Q\in\M_d^{\otimes n}$.

For any quantum state $\psi>0$ in $\M_d$, we denote the space induced by the inner product defined in Eq.~\eqref{eqn:innerproduct} by $\br{\M_d,\psi}$. Note that $\norm{\id_d}_{\psi}=1$. Similar to Section~\ref{subsec:matrixspace}, we say an orthonormal basis in $\br{\M_d,\psi}$ is standard if all the operators are Hermitian and it contains $\id$ as an element. The Efron-Stein decomposition in Definition~\ref{def:efronstein} can be extended to $\br{\M_d,\psi}$ as well. Similar to Lemma~\ref{lem:efronsteinortho}, the terms in Efron-Stein decompositions are orthogonal to each other with respect to the inner product $\innerproduct{\cdot}{\cdot}_{\psi}$.
	\begin{definition}\label{def:markovoperator}
		Given quantum systems $A$ and $B$ and a bipartite quantum state $\psi_{AB}\in\D\br{A\otimes B}$,  we define the {\em Markov super-operator} $\T:\M\br{B}\rightarrow\M\br{A}$ as follows.
		\[\Tr\br{M^{\dagger}\otimes Q}\psi_{AB}=\innerproduct{M}{\T\br{Q}}_{\psi_A},\]
		for any $M\in\M\br{A}$ and $Q\in\M\br{B}$.
	\end{definition}

	\begin{lemma}\label{lem:markovoperator}
		Given quantum systems $A$, $B$ a bipartite state $\psi_{AB}$ and $Q\in\M\br{B}$, it holds that
		\[\T\br{Q}=2L\br{\psi_A,2\Tr_B\br{\id\otimes Q}\psi},\]
		where $L\br{\cdot,\cdot}$ is the solution to the Lyapunov equation given in Definition~\ref{def:sylvester}. In particular, if $\dim A=\dim B=2$ and $\psi_{AB}$ is a state satisfying that $\psi_A=\frac{\id_2}{2}$. Then $\T\br{Q}=2\Tr_B\br{\id\otimes Q}\psi_{AB}$.
	\end{lemma}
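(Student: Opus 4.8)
The plan is to unfold both sides of the defining identity in \Cref{def:markovoperator} and to read off a Lyapunov equation that pins down $\T\br{Q}$. First I would rewrite the left-hand side using the partial trace: for every $M\in\M\br{A}$ and $Q\in\M\br{B}$,
\[\Tr\br{M^{\dagger}\otimes Q}\psi_{AB}=\Tr\br{M^{\dagger}\,\partrace{B}{\br{\id_A\otimes Q}\psi_{AB}}},\]
which follows by pulling $M^{\dagger}\otimes\id_B$ out of the partial trace over $B$ (and is the obvious generalization of \Cref{fac:cauchyschwartz} item~1 with an arbitrary $Q$ in place of $\id_B$); abbreviate $R\defeq\partrace{B}{\br{\id_A\otimes Q}\psi_{AB}}$.

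Next I would expand the right-hand side. By the definition of $\innerproduct{\cdot}{\cdot}_{\psi_A}$ and cyclicity of the trace,
\[\innerproduct{M}{\T\br{Q}}_{\psi_A}=\frac{1}{2}\Tr\br{M^{\dagger}\T\br{Q}+\T\br{Q}M^{\dagger}}\psi_A=\frac{1}{2}\Tr~M^{\dagger}\br{\T\br{Q}\psi_A+\psi_A\T\br{Q}}.\]
Since \Cref{def:markovoperator} requires this to equal $\Tr\br{M^{\dagger}R}$ for every $M\in\M\br{A}$ and the Hilbert--Schmidt pairing $\br{M,N}\mapsto\Tr~M^{\dagger}N$ is non-degenerate on $\M\br{A}$, I may cancel $M$ and conclude that $\T\br{Q}$ is the unique solution of the Lyapunov equation
\[\psi_A\,\T\br{Q}+\T\br{Q}\,\psi_A=2R=2\,\partrace{B}{\br{\id_A\otimes Q}\psi_{AB}}.\]
Uniqueness (hence well-definedness of $\T$) is immediate because $\psi_A>0$, so $\psi_A$ and $-\psi_A$ share no eigenvalue and the superoperator $X\mapsto\psi_A X+X\psi_A$ is invertible. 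Recalling that $L\br{\cdot,\cdot}$ denotes the solution operator of the Lyapunov equation from \Cref{def:sylvester}, this is exactly $\T\br{Q}=2L\br{\psi_A,2\,\partrace{B}{\br{\id_A\otimes Q}\psi_{AB}}}$.

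Finally, the special case drops out of the displayed Lyapunov equation: substituting $\psi_A=\frac{\id_2}{2}$ collapses the left-hand side to $\frac{1}{2}\T\br{Q}+\frac{1}{2}\T\br{Q}=\T\br{Q}$, so $\T\br{Q}=2\,\partrace{B}{\br{\id_2\otimes Q}\psi_{AB}}$, with nothing further to solve. This lemma is essentially a bookkeeping exercise and I do not expect a genuine obstacle; the only point that needs a little care is matching the factor $\tfrac{1}{2}$ produced by the symmetrized inner product $\innerproduct{\cdot}{\cdot}_{\psi_A}$ with the normalization convention adopted for $L\br{\cdot,\cdot}$ in \Cref{def:sylvester}, together with checking that the partial-trace manipulation in the first display is legitimate for a general (possibly non-Hermitian) operator $Q$.
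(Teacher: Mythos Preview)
Your proposal is correct and follows essentially the same route as the paper's proof: both expand the defining identity of the Markov super-operator, use non-degeneracy of the Hilbert--Schmidt pairing to cancel $M$, arrive at the Lyapunov equation $\psi_A\,\T(Q)+\T(Q)\,\psi_A=2\Tr_B\br{\br{\id\otimes Q}\psi_{AB}}$, and then specialize to $\psi_A=\frac{\id_2}{2}$. Your closing caveat about the normalization is well placed: with \Cref{def:sylvester} as written, the derived equation gives $\T(Q)=L\br{\psi_A,2\Tr_B\br{\br{\id\otimes Q}\psi_{AB}}}$, so the leading factor $2$ in the lemma statement appears to be a typo (the paper's own proof simply records the Lyapunov equation and ``concludes'', without addressing this factor).
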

	\begin{proof}
		By Definition~\ref{def:markovoperator}, $\T\br{Q}$ must satisfy that
		\[\Tr M^{\dagger}\br{\Tr_B\br{\id\otimes Q}\psi}=\Tr M^{\dagger}\frac{\T\br{Q}\psi_A+\psi_A\T\br{Q}}{2}\] for any $M\in\M\br{A}$. Thus
		\begin{equation}\label{eqn:tb}
			\T\br{Q}\psi_A+\psi_A\T\br{Q}=2\Tr_B\br{\id\otimes Q}\psi_{AB}.
		\end{equation}
		We conclude the first part of the lemma.
		The second part follows from Eq.~\eqref{eqn:tb} with $\psi_A=\frac{\id_2}{2}$.
	\end{proof}

	\begin{definition}\label{def:bonamibeckner}
		For any quantum system $A$ with dimension $d$, a quantum state $\psi\in\D\br{A}$ with $\psi>0$ and $\rho\in[0,1]$, the noise operator $\T_{\rho}:\M\br{A}\rightarrow\M\br{A}$ on $\br{\M(A),\psi}$ is defined as follows. For any $M, P\in\M\br{A}$,
		\[\T_{\rho}\br{P}=\rho P+\br{1-\rho}\Tr P\psi\cdot\id_d.\]

For the space $\M\br{A^n}$, with slight abuse of notations, we define $\T_{\rho}\defeq\otimes_{i=1}^n\T_{\rho}$.
	\end{definition}
\noindent The noise operator $\T_{\rho}$ is also named {\em depolarizing channel}~\cite{NC00} in the quantum information theory, which is an analog of the {\em Bonami-Beckner operator} in Fourier analysis~\cite{10.2307/1970980,AIF_1970__20_2_335_0}.
	
	\begin{lemma}\label{lem:bonamibecknerdef}
	Given integers $d,n>0$, $\rho\in[0,1]$, space $\br{\M_d,\psi}$ with a standard orthonormal basis  $\B=\set{\B_i}_{i=0}^{d^2-1}$, the following holds.
		\begin{enumerate}
			\item For any $P\in\M_d^{\otimes n}$ with the Fourier expansion $P=\sum_{\sigma\in[d^2]_{\geq 0}^n}\widehat{P}\br{\sigma}\B_{\sigma}$, it holds that
			\[\T_{\rho}\br{P}=\sum_{\sigma\in[d^2]_{\geq 0}^n}\rho^{\abs{\sigma}}\widehat{P}\br{\sigma}\B_{\sigma}.\]
			\item For any $P\in\M_d^{\otimes n}$ $\norm{\T_{\rho}\br{P}}\leq\norm{P}$ and  $\norm{\T_{\rho}\br{P}}_{\psi}\leq \norm{P}_{\psi}$.
		\end{enumerate}
	\end{lemma}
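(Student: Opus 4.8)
The plan is to obtain item 1 from a one-site computation propagated through the tensor structure, and then to derive item 2 from it: the $\psi$-norm bound falls out of the diagonalization in item 1, while the operator-norm bound follows from an elementary expansion of $\T_{\rho}$ into partial-trace maps plus H\"older's inequality.

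For item 1 I would first work on a single copy $\M_d$. Since $\Tr\psi=1$, $\T_{\rho}\br{\id_d}=\rho\id_d+\br{1-\rho}\br{\Tr\psi}\id_d=\id_d$, which is the claimed value at $\sigma=0$. For $i\geq 1$, the identity $\innerproduct{\id_d}{\B_i}_{\psi}=\Tr\br{\B_i\psi}$ together with orthonormality of the standard basis (so $\innerproduct{\id_d}{\B_i}_{\psi}=0$) gives $\Tr\br{\B_i\psi}=0$, hence $\T_{\rho}\br{\B_i}=\rho\B_i+\br{1-\rho}\br{\Tr\B_i\psi}\id_d=\rho\B_i$. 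Thus on one site $\T_{\rho}\br{\B_i}=\rho^{\mathbf{1}\br{i\neq 0}}\B_i$. Since $\T_{\rho}=\bigotimes_{j=1}^{n}\T_{\rho}$ on $\M_d^{\otimes n}$ and $\B_{\sigma}=\bigotimes_{j=1}^{n}\B_{\sigma_j}$, applying $\T_{\rho}$ factorwise yields $\T_{\rho}\br{\B_{\sigma}}=\rho^{\abs{\sigma}}\B_{\sigma}$, and item 1 follows by linearity from $P=\sum_{\sigma}\widehat{P}\br{\sigma}\B_{\sigma}$.

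For the operator-norm part of item 2 I would write $\T_{\rho}=\bigotimes_{i=1}^{n}\br{\rho\cdot\mathrm{id}+\br{1-\rho}\Phi}=\sum_{S\subseteq[n]}\rho^{n-\abs{S}}\br{1-\rho}^{\abs{S}}\Phi_S$, where $\Phi\br{P}=\br{\Tr P\psi}\id_d$ is the single-site map and $\Phi_S$ applies $\Phi$ on the sites in $S$ and the identity on the rest, so that $\Phi_S\br{P}=\id_d^{\otimes\abs{S}}\otimes R_P$ with $R_P=\partrace{S}{\br{\psi^{\otimes\abs{S}}\otimes\id}P}$ (up to reordering of tensor factors). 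For unit vectors $\ket{v},\ket{w}$ on the sites outside $S$ one has $\abs{\bra{v}R_P\ket{w}}=\abs{\Tr\br{\br{\psi^{\otimes\abs{S}}\otimes\ketbratwo{w}{v}}P}}\leq\norm{P}\onenorm{\psi^{\otimes\abs{S}}\otimes\ketbratwo{w}{v}}=\norm{P}$, so $\norm{\Phi_S\br{P}}=\norm{R_P}\leq\norm{P}$; the triangle inequality and $\sum_{S}\rho^{n-\abs{S}}\br{1-\rho}^{\abs{S}}=1$ then give $\norm{\T_{\rho}\br{P}}\leq\norm{P}$.

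For the $\psi$-norm part I would combine item 1 with Parseval in $\br{\M_d^{\otimes n},\psi^{\otimes n}}$: since the $\set{\B_{\sigma}}$ are orthonormal there — equivalently, the Efron--Stein components are mutually orthogonal, the analogue of Fact~\ref{fac:basicfourier} item 3 for this inner product — every operator $R$ satisfies $\norm{R}_{\psi}^{2}=\sum_{\sigma}\abs{\widehat{R}\br{\sigma}}^{2}$, whence $\norm{\T_{\rho}\br{P}}_{\psi}^{2}=\sum_{\sigma}\rho^{2\abs{\sigma}}\abs{\widehat{P}\br{\sigma}}^{2}\leq\sum_{\sigma}\abs{\widehat{P}\br{\sigma}}^{2}=\norm{P}_{\psi}^{2}$ because $0\leq\rho\leq 1$. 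This lemma has no genuine obstacle; the one point I would take care with is precisely this last step when $\psi\neq\id_d/d$, since then $\innerproduct{\cdot}{\cdot}_{\psi}$ does not factorize naively across tensor factors. If one prefers to sidestep that, the same $\Phi_S$-decomposition works: reducing to Hermitian $P$ (using that the Hermitian and skew-Hermitian parts of $P$ are $\innerproduct{\cdot}{\cdot}_{\psi}$-orthogonal) and checking that each $\Phi_S$ is a self-adjoint idempotent on $\br{\H_d^{\otimes n},\innerproduct{\cdot}{\cdot}_{\psi}}$, hence an orthogonal projection and a contraction, again gives $\norm{\T_{\rho}\br{P}}_{\psi}\leq\norm{P}_{\psi}$ by convexity.
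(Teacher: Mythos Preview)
Your proof is correct and, for item~1 and the $\psi$-norm bound, essentially identical to the paper's. The one genuine difference is the operator-norm argument. Instead of your global binomial expansion $\T_{\rho}=\sum_{S}\rho^{n-\abs{S}}\br{1-\rho}^{\abs{S}}\Phi_S$ followed by the triangle inequality over $2^n$ terms, the paper composes one coordinate at a time: it writes $\T_{\rho}=\T_{\rho}^{(n)}\circ\cdots\circ\T_{\rho}^{(1)}$ and observes that each single-site map satisfies $\T_{\rho}^{(i)}(P)=\rho P+(1-\rho)\,P_{-i}\otimes\id$, which is a convex combination of two operators of norm at most $\norm{P}$ (the second by $\norm{P_{-i}}\leq\norm{P}$, Lemma~\ref{lem:partialvariance} item~1). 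This is really your $\Phi_{\{i\}}$ contraction applied inductively rather than all at once; the paper's route is a bit lighter because it avoids the exponential sum and only needs the single-site estimate. Conversely, your $\Phi_S$ decomposition is what makes your fallback argument for the $\psi$-norm go through: the paper simply writes down the Parseval identity $\norm{\T_{\rho}(P)}_{\psi}^{2}=\sum_{\sigma}\rho^{2\abs{\sigma}}\abs{\widehat{P}(\sigma)}^{2}$ without comment, whereas you correctly flag that for general $\psi$ the tensor basis $\set{\B_\sigma}$ need not be $\innerproduct{\cdot}{\cdot}_{\psi}$-orthonormal, and your alternative via $\Phi_S$ being self-adjoint idempotents is a clean way to handle that case.
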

	\begin{proof}
		Note that $\B_0=\id_d$. Item 1 follows from the definition directly.
		
		For item 2, we define $\T^{\br{i}}$ be the operator on $\M_d^{\otimes n}$ which applies $\T_{\rho}$ to the $i$-th system and leaves other systems untouched. Then $\T_{\rho}=\T_{\rho}^{(n)}\circ\cdots\circ\T^{(1)}_{\rho}$. From item 1, we have
		\begin{eqnarray*}
			&&\T^{\br{i}}_{\rho}\br{P}\defeq\sum_{\sigma:\sigma_i=0}\widehat{P}\br{\sigma}\B_{\sigma}+\rho\sum_{\sigma:\sigma_i\neq0}\widehat{P}\br{\sigma}\B_{\sigma}\\
			&=&\rho P+\br{1-\rho}\sum_{\sigma:\sigma_i=0}\widehat{P}\br{\sigma}\B_{\sigma}=\rho P+\br{1-\rho}P_{-i}\otimes\B_0^{(i)},
		\end{eqnarray*}
		where $\B_0^{(i)}$ means that it is in the $i$-th register.
	
		Note that the spectral norm of the first term is at most $\rho\norm{P}$. The spectral norm of the second term is at most $\br{1-\rho}\norm{P}$ by Lemma~\ref{lem:partialvariance} item 1.
		Hence $\norm{\T_{\rho}^{\br{i}}\br{P}}\leq\norm{P}$. Thus the first inequality in item 3 follows. To prove the second inequality, consider
		\[\norm{\T_{\rho}\br{P}}_{\psi}^2=\sum_{\sigma\in[d^2]_{\geq 0}^n}\rho^{2\abs{\sigma}}\abs{\widehat{P}\br{\sigma}}^2\leq\sum_{\sigma\in[d^2]_{\geq 0}^n}\abs{\widehat{P}\br{\sigma}}^2=\norm{P}_{\psi}^2.\]
	\end{proof}

	The notion of quantum maximal correlation introduced by Beigi~\cite{Beigi:2013}, which generalizes the maximal correlation coefficients~\cite{hirschfeld:1935,Gebelein:1941,Renyi1959} in classical information theory to the quantum setting, is crucial to our analysis.
	\begin{definition}[Maximal correlation]~\cite{Beigi:2013}\label{def:maximalcorrelation}
		Given quantum systems $A, B$ and a bipartite state $\psi_{AB}\in\D\br{A\otimes B}$, the maximal correlation of $\psi_{AB}$ is defined to be
		\[\rho\br{\psi_{AB}}\defeq\sup\set{\abs{\Tr\br{P^{\dagger}\otimes Q}\psi_{AB}}~:P\in\M\br{A}, Q\in\M\br{B},\Tr~P\psi_A=\Tr~Q\psi_B=0,\atop \norm{P}_{\psi_A}=\norm{Q}_{\psi_B}=1}.\]
	\end{definition}
\begin{fact}~\cite{Beigi:2013}\label{fac:maximalcorrlationone}
	Given quantum systems $A, B$ and a bipartite quantum state $\psi_{AB}$, it holds that
	\begin{enumerate}	
		\item $0\leq \rho\br{\psi_{AB}}\leq 1$.
		
		\item $\rho\br{\psi_{AB}}=1$ if and only if there exist local measurements $\set{M_A,\id-M_A}$ and $\set{N_B,\id-N_B}$ such that $0<\Tr\br{\psi_{AB}\br{M_A\otimes N_B}}<1$, and
		\[\Tr\br{\psi_{AB}\br{M_A\otimes\br{\id-N_B}}}=\Tr\br{\psi_{AB}\br{\br{\id-M_A}\otimes N_B}}=0.\]		
	\end{enumerate}
\end{fact}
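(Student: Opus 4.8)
The plan is to derive both items directly from Definition~\ref{def:maximalcorrelation}, after first reducing the supremum to Hermitian test operators. For the upper bound in Item~1, I would take any admissible $P\in\M\br{A},Q\in\M\br{B}$, rotate $P$ by a phase so that $z\defeq\Tr\br{\br{P^\dagger\otimes Q}\psi_{AB}}\ge 0$, and split $P=P_R+\mathrm{i}P_I$, $Q=Q_R+\mathrm{i}Q_I$ into Hermitian parts; a short computation gives $z=\Tr\br{\br{P_R\otimes Q_R+P_I\otimes Q_I}\psi_{AB}}$ with $\norm{P_R}_{\psi_A}^2+\norm{P_I}_{\psi_A}^2=1$ and likewise for $Q$, all four pieces centered. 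Then, writing $\psi_{AB}=\psi_{AB}^{1/2}\psi_{AB}^{1/2}$ and applying Cauchy--Schwarz for the Hilbert--Schmidt inner product to $\br{R\otimes\id}\psi_{AB}^{1/2}$ and $\br{\id\otimes S}\psi_{AB}^{1/2}$ yields $\Tr\br{\br{R\otimes S}\psi_{AB}}\le\norm{R}_{\psi_A}\norm{S}_{\psi_B}$ for Hermitian $R,S$, so $z\le\norm{P_R}_{\psi_A}\norm{Q_R}_{\psi_B}+\norm{P_I}_{\psi_A}\norm{Q_I}_{\psi_B}\le1$ by Cauchy--Schwarz in $\reals^2$. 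Running the same estimate with the Hermitian supremum $\rho_H$ in place of $1$ shows $\rho\br{\psi_{AB}}=\rho_H$, a quantity attained by compactness; the lower bound $\rho\br{\psi_{AB}}\ge0$ is immediate since the defining set is nonempty and $\rho$ is a supremum of absolute values.

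For the easy direction of Item~2, I would set $p\defeq\Tr\br{\br{M_A\otimes N_B}\psi_{AB}}\in\br{0,1}$ and exploit positivity: since $M_A\otimes\br{\id-N_B}$ and $\psi_{AB}$ are positive semidefinite, the hypothesis $\Tr\br{\br{M_A\otimes\br{\id-N_B}}\psi_{AB}}=0$ forces $\br{M_A\otimes\br{\id-N_B}}\psi_{AB}=0$, i.e.\ $\br{M_A\otimes\id}\psi_{AB}=\br{M_A\otimes N_B}\psi_{AB}$, and symmetrically $\br{\id\otimes N_B}\psi_{AB}=\br{M_A\otimes N_B}\psi_{AB}$, whence $\br{M_A\otimes\id}\psi_{AB}=\br{\id\otimes N_B}\psi_{AB}$. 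Tracing against the appropriate operators gives $\Tr\br{M_A\psi_A}=\Tr\br{N_B\psi_B}=p$ and $\Tr\br{M_A^2\psi_A}=\Tr\br{\br{M_A\otimes\id}\br{\id\otimes N_B}\psi_{AB}}=p=\Tr\br{N_B^2\psi_B}$, so $P\defeq\br{M_A-p\,\id}/\sqrt{p-p^2}$ and $Q\defeq\br{N_B-p\,\id}/\sqrt{p-p^2}$ are admissible Hermitian operators with $\Tr\br{\br{P\otimes Q}\psi_{AB}}=\br{p-p^2}/\br{p-p^2}=1$. Hence $\rho\br{\psi_{AB}}\ge1$, which with Item~1 forces equality.

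For the hard direction I would start from $\rho\br{\psi_{AB}}=1$ and, by Item~1, fix admissible Hermitian $P,Q$ with $\Tr\br{\br{P\otimes Q}\psi_{AB}}=1$. The equality case of the Hilbert--Schmidt Cauchy--Schwarz step then gives $\br{P\otimes\id}\psi_{AB}^{1/2}=\br{\id\otimes Q}\psi_{AB}^{1/2}$, hence $\br{P\otimes\id}\psi_{AB}=\br{\id\otimes Q}\psi_{AB}$. A one-line induction upgrades this to $\br{r\br{P}\otimes\id}\psi_{AB}=\br{\id\otimes r\br{Q}}\psi_{AB}$ for every polynomial $r$; taking traces shows $P$ and $Q$ have matching moments against $\psi_A$ and $\psi_B$, so their spectral measures coincide, and since $P$ is centered with $\Tr\br{P^2\psi_A}=1$ this common measure has mean $0$ and variance $1$, hence an atom at some $\lambda>0$ of weight $q\in\br{0,1}$. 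Let $\Pi,\Gamma$ be the spectral projectors of $P,Q$ for $\lambda$; choosing $r$ to interpolate $1$ at $\lambda$ and $0$ at the remaining eigenvalues of $P$ and $Q$ gives $r\br{P}=\Pi$ and $r\br{Q}=\Gamma$, so $\br{\Pi\otimes\id}\psi_{AB}=\br{\id\otimes\Gamma}\psi_{AB}$, and multiplying by $\id\otimes\Gamma$ also yields $\br{\Pi\otimes\Gamma}\psi_{AB}=\br{\Pi\otimes\id}\psi_{AB}=\br{\id\otimes\Gamma}\psi_{AB}$. Taking $M_A\defeq\Pi$, $N_B\defeq\Gamma$ then gives $\Tr\br{\br{M_A\otimes N_B}\psi_{AB}}=q\in\br{0,1}$ together with both vanishing conditions, e.g.\ $\Tr\br{\br{M_A\otimes\br{\id-N_B}}\psi_{AB}}=\Tr\br{\Pi\psi_A}-q=0$.

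The hard direction is where I expect the real work to lie: trimming the general, possibly non-Hermitian, test operators down to Hermitian maximizers without losing optimality, and then extracting an honest pair of local projective measurements from a single maximizer. The two decisive tools are the equality case of Hilbert--Schmidt Cauchy--Schwarz, which converts the scalar identity $\Tr\br{\br{P\otimes Q}\psi_{AB}}=1$ into the operator identity $\br{P\otimes\id}\psi_{AB}=\br{\id\otimes Q}\psi_{AB}$, and the functional-calculus/moment argument, which propagates that identity to spectral projectors while simultaneously forcing $P$ and $Q$ to share a positive top eigenvalue of weight strictly inside $\br{0,1}$.
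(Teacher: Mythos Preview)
The paper does not prove this statement: it is quoted as a Fact from \cite{Beigi:2013} and used as a black box (notably in Proposition~\ref{prop:maximalcorrelationone}). So there is no ``paper's own proof'' to compare against; your write-up supplies a complete argument where the paper gives none.

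Your proof is correct. A couple of remarks. First, the Hermitian reduction $\rho\br{\psi_{AB}}=\rho_H$ that you obtain via the real/imaginary split is exactly what the paper establishes later, by a different route and only in the special case $\psi_A=\id_{d_A}/d_A$, $\psi_B=\id_{d_B}/d_B$, in Proposition~\ref{prop:maximalvariance}; your argument is more general and arguably cleaner. Second, your compactness step silently assumes $\psi_A,\psi_B>0$ (otherwise $\norm{\cdot}_{\psi_A}$ is only a seminorm and the unit ``sphere'' is not compact); this is harmless because one may restrict to the supports of $\psi_A,\psi_B$ without changing any of the quantities involved, but it is worth saying explicitly. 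Third, in the hard direction the interpolating polynomial $r$ must vanish on the union of the spectra of $P$ and $Q$, not just the common spectral-measure support; you do say this, and since both spectra are finite such an $r$ exists, so $r\br{P}=\Pi$ and $r\br{Q}=\Gamma$ are genuine spectral projectors even if $\psi_A$ or $\psi_B$ had nontrivial kernel.

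The key insight you identify---promoting the scalar equality $\Tr\br{\br{P\otimes Q}\psi_{AB}}=1$ to the operator identity $\br{P\otimes\id}\psi_{AB}=\br{\id\otimes Q}\psi_{AB}$ via the Cauchy--Schwarz equality case, and then propagating it through functional calculus to spectral projectors---is indeed the heart of the matter, and your execution of it is sound.
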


The following proposition characterizes all the two-qubit states with maximal correlation being $1$.

\begin{prop}\label{prop:maximalcorrelationone}
	Given a bipartite state $\psi_{AB}\in\D\br{\M_2\times\M_2}$, $\rho\br{\psi_{AB}}=1$ if and only if there exist local unitaries $U_A$ and $V_B$ such that
	\begin{equation}\label{eqn:maximalcorrelation}
	\br{U_A^{\dagger}\otimes V_B^{\dagger}}\psi_{AB}\br{U_A\otimes V_B}=\sum_{a,b=0}^1 c_{a,b}\ketbratwo{aa}{bb},
	\end{equation}
	where $\br{c_{ab}}_{a,b\in\set{0,1}}$ satisfies that $\begin{pmatrix}c_{00} & c_{01}\\ c_{10} & c_{11}\end{pmatrix}$ is a density operator.
	
	Moreover, if $\psi_A=\psi_B=\frac{\id_2}{2}$, then $\rho\br{\psi_{AB}}=1$ if and only if there exist local unitaries $U_A$ and $V_B$ such that
	\[\br{U_A^{\dagger}\otimes V_B^{\dagger}}\psi^{AB}\br{U_A\otimes V_B}=p\ketbra{\Phi}+\br{1-p}\ketbra{\Psi},\]
	for $0\leq p\leq 1$, where $\ket{\Phi}=\frac{1}{\sqrt{2}}\br{\ket{00}+\ket{11}}$ and $\ket{\Psi}=\frac{1}{\sqrt{2}}\br{\ket{00}-\ket{11}}$.
\end{prop}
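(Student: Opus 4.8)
The plan is to extract the ``only if'' direction from \Cref{fac:maximalcorrlationone} item~2 and to obtain the ``if'' direction and the ``moreover'' part as short consequences. I first note that $\rho\br{\cdot}$ is a local-unitary invariant: in \Cref{def:maximalcorrelation}, replacing $\psi_{AB}$ by $\br{U_A\otimes V_B}^{\dagger}\psi_{AB}\br{U_A\otimes V_B}$ and $\br{P,Q}$ by $\br{U_A P U_A^{\dagger},\,V_B Q V_B^{\dagger}}$ is a bijection of the feasible set preserving $\norm{P}_{\psi_A}$, $\norm{Q}_{\psi_B}$ and $\abs{\Tr\br{P^{\dagger}\otimes Q}\psi_{AB}}$. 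Thus the ``if'' direction reduces to checking that $\psi_{AB}=\sum_{a,b}c_{ab}\ketbratwo{aa}{bb}$ has $\rho=1$, which follows from \Cref{fac:maximalcorrlationone} item~2 with $M_A=N_B=\ketbra{0}$: since $\psi_{AB}$ is supported on $\mathrm{span}\set{\ket{00},\ket{11}}$ one computes $\Tr\br{\psi_{AB}\br{\ketbra{0}\otimes\ketbra{1}}}=\Tr\br{\psi_{AB}\br{\ketbra{1}\otimes\ketbra{0}}}=0$ and $\Tr\br{\psi_{AB}\br{\ketbra{0}\otimes\ketbra{0}}}=c_{00}$, which lies in $(0,1)$ as long as $\psi_{AB}$ is not one of the two product states $\ketbra{00},\ketbra{11}$ (equivalently, $\br{c_{ab}}$ is not rank-one diagonal).

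For the ``only if'' direction, suppose $\rho\br{\psi_{AB}}=1$ and let $\set{M_A,\id-M_A}$, $\set{N_B,\id-N_B}$ be the measurements from \Cref{fac:maximalcorrlationone} item~2, so $0<\Tr\br{\psi_{AB}\br{M_A\otimes N_B}}<1$ while $\Tr\br{\psi_{AB}\br{M_A\otimes\br{\id-N_B}}}=\Tr\br{\psi_{AB}\br{\br{\id-M_A}\otimes N_B}}=0$. Since $\Tr\br{\psi X}=0$ with $\psi,X\geq 0$ forces $\psi X=X\psi=0$, the vanishing traces give $\psi_{AB}\br{M_A\otimes\br{\id-N_B}}=\psi_{AB}\br{\br{\id-M_A}\otimes N_B}=0$; subtracting yields the identity $\br{M_A\otimes\id}\psi_{AB}=\br{\id\otimes N_B}\psi_{AB}$, i.e.\ $M_A\otimes\id$ and $\id\otimes N_B$ agree on $\supp{\psi_{AB}}$. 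These commuting operators have a common product eigenbasis $\set{\ket{e_i f_j}}$, and a support vector $\sum_{ij}v_{ij}\ket{e_i f_j}$ satisfies $m_i v_{ij}=n_j v_{ij}$ (with $m_i,n_j$ the eigenvalues of $M_A,N_B$), so $v_{ij}=0$ whenever $m_i\neq n_j$. A short case check against the strict inequalities $0<\Tr\br{\psi_{AB}\br{M_A\otimes N_B}}<1$ shows that $M_A$ and $N_B$ are each non-scalar (so $m_0\neq m_1$, $n_0\neq n_1$) and that $\supp{\psi_{AB}}$ is not contained in a product line $\mathrm{span}\set{\ket{e_i f_j}}$; hence, after relabelling the $f_j$, $m_0=n_0$, $m_1=n_1$ and $\supp{\psi_{AB}}\subseteq\mathrm{span}\set{\ket{e_0 f_0},\ket{e_1 f_1}}$. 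Therefore $\psi_{AB}=\sum_{a,b}c_{ab}\ketbratwo{e_a f_a}{e_b f_b}$ with $\br{c_{ab}}$ the matrix of $\psi_{AB}$ on this subspace (positive semidefinite, unit trace), and conjugating by $U_A:\ket{a}\mapsto\ket{e_a}$, $V_B:\ket{a}\mapsto\ket{f_a}$ gives~\eqref{eqn:maximalcorrelation}.

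For the ``moreover'' part, apply the normal form just obtained under the additional hypothesis $\psi_A=\psi_B=\frac{\id_2}{2}$. Because $\Tr_B\br{\sum_{a,b}c_{ab}\ketbratwo{aa}{bb}}=\sum_a c_{aa}\ketbra{a}$ must equal $U_A^{\dagger}\psi_A U_A=\frac{\id_2}{2}$ (and symmetrically on $B$), we get $c_{00}=c_{11}=\tfrac12$, so $\br{c_{ab}}=\begin{pmatrix}1/2 & c_{01}\\ \overline{c_{01}} & 1/2\end{pmatrix}$ with $\abs{c_{01}}\leq\tfrac12$. Absorbing the local diagonal unitary $\mathrm{diag}\br{1,e^{-\mathrm{i}\arg c_{01}}}$ on $A$ into $U_A$ makes $c_{01}=r\geq 0$ real, and the resulting state — supported on $\mathrm{span}\set{\ket{00},\ket{11}}$ with matrix $\begin{pmatrix}1/2 & r\\ r & 1/2\end{pmatrix}$ there — is $p\ketbra{\Phi}+\br{1-p}\ketbra{\Psi}$ with $p=\tfrac12+r\in[\tfrac12,1]\subseteq[0,1]$. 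Conversely, $p\ketbra{\Phi}+\br{1-p}\ketbra{\Psi}$ has both marginals $\frac{\id_2}{2}$ and, supported on $\mathrm{span}\set{\ket{00},\ket{11}}$ with computational-basis matrix $\begin{pmatrix}1/2 & p-1/2\\ p-1/2 & 1/2\end{pmatrix}$, is of the form~\eqref{eqn:maximalcorrelation}; by the first part (or directly with $M_A=N_B=\ketbra{0}$ when $p\in\set{0,1}$) it has $\rho=1$.

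The routine pieces are the local-unitary invariance, the spectral bookkeeping, and the marginal computations. The step needing care is the case analysis in the ``only if'' direction: once $\br{M_A\otimes\id}\psi_{AB}=\br{\id\otimes N_B}\psi_{AB}$ is in hand, one must rule out every degenerate configuration of the common eigenbasis — $M_A$ or $N_B$ proportional to $\id$, or $\supp{\psi_{AB}}$ confined to a single product line — and it is precisely the two-sided strict inequality $0<\Tr\br{\psi_{AB}\br{M_A\otimes N_B}}<1$ of \Cref{fac:maximalcorrlationone} item~2, rather than bare non-triviality of the measurements, that rules each of them out. I expect this bookkeeping to be the main obstacle.
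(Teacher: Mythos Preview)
Your proposal is correct and follows essentially the same approach as the paper: invoke \Cref{fac:maximalcorrlationone} item~2, diagonalize the resulting measurements via local unitaries, and use the two vanishing traces together with positivity to force the support of $\psi_{AB}$ into $\mathrm{span}\set{\ket{00},\ket{11}}$. The paper is terser---it simply writes $M_A,N_B$ as diagonal, rules out $M_A=\id$ and $N_B=\id$ (and implicitly $M_A=0$, $N_B=0$), and reads off $\bra{01}\psi_{AB}\ket{01}=\bra{10}\psi_{AB}\ket{10}=0$ directly from the coefficients---whereas you route the same content through the operator identity $\br{M_A\otimes\id}\psi_{AB}=\br{\id\otimes N_B}\psi_{AB}$ and a common-eigenbasis case analysis; your treatment of the ``if'' direction and the ``moreover'' clause is also more explicit than the paper's ``by elementary calculation.''
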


\begin{proof}
	
	Let $\set{M_A,\id-M_A}$ and $\set{N_B,\id-N_B}$ be the measurements induced by Fact~\ref{fac:maximalcorrlationone}. We may assume that
	$M_A=a_0\ketbra{0}+a_1\ketbra{1}$ and $N_B=b_0\ketbra{0}+b_1\ketbra{1}$ up to local unitaries.
	We claim that $M_A\neq\id_2$ and $N_B\neq\id_2$. Suppose $M_A=\id_2$. Then we have $0<\Tr\psi_BN_B<1$ and $1-\Tr\psi_BN_B=0$, contradicting to the fact that $\psi_B$ is a density operator. The argument for $N_B$ is similar. Thus we may further assume that $a_1\leq a_0\leq 1$ and $b_1\leq b_0\leq 1$ and $a_1, b_1<1, a_0,b_0>0$ Then $\Tr~\br{M_A\otimes\br{\id-M_B}}\psi_{AB}=0$ implies that $\bra{01}\psi_{AB}\ket{01}=0$. $\Tr~\br{\br{\id-M_A}\otimes M_B}=0$ implies that $\bra{10}\psi_{AB}\ket{10}=0$. We conclude Eq.~\eqref{eqn:maximalcorrelation}.
	The second part follows by elementary calculation.
\end{proof}

\begin{definition}\label{def:noisyepr}
	A bipartite state $\psi_{AB}\in\D(\M_2\times\M_2)$ is a noisy EPR state if $\psi_A=\psi_B=\frac{\id_2}{2}$ and its maximal correlation $\rho=\rho\br{\psi_{AB}}<1$.
\end{definition}
	Proposition~\ref{prop:maximalcorrelationone} gives a tight characterization of noisy EPR states. Probably the most interesting case is an EPR state with an arbitrary depolarizing noise $\epsilon>0$, i.e., $\br{1-\epsilon}\ketbra{\Phi}+\epsilon\frac{\id_2}{2}\otimes\frac{\id_2}{2}$, where $\ket{\Phi}=\frac{1}{\sqrt{2}}\br{\ket{00}+\ket{11}}$ is an EPR state. Beigi proved that the maximal correlation of this state is $1-\epsilon$ in~\cite{Beigi:2013}.

The following proposition provides a useful characterization of the quantum maximal correlation.
	\begin{prop}\label{prop:maximalvariance}
		Given quantum systems $A,B$ and a bipartite state $\psi_{AB}$, for any $Q\in\M\br{B}$,
\begin{equation}\label{eqn:max}
  \max\set{\abs{\Tr\br{P^{\dagger}\otimes Q}\psi_{AB}}: P\in\M\br{\P},\norm{P}_{\psi_A}=1}
\end{equation}
		is achieved by
		\[P^*=\frac{\T\br{Q}}{\norm{\T\br{Q}}_{\psi_A}},\]
		with the maximum value $\norm{\T\br{Q}}_{\psi_A}$, where $\T:\M\br{B}\rightarrow\M\br{A}$ is the Markov super-operator in Definition~\ref{def:markovoperator}.
		Thus,
		\begin{equation*}
			\rho\br{\psi_{AB}}=\max\set{\norm{\T\br{Q}}_{\psi_A}: Q\in\M\br{B}, \innerproduct{\id}{Q}_{\psi_B}=0,\norm{Q}_{\psi_B}=1}.
		\end{equation*}
		In particular, if $\psi_A=\frac{\id_{d_A}}{d_A}$ and $\psi_B=\frac{\id_{d_B}}{d_B}$, then
		\begin{equation}\label{eqn:TQ}
			\rho\br{\psi_{AB}}=\max\set{\nnorm{\T\br{Q}}_2:\Tr~Q=0, \nnorm{Q}_2=1},
		\end{equation}

		where $d_A=\dim\br{A}$ and $d_B=\dim\br{B}$.
		
		Moreover, the maximal correlation in Definition~\ref{def:maximalcorrelation} can be achieved by a pair of Hermitian operators $\br{P,Q}$ if $\psi_A=\frac{\id_{d_A}}{d_A}$ and $\psi_B=\frac{\id_{d_B}}{d_B}$.
	\end{prop}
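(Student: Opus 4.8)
The plan is to treat the maximization over $P$ for a fixed $Q$ as a one-line Cauchy--Schwarz estimate in the Hilbert space $\br{\M\br{A},\innerproduct{\cdot}{\cdot}_{\psi_A}}$, then substitute the outcome into Definition~\ref{def:maximalcorrelation}, and finally split $Q$ into its Hermitian real and imaginary parts to produce a Hermitian optimizing pair.

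First I would use the defining property of the Markov super-operator: by Definition~\ref{def:markovoperator}, $\Tr\br{P^{\dagger}\otimes Q}\psi_{AB}=\innerproduct{P}{\T\br{Q}}_{\psi_A}$ for all $P\in\M\br{A}$, $Q\in\M\br{B}$. Since $\innerproduct{\cdot}{\cdot}_{\psi_A}$ is an inner product (Fact~\ref{fac:innerproductpsi}), Cauchy--Schwarz gives $\abs{\innerproduct{P}{\T\br{Q}}_{\psi_A}}\le\norm{P}_{\psi_A}\norm{\T\br{Q}}_{\psi_A}$ with equality exactly when $P$ is a scalar multiple of $\T\br{Q}$; hence, assuming $\T\br{Q}\neq 0$ (the degenerate case being trivial), $P^{*}=\T\br{Q}/\norm{\T\br{Q}}_{\psi_A}$ attains the value $\norm{\T\br{Q}}_{\psi_A}$ in \eqref{eqn:max}, which is the first assertion. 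To get the formula for $\rho\br{\psi_{AB}}$ I would feed this into Definition~\ref{def:maximalcorrelation}: for $Q$ with $\innerproduct{\id}{Q}_{\psi_B}=0$ and $\norm{Q}_{\psi_B}=1$, the optimal $P^{*}$ automatically meets the remaining constraint $\innerproduct{\id}{P^{*}}_{\psi_A}=0$, because $\innerproduct{\id}{\T\br{Q}}_{\psi_A}=\Tr\br{\id_A\otimes Q}\psi_{AB}=\Tr Q\psi_B=\innerproduct{\id}{Q}_{\psi_B}=0$. So the inner supremum is $\norm{\T\br{Q}}_{\psi_A}$ and $\rho\br{\psi_{AB}}$ equals the supremum of $\norm{\T\br{Q}}_{\psi_A}$ over the stated $Q$'s, which is attained since the constraint set is a compact subset of the finite-dimensional $\M\br{B}$ and $Q\mapsto\norm{\T\br{Q}}_{\psi_A}$ is continuous. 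Specializing $\psi_A=\id_{d_A}/d_A$, $\psi_B=\id_{d_B}/d_B$ turns $\norm{\cdot}_{\psi_A},\norm{\cdot}_{\psi_B}$ into $\nnorm{\cdot}_2$ and the constraint $\innerproduct{\id}{Q}_{\psi_B}=0$ into $\Tr Q=0$, giving \eqref{eqn:TQ}.

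For the Hermitian optimizer I would first check that $\T$ maps Hermitian operators to Hermitian operators: by Lemma~\ref{lem:markovoperator}, $\T\br{Q}$ solves the Lyapunov equation $\T\br{Q}\psi_A+\psi_A\T\br{Q}=2\Tr_B\br{\id\otimes Q}\psi_{AB}$, whose right-hand side is Hermitian whenever $Q$ and $\psi_{AB}$ are (a short computation, using that $\Tr_B$ commutes with the adjoint), while $Y\mapsto Y\psi_A+\psi_A Y$ is a Hermiticity-preserving linear bijection. Then I would decompose an admissible $Q=Q_r+\mathrm{i}Q_i$ into its Hermitian parts: since $\psi_B>0$ is Hermitian, $\Tr Q_r\psi_B=\Tr Q_i\psi_B=0$ and $\norm{Q_r}_{\psi_B}^2+\norm{Q_i}_{\psi_B}^2=\norm{Q}_{\psi_B}^2=1$ (using $\frac{1}{2}\br{X^{\dagger}X+XX^{\dagger}}=X_r^2+X_i^2$ for the Hermitian parts $X_r,X_i$ of any $X$), and by linearity $\T\br{Q}=\T\br{Q_r}+\mathrm{i}\T\br{Q_i}$ with both summands Hermitian, so also $\norm{\T\br{Q}}_{\psi_A}^2=\norm{\T\br{Q_r}}_{\psi_A}^2+\norm{\T\br{Q_i}}_{\psi_A}^2$. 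Writing $\rho_H$ for the maximum in the formula for $\rho\br{\psi_{AB}}$ restricted to Hermitian $Q$, this gives $\norm{\T\br{Q}}_{\psi_A}^2\le\rho_H^2\br{\norm{Q_r}_{\psi_B}^2+\norm{Q_i}_{\psi_B}^2}=\rho_H^2$ for every admissible $Q$, hence $\rho\br{\psi_{AB}}\le\rho_H$; the reverse inequality is immediate, so the maximum is attained at a Hermitian $Q$, and then $P^{*}=\T\br{Q}/\norm{\T\br{Q}}_{\psi_A}$ is Hermitian too, giving the required pair.

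I do not expect a genuine obstacle: the proposition is essentially a repackaging of the definitions around the Markov super-operator. The two points that need a little care are the matching of the two mean-zero constraints under $\T$ in the second step (which is what collapses the inner maximization to a bare Cauchy--Schwarz) and the verification that $\T$ preserves Hermiticity in the third step, both of which are short.
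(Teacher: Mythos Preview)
Your proposal is correct and follows essentially the same route as the paper: the paper phrases the first step as an orthogonal decomposition $P=\alpha P^*+\beta P'$ rather than invoking Cauchy--Schwarz by name, and for the Hermitian optimizer it performs the identical real/imaginary splitting of $Q$ with the same norm identity. Your explicit verification that $P^*$ inherits the mean-zero constraint $\innerproduct{\id}{P^*}_{\psi_A}=0$ is a detail the paper leaves implicit, so if anything your write-up is slightly more complete.
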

	
	\begin{proof}
		The proof follows closely to the one of Lemma 2.8 in~\cite{Mossel:2010}. Let $P\in\M\br{A}$ achieves the maximum in Eq.~\eqref{eqn:max}. Then it satisfies that $\norm{P}_{\psi_A}=1$. Write $P=\alpha P^*+\beta P'$, where $\abs{\alpha}^2+\abs{\beta}^2=1$, $\norm{P'}_{\psi_A}=1$ and $\innerproduct{P^*}{P'}_{\psi_A}=0$. By the definition of the Markov super-operator
		\[0=\innerproduct{P'}{\T\br{Q}}_{\psi_A}=\Tr\br{\br{P'}^{\dagger}\otimes Q}\psi_{AB}.\]
		So we should set $\abs{\alpha}=1.$ Moreover,
		\[\Tr\br{\T\br{Q}^{\dagger}\otimes Q}\psi_{AB}=\norm{\T\br{Q}}^2_{\psi_A}.\]	
		
		To prove that the maximal correlation in Definition~\ref{def:maximalcorrelation} can be achieved by a pair of Hermitian operators $\br{P,Q}$, it suffices to prove that the maximum in Eq.~\eqref{eqn:TQ} can be achieved by a Hermitian matrix $Q$ by Lemma~\ref{lem:markovoperator} and Lemma~\ref{lem:lyapunovsol}. Suppose $Q=Q_1+\mathrm{i}\cdot Q_2$ achieves the maximum in Eq.~\eqref{eqn:TQ} with Hermitian matrices $Q_1\neq 0$ and $Q_2\neq 0$. Then $\Tr~Q_1\psi_B=\Tr~Q_2\psi_B=0$ and $1=\nnorm{Q}_2^2=\nnorm{Q_1}_2^2+\nnorm{Q_2 }_2^2$ Then
		\[\nnorm{\T\br{Q}}_2=\br{\frac{\nnorm{\T\br{Q_1}}_2^2+\nnorm{\T\br{Q_2}}_2^2}{\nnorm{Q_1}_2^2+\nnorm{Q_2}_2^2}}^{\frac{1}{2}}\leq\max\set{\nnorm{\T\br{\frac{Q_1}{\nnorm{Q_1}_2}}}_2,\nnorm{\T\br{\frac{Q_2}{\nnorm{Q_2}_2}}}_2}.\]
		Thus at least one of $\frac{Q_1}{\nnorm{Q_1}_2}$ and $\frac{Q_2}{\nnorm{Q_2}_2}$ also achieves the maximum in Eq.~\eqref{eqn:TQ}

	\end{proof}

	\begin{lemma}\label{lem:efronsteinortho}
	Given quantum systems $A, B$ with $\dim A=d_A$ and $\dim B= d_B$, a bipartite quantum state $\psi_{AB}\in\D\br{A\otimes B}$, let $\set{\A_{\sigma}}_{\sigma\in[d_A^2]_{\geq 0}}$ and $\set{\B_{\sigma}}_{\sigma\in[d_B^2]_{\geq 0}}$ be standard orthonormal basis in $\M\br{A}$ and $\M\br{B}$ with respect to the inner product $\innerproduct{\cdot}{\cdot}_{\psi_A}$ and $\innerproduct{\cdot}{\cdot}_{\psi_B}$, respectively. It holds that
	\[\Tr\br{\A_{\sigma}\otimes\B_{\tau}}\psi_{AB}^{\otimes n}=0,\]
	whenever $\supp{\sigma}\neq\supp{\tau}$. Thus
	\[\Tr\br{A[S]\otimes Q[T]}\psi_{AB}^{\otimes n}=0,\]
	whenever $S\neq T$.
\end{lemma}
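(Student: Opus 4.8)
The plan is to reduce the whole statement to the single-copy case by exploiting that the trace of a tensor product factorizes. First I would regroup the $i$-th $A$-register with the $i$-th $B$-register, using the canonical isomorphism $\br{A\otimes B}^{\otimes n}\cong A^{\otimes n}\otimes B^{\otimes n}$, under which $\A_\sigma\otimes\B_\tau$ corresponds to $\bigotimes_{i=1}^n\br{\A_{\sigma_i}\otimes\B_{\tau_i}}$ and $\psi_{AB}^{\otimes n}$ to $\bigotimes_{i=1}^n\psi_{AB}$. Since the trace of a tensor product of operators is the product of the traces, this gives
\[\Tr\br{\A_\sigma\otimes\B_\tau}\psi_{AB}^{\otimes n}=\prod_{i=1}^n\Tr\br{\A_{\sigma_i}\otimes\B_{\tau_i}}\psi_{AB},\]
so it suffices to exhibit a single index $i$ for which the corresponding factor vanishes.

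To do that, given $\supp{\sigma}\neq\supp{\tau}$ I would pick an index $i$ lying in exactly one of $\supp{\sigma}$ and $\supp{\tau}$; say $\sigma_i=0$ and $\tau_i\neq 0$ (the other case is symmetric). Because $\set{\A_j}$ is a \emph{standard} orthonormal basis we have $\A_0=\id_A$, and taking the partial trace over the $A$-register against $\psi_A$ collapses the $i$-th factor to $\Tr\br{\B_{\tau_i}\psi_B}$, which, since $\id_B$ commutes with $\B_{\tau_i}$ inside the definition of $\innerproduct{\cdot}{\cdot}_{\psi_B}$, equals $\innerproduct{\id_B}{\B_{\tau_i}}_{\psi_B}=\innerproduct{\B_0}{\B_{\tau_i}}_{\psi_B}$. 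As $\set{\B_j}$ is orthonormal with respect to $\innerproduct{\cdot}{\cdot}_{\psi_B}$ and $\tau_i\neq 0$, this is $0$, and hence the product above vanishes. The one point that deserves care is that the relevant orthogonality is with respect to the $\psi_B$-weighted inner product rather than the normalized trace inner product; this matches precisely because tracing out the $A$-side of $\psi_{AB}$ leaves the marginal $\psi_B$ against which the $\B_j$'s were declared orthonormal.

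Finally, for the Efron--Stein statement I would expand $A[S]$ and $Q[T]$ in the two bases as in Definition~\ref{def:efronstein}, writing $A[S]=\sum_{\sigma:\,\supp{\sigma}=S}\widehat{A}\br{\sigma}\A_\sigma$ and $Q[T]=\sum_{\tau:\,\supp{\tau}=T}\widehat{Q}\br{\tau}\B_\tau$, so that
\[\Tr\br{A[S]\otimes Q[T]}\psi_{AB}^{\otimes n}=\sum_{\substack{\sigma:\,\supp{\sigma}=S\\\tau:\,\supp{\tau}=T}}\widehat{A}\br{\sigma}\,\widehat{Q}\br{\tau}\,\Tr\br{\A_\sigma\otimes\B_\tau}\psi_{AB}^{\otimes n}.\]
When $S\neq T$ every summand has $\supp{\sigma}=S\neq T=\supp{\tau}$, so each trace on the right is $0$ by the first part and the sum collapses. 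I do not expect a genuine obstacle here: the only mild subtlety is the bookkeeping of the register regrouping in the factorization step, and the rest follows immediately from standardness and orthonormality of the two bases.
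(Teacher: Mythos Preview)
Your proof is correct and follows essentially the same approach as the paper: the paper's proof is a terse one-liner noting that $\Tr\br{\A_0\otimes\B_\tau}\psi_{AB}=\Tr\br{\A_\sigma\otimes\B_0}\psi_{AB}=0$ for $\sigma,\tau\neq 0$, which is exactly the single-copy vanishing you establish and then tensorize. Your write-up is simply a more explicit unpacking of that same idea, including the careful observation that the relevant orthogonality is with respect to the $\psi_B$-weighted inner product.
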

\begin{proof}
	It follows from the equalities $\Tr\br{\A_0\otimes\B_{\tau}}\psi_{AB}=\Tr\br{\A_{\sigma}\otimes\B_0}\psi_{AB}=0$ whenver $\sigma\neq 0$ and $\tau\neq0$.
\end{proof}

	\begin{prop}\label{prop:markovenfronstein}
	Given an integer $n>0$, quantum systems $A$ and $B$, a bipartite quantum state $\psi_{AB}$ , $Q\in\M\br{B^n}$ and $S\subseteq[n]$, it holds that
	\[\T\br{Q[S]}=\T\br{Q}[S],\]
	where $\T:\M\br{B}\rightarrow\M\br{A}$ is the Markov super-operator in Definition~\ref{def:markovoperator} with respect to $\psi_{AB}$.
\end{prop}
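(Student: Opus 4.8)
The plan is to reduce the statement to two one-copy identities for the Markov super-operator and then exploit the product structure. Throughout, $\T$ is understood to act on $\M\br{B^n}$ coordinate-wise, i.e.\ $\T=\T^{\otimes n}$, as with the noise operator in Definition~\ref{def:bonamibeckner}. Fix standard orthonormal bases $\set{\B_j}_{j=0}^{d_B^2-1}$ of $\br{\M\br{B},\psi_B}$ and $\set{\A_j}_{j=0}^{d_A^2-1}$ of $\br{\M\br{A},\psi_A}$, so that $\B_0=\id_B$ and $\A_0=\id_A$; for $\sigma\in[d_B^2]_{\geq 0}^n$ put $\B_\sigma=\bigotimes_i\B_{\sigma_i}$, and recall $Q[S]=\sum_{\sigma:\supp{\sigma}=S}\widehat{Q}\br{\sigma}\B_\sigma$.

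First I would record two facts about the single-copy $\T:\M\br{B}\rightarrow\M\br{A}$: (i) $\T\br{\id_B}=\id_A$; and (ii) $\T$ maps the $\psi_B$-mean-zero subspace $\set{Q:\innerproduct{\id_B}{Q}_{\psi_B}=0}$ into the $\psi_A$-mean-zero subspace. Both follow at once from Definition~\ref{def:markovoperator}: for every $M\in\M\br{A}$ we have $\innerproduct{M}{\T\br{\id_B}}_{\psi_A}=\Tr\br{M^\dagger\otimes\id_B}\psi_{AB}=\Tr M^\dagger\psi_A=\innerproduct{M}{\id_A}_{\psi_A}$, which by nondegeneracy of $\innerproduct{\cdot}{\cdot}_{\psi_A}$ forces (i); and $\innerproduct{\id_A}{\T\br{Q}}_{\psi_A}=\Tr\br{\id_A\otimes Q}\psi_{AB}=\Tr Q\psi_B=\innerproduct{\id_B}{Q}_{\psi_B}$, which gives (ii). Equivalently, letting $\E_0$ denote the orthogonal projection onto $\complex\id$ in each of the two spaces (legitimate since $\norm{\id}_\psi=1$), (i)--(ii) say $\T\circ\E_0^B=\E_0^A\circ\T$, and consequently $\T$ also intertwines the complementary projections onto the mean-zero subspaces.

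The argument then finishes by tensoring. From (i), $\T\br{\B_0}=\A_0$, and from (ii), $\T\br{\B_j}\in\mathrm{span}\set{\A_1,\dots,\A_{d_A^2-1}}$ for each $j\neq 0$. Hence for every $\sigma$,
\[\T\br{\B_\sigma}=\bigotimes_{i=1}^n\T\br{\B_{\sigma_i}}\in\mathrm{span}\set{\A_\tau:\supp{\tau}=\supp{\sigma}},\]
so $\T$ maps $V_S^B\defeq\mathrm{span}\set{\B_\sigma:\supp{\sigma}=S}$ into $V_S^A\defeq\mathrm{span}\set{\A_\tau:\supp{\tau}=S}$ for every $S\subseteq[n]$. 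Writing $Q=\sum_S Q[S]$ with $Q[S]\in V_S^B$ and using linearity of $\T$, we get $\T\br{Q}=\sum_S\T\br{Q[S]}$ with $\T\br{Q[S]}\in V_S^A$; since $\M\br{A^n}=\bigoplus_S V_S^A$, this is exactly the Efron--Stein decomposition of $\T\br{Q}$, whence $\T\br{Q}[S]=\T\br{Q[S]}$.

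I do not expect a genuine obstacle: the whole mathematical content sits in the two one-line identities (i) and (ii), the rest being bookkeeping with the product structure together with the (already noted) basis-independence of the Efron--Stein decomposition. If one prefers to avoid choosing bases altogether, the same conclusion follows from the projector reformulation, since $\bigotimes_{i\in S}\br{\mathrm{Id}-\E_0}\otimes\bigotimes_{i\notin S}\E_0$ is precisely the Efron--Stein projector $P\mapsto P[S]$ and $\T^{\otimes n}$ commutes with it by (i)--(ii).
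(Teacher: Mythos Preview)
Your argument is correct and takes a genuinely different route from the paper. The paper's proof is a duality argument: it shows $\innerproduct{X}{\T\br{Q[S]}}_{\psi_A}=\innerproduct{X}{\T\br{Q}[S]}_{\psi_A}$ for every $X\in\M\br{A^n}$, by unwinding both sides to expressions of the form $\Tr\br{X[S]^\dagger\otimes Q[S]}\psi_{AB}^{\otimes n}$ via Proposition~\ref{prop:enfronsteinortho} and Lemma~\ref{lem:efronsteinortho}. You instead work structurally: the two one-copy identities $\T\br{\id_B}=\id_A$ and $\innerproduct{\id_A}{\T\br{Q}}_{\psi_A}=\innerproduct{\id_B}{Q}_{\psi_B}$ say that $\T$ intertwines the Efron--Stein projectors, and the claim follows by tensoring. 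Your approach is arguably cleaner and makes the mechanism transparent (and the basis-free reformulation via $\T^{\otimes n}$ commuting with $\bigotimes_{i\in S}\br{\mathrm{Id}-\E_0}\otimes\bigotimes_{i\notin S}\E_0$ is nice). The one thing you assert rather than prove is that the $n$-copy Markov super-operator---i.e.\ the one defined by $\Tr\br{M^\dagger\otimes Q}\psi_{AB}^{\otimes n}=\innerproduct{M}{\T\br{Q}}_{\psi_A}$ for $M\in\M\br{A^n}$, $Q\in\M\br{B^n}$---coincides with $\T^{\otimes n}$; this is an immediate check on simple tensors and should be stated as such rather than justified by analogy with Definition~\ref{def:bonamibeckner}.
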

\begin{proof} It suffices to show that
	\[\innerproduct{X}{\T\br{Q[S]}}_{\psi_A}=\innerproduct{X}{\T\br{Q}[S]}_{\psi_A},\] for any $X\in\M\br{A^n}$.
	By the definition,
	\[\text{LHS}=\Tr\br{X^{\dagger}\otimes Q[S]}\psi_{AB}^{\otimes n}.\]
	By Definition~\ref{def:efronstein} and Proposition~\ref{prop:enfronsteinortho}, we have
	\[\text{RHS}=\innerproduct{X[S]}{\T\br{Q}}_{\psi_A}=\Tr\br{X[S]^{\dagger}\otimes Q}\psi_{AB}^{\otimes n}.\]
	By Lemma~\ref{lem:efronsteinortho} and Definition~\ref{def:efronstein},
	\[\Tr\br{X[S]^{\dagger}\otimes Q}\psi_{AB}^{\otimes n}=\Tr\br{X^{\dagger}\otimes Q[S]}\psi_{AB}^{\otimes n}=\Tr\br{X[S]^{\dagger}\otimes Q[S]}\psi_{AB}^{\otimes n}.\]
\end{proof}

	\begin{prop}\label{prop:markovoperatornorm}
		Given an integer $n>0$, quantum systems $A, B$, a bipartite quantum state $\psi_{AB}\in\D\br{A\otimes B}$, $Q\in\M\br{B^n}$ and $S\subseteq[n]$. It holds that
		\[\norm{\T\br{Q[S]}}_{\psi_A}\leq\rho^{|S|}\norm{Q[S]}_{\psi_B},\]
		where $\rho=\rho\br{\psi_{AB}}$.
	\end{prop}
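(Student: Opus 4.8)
The plan is to reduce to the case where $S$ is the full coordinate set and then induct on $|S|$. After relabelling coordinates we may take $S=[k]$ with $k=|S|$. Fixing a standard orthonormal basis $\set{\B_i}_{i=0}^{d_B^2-1}$ of $\br{\M\br{B},\psi_B}$, the operator $Q[S]$ has Fourier support only on strings $\sigma$ with $\supp{\sigma}=[k]$, so it factorizes as $Q[S]=R\otimes\id_B^{\otimes(n-k)}$ for a unique $R\in\M\br{B^k}$ with $R=R[[k]]$. Since the Markov super-operator of $\psi_{AB}^{\otimes n}$ is the tensor power $\T^{\otimes n}$ and $\T\br{\id_B}=\id_A$, we get $\T\br{Q[S]}=\T^{\otimes k}\br{R}\otimes\id_A^{\otimes(n-k)}$; as tensoring with identities changes neither $\norm{\cdot}_{\psi_A}$ nor $\norm{\cdot}_{\psi_B}$, it suffices to prove
\[\norm{\T^{\otimes k}\br{R}}_{\psi_A}\leq\rho^{k}\norm{R}_{\psi_B}\qquad\text{for every }R\in\M\br{B^k}\text{ with }R=R[[k]].\]

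I would prove this by induction on $k$. The base case $k=1$ is immediate: $R=R[\set{1}]$ means $\innerproduct{\id}{R}_{\psi_B}=0$, so $\norm{\T\br{R}}_{\psi_A}\leq\rho\norm{R}_{\psi_B}$ is exactly the variational characterization of $\rho\br{\psi_{AB}}$ in Proposition~\ref{prop:maximalvariance} together with homogeneity. For the inductive step, split off the first coordinate, $\M\br{B^k}=\M\br{B}\otimes\M\br{B^{k-1}}$, and expand $R=\sum_{j\geq1}\B_j\otimes R_j$; the $j=0$ term is absent because $R$ carries no $\id$-weight in the first coordinate, each $R_j$ inherits full Efron--Stein support on $\set{2,\dots,k}$, and $\norm{R}_{\psi_B}^2=\sum_{j\geq1}\norm{R_j}_{\psi_B}^2$ by orthonormality of the product basis. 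Applying $\T^{\otimes k}=\T\otimes\T^{\otimes(k-1)}$ and expanding the squared norm using multiplicativity of $\innerproduct{\cdot}{\cdot}_{\psi_A}$ on tensor factors gives $\norm{\T^{\otimes k}\br{R}}_{\psi_A}^2=\Tr\br{GH}$, where $G=\br{\innerproduct{\T\br{\B_j}}{\T\br{\B_{j'}}}_{\psi_A}}_{j,j'\geq1}$ and $H=\br{\innerproduct{\T^{\otimes(k-1)}\br{R_j}}{\T^{\otimes(k-1)}\br{R_{j'}}}_{\psi_A}}_{j,j'\geq1}$ are both positive semidefinite. Here $G$ is precisely the matrix of $\T^{*}\T$ restricted to the mean-zero subspace $\mathrm{span}\set{\B_j:j\geq1}$ in the orthonormal basis $\set{\B_j}$, so Proposition~\ref{prop:maximalvariance} yields $G\preceq\rho^{2}\id$; hence $\Tr\br{GH}\leq\rho^{2}\Tr\br{H}=\rho^{2}\sum_{j\geq1}\norm{\T^{\otimes(k-1)}\br{R_j}}_{\psi_A}^2\leq\rho^{2}\cdot\rho^{2(k-1)}\sum_{j\geq1}\norm{R_j}_{\psi_B}^2=\rho^{2k}\norm{R}_{\psi_B}^2$, the penultimate step being the induction hypothesis applied to each $R_j$.

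The part I expect to need the most care is not the algebra but making the two structural facts precise: that the product basis $\set{\B_\sigma}$ stays orthonormal and $\innerproduct{\cdot}{\cdot}_{\psi_A}$ tensorizes over $\M\br{A^k}$, so the cross terms collapse exactly into $\Tr\br{GH}$ and the Efron--Stein support is preserved coordinate-by-coordinate in the splitting $R=\sum_j\B_j\otimes R_j$; and the observation that what must be extracted from Proposition~\ref{prop:maximalvariance} is the stronger bound $G\preceq\rho^2\id$ (equivalently, $\T$ contracts the whole mean-zero subspace by $\rho$), not merely $\norm{\T\br{\B_j}}_{\psi_A}\leq\rho$ term by term. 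An essentially equivalent route is to write $\T=E+N$ on each factor, where $E\br{\cdot}=\innerproduct{\id}{\cdot}_{\psi_B}\id_A$ and $N$ is the mean-zero part with $\norm{N}_{\mathrm{op}}\leq\rho$, expand $\T^{\otimes k}=\bigotimes_i\br{E_i+N_i}$, note that every term containing some $E_i$ annihilates $R=R[[k]]$, and conclude $\norm{\T^{\otimes k}\br{R}}_{\psi_A}=\norm{\bigl(\bigotimes_iN_i\bigr)\br{R}}_{\psi_A}\leq\prod_i\norm{N_i}_{\mathrm{op}}\cdot\norm{R}_{\psi_B}\leq\rho^{k}\norm{R}_{\psi_B}$ by multiplicativity of operator norms over tensor products; I would present whichever reads more cleanly once the tensor-Hilbert-space bookkeeping is fixed. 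Proposition~\ref{prop:markovenfronstein} may also be invoked, if preferred, to see directly that $\T^{\otimes k}\br{R}$ again has full support, sidestepping the explicit decomposition in the reduction step.
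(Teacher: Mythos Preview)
Your argument has a real gap at exactly the point you flagged: the symmetrized inner product
\[
\innerproduct{P}{Q}_{\psi}=\tfrac12\Tr\br{P^{\dagger}Q+QP^{\dagger}}\psi
\]
does \emph{not} tensorize for general $\psi$. For Hermitian $P,Q$ one has $\innerproduct{P}{Q}_{\psi}=\mathrm{Re}\,\Tr\br{PQ\psi}$, so
\[
\innerproduct{P_1\otimes P_2}{Q_1\otimes Q_2}_{\psi_1\otimes\psi_2}=\mathrm{Re}\br{\Tr P_1Q_1\psi_1\cdot\Tr P_2Q_2\psi_2},
\]
whereas $\innerproduct{P_1}{Q_1}_{\psi_1}\innerproduct{P_2}{Q_2}_{\psi_2}=\mathrm{Re}\br{\Tr P_1Q_1\psi_1}\,\mathrm{Re}\br{\Tr P_2Q_2\psi_2}$; the two differ by the product of the imaginary parts. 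Concretely, take $\psi=\mathrm{diag}(p,1-p)$ with $p\neq\tfrac12$ and $\B_2=\sigma_x,\B_3=\sigma_y$: then $\Tr\B_2\B_3\psi=i(2p-1)$ is purely imaginary, so $\innerproduct{\B_2}{\B_3}_\psi=0$ yet $\innerproduct{\B_2\otimes\B_2}{\B_3\otimes\B_3}_{\psi\otimes\psi}=-(2p-1)^2\neq0$. Hence the product basis $\set{\B_\sigma}$ is not orthonormal in $\br{\M_d^{\otimes k},\psi^{\otimes k}}$, the identity $\norm{R}_{\psi_B}^2=\sum_{j\geq1}\norm{R_j}_{\psi_B}^2$ fails, and the factorization $\norm{\T^{\otimes k}(R)}_{\psi_A}^2=\Tr(GH)$ breaks down. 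Your alternative $\T=E+N$ route has the same defect: multiplicativity of operator norms over tensor products needs $\br{\M^{\otimes k},\innerproduct{\cdot}{\cdot}_{\psi^{\otimes k}}}$ to be the Hilbert-space tensor product of the single copies, which is precisely what fails.

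The paper circumvents this by never invoking tensorization of $\innerproduct{\cdot}{\cdot}_\psi$. It spectrally decomposes the marginals $\psi_A,\psi_B$, slices $Q^{(r-1)}$ along the eigenvectors on the \emph{other} coordinates into single-register pieces $Q^{(r-1)}_{s,t}$, applies Proposition~\ref{prop:maximalvariance} to each slice (legitimate, since each slice is mean-zero by the maintained full-support property), and then reassembles; the reassembly identity holds because the slicing is adapted to the eigenbasis of $\psi$, not to an abstract tensor factorization of the inner product. Your argument \emph{is} valid in the maximally mixed case $\psi_A=\id/d_A,\ \psi_B=\id/d_B$, where $\innerproduct{\cdot}{\cdot}_\psi$ reduces to the normalized Hilbert--Schmidt inner product and genuinely tensorizes; that is in fact the only case used downstream in the paper. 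If you want your approach to cover the general statement, one fix is to run the whole argument with the GNS inner product $\Tr P^\dagger Q\psi$ (which does tensorize) and separately argue that the maximal correlation, and hence the contraction constant of $\T$ on the mean-zero subspace, is the same for both inner products.
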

	\begin{proof}
	We may assume that $Q=Q[S]$ without loss of generality. It suffices to show the case that $S=[n]$. Let $d_A=\dim\br{A}, d_B=\dim\br{B}$, $\set{\A_i}_{i\in[d_A]_{\geq 0}}$ and $\set{\B_i}_{i\in[d_B]_{\geq 0}}$ be standard orthonormal basis in $\M\br{A}$ and $\M\br{B}$, respectively. For $r\in[n]$, set $\T^{\br{r}}\defeq \id_{\M\br{A}}^{\otimes\br{r-1}}\otimes\T\otimes\id_{\M\br{B}}^{\otimes\br{n-r}}$, $\psi^{(r)}\defeq\psi_A^{\otimes r}\otimes\psi_B^{\otimes(n-r)}$, $Q^{(0)}\defeq Q$ and $Q^{\br{r}}=\T^{(r)}\br{Q^{(r-1)}}$, where $\id_{\M\br{A}}$ and $\id_{\M\br{B}}$ are the identity maps mapping $\M\br{A}$ to $\M\br{A}$ and $\M\br{B}$ to $\M\br{B}$, respectively. Note that $Q^{(n)}=\T\br{Q}$. Hence it suffices to show that
		\begin{equation}\label{eqn:markovnormeq1}
		\Tr~\br{Q^{(r)}}^{\dagger}Q^{(r)}\psi^{(r)}\leq\rho^2 \Tr~\br{Q^{(r-1)}}^{\dagger}Q^{(r-1)}\psi^{(r-1)},
		\end{equation}
		and
		\begin{equation}\label{eqn:markovnormeq2}
		Q^{(r)}=Q^{(r)}[n],
		\end{equation}
		where $Q^{(r)}[n]$ is defined by expanding $Q^{(r)}$ over $\set{\A_{\sigma_{\leq r}}\otimes\B_{\sigma_{> r}}}_{\sigma\in[d_A^2]_{\geq 0}^r\times[d_B^2]_{\geq 0}^{n-r}}$,
		because $\T=\T^{(n)}\circ\cdots\circ\T^{\br{1}}$.
		Let $\psi_A=\sum_{i=1}^{d_A}\lambda_i^A\ketbra{u_i}$ and $\psi_B=\sum_{i=1}^{d_B}\lambda_i^B\ketbra{v_i}$ be spectral decompositions of $\psi_A$ and $\psi_B$, respectively. For any $s\in[d_A]^{r-1}\times[d_B]^{n-r}$, we define
		$$\theta^{(r)}_s\defeq\sqrt{\lambda_{s_1}^A\cdot \lambda_{s_2}^A\cdots\lambda_{s_{r-1}}^A\cdot\lambda_{s_{r+1}}^B\cdot\lambda_{s_{r+2}}^B\cdots\lambda_{s_n}^B},$$
		and
		\[\ket{w_s}\defeq\ket{u_{s_1}}\otimes\ldots\otimes\ket{u_{s_{r-1}}}\otimes\ket{v_{s_{r+1}}}\otimes\ldots\otimes\ket{v_{s_n}}.\]
		 For any $s, t\in[d_A]^{r-1}\times[d_B]^{n-r}$, we define
	 \[P^{(r)}_{s,t}\defeq\theta_t^{(r)}\br{\bra{w_s}\otimes\id}Q^{(r)}\br{\ket{w_t}\otimes\id} ~\mbox{and}~ Q^{(r-1)}_{s,t}\defeq\theta_t^{(r)}\br{\bra{w_s}\otimes\id}Q^{(r-1)}\br{\ket{w_t}\otimes\id},\]
	 where $\ket{w_s}$ and $\ket{w_t}$ lie in the registers $\set{1,\ldots, r-1,r+1,\ldots, n}$.
	 Then
	 \begin{equation}\label{eqn:tpq}
	 	P^{(r)}_{s,t}=\T\br{Q_{s,t}^{(r-1)}}.
	 \end{equation}
		And we have
		\begin{eqnarray*}
		&&\innerproduct{\id_B}{Q_{s,t}^{(r-1)}}_{\psi_B}=0,
		\end{eqnarray*}
		by the induction $Q^{(r-1)}=Q^{(r-1)}[n]$.
		By Proposition~\ref{prop:maximalvariance},  $\norm{P^{(r)}_{s,t}}_{\psi_A}\leq\rho\norm{Q^{(r-1)}_{s,t}}_{\psi_B}$.
		Consider
		\[\sum_{s,t}\norm{P^{(r)}_{s,t}}_{\psi_A}^2=\sum_{s,t}\Tr\br{P^{(r)}_{s,t}}^{\dagger}P^{(r)}_{s,t}\psi_A=\Tr\br{Q^{(r)}}^{\dagger}Q^{(r)}\psi^{(r)},
		\]
	and
	\[\sum_{s,t}\norm{Q^{(r-1)}_{s,t}}_{\psi_B}^2=\sum_{s}\br{\theta^{(r)}_{s}}^2\Tr\br{Q^{(r-1)}}^{\dagger}Q^{(r-1)}\psi^{(r-1).}\]
		Combining with Eq.~\eqref{eqn:tpq} and Proposition~\ref{prop:maximalvariance}, we conclude Eq.~\eqref{eqn:markovnormeq1}.
		
		For Eq.~\eqref{eqn:markovnormeq2}, compute
		\begin{eqnarray*}
		&&\widehat{Q^{(r)}}\br{\sigma}=\widehat{\T^{\br{r}}\br{Q^{(r-1)}}}\br{\sigma}=\innerproduct{\X_{\sigma_{\leq r}}\otimes\Y_{\sigma_{>r}}}{\T^{\br{r}}\br{Q}}_{\psi^{(r)}}\\
		&=&\sum_{\tau:\abs{\tau}=n}\widehat{Q}\br{\tau}\innerproduct{\X_{\sigma_{\leq r}}\otimes\Y_{\sigma_{>r}}}{\T^{\br{r}}\br{\X_{\tau_{<r}}\otimes\Y_{\tau_{\geq r}}}}_{\psi^{(r)}}\\
		&=&\sum_{\tau:\abs{\tau}=n,\tau_{-r}=\sigma_{-r}}\widehat{Q}\br{\tau}\innerproduct{\X_{\sigma_r}}{\T\br{\Y_{\tau_r}}}_{\psi_A}.
		\end{eqnarray*}
		Note that
		\[\innerproduct{\X_0}{\T\br{\Y_{\tau_r}}}_{\psi_A}=\Tr\br{\id\otimes\Y_{\tau_r}}\psi_{AB}=0,\]
		as $\abs{\tau}=n$. Therefore, $\widehat{\T^{\br{r}}\br{Q}}\br{\sigma}=0$ if $\abs{\sigma}<n$. We conclude Eq.~\eqref{eqn:markovnormeq2}.
	\end{proof}
	A key property of the (classical) maximal correlation is {\em tensorization}, which states that the maximal correlation of multiple independent identical copies of a distribution is same as the one of one copy. The same property also holds for the quantum maximal correlation shown by Beigi in~\cite{Beigi:2013}. Here we provide a different proof.
	\begin{fact}~\cite{Beigi:2013}\label{prop:maximalcorrelationtensorisation}Given quantum systems $A, B$ with $\dim A=d_A$ and $\dim B= d_B$, a bipartite quantum state $\psi_{AB}\in\D\br{A\otimes B}$,  it holds that
		\[\rho\br{\psi_{AB}^{\otimes n}}=\rho\br{\psi_{AB}}.\]
	\end{fact}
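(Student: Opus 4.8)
The plan is to prove the two inequalities separately. The inequality $\rho(\psi_{AB}^{\otimes n}) \geq \rho(\psi_{AB})$ is the routine direction: I would take a pair $(P,Q)$ attaining the supremum in Definition~\ref{def:maximalcorrelation} for $\psi_{AB}$ and use the lifted pair $\br{P\otimes\id^{\otimes(n-1)},\,Q\otimes\id^{\otimes(n-1)}}$ on the $n$-fold system $A^{\otimes n}\otimes B^{\otimes n}$. Tracing out the last $n-1$ copies (Fact~\ref{fac:cauchyschwartz} item 1) gives $\Tr\br{\br{P^{\dagger}\otimes\id^{\otimes(n-1)}}\otimes\br{Q\otimes\id^{\otimes(n-1)}}}\psi_{AB}^{\otimes n}=\Tr\br{P^{\dagger}\otimes Q}\psi_{AB}$, while the $\psi$-norms and the two trace-zero conditions are unchanged under tensoring with identities, so the supremum defining $\rho(\psi_{AB}^{\otimes n})$ is at least $\rho(\psi_{AB})$.

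For the reverse inequality $\rho(\psi_{AB}^{\otimes n}) \leq \rho(\psi_{AB})$, I would use the variational characterization of Proposition~\ref{prop:maximalvariance}. First I would record that the Markov super-operator attached to $\psi_{AB}^{\otimes n}$ by Definition~\ref{def:markovoperator} is precisely $\T^{\otimes n}$, where $\T$ is the Markov super-operator of $\psi_{AB}$; this follows by verifying the defining identity on product operators via multiplicativity of the trace on product states, and extending by linearity. Hence $\rho(\psi_{AB}^{\otimes n})=\max\set{\norm{\T^{\otimes n}\br{Q}}_{\psi_A^{\otimes n}}:\innerproduct{\id}{Q}_{\psi_B^{\otimes n}}=0,\ \norm{Q}_{\psi_B^{\otimes n}}=1}$. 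I would fix an optimal $Q$ and expand it in its Efron-Stein decomposition $Q=\sum_{S\subseteq[n]}Q[S]$ with respect to a standard orthonormal basis of $\br{\M\br{B},\psi_B}$ (Definition~\ref{def:efronstein}); the constraint $\innerproduct{\id}{Q}_{\psi_B^{\otimes n}}=0$ forces $Q[\emptyset]=0$. By Proposition~\ref{prop:markovenfronstein}, $\T^{\otimes n}\br{Q}[S]=\T^{\otimes n}\br{Q[S]}$ for every $S$, so the orthogonality of Efron-Stein components with respect to $\innerproduct{\cdot}{\cdot}_{\psi_A}$ (Lemma~\ref{lem:efronsteinortho}) yields $\norm{\T^{\otimes n}\br{Q}}_{\psi_A^{\otimes n}}^2=\sum_{S\neq\emptyset}\norm{\T^{\otimes n}\br{Q[S]}}_{\psi_A^{\otimes n}}^2$.

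Now I would apply Proposition~\ref{prop:markovoperatornorm} to each nonempty level: $\norm{\T^{\otimes n}\br{Q[S]}}_{\psi_A^{\otimes n}}^2\leq\rho^{2\abs{S}}\norm{Q[S]}_{\psi_B^{\otimes n}}^2\leq\rho^2\norm{Q[S]}_{\psi_B^{\otimes n}}^2$, using $\rho\leq 1$ and $\abs{S}\geq 1$. Summing over $S\neq\emptyset$ and invoking orthogonality once more together with $Q[\emptyset]=0$ and $\norm{Q}_{\psi_B^{\otimes n}}=1$, I obtain $\norm{\T^{\otimes n}\br{Q}}_{\psi_A^{\otimes n}}^2\leq\rho^2\sum_{S\neq\emptyset}\norm{Q[S]}_{\psi_B^{\otimes n}}^2=\rho^2\norm{Q}_{\psi_B^{\otimes n}}^2=\rho^2$, hence $\rho(\psi_{AB}^{\otimes n})\leq\rho=\rho(\psi_{AB})$, which together with the first paragraph finishes the proof. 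The calculation is short once the pieces are assembled; the point I expect to demand the most care — and the real reason tensorization holds — is the per-level contraction estimate of Proposition~\ref{prop:markovoperatornorm}, i.e.\ that $\T$ shrinks every Efron-Stein component of degree $\geq 1$ by a factor at least $\rho$, so that no genuine interaction among coordinates can push the correlation above a single copy.
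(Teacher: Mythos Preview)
Your proposal is correct and follows essentially the same route as the paper: reduce to Proposition~\ref{prop:maximalvariance}, expand in the Efron-Stein decomposition, use Proposition~\ref{prop:markovenfronstein} to commute $\T$ past the decomposition, apply the per-level contraction of Proposition~\ref{prop:markovoperatornorm}, and sum using orthogonality. The paper writes $\T$ rather than $\T^{\otimes n}$ (it tacitly identifies the $n$-fold Markov super-operator with the single-copy one acting on tensors) and dismisses the lower bound as ``trivial,'' but the substance is the same; one small citation nit: the orthogonality you invoke is Proposition~\ref{prop:enfronsteinortho} (or its $\psi$-inner-product analogue noted in Section~\ref{sec:markov}), not Lemma~\ref{lem:efronsteinortho}.
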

	\begin{proof}
		Given $Q\in\M\br{\B^n}$ with $\innerproduct{\id}{Q}_{\psi^B}=0$ and $\norm{Q}_{\psi^B}=1$. Using the Efron-Stein decomposition $Q=\sum_{S\neq\emptyset}Q[S],$
		we have
		\begin{eqnarray*}
		&&\norm{\T\br{Q}}_{\psi^B}^2\\
		&=&\norm{\sum_{S\neq\emptyset}\T\br{Q[S]}}_{\psi^B}^2\quad\quad\mbox{(linearlity of $\T$)}\\
		&=&\norm{\sum_{S\neq\emptyset}\T\br{Q}[S]}_{\psi^B}^2\quad\quad\mbox{(Proposition~\ref{prop:markovenfronstein})}\\
		&=&\sum_{S\neq\emptyset}\norm{\T\br{Q}[S]}_{\psi^B}^2\quad\quad\mbox{(orthogonality of the Efron-Stein decomposition)}\\
		&\leq&\sum_{S\neq\emptyset}\rho^{2\abs{S}}\norm{Q[S]}_{\psi^B}^2\quad\quad\mbox{(Proposition~\ref{prop:markovoperatornorm})}\\
		&\leq&\rho^2\sum_{S\neq\emptyset}\norm{Q[S]}_{\psi^B}^2\\
		&=&\rho^2\norm{Q}_{\psi_B}^2\quad\quad\mbox{(orthogonality of the Efron-Stein decomposition)}\\
		&=&\rho^2.
		\end{eqnarray*}
	From Proposition~\ref{prop:maximalvariance}, $\rho\br{\psi^{\otimes n}_{AB}}\leq\rho\br{\psi_{AB}}$. The other direction is trivial.
	\end{proof}

	\section{Smoothing operators}\label{sec:smoothoperators}
	The main lemma in this section is the following.
	
		\begin{lemma}\label{lem:smoothing of strategies}
		Given parameters $0\leq\rho<1$, $0<\delta<1$, integer $n>0$ a noisy EPR state $\psi_{AB}$  with the maximal correlation $\rho=\rho\br{\psi_{AB}}$, there exist $d=d\br{\rho,\delta}$ and a map $f:\H_2^{\otimes n}\rightarrow \H_2^{\otimes n}, $such that for any $P\in\H_2^{\otimes n}, Q\in\H_2^{\otimes n}$ satisfying $0\leq P\leq \id$ and $0\leq Q\leq \id$, the operators $P^{(1)}=f\br{P}$ and $Q^{(1)}=f\br{Q}$ satisfy that

\begin{enumerate}
  \item $0\leq P^{(1)}\leq \id~\mbox{and}~0\leq Q^{(1)}\leq \id;$
  \item $\Tr~P^{(1)}=\Tr~P~\mbox{and}~\Tr~Q^{(1)}=\Tr~Q;$
  \item $\nnorm{P^{(1)}}_2\leq\nnorm{P}_2~\mbox{and}~\nnorm{Q^{(1)}}_2\leq\nnorm{Q}_2;$
  \item $\abs{\Tr\br{P^{(1)}\otimes Q^{(1)}}\psi^{\otimes n}_{AB}-\Tr\br{P\otimes Q}\psi^{\otimes n}_{AB}}\leq\delta,$
  \item $\nnorm{\br{P^{(1)}}^{>d}}^2_2\leq\delta~\mbox{and}~\nnorm{\br{Q^{(1)}}^{>d}}^2_2\leq\delta,$
\end{enumerate}
In particular, we can take $d=\frac{2\log^2\frac{1}{\delta}}{C\br{1-\rho}\delta}$ for some constant $C$.
	\end{lemma}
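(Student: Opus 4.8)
The plan is to take $f$ to be the $n$-fold depolarizing channel $\T_{\rho'} = \bigotimes_{i=1}^n \T_{\rho'}$ acting on $\H_2^{\otimes n}$, for a noise parameter $\rho' = \rho'(\rho,\delta) \in (0,1)$ close to $1$ to be fixed at the end, and to set $P^{(1)} = \T_{\rho'}(P)$, $Q^{(1)} = \T_{\rho'}(Q)$. Items 1--3 are then immediate: $\T_{\rho'}$ is completely positive, trace-preserving and unital, so $0 \le P \le \id$ forces $0 \le \T_{\rho'}(P) \le \id$ and $\Tr \T_{\rho'}(P) = \Tr P$, while $\nnorm{\T_{\rho'}(P)}_2 = \norm{\T_{\rho'}(P)}_{\id_2/2} \le \norm{P}_{\id_2/2} = \nnorm{P}_2$ by Lemma~\ref{lem:bonamibecknerdef}. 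For item 5, fix any standard orthonormal basis of $\M_2$; by Lemma~\ref{lem:bonamibecknerdef} the Fourier coefficients satisfy $\widehat{P^{(1)}}(\sigma) = (\rho')^{\abs{\sigma}}\widehat{P}(\sigma)$, so
\[
\nnorm{\br{P^{(1)}}^{>d}}_2^2 = \sum_{\abs{\sigma}>d}(\rho')^{2\abs{\sigma}}\abs{\widehat{P}(\sigma)}^2 \le (\rho')^{2d}\nnorm{P}_2^2 \le (\rho')^{2d},
\]
where $\nnorm{P}_2^2 = \tfrac{1}{2^n}\Tr P^2 \le \tfrac{1}{2^n}\Tr P \le 1$ since $P^2 \le P \le \id$. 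Hence item 5 holds once $(\rho')^{2d} \le \delta$, and since $\log(1/\rho') \ge 1-\rho'$ it is enough that $d \ge \log(1/\delta)/(2(1-\rho'))$.

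The substantive point is item 4, where I would work in the joint basis. Applying Lemma~\ref{lem:jointbasis} with $d_A = d_B = 2$ gives standard orthonormal bases $\set{\A_i}_{i=0}^3$ of $\M(A)$ and $\set{\B_i}_{i=0}^3$ of $\M(B)$ with $\Tr(\A_i \otimes \B_j)\psi_{AB} = \delta_{ij} c_i$, where $c_0 = 1$ and $c_1,c_2,c_3 \ge 0$; by Fact~\ref{fac:paulimutiplecopy} the tensor bases $\set{\A_\sigma}$, $\set{\B_\sigma}$ are standard orthonormal in $\M_2^{\otimes n}$ and $\Tr(\A_\sigma \otimes \B_\tau)\psi_{AB}^{\otimes n} = \delta_{\sigma\tau}\, c_\sigma$ with $c_\sigma = \prod_i c_{\sigma_i}$. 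Expanding $P$ over $\set{\A_\sigma}$ and $Q$ over $\set{\B_\sigma}$ and using Lemma~\ref{lem:bonamibecknerdef} gives
\[
\Tr\br{P\otimes Q}\psi_{AB}^{\otimes n} = \sum_\sigma c_\sigma \widehat{P}(\sigma)\widehat{Q}(\sigma), \qquad \Tr\br{P^{(1)}\otimes Q^{(1)}}\psi_{AB}^{\otimes n} = \sum_\sigma c_\sigma (\rho')^{2\abs{\sigma}} \widehat{P}(\sigma)\widehat{Q}(\sigma),
\]
so their difference is $\sum_\sigma c_\sigma \br{1-(\rho')^{2\abs{\sigma}}} \widehat{P}(\sigma)\widehat{Q}(\sigma)$. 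The key structural input is that $c_i \le \rho$ for $i\ge 1$: by Proposition~\ref{prop:maximalvariance} (with $\psi_A=\psi_B=\id_2/2$) one has $\rho = \max\set{\nnorm{\T(Q)}_2 : \Tr Q = 0,\ \nnorm{Q}_2 = 1}$, and the defining relation of the Markov super-operator together with $\Tr(\A_j\otimes\B_i)\psi_{AB}=\delta_{ij}c_i$ forces $\T(\B_i)=c_i\A_i$; since $\B_i$ is trace-zero and unit-normalized, taking $Q=\B_i$ yields $c_i=\nnorm{\T(\B_i)}_2\le\rho$. Thus $c_\sigma \le \rho^{\abs{\sigma}}$, and by Cauchy--Schwarz with $\sum_\sigma\abs{\widehat{P}(\sigma)}^2,\sum_\sigma\abs{\widehat{Q}(\sigma)}^2\le 1$,
\[
\abs{\Tr\br{P^{(1)}\otimes Q^{(1)}}\psi_{AB}^{\otimes n} - \Tr\br{P\otimes Q}\psi_{AB}^{\otimes n}} \le \max_{k \ge 0}\ \rho^k\br{1-(\rho')^{2k}}.
\]

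Finally I would bound $\max_{k\ge 0}\rho^k(1-(\rho')^{2k})$ by splitting at a threshold $k_0 = \ceil{\log(2/\delta)/\log(1/\rho)} = O(\log(1/\delta)/(1-\rho))$ (the case $\rho=0$ being trivial): for $k>k_0$ use $\rho^k\le\rho^{k_0}\le\delta/2$ and $1-(\rho')^{2k}\le 1$; for $1\le k\le k_0$ use Bernoulli's inequality $1-(\rho')^{2k}\le 2k(1-\rho')\le 2k_0(1-\rho')$, which is $\le\delta/2$ once $1-\rho' = \delta/(4k_0)=\Omega\br{\delta(1-\rho)/\log(1/\delta)}$. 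With this choice of $\rho'$ the requirement $d\ge\log(1/\delta)/(2(1-\rho'))$ from item 5 becomes $d = O\br{\log^2(1/\delta)/((1-\rho)\delta)}$, matching the stated $d=\tfrac{2\log^2(1/\delta)}{C(1-\rho)\delta}$ for a suitable absolute constant $C$. The only genuine obstacle I anticipate is the identification $c_i=\nnorm{\T(\B_i)}_2\le\rho$ — i.e.\ correctly linking the singular values of the joint basis from Lemma~\ref{lem:jointbasis} to the maximal correlation via Proposition~\ref{prop:maximalvariance} and the explicit Markov super-operator of Lemma~\ref{lem:markovoperator}; the rest is bookkeeping with Fourier coefficients and a one-variable estimate.
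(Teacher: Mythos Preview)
Your proposal is correct and follows essentially the same route as the paper: take $f=\T_{\rho'}$ (the paper writes $\rho'=1-\gamma$), get items 1--3 and 5 for free from Lemma~\ref{lem:bonamibecknerdef}, and prove item 4 by exploiting $c_\sigma\le\rho^{\abs{\sigma}}$ together with the elementary one-variable tradeoff between $\rho^k$ and $1-(\rho')^{2k}$.

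The only packaging difference is in item 4. The paper isolates the estimate as a separate Lemma~\ref{lem:Tsmooth}, phrased via the Markov super-operator and the Efron-Stein decomposition: it bounds $\norm{\T'(Q[S])}\le\min\br{\rho^{\abs{S}},\,1-(1-\gamma)^{\abs{S}}}\norm{Q[S]}$ and then Cauchy--Schwarz over $S$. You instead expand directly in the joint basis of Lemma~\ref{lem:jointbasis}, obtain the explicit difference $\sum_\sigma c_\sigma\br{1-(\rho')^{2\abs{\sigma}}}\widehat P(\sigma)\widehat Q(\sigma)$, and bound by $\max_k\rho^k\br{1-(\rho')^{2k}}$. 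These are the same computation: the Efron-Stein blocks just group Fourier coefficients by support, and the Markov-operator bound of Proposition~\ref{prop:markovoperatornorm} is exactly your $c_i\le\rho$ (which the paper also records as Lemma~\ref{lem:normofM}). Your version is a bit more explicit and self-contained for the $\psi_A=\psi_B=\id_2/2$ case at hand; the paper's formulation via Lemma~\ref{lem:Tsmooth} is slightly more general (it is stated and proved with $\innerproduct{\cdot}{\cdot}_\psi$, so it would apply even without the completely-mixed-marginals assumption). Both yield the same choice $1-\rho'=\Theta\br{\delta(1-\rho)/\log(1/\delta)}$ and hence $d=\Theta\br{\log^2(1/\delta)/((1-\rho)\delta)}$.
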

	
	Before proving Lemma~\ref{lem:smoothing of strategies}, we need the following lemma, which generalizes the smoothing of the strategies in the classical setting in~\cite{Mossel:2010} to the quantum setting.
	
	\begin{lemma}\label{lem:Tsmooth}
		Given a noisy EPR state $\psi_{AB}$ with the maximal correlation $\rho=\rho(\psi_{AB})<1$, a parameter $0<\epsilon<1$ and operators $P\in\H_2^{\otimes n}, Q\in\H_2^{\otimes n}$, let $\gamma$ be chosen sufficiently close to $0$ so that
		\[\gamma\leq1-\br{1-\epsilon}^{\log\rho/\br{\log\epsilon+\log\rho}}.\]
		Then
		\[\abs{\Tr\br{P\otimes Q}\psi_{AB}^{\otimes n}-\Tr\br{\T_{1-\gamma}\br{P}\otimes\T_{1-\gamma}\br{Q}}\psi_{AB}^{\otimes n}}\leq 2\epsilon\sqrt{\var{P}\var{Q}}.\]
		In particular, there exists an absolute constant $C$ such that it suffices to take
		\[\gamma=C\frac{\br{1-\rho}\epsilon}{\log\br{1/\epsilon}}.\]
	\end{lemma}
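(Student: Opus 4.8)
The plan is to reduce the statement to a bound on the "high-degree tail" of an operator under the noise operator $\T_{1-\gamma}$, and then to control the inner product $\Tr\br{P\otimes Q}\psi_{AB}^{\otimes n}$ degree-by-degree using the tensorization of the quantum maximal correlation (Fact~\ref{prop:maximalcorrelationtensorisation}) together with the norm contraction in Proposition~\ref{prop:markovoperatornorm}. First I would fix a standard orthonormal basis and expand $P = \sum_S P[S]$, $Q = \sum_S Q[S]$ in Efron-Stein decompositions with respect to $\innerproduct{\cdot}{\cdot}_{\psi_A}$ and $\innerproduct{\cdot}{\cdot}_{\psi_B}$, so that by Lemma~\ref{lem:efronsteinortho} the cross term splits as $\Tr\br{P\otimes Q}\psi_{AB}^{\otimes n} = \sum_{S\subseteq[n]} \Tr\br{P[S]\otimes Q[S]}\psi_{AB}^{\otimes n}$. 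Applying $\T_{1-\gamma}$ multiplies the $S$-th Efron-Stein component by $(1-\gamma)^{|S|}$ (Lemma~\ref{lem:bonamibecknerdef}), so the difference we must bound is $\sum_S \br{1-(1-\gamma)^{2|S|}}\Tr\br{P[S]\otimes Q[S]}\psi_{AB}^{\otimes n}$.

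**The degreewise bound.**
For each fixed $S$, I would estimate $\abs{\Tr\br{P[S]\otimes Q[S]}\psi_{AB}^{\otimes n}}$ via the Markov super-operator: by Definition~\ref{def:markovoperator} this equals $\abs{\innerproduct{P[S]}{\T\br{Q[S]}}_{\psi_A}}$, which by Cauchy-Schwarz is at most $\norm{P[S]}_{\psi_A}\norm{\T\br{Q[S]}}_{\psi_A}$, and Proposition~\ref{prop:markovoperatornorm} gives $\norm{\T\br{Q[S]}}_{\psi_A}\leq\rho^{|S|}\norm{Q[S]}_{\psi_B}$. Hence the difference is bounded by $\sum_S \br{1-(1-\gamma)^{2|S|}}\rho^{|S|}\norm{P[S]}_{\psi_A}\norm{Q[S]}_{\psi_B}$. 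By Cauchy-Schwarz over $S$ (and using $\sum_S \norm{P[S]}_{\psi_A}^2 = \var{P}$, $\sum_S \norm{Q[S]}_{\psi_B}^2 = \var{Q}$, since the $S=\emptyset$ terms contribute the mean), it suffices to show $\max_{k\geq 1}\br{1-(1-\gamma)^{2k}}\rho^k \leq 2\epsilon$. Actually one wants the bound uniformly in $k$: split into the low-degree regime $k \leq k_0$, where $1-(1-\gamma)^{2k}\leq 2\gamma k$ is small provided $\gamma k_0$ is small, and the high-degree regime $k > k_0$, where $\rho^k$ is already tiny. Choosing $k_0 = \frac{\log(1/\epsilon)}{\log(1/\rho)}$ makes $\rho^{k_0} = \epsilon$, and then requiring $2\gamma k_0 \leq \epsilon$ — equivalently $\gamma \leq C\frac{(1-\rho)\epsilon}{\log(1/\epsilon)}$ after using $\log(1/\rho) \gtrsim 1-\rho$ — closes both cases, and one checks this $\gamma$ also satisfies the hypothesized inequality $\gamma \leq 1 - (1-\epsilon)^{\log\rho/(\log\epsilon+\log\rho)}$ (indeed that exponent is exactly $k_0^{-1}$-flavored, so the hypothesis is essentially $\gamma k_0 \lesssim \epsilon$).

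**Where the work is.**
The main obstacle is making the two-regime optimization of $\br{1-(1-\gamma)^{2k}}\rho^k$ match the precise constant $2\epsilon$ and the precise form of $\gamma$ stated in the lemma; in particular one has to be careful that the hypothesis on $\gamma$ given in the statement (the awkward expression $1-(1-\epsilon)^{\log\rho/(\log\epsilon+\log\rho)}$) is exactly what guarantees $(1-\gamma)^{2k_0}\geq 1-\epsilon$ at the crossover degree $k_0$ where $\rho^{k_0}=\epsilon$, so that the low-degree contribution is $\leq\epsilon$ and the high-degree contribution is $\leq\epsilon$, summing to $2\epsilon$. A secondary technical point is justifying the interchange of $\T_{1-\gamma}$ with the Efron-Stein components and the use of Proposition~\ref{prop:markovenfronstein} (that $\T$ commutes with taking the $S$-component), but these are already available in the excerpt. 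Once the scalar inequality is nailed down, the final sentence of the lemma follows by the standard estimate $\log(1/\rho)\geq 1-\rho$, absorbing everything into the absolute constant $C$.
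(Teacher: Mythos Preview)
Your proposal is correct and follows essentially the same route as the paper: Efron--Stein decomposition, the Markov super-operator bound $\norm{\T(Q[S])}_{\psi_A}\leq\rho^{|S|}\norm{Q[S]}_{\psi_B}$ from Proposition~\ref{prop:markovoperatornorm}, and the scalar optimization balancing $\rho^{k}$ against $1-(1-\gamma)^{k}$. The only cosmetic difference is that the paper smooths one operator at a time (using $\var{\T_{1-\gamma}(Q)}\leq\var{Q}$ and a triangle inequality), yielding the cleaner bound $\min\bigl(\rho^{k},\,1-(1-\gamma)^{k}\bigr)\leq\epsilon$, whereas you smooth both sides simultaneously and carry the factor $1-(1-\gamma)^{2k}$; since $1-(1-\gamma)^{2k}\leq 2\bigl(1-(1-\gamma)^{k}\bigr)$ the two arguments are equivalent.
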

	\begin{proof}
		Note that $\var{\T_{1-\gamma}\br{Q}}\leq\var{Q}$ by Lemma~\ref{lem:variance} and Lemma~\ref{lem:bonamibecknerdef}. Thus it suffices to show
		\[\abs{\Tr\br{P\otimes Q}\psi_{AB}^{\otimes n}-\Tr\br{P\otimes \T_{1-\gamma}\br{Q}}\psi_{AB}^{\otimes n}}=\abs{\Tr\br{P\otimes\br{\id-\T_{1-\gamma}}\br{Q}}\psi_{AB}^{\otimes n}}\leq\epsilon\sqrt{\var{P}\var{Q}}.\]
		By Lemma~\ref{lem:bonamibecknerdef} item 1, we have
		\[\br{\id-\T_{1-\gamma}}\br{Q[S]}=\br{1-\br{1-\gamma}^{\abs{S}}}Q[S].\]
		Using Proposition~\ref{prop:markovenfronstein} and Proposition~\ref{prop:markovoperatornorm},
		\begin{equation}\label{eqn:tqs}
			\twonorm{\T\br{Q[S]}}=\twonorm{\T\br{Q}[S]}\leq\rho^{\abs{S}}\twonorm{Q[S]},
		\end{equation}
		and $\innerproduct{\T\br{Q[S]}}{P[S']}=0$ if $S\neq S'$.
		Let $\T'\defeq\T\circ\br{\id_{\M\br{B}}-\T_{1-\gamma}}$.
		From Definition~\ref{def:markovoperator},
		\[\Tr\br{P\otimes\br{\id-\T_{1-\gamma}}\br{Q}}\psi_{AB}^{\otimes n}=\innerproduct{P}{\T'\br{Q}}\]	
		Combining Lemma~\ref{lem:bonamibecknerdef} and Eq.~\eqref{eqn:tqs} and the choice of $\gamma$,
		\[\twonorm{\T'\br{Q[S]}}\leq\min\br{\rho^{\abs{S}},1-\br{1-\gamma}^{\abs{S}}}\twonorm{Q[S]}\leq\epsilon\twonorm{Q[S]}.\]
		From Lemma~\ref{lem:bonamibecknerdef},  Proposition~\ref{prop:enfronsteinortho} and  Proposition~\ref{prop:markovenfronstein}, $\innerproduct{\T'\br{Q[S]}}{P[S']}=0$ if $S\neq S'$.
		Therefore,
		\begin{eqnarray*}
		&&\abs{\innerproduct{P}{\T'\br{Q}}}=\abs{\sum_{S\neq\emptyset}\innerproduct{P[S]}{\T'\br{Q[S]}}}\leq\sqrt{\var{P}}\sqrt{\frac{1}{2^n}\sum_{S\neq\emptyset}\twonorm{\T'\br{Q[S]}}^2}\\
		&\leq&\epsilon\sqrt{\var{P}}\sqrt{\frac{1}{2^n}\sum_{S\neq\emptyset}\twonorm{\br{Q[S]}}^2}\leq\epsilon\sqrt{\var{P}\var{Q}},
		\end{eqnarray*}
		where the last inequality is from the orthogonality of the Efron-Stein decomposition.
	\end{proof}	

We are now ready to prove Lemma~\ref{lem:smoothing of strategies}.

	\begin{proof}[Proof of Lemma~\ref{lem:smoothing of strategies}]
		Given $\rho$ and $\delta$, we choose $\epsilon=\delta/2$ and $\gamma$ in Lemma~\ref{lem:Tsmooth}. We choose $d$ sufficiently large such that $(1-\gamma)^{2d}\leq\delta$, that is, $d=\frac{\log\frac{1}{\delta}}{2\gamma}$. Given $P,Q$ as in Lemma~\ref{lem:smoothing of strategies}, we set $P^{(1)}\defeq\T_{1-\gamma}\br{P}$ and $Q^{(1)}\defeq\T_{1-\gamma}\br{Q}$. From Definition~\ref{def:bonamibeckner}, $0\leq P^{(1)}\leq 1$ and $0\leq Q^{(1)}\leq 1$. $\Tr~P^{(1)}=\Tr~P$ and $\Tr~Q^{(1)}=\Tr~Q$ follows by Lemma~\ref{lem:bonamibecknerdef} item 1. By Lemma~\ref{lem:bonamibecknerdef} item 2, $\nnorm{P^{(1)}}_2\leq\nnorm{P}_2$ and $ \nnorm{Q^{(1)}}_2\leq\nnorm{Q}_2$. From the inequalities that $\var{P}\leq\nnorm{P}_2^2\leq 1,\var{Q}\leq\nnorm{Q}_2^2\leq 1$ due to Lemma~\ref{lem:partialvariance} and Lemma~\ref{lem:Tsmooth}, we get
		\[\abs{\Tr\br{P\otimes Q}\psi_{AB}^{\otimes n}-\Tr\br{P^{(1)}\otimes Q^{(1)}}\psi_{AB}^{\otimes n}}\leq 2\epsilon=\delta.\]
		Also, $\widehat{P^{(1)}}\br{\sigma}=\br{1-\gamma}^{\abs{\sigma}}\widehat{P}\br{\sigma}$ and $\widehat{Q^{(1)}}\br{\sigma}=\br{1-\gamma}^{\abs{\sigma}}\widehat{Q}\br{\sigma}$. Thus, we get that
		\begin{eqnarray*}
		&&\sum_{\abs{\sigma}>d}\widehat{P^{(1)}}\br{\sigma}^2\leq\br{1-\gamma}^{2d}\sum_{\abs{\sigma}>d}\widehat{P}\br{\sigma}^2\leq\br{1-\gamma}^{2d}\nnorm{P}_2^2\leq\delta;\\
		&&\sum_{\abs{\sigma}>d}\widehat{Q^{(1)}}\br{\sigma}^2\leq\br{1-\gamma}^{2d}\sum_{\abs{\sigma}>d}\widehat{Q}\br{\sigma}^2\leq\br{1-\gamma}^{2d}\nnorm{Q}_2^2\leq\delta,
		\end{eqnarray*}
	where the second inequalities in both equations are from Fact~\ref{fac:basicfourier} item 3.
	\end{proof}

\section{Joint regularity lemma}\label{sec:jointregular}
In the proof the decidability of the classical non-interactive joint simulation~\cite{7782969,doi:10.1137/1.9781611975031.174,Ghazi:2018:DRP:3235586.3235614}, a subset $H$ of $[n]$ with bounded size are chosen such that all the coordinates not in $H$ are low influential even if the values of the coordinates in $H$ are fixed. However, the same strategy cannot be applied in the quantum setting because there is no common basis among the all coordinates. Instead of fixing values, we expand the operators in a proper chosen standard orthonormal basis. Before getting into the details, we introduce notion of {\em correlated matrices}.

	\begin{definition}\label{def:covariancematrix}
		Given quantum systems $A$ and $B$ with dimension $d_A$ and $d_B$, respectively, and a bipartite quantum state $\psi_{AB}$, Let $\A=\set{\A_i}_{i\in[d_A^2]_{\geq 0}}$  and $\B=\set{\B_i}_{i\in[d_B^2]_{\geq 0}}$ be standard orthonormal basis in the space $\br{\M\br{A},\psi_A}$ and $\br{\M\br{B},\psi_B}$ defined in Section~\ref{sec:markov}, respectively.
		The correlation matrix of $\br{\psi_{AB}, \A,\B}$ is defined as
		\[\mathsf{Corr}\br{\psi_{AB},\A,\B}_{i,j}\defeq\Tr\br{\A_i\otimes\B_j}\psi_{AB}.\]
		for $i\in[d_A]_{\geq 0}, j\in[d_B]_{\geq 0}$. The correlation matrix of $\br{\psi_{AB}^{\otimes n},\A,\B}$ is defined to be
		\[\mathsf{Corr}\br{\psi_{AB}^{\otimes n},\A,\B}_{\sigma,\tau}\defeq\Tr\br{\A_{\sigma}\otimes\B_{\tau}}\psi_{AB}^{\otimes n},\]
		for $\sigma\in[d_A]^n_{\geq 0}$ and $\tau\in[d_B]^n_{\geq 0}$.
	\end{definition}
The following lemma follows by the definition.
\begin{lemma}\label{lem:convariancetensor}
	Given an integer $n>0$, quantum systems $A$ and $B$ with dimension $d_A$ and $d_B$, respectively, and a bipartite quantum state $\psi_{AB}$, let $\A=\set{\A_i}_{i\in[d_A^2]_{\geq 0}}$  and $\B=\set{\B_i}_{i\in[d_B^2]_{\geq 0}}$ be standard orthonormal basis in $\br{\M\br{A},\psi_A}$ and $\br{\M\br{B},\psi_B}$, respectively. It holds that \[\mathsf{Corr}\br{\psi_{AB}^{\otimes n},\A,\B}=\mathsf{Corr}\br{\psi_{AB},\A,\B}^{\otimes n}.\]
\end{lemma}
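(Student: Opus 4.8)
The plan is to compute both sides entrywise on matching index pairs $\br{\sigma,\tau}$ with $\sigma\in[d_A^2]_{\geq 0}^n$, $\tau\in[d_B^2]_{\geq 0}^n$, and observe that the computation factorizes across the $n$ copies of $\psi_{AB}$. First I would recall from \Cref{def:covariancematrix} that the $\br{\sigma,\tau}$ entry of the left-hand side is $\Tr\br{\A_{\sigma}\otimes\B_{\tau}}\psi_{AB}^{\otimes n}$, where $\A_{\sigma}=\A_{\sigma_1}\otimes\cdots\otimes\A_{\sigma_n}$ and $\B_{\tau}=\B_{\tau_1}\otimes\cdots\otimes\B_{\tau_n}$, so that after the obvious reordering of tensor legs the operator $\A_{\sigma}\otimes\B_{\tau}$ lives on $\br{A\otimes B}^{\otimes n}$ with the factors interleaved in the pattern $A_1B_1\cdots A_nB_n$.

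Next I would permute the tensor slots so that the $i$-th copy of $\psi_{AB}$ is paired with $\A_{\sigma_i}\otimes\B_{\tau_i}$. Since this relabeling of identical tensor legs is implemented by a unitary and the trace is cyclic, it leaves the quantity unchanged, giving
\[\Tr\br{\A_{\sigma}\otimes\B_{\tau}}\psi_{AB}^{\otimes n}=\Tr\br{\bigotimes_{i=1}^n\br{\A_{\sigma_i}\otimes\B_{\tau_i}}\psi_{AB}}.\]
Now I would invoke the multiplicativity of the trace over a tensor product, $\Tr\br{X_1\otimes\cdots\otimes X_n}=\prod_{i=1}^n\Tr X_i$, with $X_i=\br{\A_{\sigma_i}\otimes\B_{\tau_i}}\psi_{AB}$, to conclude that this equals $\prod_{i=1}^n\Tr\br{\A_{\sigma_i}\otimes\B_{\tau_i}}\psi_{AB}=\prod_{i=1}^n\mathsf{Corr}\br{\psi_{AB},\A,\B}_{\sigma_i,\tau_i}$.

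Finally, by the definition of the Kronecker power of a matrix, the $\br{\sigma,\tau}$ entry of the right-hand side $\mathsf{Corr}\br{\psi_{AB},\A,\B}^{\otimes n}$ is exactly $\prod_{i=1}^n\mathsf{Corr}\br{\psi_{AB},\A,\B}_{\sigma_i,\tau_i}$, so the two matrices agree entrywise, which proves the claim. There is no genuine obstacle here; the only point needing a little care is the bookkeeping of the tensor-factor reordering, i.e.\ checking that conjugating $\psi_{AB}^{\otimes n}$ by the permutation sending the interleaved ordering $A_1B_1\cdots A_nB_n$ to the blocked ordering $A_1\cdots A_nB_1\cdots B_n$ does not change the relevant trace, which is immediate since such a permutation of equal-dimensional slots is unitary and the trace is invariant under conjugation.
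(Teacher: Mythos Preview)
Your proof is correct and is exactly what the paper has in mind: the paper simply remarks that the lemma ``follows by the definition,'' and your entrywise computation via multiplicativity of the trace over tensor products is the natural way to spell this out. One tiny wording quibble: the tensor-factor permutation you use is unitary regardless of whether the slots have equal dimension (the $A$ and $B$ factors need not), so you should drop ``equal-dimensional'' from your final sentence.
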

\begin{lemma}\label{lem:normofM}
	Given quantum systems $A$ and $B$ with dimension $d_A$ and $d_B$, respectively, and a noisy EPR state $\psi_{AB}$, for any standard orthonormal basis $\A=\set{\A_i}_{i\in[d_A^2]_{\geq 0}}$  and $\B=\set{\B_i}_{i\in[d_B^2]_{\geq 0}}$ in $\br{\M\br{A},\psi_A}$ and $\br{\M\br{B},\psi_B}$, respectively. It holds that $s_1\br{\mathsf{Corr}\br{\psi_{AB},\A,\B}}=1$ and $s_2\br{\mathsf{Corr}\br{\psi_{AB},\A,\B}}=\rho$, where $\rho=\rho\br{\psi_{AB}}$ and $s_i\br{\cdot}$ is the $i$-th largest singular value.

By Fact~\ref{fac:unitarybasis}, there exist standard orthonormal basis $\A=\set{\A_i}_{i\in[d_A^2]_{\geq 0}}$  and $\B=\set{\B_i}_{i\in[d_B^2]_{\geq 0}}$ in $\br{\M\br{A},\psi_A}$ and $\br{\M\br{B},\psi_B}$, respectively, such that
\[\mathsf{Corr}\br{\psi_{AB},\A,\B}_{i,j}=\begin{cases}c_i~&\mbox{if $i=j$}\\0~&\mbox{otherwise},\end{cases}\]
where $c_1=1, c_2=\rho\br{\psi_{AB}}$ and $c_1\geq c_2\geq c_3\geq\ldots$.
\end{lemma}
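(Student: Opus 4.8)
The plan is to reduce the statement to elementary linear algebra on the real matrix $M\defeq\mathsf{Corr}\br{\psi_{AB},\A,\B}$, following the mechanism behind \Cref{prop:maximalvariance} and \Cref{lem:jointbasis}. First I would observe that each entry $M_{i,j}=\Tr\br{\A_i\otimes\B_j}\psi_{AB}$ is the trace of a product of the Hermitian operators $\A_i\otimes\B_j$ and $\psi_{AB}$, hence real, so $M$ is a real $d_A^2\times d_B^2$ matrix. Next I would extract its block structure: since the bases are standard, $\A_0=\B_0=\id$, and by orthonormality in $\br{\M\br{A},\psi_A}$ and $\br{\M\br{B},\psi_B}$ one gets $M_{0,0}=\Tr\psi_{AB}=1$, $M_{i,0}=\Tr\A_i\psi_A=\innerproduct{\A_0}{\A_i}_{\psi_A}=0$ for $i\geq 1$ (using \Cref{fac:cauchyschwartz}), and likewise $M_{0,j}=0$ for $j\geq 1$. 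Hence $M$ is block-diagonal with blocks $\br{1}$ and the submatrix $M'$ indexed by $i,j\geq 1$, so the singular values of $M$ are $1$ together with those of $M'$.

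The key step is to identify $s_1\br{M'}=\rho\br{\psi_{AB}}$. Every $P\in\M\br{A}$ with $\Tr P\psi_A=0$ is uniquely $P=\sum_{i\geq 1}a_i\A_i$ with $\norm{P}_{\psi_A}^2=\sum_{i\geq 1}\abs{a_i}^2$, and similarly $Q=\sum_{j\geq 1}b_j\B_j$ with $\norm{Q}_{\psi_B}^2=\sum_{j\geq 1}\abs{b_j}^2$. Using $\A_i^{\dagger}=\A_i$ and that $M'_{i,j}$ is real, one computes $\Tr\br{P^{\dagger}\otimes Q}\psi_{AB}=\sum_{i,j\geq 1}\conjugate{a_i}b_j M'_{i,j}=a^{\dagger}M'b$. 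Therefore \Cref{def:maximalcorrelation} reads $\rho\br{\psi_{AB}}=\sup\set{\abs{a^{\dagger}M'b}:\twonorm{a}=\twonorm{b}=1}=\norm{M'}_{\infty}=s_1\br{M'}$. Combining with $\rho\br{\psi_{AB}}\leq 1$ from \Cref{fac:maximalcorrlationone}, the singular values of $M$ are $1$ and those of $M'$, the latter all at most $\rho\br{\psi_{AB}}\leq 1$, so $s_1\br{M}=1$ and $s_2\br{M}=\rho\br{\psi_{AB}}$. Since the answer does not reference any particular basis, it holds for every standard orthonormal choice (concretely, changing bases replaces $M'$ by $O_A^{T}M'O_B$ for orthogonal $O_A,O_B$ via \Cref{fac:unitarybasis}, preserving singular values and the block structure).

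For the second part I would diagonalise $M'$. Take a real singular value decomposition $M'=U\Sigma V^{T}$ with $U,V$ orthogonal and $\Sigma$ rectangular diagonal with non-increasing non-negative entries, the first being $s_1\br{M'}=\rho\br{\psi_{AB}}$ by the previous step. Set $\A'_0\defeq\id$, $\A'_i\defeq\sum_{i'\geq 1}U_{i',i}\A_{i'}$ for $i\geq 1$, and analogously $\B'_0\defeq\id$, $\B'_j\defeq\sum_{j'\geq 1}V_{j',j}\B_{j'}$; by \Cref{fac:unitarybasis} these are standard orthonormal bases in $\br{\M\br{A},\psi_A}$ and $\br{\M\br{B},\psi_B}$. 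A direct computation gives $\mathsf{Corr}\br{\psi_{AB},\A',\B'}_{i,j}=\br{U^{T}M'V}_{i,j}=\delta_{i,j}\Sigma_{i,i}$ for $i,j\geq 1$, while the $0$-th row and column retain $M_{0,0}=1$ and $M_{0,j}=M_{i,0}=0$; hence $\mathsf{Corr}\br{\psi_{AB},\A',\B'}$ is diagonal with $c_1=1$, $c_2=\rho\br{\psi_{AB}}$, and $c_3\geq c_4\geq\cdots\geq 0$ the remaining singular values of $M'$. Alternatively, since $\psi_A=\psi_B=\tfrac{\id_2}{2}$ for a noisy EPR state, one may invoke \Cref{lem:jointbasis} directly and then read off from the first two parts that its nonzero diagonal entries are exactly the singular values of $M$, namely $1$ and those of $M'$.

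The only genuinely delicate point is the middle step: one must faithfully match the optimisation domain of \Cref{def:maximalcorrelation}—operators of unit $\psi$-norm with vanishing $\psi$-mean—with the unit sphere in coefficient space, keeping careful track of the Hermiticity of the basis elements and of the distinction between the $\psi_A$-, $\psi_B$-inner products and the normalised trace inner product. Everything else is routine bookkeeping with the orthonormality relations and \Cref{fac:unitarybasis}.
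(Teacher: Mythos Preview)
Your proposal is correct and follows essentially the same route as the paper: establish the block form of $M$ from the standard-basis conditions, relate $\rho\br{\psi_{AB}}$ to the top singular value of the lower block $M'$, and then diagonalise via a real SVD and \Cref{fac:unitarybasis}. Your argument is in fact slightly more streamlined, since you obtain $s_1\br{M'}=\rho$ in one shot from the variational form $\rho=\sup_{\twonorm{a}=\twonorm{b}=1}\abs{a^{\dagger}M'b}$, whereas the paper proves the two inequalities $s_2\br{M}\leq\rho$ and $s_2\br{M}\geq\rho$ separately (the former via the SVD basis, the latter by extending a Hermitian optimiser from \Cref{prop:maximalvariance} to a full basis).
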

\begin{proof}
	We assume that both the dimensions of $A$ and $B$ are $d$. The arguments are similar when the dimensions of the two systems are different. Let $M\defeq\mathsf{Corr}\br{\psi_{AB},\A,\B}$. It is easy to verify that $M_{0,i}=M_{i,0}=0$ for all $i\in[d^2]_{\geq 0}$. Note that $M$ is a real matrix. Thus we set $M=U^{\dagger}DV$ to be a singular value decomposition of $M$ where $U$ and $V$ are orthogonal matrices and $U_{0,0}=V_{0,0}=1, U_{0,i}=U_{i,0}=V_{0,i}=V_{i,0}=0$ for $1\leq i\leq d^2-1$. Let $\P_i\defeq\sum_{j=1}^{d^2-1}U_{ij}\A_j$ and $\Q_i\defeq\sum_{j=1}^{d^2-1}V_{ji}^{\dagger}\B_j$ for $1\leq i\leq d^2-1$. Then
	$\Tr~\P_i=\Tr~\Q_i=0$. $\var{\P_i}=\var{\Q_i}=1$. Thus by the definition of the quantum maximal correlation,
	\begin{eqnarray*}
		\rho&\geq&\Tr~\br{\P_i\otimes \Q_{i'}}\psi_{AB}\\
		&=&\sum_{j,k=1}^{d^2-1}U_{ij}V^{\dagger}_{ki'}\br{\A_j\otimes\B_k}\psi_{AB}\\
		&=&\sum_{j,k=1}^{d^2-1}U_{ij}V_{ki'}^{\dagger}M_{jk}=\br{UMV^{\dagger}}_{ii'}\\
		&=&\delta_{i,i'}D_{ii}.
	\end{eqnarray*}
Hence $\norm{M}=1$ and $s_2\br{M}\leq\rho$.

From Proposition~\ref{prop:maximalvariance}, we assume that $P, Q$ are two Hermitian operators in $\H\br{A}, \H\br{B}$ which achieve $\rho$. Then from Definition~\ref{def:maximalcorrelation}, both $\set{\id_{d_A}, P}$ and $\set{\id_{d_B}, Q}$ can be extended to orthonormal basis, say $\set{\P_i}_{i=0}^{d^2-1}$ and $\set{\Q_i}_{i=0}^{d^2-1}$ where $\P_1=P$ and $\Q_1=Q$. Let $M$ be the corresponding correlated matrix. Then $M\br{1,1}=1, M\br{2,2}=\rho$. Thus, $s_2\br{M}=\rho$.
\end{proof}

We reach the main lemma in this section.
\begin{lemma}\label{lem:regular}
	Given a noisy EPR state $\psi_{AB}$ and operators $P\in\H_2^{\otimes n}, Q\in\H_2^{\otimes n}$, and parameters $d,\delta,\epsilon>0$ satisfying that $\nnorm{P}_2\leq 1, \nnorm{Q}_2\leq 1$ and $\nnorm{P^{>d}}^2_2\leq\delta$ and $\nnorm{Q^{>d}}^2_2\leq\delta$, let $\br{c_i}_{i=0}^3$ be the singular values of $\mathsf{Corr}\br{\psi_{AB}}$ in non-increasing order and $\A=\set{\A_i}_{i=0}^3$ and $\B=\set{\B_i}_{i=0}^3$ be the standard orthonormal basis induced by Lemma~\ref{lem:normofM}. Then there exists a subset $H\subseteq[n]$ of size $h\defeq\abs{H}\leq\frac{2d}{\epsilon}$ such that for any $i\notin H$, $\influence_i\br{P^{\leq d}}\leq\epsilon$, $\influence_i\br{Q^{\leq d}}\leq\epsilon$ and
	\[\Tr\br{P\otimes Q}\psi_{AB}^{\otimes n}=\sum_{\sigma\in[4]_{\geq 0}^h}c_{\sigma}\Tr\br{P_{\sigma}\otimes Q_{\sigma}}\psi_{AB}^{\otimes(n-h)},\]
	where $c_{\alpha}\defeq\prod_{i=1}^tc_{\alpha_i}$ for any $\alpha\in[4]_{\geq 0}^h$;
	and
	$$P_{\sigma}=\sum_{\tau\in[4]_{\geq 0}^n:\tau_H=\sigma}\widehat{P}\br{\tau}\A_{\tau}$$
	and
	$$Q_{\sigma}=\sum_{\tau\in[4]_{\geq 0}^n:\tau_H=\sigma}\widehat{Q}\br{\tau}\B_{\tau}.$$
\end{lemma}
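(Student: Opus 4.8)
The plan is to establish the two conclusions separately, since they are logically independent: the displayed trace identity holds for \emph{any} subset $H$ and is pure Fourier bookkeeping, whereas the low-influence property of the coordinates outside $H$ comes from the standard ``total influence is bounded by the degree'' argument, transplanted to operators.

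First I would rewrite the trace identity in Fourier form. By Lemma~\ref{lem:normofM}, the basis $\A,\B$ is chosen precisely so that $\mathsf{Corr}(\psi_{AB},\A,\B)$ is diagonal with diagonal entries $c_0\ge c_1\ge c_2\ge c_3$, and by Lemma~\ref{lem:convariancetensor} the $n$-fold correlation matrix is the $n$-th tensor power of this one; hence $\Tr(\A_\sigma\otimes\B_\tau)\psi_{AB}^{\otimes n}=c_\sigma$ if $\sigma=\tau$ and $0$ otherwise, where $c_\sigma=\prod_i c_{\sigma_i}$. Expanding $P=\sum_\sigma\widehat P(\sigma)\A_\sigma$ and $Q=\sum_\tau\widehat Q(\tau)\B_\tau$ and using bilinearity of the form $(P,Q)\mapsto\Tr(P\otimes Q)\psi_{AB}^{\otimes n}$ yields
\[\Tr(P\otimes Q)\psi_{AB}^{\otimes n}=\sum_{\sigma\in[4]_{\ge 0}^n}c_\sigma\,\widehat P(\sigma)\widehat Q(\sigma).\]
Then, for an arbitrary $H\subseteq[n]$, I would split each $\sigma$ as $(\sigma_H,\sigma_{H^c})$, use the factorization $c_\sigma=c_{\sigma_H}c_{\sigma_{H^c}}$, and run the same computation backwards on the $n-h$ coordinates outside $H$: the inner sum $\sum_{\sigma_{H^c}}c_{\sigma_{H^c}}\widehat P(\sigma_H,\sigma_{H^c})\widehat Q(\sigma_H,\sigma_{H^c})$ is exactly $\Tr(P_{\sigma_H}\otimes Q_{\sigma_H})\psi_{AB}^{\otimes(n-h)}$ once $P_{\sigma_H}$ and $Q_{\sigma_H}$ are read as the operators on the $H^c$-coordinates that their defining formulas prescribe. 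This gives the displayed identity for every $H$; the choice of $H$ is irrelevant here and only enters the influence bound.

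Next I would choose $H$. Applying Lemma~\ref{lem:partialvariance} item 4 to the low-degree truncations gives $\influence(P^{\le d})\le\deg(P^{\le d})\,\nnorm{P^{\le d}}_2^2\le d\,\nnorm{P}_2^2\le d$, using $\nnorm{P^{\le d}}_2\le\nnorm P_2\le1$ (Parseval), and likewise $\influence(Q^{\le d})\le d$. Define $H\defeq\{i\in[n]:\influence_i(P^{\le d})>\epsilon\text{ or }\influence_i(Q^{\le d})>\epsilon\}$. By Markov's inequality $|H|\le\influence(P^{\le d})/\epsilon+\influence(Q^{\le d})/\epsilon\le 2d/\epsilon$, and by construction $\influence_i(P^{\le d})\le\epsilon$ and $\influence_i(Q^{\le d})\le\epsilon$ for every $i\notin H$. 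I would note here that $P^{\le d}$ and the influences $\influence_i(\cdot)$ do not depend on the chosen basis (Lemma~\ref{lem:pt} and the partial-trace definition of $\text{Var}_i$), so there is no tension between working in the specific basis $\A,\B$ and controlling influences; and that the hypotheses $\nnorm{P^{>d}}_2^2\le\delta$, $\nnorm{Q^{>d}}_2^2\le\delta$ play no role in this lemma itself, being carried along only for use in the subsequent invariance step.

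The technical content is modest; the only place that needs genuine care is the regrouping step --- keeping the factorization $c_\sigma=c_{\sigma_H}c_{\sigma_{H^c}}$ consistent with the definitions of $P_\sigma,Q_\sigma$ and making sure the off-diagonal cross terms (those with $\sigma\ne\tau$ on the $H$-coordinates) really do vanish when one re-expands, which is exactly where the diagonal form of $\mathsf{Corr}$ from Lemma~\ref{lem:normofM} is used. Compared with the classical regularity lemma the novelty is conceptual rather than technical: there is no ``conditioning on the values of the coordinates in $H$'', since distinct coordinates share no common eigenbasis; the role of conditioning is played by expanding $P$ (resp.\ $Q$) in the basis $\A$ (resp.\ $\B$) on the $H$-coordinates, which is precisely what $P_\sigma,Q_\sigma$ encode.
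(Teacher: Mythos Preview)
Your proposal is correct and follows essentially the same approach as the paper: define $H$ as the set of coordinates where either $\influence_i(P^{\le d})$ or $\influence_i(Q^{\le d})$ exceeds $\epsilon$, bound $|H|$ via the total-influence inequality $\influence(\cdot)\le d\cdot\nnorm{\cdot}_2^2$ from Lemma~\ref{lem:partialvariance} item~4, and read off the trace identity by expanding $P,Q$ in the diagonalizing bases $\A,\B$. The paper's proof is in fact a two-line sketch (``Set $H=\{\ldots\}$; from Lemma~\ref{lem:partialvariance} item~4, $|H|\le 2d/\epsilon$; expanding $P$ and $Q$ in terms of $\A$ and $\B$ we conclude the result''), so your write-up supplies exactly the Fourier bookkeeping the paper leaves implicit.
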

\begin{proof}
	Set $H=\set{i:\influence_i\br{P^{\leq d}}\geq\epsilon~\mbox{or}~\influence_i\br{Q^{\leq d}}\geq\epsilon}$. From Lemma~\ref{lem:partialvariance} item 4, $\abs{H}\leq\frac{2d}{\epsilon}$.  Expanding $P$ and $Q$ in terms of the basis $\A$ and $\B$, respectively, we conclude the result.
\end{proof}

\section{Fr\'echet Derivative and Taylor expansion}\label{sec:derivative}

In this section, we derive a Taylor expansion of matrix functions, for which we adopts Fr\'echet derivatives.  The Fr\'echet derivatives are derivatives defined on Banach spaces. In this paper, we only concern about the Fr\'echet derivatives on matrix spaces. Readers may refer to~\cite{Coleman} for a more thorough treatment.

\begin{definition}\label{def:frechetderivative}
	Given a map $f:\M_d\rightarrow\M_d$ and Hermitian matrices $P, Q$, the Fr\'echet derivative of $f$ at $P$ with direction $Q$ is defined to be
	\[Df\br{P}\br{Q}\defeq\frac{d}{dt}f\br{P+tQ}|_{t=0}.\]
	The $k$-th order Fr\'echet derivative of $f$ at $P$ with direction $\br{Q_1,\ldots, Q_k}$ is defined to be
	\[D^kf\br{P}\br{Q_1,\ldots, Q_k}\defeq\frac{d}{dt}D^{k-1}f\br{P+tQ_k}\br{Q_1,\ldots, Q_{k-1}}|_{t=0}.\]
\end{definition}

The Fr\'echet derivatives share many common properties with the derivatives in Euclidean spaces, such as linearity, composition rules, etc. The most basic properties are summarized in Appendix~\ref{sec:frechet}.

The function we are interested in this paper is the function $\zeta:\reals\rightarrow\reals$ defined as follows, which was also studied in~\cite{Mossel:2010,MosselOdonnell:2010}.

	\begin{eqnarray}
	&&\zeta\br{x}\defeq\begin{cases}x^2~&\mbox{if $x\leq 0$}\\ (x-1)^2~&\mbox{if $x\geq 1$}\\ 0~&\mbox{otherwise}\end{cases}.\label{eqn:zeta}
	\end{eqnarray}

The main reason to study $\zeta\br{\cdot}$ is that $\Tr~\zeta\br{\cdot}$ characterizes the minimum $\twonorm{\cdot}$-distance between a Hermitian matrix and the set of all positive semidefinite matrices with eigenvalues at most $1$.
\begin{lemma}\label{lem:closedelta}
	Given integer $d>0$, $M\in\H_d$, $\Delta=\set{X\in\H_d:0\leq X\leq\id}$, let $\R$ be the rounding map of $\Delta$ with respect to the distance $\twonorm{\cdot}$. It holds that
	\[\Tr~\zeta\br{M}=\twonorm{M-\R\br{M}}^2.\]
\end{lemma}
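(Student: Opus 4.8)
The plan is to identify the rounding map $\R$ explicitly on Hermitian matrices: it acts by truncating each eigenvalue into $[0,1]$, and $\Tr~\zeta\br{M}$ is precisely the resulting squared $\twonorm{\cdot}$-cost.

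First I would fix the setting. Viewing $\H_d$ as a real Hilbert space with inner product $\br{A,B}\mapsto\Tr\br{AB}$ (so that $\twonorm{M}^2=\Tr\br{M^2}$ for $M\in\H_d$, the singular values of a Hermitian matrix being the moduli of its eigenvalues), the set $\Delta=\set{X\in\H_d:0\leq X\leq\id}$ is closed and convex, so $\R$ is well defined. Fix a spectral decomposition $M=\sum_i\lambda_i\ketbra{v_i}$, write $\widetilde{x}\defeq\min\set{1,\max\set{0,x}}$ for the truncation of $x\in\reals$ to $[0,1]$, and set $N\defeq\sum_i\widetilde{\lambda_i}\ketbra{v_i}$, which lies in $\Delta$.

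Next I would show $N=\R\br{M}$. For any $X\in\Delta$, writing $M-X=\br{M-N}-\br{X-N}$ and using cyclicity of the trace gives
\[\twonorm{M-X}^2=\twonorm{M-N}^2+\twonorm{X-N}^2-2\Tr\br{\br{M-N}\br{X-N}},\]
so it is enough to show $\Tr\br{\br{M-N}\br{X-N}}\leq0$. Since $M-N=\sum_i\br{\lambda_i-\widetilde{\lambda_i}}\ketbra{v_i}$,
\[\Tr\br{\br{M-N}\br{X-N}}=\sum_i\br{\lambda_i-\widetilde{\lambda_i}}\br{\bra{v_i}X\ket{v_i}-\widetilde{\lambda_i}},\]
and I would bound each summand by cases: if $\lambda_i<0$ then $\lambda_i-\widetilde{\lambda_i}=\lambda_i<0$ while $\bra{v_i}X\ket{v_i}-\widetilde{\lambda_i}=\bra{v_i}X\ket{v_i}\geq0$ since $X\geq0$; if $\lambda_i>1$ then $\lambda_i-\widetilde{\lambda_i}=\lambda_i-1>0$ while $\bra{v_i}X\ket{v_i}-\widetilde{\lambda_i}=\bra{v_i}X\ket{v_i}-1\leq0$ since $X\leq\id$; and if $0\leq\lambda_i\leq1$ the summand is $0$. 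Hence every summand is $\leq0$, so $\twonorm{M-X}^2\geq\twonorm{M-N}^2$ for every $X\in\Delta$, with equality only when $X=N$; thus $\R\br{M}=N$.

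Finally I would conclude by a pointwise identity. From $\R\br{M}=N$ we get
\[\twonorm{M-\R\br{M}}^2=\Tr\br{\br{M-N}^2}=\sum_i\br{\lambda_i-\widetilde{\lambda_i}}^2,\]
and for every real $x$ a direct check shows $\br{x-\widetilde{x}}^2$ equals $x^2$ for $x\leq0$, $\br{x-1}^2$ for $x\geq1$, and $0$ otherwise, i.e. $\br{x-\widetilde{x}}^2=\zeta\br{x}$ by Eq.~\eqref{eqn:zeta}. Therefore $\twonorm{M-\R\br{M}}^2=\sum_i\zeta\br{\lambda_i}=\Tr~\zeta\br{M}$ by the functional-calculus convention for $\zeta$ in Subsection~\ref{subsec:misc}, which is the claim. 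The only step needing care is the sign bookkeeping in the case analysis; an alternative derivation of $\twonorm{M-X}^2\geq\Tr~\zeta\br{M}$ uses von Neumann's trace inequality $\Tr\br{MX}\leq\sum_i\lambda_i^{\downarrow}\mu_i^{\downarrow}$ (with $\mu_i$ the eigenvalues of $X$) to get $\twonorm{M-X}^2\geq\sum_i\br{\lambda_i^{\downarrow}-\mu_i^{\downarrow}}^2$ and then minimizes each term over $\mu_i^{\downarrow}\in[0,1]$.
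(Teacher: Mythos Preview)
Your proof is correct and reaches the same conclusion as the paper's: the rounding map acts by eigenvalue truncation, after which the identity $\br{x-\widetilde{x}}^2=\zeta(x)$ finishes things. The verification differs slightly in method. The paper assumes $M$ diagonal, lets $X_0$ be the minimizer, and argues via Schur--Horn majorization that $X_0$ must also be diagonal. You instead \emph{construct} the candidate $N$ and verify it is the projection by the standard obtuse-angle criterion $\Tr\br{(M-N)(X-N)}\leq 0$ for all $X\in\Delta$, checked termwise on eigenvalues. Your route is a bit more direct and self-contained (no majorization facts needed); the paper's route is the more ``matrix-analytic'' one. Either way the substance is the same.
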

\begin{proof}
	Without loss of generality, we assume that $M$ is diagonal. Let $$X_0=\arg\min\set{\twonorm{M-X}^2:X\in\Delta}.$$
	The lemma follows by easy calculation if $X_0$ is also diagonal. We now show that $X_0$ is indeed diagonal. Note that	\[\twonorm{M-X_0}^2=\Tr~X_0^2+\Tr~M^2-2\sum_i\lambda_i\br{M}X_0\br{i,i}.\]
	It is known that $\br{X_0\br{1,1},\ldots, X_0\br{d,d}}$ is majorized by $\br{\lambda_1\br{X},\ldots,\lambda_n\br{X}}$~\cite{Bhatia}. Namely, $\sum_{j=1}^i\lambda_j\br{X}\geq\sum_{j=1}^iX_0\br{j,j}$. Note that $X_0\geq 0$. It is easy to verify that
	\[\sum_i\lambda_i\br{M}X_0\br{i,i}\leq\sum_i\lambda_i\br{M}\lambda_i\br{X}.\]
	The equality is achieved only if $X_0$ is also diagonal.
	
\end{proof}
Note that the function $\zeta$ is in $\C^1$ but not in $\C^2$. We define a $\C^2$-approximation of $\zeta$ in the following, whose Fr\'echet derivatives are easier to calculate comparing with the  $\C^{\infty}$-approximation considered in~\cite{Mossel:2010,MosselOdonnell:2010}.
	For any $0\leq\lambda<1$, define $\zeta_{\lambda}:\reals\rightarrow\reals$ to be \footnote{The definition of $\zeta_{\lambda}$ is derived from the following construction.

 \[\psi\br{x}\defeq\begin{cases}
 \frac{1}{2}~&\mbox{if $-1\leq x\leq 1$}\\
 0~&\mbox{otherwise.}
 \end{cases}\]
 $f\br{x}=x\cdot\id_{x\geq 0}$.
 \[\psi_{\lambda}\br{x}\defeq\psi\br{x/\lambda}/\lambda,\]
 for $0<\lambda\leq\frac{1}{2}$.
 \[f_{\lambda}\defeq f*\psi_{\lambda}.\]
 \[\zeta_{\lambda}\br{x}\defeq\begin{cases}
 2\int_{-\infty}^{-x}f_{\lambda}\br{t}dt~&\mbox{if $x\leq1/2$}\\
 2\int_{-\infty}^{x-1}f_{\lambda}\br{t}dt~&\mbox{if $x\geq1/2$}.
 \end{cases}\]
 }

	\begin{eqnarray}
	&&\zeta_{\lambda}\br{x}\defeq\begin{cases}
	x^2+\frac{1}{3}\lambda^2~&\mbox{if $x\leq-\lambda$}\\
	\frac{\br{\lambda-x}^3}{6\lambda}~&\mbox{if $-\lambda\leq x\leq \lambda$}\\
	0~&\mbox{if $\lambda\leq x\leq 1-\lambda$}\\
	\frac{\br{x-1+\lambda}^3}{6\lambda}~&\mbox{if $1-\lambda\leq x\leq1+\lambda$}\\
	\br{1-x}^2+\frac{1}{3}\lambda^2~&\mbox{if $x\geq1+\lambda$}.
	\end{cases}\label{eqn:zetalambda}
	\end{eqnarray}

The following lemma can be verified by elementary calculus.

 \begin{lemma}\label{lem:zeta}
 	For any $0<\lambda<1$, it holds that
 	\begin{enumerate}
 		\item $\norm{\zeta_{\lambda}-\zeta}_{\infty}\leq 4\lambda^2.$
 		\item $\zeta_{\lambda}\in\mathcal{C}^2$. $\zeta_{\lambda}''$ is a piecewisely linear function. $\zeta'''_{\lambda}\br{\cdot}$ exists in $\reals$ except for finite points. And $\abs{\zeta'''_{\lambda}\br{x}}\leq \frac{1}{\lambda}$ for any $x$ where $\zeta'''_{\lambda}\br{x}$ exists. 	
 	\end{enumerate}
 \end{lemma}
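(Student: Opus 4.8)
The statement is pure elementary calculus, so the plan is to differentiate $\zeta_\lambda$ on each of the five intervals of \eqref{eqn:zetalambda} and check the gluing at the breakpoints. Throughout one may assume $0<\lambda\le\frac12$, as in the footnote that defines $\zeta_\lambda$, so that the five intervals are non-degenerate and non-overlapping; the constant $4$ is deliberately loose to absorb this. For the first item, note that both $\zeta$ and $\zeta_\lambda$ are invariant under the reflection $x\mapsto 1-x$, so it suffices to bound $\abs{\zeta_\lambda\br{x}-\zeta\br{x}}$ for $x\le\frac12$. On $x\le-\lambda$ the difference equals exactly $\frac13\lambda^2$; on $\lambda\le x\le\frac12$ both functions vanish; and on $-\lambda\le x\le\lambda$ one has $0\le\frac{\br{\lambda-x}^3}{6\lambda}\le\frac{\br{2\lambda}^3}{6\lambda}=\frac43\lambda^2$ together with $0\le\zeta\br{x}\le\lambda^2$, so $\abs{\zeta_\lambda\br{x}-\zeta\br{x}}\le\frac43\lambda^2$. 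Taking the maximum over the three cases gives $\norm{\zeta_\lambda-\zeta}_\infty\le\frac43\lambda^2\le 4\lambda^2$.

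For the second item I would first record, by differentiating \eqref{eqn:zetalambda} piece by piece (taking care with the sign of the chain rule on $\frac{\br{\lambda-x}^3}{6\lambda}$), that $\zeta_\lambda'$ equals $2x$ on $(-\infty,-\lambda]$, $-\frac{\br{\lambda-x}^2}{2\lambda}$ on $[-\lambda,\lambda]$, $0$ on $[\lambda,1-\lambda]$, $\frac{\br{x-1+\lambda}^2}{2\lambda}$ on $[1-\lambda,1+\lambda]$, and $2\br{x-1}$ on $[1+\lambda,\infty)$; and that $\zeta_\lambda''$ equals $2$ on $(-\infty,-\lambda]$ and on $[1+\lambda,\infty)$, the affine function $\frac{\lambda-x}{\lambda}$ (decreasing from $2$ to $0$) on $[-\lambda,\lambda]$, $0$ on $[\lambda,1-\lambda]$, and the affine function $\frac{x-1+\lambda}{\lambda}$ (increasing from $0$ to $2$) on $[1-\lambda,1+\lambda]$. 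Direct substitution shows the one-sided values of $\zeta_\lambda$, $\zeta_\lambda'$, $\zeta_\lambda''$ agree at each breakpoint $-\lambda,\lambda,1-\lambda,1+\lambda$ (for instance at $x=-\lambda$ both branches give $\zeta_\lambda=\frac43\lambda^2$, $\zeta_\lambda'=-2\lambda$, $\zeta_\lambda''=2$), so $\zeta_\lambda\in\C^2$ and $\zeta_\lambda''$ is a continuous, piecewise-linear function with slopes in $\set{0,-1/\lambda,1/\lambda}$. It follows that $\zeta_\lambda'''$ exists and is constant on each of the open intervals $(-\infty,-\lambda)$, $(-\lambda,\lambda)$, $(\lambda,1-\lambda)$, $(1-\lambda,1+\lambda)$, $(1+\lambda,\infty)$, with values $0,-1/\lambda,0,1/\lambda,0$ respectively; hence $\zeta_\lambda'''$ fails to exist only at the four breakpoints and $\abs{\zeta_\lambda'''\br{x}}\le 1/\lambda$ wherever it exists.

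There is no genuine obstacle here: the only care needed is the bookkeeping of the five branches, the signs in the chain rule, and verifying the twelve one-sided matching conditions (value, first derivative, second derivative at each of the four breakpoints). I would organize the write-up as a single table listing $\zeta_\lambda$, $\zeta_\lambda'$, $\zeta_\lambda''$, $\zeta_\lambda'''$ on the five pieces, followed by the breakpoint checks, and then read off both claims from it.
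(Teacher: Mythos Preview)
Your proposal is correct and is exactly the elementary-calculus verification the paper has in mind; the paper itself simply asserts that the lemma ``can be verified by elementary calculus'' without spelling anything out, so your piecewise differentiation, breakpoint matching, and symmetry reduction are precisely what is needed. Your remark that one should take $0<\lambda\le\tfrac12$ (as in the footnote defining $\zeta_\lambda$) is well placed, since otherwise the five intervals in \eqref{eqn:zetalambda} overlap.
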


 \begin{lemma}\label{lem:zetataylor}
  For any Hermitian $P, Q$, $0<\lambda<1/2$, it holds that
 	\begin{equation}\label{eqn:zetataylor}
 	\Tr~\zeta_{\lambda}\br{P+ Q}=\Tr~\zeta_{\lambda}\br{P}+\Tr~ D\zeta_{\lambda}\br{P}\br{Q}+\frac{1}{2}\Tr~D^2\zeta_{\lambda}\br{P}\br{Q}+O\br{\frac{\twonorm{Q}\norm{Q}_4^2}{\lambda}}.
 	\end{equation}
 \end{lemma}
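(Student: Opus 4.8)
The plan is to obtain the Taylor expansion by applying the standard second-order Taylor formula with integral remainder to the scalar function $t\mapsto\Tr\,\zeta_\lambda\br{P+tQ}$ on $[0,1]$, and then to bound the remainder using the basic facts about Fr\'echet derivatives collected in Appendix~\ref{sec:frechet} together with the bound $\abs{\zeta'''_\lambda\br{x}}\leq 1/\lambda$ from Lemma~\ref{lem:zeta}. First I would set $\phi\br{t}\defeq\Tr\,\zeta_\lambda\br{P+tQ}$. Since $\zeta_\lambda\in\C^2$ and its third derivative exists off a finite set with $\abs{\zeta'''_\lambda}\leq1/\lambda$, the map $t\mapsto\zeta_\lambda\br{P+tQ}$ is twice continuously differentiable in the Fr\'echet sense with
\[\phi'\br{t}=\Tr\,D\zeta_\lambda\br{P+tQ}\br{Q},\qquad \phi''\br{t}=\Tr\,D^2\zeta_\lambda\br{P+tQ}\br{Q,Q},\]
and $\phi''$ is Lipschitz on $[0,1]$ (it is piecewise-$\C^1$), hence absolutely continuous, so the exact Taylor formula
\[\phi\br{1}=\phi\br{0}+\phi'\br{0}+\tfrac12\phi''\br{0}+\tfrac12\int_0^1\br{1-t}^2\phi'''\br{t}\,dt\]
holds wherever $\phi'''$ exists, which is all but finitely many $t$. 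This immediately yields the three named terms on the right-hand side of Eq.~\eqref{eqn:zetataylor}; it remains to show the remainder is $O\br{\twonorm{Q}\norm{Q}_4^2/\lambda}$.

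The key step is therefore to bound $\abs{\phi'''\br{t}}$. Writing $R\defeq P+tQ$, I would use the spectral (divided-difference) representation of higher Fr\'echet derivatives of a scalar function applied to a Hermitian matrix: if $R=\sum_i\mu_i\ketbra{v_i}$, then
\[\Tr\,D^3\zeta_\lambda\br{R}\br{Q,Q,Q}=\sum_{i,j,k}\zeta_\lambda^{[3]}\br{\mu_i,\mu_j,\mu_k}\,\bra{v_i}Q\ket{v_j}\bra{v_j}Q\ket{v_k}\bra{v_k}Q\ket{v_i},\]
where $\zeta_\lambda^{[3]}$ is the third divided difference, which is bounded in absolute value by $\tfrac16\sup\abs{\zeta'''_\lambda}\leq\tfrac1{6\lambda}$ (this is the standard bound on divided differences of a $\C^3$-off-finitely-many-points function; the finitely many non-smooth points do not affect the $\sup$ of the third divided difference by a limiting argument, since $\zeta''_\lambda$ is globally Lipschitz with constant $1/\lambda$). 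Hence, in the eigenbasis of $R$, with $Q_{ij}\defeq\bra{v_i}Q\ket{v_j}$,
\[\abs{\phi'''\br{t}}\;\leq\;\frac{1}{6\lambda}\sum_{i,j,k}\abs{Q_{ij}}\,\abs{Q_{jk}}\,\abs{Q_{ki}}.\]
Now I would estimate $\sum_{i,j,k}\abs{Q_{ij}}\abs{Q_{jk}}\abs{Q_{ki}}$. One clean way: bound one factor by $\twonorm{Q}$ after a Cauchy--Schwarz step, since $\sum_{i,j,k}\abs{Q_{ij}}\abs{Q_{jk}}\abs{Q_{ki}}\le \twonorm{Q}\cdot\br{\sum_{i,j}\abs{Q_{ij}}^2\cdot\sum_{j,k}\abs{Q_{jk}}^2}^{1/2}$ is too lossy; instead I would use the Hadamard/Schur-product viewpoint: $\sum_{i,j,k}\abs{Q_{ij}}\abs{Q_{jk}}\abs{Q_{ki}}=\Tr\,\mathsf{abs}\br{Q}^3$ where $\mathsf{abs}\br{Q}$ has entries $\abs{Q_{ij}}$, and $\Tr\,\mathsf{abs}\br{Q}^3\le\norm{\mathsf{abs}\br{Q}}_\infty\cdot\norm{\mathsf{abs}\br{Q}}_2^2$. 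Then $\norm{\mathsf{abs}\br{Q}}_2=\norm{Q}_2$ and $\norm{\mathsf{abs}\br{Q}}_\infty\le\norm{Q}_2$ as well, which only gives $\twonorm{Q}^3$; to get the sharper $\twonorm{Q}\norm{Q}_4^2$ I would instead bound $\Tr\,\mathsf{abs}\br{Q}^3\le\norm{\mathsf{abs}\br{Q}}_\infty\cdot\norm{\mathsf{abs}\br{Q}}_2\cdot\norm{\mathsf{abs}\br{Q}}_2\le \twonorm{Q}\,\norm{Q}_2^2$, and then observe that in our application $Q$ acts on a space of fixed dimension $2^{h+\cdots}$ built into the constants, so $\norm{Q}_2\le c\,\norm{Q}_4$ with $c$ absorbing the dimension; more honestly, I would directly prove $\sum_{i,j,k}\abs{Q_{ij}}\abs{Q_{jk}}\abs{Q_{ki}}\le\twonorm{Q}\norm{Q}_4^2$ by writing the sum as $\innerproduct{\mathsf{abs}\br{Q}^2}{\mathsf{abs}\br{Q}}$ in Hilbert--Schmidt inner product, applying Cauchy--Schwarz to split off $\norm{\mathsf{abs}\br{Q}^2}_{2,\text{op-free}}$, and using $\norm{\mathsf{abs}\br{Q}}_4\le\norm{Q}_4$ (Schur-product norm monotonicity), together with $\norm{\mathsf{abs}\br{Q}}_\infty\le\twonorm{Q}$. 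Integrating, $\tfrac12\int_0^1\br{1-t}^2\abs{\phi'''\br{t}}\,dt\le\tfrac1{6\lambda}\cdot\tfrac16\cdot\sup_t\abs{\phi'''\br{t}}\cdot 6 = O\br{\twonorm{Q}\norm{Q}_4^2/\lambda}$, as claimed.

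The main obstacle I anticipate is the regularity bookkeeping around the finitely many points where $\zeta'''_\lambda$ fails to exist: one must justify that the integral Taylor remainder formula still holds (it does, because $\phi''$ is Lipschitz hence absolutely continuous, so $\phi''\br{1}-\phi''\br{0}=\int_0^1\phi'''$, and likewise for the weighted version after one integration by parts) and that the third divided difference of $\zeta_\lambda$ is still bounded by $\sup\abs{\zeta'''_\lambda}$ despite those kinks. A secondary technical point is establishing the precise inequality $\sum_{i,j,k}\abs{Q_{ij}}\abs{Q_{jk}}\abs{Q_{ki}}\le C\twonorm{Q}\norm{Q}_4^2$ with an absolute constant $C$; this is a clean Hadamard-product estimate but needs the monotonicity $\norm{A\circ B}_p\le\norm{A}_\infty\norm{B}_p$-type facts and the observation $\norm{\mathsf{abs}\br{Q}}_p\le\norm{Q}_p$ for the relevant $p$, which I would state and cite from \cite{Bhatia} rather than reprove. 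Everything else is routine once the Fr\'echet-derivative formalism of Appendix~\ref{sec:frechet} is in hand.
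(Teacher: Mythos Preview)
Your overall strategy---reduce to the scalar function $\phi(t)=\Tr\,\zeta_\lambda(P+tQ)$, expand to second order, and bound the third-order remainder via the divided-difference (Daleckii--Krein) representation---is natural and the regularity bookkeeping you flag is indeed manageable. The genuine problem is the remainder estimate.

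After bounding the third divided difference of $\zeta_\lambda$ by $O(1/\lambda)$ you are left with the task of showing
\[
\sum_{i,j,k}\abs{Q_{ij}}\,\abs{Q_{jk}}\,\abs{Q_{ki}}\;=\;\Tr\,\mathsf{abs}(Q)^3\;\le\;C\,\twonorm{Q}\,\norm{Q}_4^2
\]
with an absolute constant $C$. This inequality is \emph{false}. Take $Q$ to be an $n\times n$ Hadamard-type matrix with entries $\pm 1/\sqrt{n}$ (so $Q$ is orthogonal): then all singular values equal $1$, hence $\twonorm{Q}=\sqrt{n}$, $\norm{Q}_4=n^{1/4}$, and $\twonorm{Q}\norm{Q}_4^2=n$; but $\mathsf{abs}(Q)=n^{-1/2}J$ with $J$ the all-ones matrix, so $\Tr\,\mathsf{abs}(Q)^3=n^{3/2}$. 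The ratio is $n^{1/2}$, unbounded. Your proposed fixes do not help: the claim $\norm{\mathsf{abs}(Q)}_4\le\norm{Q}_4$ (``Schur-product norm monotonicity'') is likewise false for the same example, and absorbing the dimension into the constant would insert a factor $2^{n/2}$ into the downstream bound in Lemma~\ref{lem:hybrid}, which is fatal since $n$ there is unbounded. The loss happens precisely at the step where you replace the signed trilinear sum $\sum c_{ijk}Q_{ij}Q_{jk}Q_{ki}$ by its entrywise absolute value; the coefficients $c_{ijk}$ coming from the divided differences carry structure that you are discarding.

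The paper does not take the generic divided-difference route. It writes $\zeta_\lambda$ explicitly as a combination of shifts of $q(x)=x^3\mathbf{1}_{x\ge 0}$ (Eq.~\eqref{eqn:zetap}), computes the first three Fr\'echet derivatives of $\Tr\,q(P+tQ)$ by hand (Lemma~\ref{lem:dq}), and obtains for the third derivative an expression of the form $\Tr(4Q^3+3Q^2\ell_Q(P)+\tfrac34 Q\{P,D\ell_Q(P)(Q)\})$. The point is that $\ell_Q(P)$ is, in the eigenbasis of $P$, the Schur product of $Q$ with the \emph{specific} multiplier matrix $M_{ij}=\frac{a_i+a_j}{|a_i|+|a_j|}$; Lemma~\ref{lem:aiaj}, via Davies' Schur-multiplier theorem (Fact~\ref{fac:davis}), shows this particular $M$ is a bounded Schur multiplier on the Schatten $4$-class. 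That structural fact is what yields $\norm{\ell_Q(P)}_4\le c\norm{Q}_4$ (Lemma~\ref{lem:dlq3}) and hence the $\twonorm{Q}\norm{Q}_4^2$ bound. In short, the argument requires identifying and controlling a concrete Schur multiplier, not a uniform bound on arbitrary sign patterns; your approach would need to be reorganized around that observation to succeed.
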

The proof is via calculating each order of the Fr\'echet derivatives combining with several inequalities in matrix analsis, which is  deferred to Appendix~\ref{sec:zetataylor}.

The following lemma enables us to remove the part of an operator with low $2$-norm without changing the value of $\Tr~\zeta\br{\cdot}$ much. The proof is also deferred to Appendix~\ref{sec:zetataylor}.
\begin{lemma}\label{lem:zetaadditivity}
  For any Hermitian matrices $P$ and $Q$, it holds that $\abs{\Tr\br{\zeta\br{P+Q}-\zeta\br{P}}}\leq4\br{\twonorm{P}\twonorm{Q}+\twonorm{Q}^2}$.
\end{lemma}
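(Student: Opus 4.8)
The plan is to read this bound off from the fact, already isolated in Lemma~\ref{lem:closedelta}, that $\Tr~\zeta\br{\cdot}$ computes a squared $\twonorm{\cdot}$-distance to a convex body. Write $\Delta\defeq\set{X\in\H_d:0\leq X\leq\id}$, let $\R$ be its rounding map in the $\twonorm{\cdot}$-metric, and set $\mathrm{dist}\br{M}\defeq\twonorm{M-\R\br{M}}=\min_{X\in\Delta}\twonorm{M-X}$. Lemma~\ref{lem:closedelta} gives $\Tr~\zeta\br{M}=\mathrm{dist}\br{M}^2$ for every Hermitian $M$, so it suffices to bound $\abs{\mathrm{dist}\br{P+Q}^2-\mathrm{dist}\br{P}^2}$.

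Two elementary properties of $\mathrm{dist}\br{\cdot}$ do the job. First, it is $1$-Lipschitz with respect to $\twonorm{\cdot}$: for any $M,N$ and any $X\in\Delta$ we have $\mathrm{dist}\br{M}\leq\twonorm{M-X}\leq\twonorm{M-N}+\twonorm{N-X}$, and taking the infimum over $X$ gives $\mathrm{dist}\br{M}\leq\twonorm{M-N}+\mathrm{dist}\br{N}$; by symmetry $\abs{\mathrm{dist}\br{M}-\mathrm{dist}\br{N}}\leq\twonorm{M-N}$. Second, $\mathrm{dist}\br{M}\leq\twonorm{M}$, since the zero matrix lies in $\Delta$. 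Hence $\abs{\mathrm{dist}\br{P+Q}-\mathrm{dist}\br{P}}\leq\twonorm{Q}$ and $\mathrm{dist}\br{P+Q}+\mathrm{dist}\br{P}\leq\twonorm{P+Q}+\twonorm{P}\leq2\twonorm{P}+\twonorm{Q}$, so the difference-of-squares identity yields
\[\abs{\Tr~\zeta\br{P+Q}-\Tr~\zeta\br{P}}=\abs{\mathrm{dist}\br{P+Q}-\mathrm{dist}\br{P}}\cdot\br{\mathrm{dist}\br{P+Q}+\mathrm{dist}\br{P}}\leq\twonorm{Q}\br{2\twonorm{P}+\twonorm{Q}}\leq4\br{\twonorm{P}\twonorm{Q}+\twonorm{Q}^2},\]
which is the claimed bound, in fact with a better constant.

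An alternative route, closer to the Fr\'echet-derivative calculus used for Lemma~\ref{lem:zetataylor}, is to integrate along the segment $t\mapsto P+tQ$: since $\zeta\in\mathcal{C}^1$, one has $\frac{d}{dt}\Tr~\zeta\br{P+tQ}=\Tr~\br{\zeta'\br{P+tQ}Q}$, hence $\Tr~\zeta\br{P+Q}-\Tr~\zeta\br{P}=\int_0^1\Tr~\br{\zeta'\br{P+tQ}Q}\,dt$; bounding each integrand by Cauchy--Schwarz for the Hilbert--Schmidt inner product and using the pointwise estimate $\abs{\zeta'\br{x}}\leq2\abs{x}$ gives $\twonorm{\zeta'\br{P+tQ}}\leq2\twonorm{P+tQ}\leq2\br{\twonorm{P}+\twonorm{Q}}$ and the same conclusion. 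The only delicate point on this route is differentiating the trace of a merely $\mathcal{C}^1$ matrix function, and it is sidestepped by first running the argument with the $\mathcal{C}^2$ approximant $\zeta_\lambda$ of Lemma~\ref{lem:zeta} and then letting $\lambda\to0$, using $\norm{\zeta_\lambda-\zeta}_\infty\leq4\lambda^2$ to control $\abs{\Tr~\zeta_\lambda\br{M}-\Tr~\zeta\br{M}}\leq4\lambda^2 d$. I expect the first route to be essentially obstacle-free: the one substantive ingredient, that $\Tr~\zeta$ is a squared distance to a convex body, is already supplied by Lemma~\ref{lem:closedelta}, and what remains is just the Lipschitz-plus-difference-of-squares computation.
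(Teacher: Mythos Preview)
Your first argument is correct and genuinely different from the paper's proof. The paper computes the Fr\'echet derivative $\Tr~D\zeta\br{P}\br{Q}$ explicitly via the decomposition $\zeta\br{x}=p\br{x-1}+p\br{-x}$ and Lemma~\ref{lem:derivative}, simplifies it to $\Tr\br{2P-\id}Q+\Tr\br{\abs{P-\id}-\abs{P}}Q$, diagonalises $P$ to extract the pointwise bound $\abs{\Tr~D\zeta\br{P}\br{Q}}\leq 4\sum_i\abs{a_iQ_{ii}}$, and then invokes the mean value theorem along $t\mapsto P+tQ$ together with a Cauchy--Schwarz step. Your route bypasses all of the derivative calculus: once Lemma~\ref{lem:closedelta} identifies $\Tr~\zeta$ with a squared $\twonorm{\cdot}$-distance to the convex body $\Delta$, the bound is pure metric geometry (Lipschitz property of the distance plus difference of squares), and it even yields the sharper estimate $2\twonorm{P}\twonorm{Q}+\twonorm{Q}^2$. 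The paper's approach has the advantage of fitting into the same Fr\'echet-derivative machinery used for Lemma~\ref{lem:zetataylor}; yours has the advantage of being self-contained and of making transparent why the constant is what it is. Your second, integration-based sketch is essentially the paper's argument with the mean value theorem replaced by an integral, and your remark about passing through $\zeta_\lambda$ is a reasonable way to handle the differentiability technicality, though in fact $\zeta\in\C^1$ suffices for the formula $\frac{d}{dt}\Tr~\zeta\br{P+tQ}=\Tr~\zeta'\br{P+tQ}Q$ directly.
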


   	\section{Hypercontractive inequality for random operators}\label{sec:hypercontractive}

	We first introduce a noise operator $\Gamma_{\rho}$ acting on $L^2\br{\M_2^{\otimes h},\gamma_n}$, which is a hybrid of the Ornstein-Uhlenbeck operator $U_{\rho}$ in Definition~\ref{def:ornstein} and the noise operator $\T_{\rho}$ in Definition~\ref{def:bonamibeckner}. Recall that any $\mathbf{P}\in L^2\br{\M_2^{\otimes h},\gamma_n}$ can be expressed as
\begin{equation}\label{eqn:randomoperator}
\mathbf{P}=\sum_{\sigma\in[4]_{\geq 0}^h}p_{\sigma}\br{\mathbf{g}}\B_{\sigma},
    \end{equation}
    	where $\set{\B_i}_{i=0}^3$ is a standard orthonormal basis in $\M_2$, $p_{\sigma}\in L^2\br{\complex,\gamma_n}$ and $\mathbf{g}\sim \gamma_n.$

\begin{definition}\label{def:gamma}
Given $0\leq\rho\leq 1$ and integers $h,n\geq 0$, the noise operator $\Gamma_{\rho}:L^2\br{\M_2^{\otimes h},\gamma_n}\rightarrow L^2\br{\M_2^{\otimes h},\gamma_n}$ is defined to be	\[\Gamma_{\rho}\br{\mathbf{P}}=\sum_{\sigma\in[4]_{\geq 0}^h}\br{U_{\rho}p_{\sigma}}\br{\mathbf{g}}\T_{\rho}\br{\B_{\sigma}},\]
where $\set{\B_i}_{i=0}^3$ is a standard orthonormal basis in $\M_2$.\footnote{From Lemma~\ref{fac:unitarybasis}, the definition is independent of the choices of the basis as the Gaussian distribution $\gamma_n$ is invariant under orthogonal transformation.}
	\end{definition}
The following lemma directly follows from  Fact~\ref{fac:vecfun} and Lemma~\ref{lem:bonamibecknerdef} item 1.

\begin{lemma}\label{lem:gammaoperator}
  Given $0\leq\rho\leq 1$, integers $n,h\geq 0$ and a random operator $\mathbf{P}\in L^2\br{\M_2^{\otimes h},\gamma_n}$ that has expansion in Eq.~\eqref{eqn:randomoperator}, it holds that
  \begin{equation}\label{eqn:gamma}
    \Gamma_{\rho}\br{\mathbf{P}}=\sum_{\sigma\in[4]_{\geq0}^h}\sum_{\tau\in\mathbb{Z}_{\geq 0}^n}\rho^{\abs{\sigma}+\abs{\tau}}\widehat{p_{\sigma}}\br{\tau}H_{\tau}\br{\mathbf{g}}\B_{\sigma},
  \end{equation}
  where $\abs{\sigma}=\abs{\set{i:\sigma_i\neq 0}}$ and $\abs{\tau}=\sum_i\tau_i$ and $H_{\tau}$'s are the Hermite polynomials defined in Eqs.~\eqref{eqn:hermitebasis}\eqref{eqn:hermite}.
\end{lemma}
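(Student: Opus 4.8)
The plan is to simply unfold Definition~\ref{def:gamma} and substitute the already-established actions of its two constituent operators. Write $\Gamma_\rho\br{\mathbf{P}}=\sum_{\sigma\in[4]_{\geq0}^h}\br{U_\rho p_\sigma}\br{\mathbf{g}}\,\T_\rho\br{\B_\sigma}$ as in the definition, and treat the matrix factor and the Gaussian factor separately. For the matrix factor: since $\B_\sigma=\otimes_{i=1}^h\B_{\sigma_i}$ is itself one of the elements of the standard orthonormal basis $\set{\B_\tau}_{\tau\in[4]_{\geq0}^h}$ of $\M_2^{\otimes h}$ (Fact~\ref{fac:paulimutiplecopy}), its Fourier expansion consists of the single term $\B_\sigma$, so Lemma~\ref{lem:bonamibecknerdef} item 1 yields $\T_\rho\br{\B_\sigma}=\rho^{\abs{\sigma}}\B_\sigma$ with $\abs{\sigma}=\abs{\set{i:\sigma_i\neq0}}$. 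For the Gaussian factor: expand each scalar coefficient in its Hermite series $p_\sigma=\sum_{\tau\in\mathbb{Z}_{\geq0}^n}\widehat{p_\sigma}\br{\tau}H_\tau$, which converges in $L^2\br{\complex,\gamma_n}$, and apply Fact~\ref{fac:gaussiannoisy} to obtain $U_\rho p_\sigma=\sum_{\tau\in\mathbb{Z}_{\geq0}^n}\rho^{\abs{\tau}}\widehat{p_\sigma}\br{\tau}H_\tau$ with $\abs{\tau}=\sum_i\tau_i$.

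Plugging both expressions back into the definition and using bilinearity gives
\[\Gamma_\rho\br{\mathbf{P}}=\sum_{\sigma\in[4]_{\geq0}^h}\br{\sum_{\tau\in\mathbb{Z}_{\geq0}^n}\rho^{\abs{\tau}}\widehat{p_\sigma}\br{\tau}H_\tau\br{\mathbf{g}}}\rho^{\abs{\sigma}}\B_\sigma=\sum_{\sigma\in[4]_{\geq0}^h}\sum_{\tau\in\mathbb{Z}_{\geq0}^n}\rho^{\abs{\sigma}+\abs{\tau}}\widehat{p_\sigma}\br{\tau}H_\tau\br{\mathbf{g}}\B_\sigma,\]
which is exactly Eq.~\eqref{eqn:gamma}. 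The only steps that merit a sentence of justification are: (i) that $\B_\sigma$ is a single basis vector, so Lemma~\ref{lem:bonamibecknerdef} applies with a trivial one-term Fourier expansion; and (ii) the interchange of the finite outer sum over the $4^h$ indices $\sigma$ with the Hermite expansion over $\tau$, which is legitimate since each inner sum converges in $L^2\br{\M_2^{\otimes h},\gamma_n}$. Alternatively one can argue by linearity of $\Gamma_\rho$ in $\mathbf{P}$: both sides of Eq.~\eqref{eqn:gamma} are linear and agree on the spanning family $\set{H_\tau\br{\mathbf{g}}\B_\sigma}$, where the identity is immediate from the definitions of $U_\rho$ and $\T_\rho$.

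There is no real obstacle; the lemma is a bookkeeping identity assembling Fact~\ref{fac:gaussiannoisy} and Lemma~\ref{lem:bonamibecknerdef}. The one structural point worth flagging (already noted in the footnote to Definition~\ref{def:gamma}) is that $\Gamma_\rho$ must be checked to be independent of the chosen basis $\set{\B_i}_{i=0}^3$; this follows from Fact~\ref{fac:unitarybasis} together with the orthogonal invariance of $\gamma_n$, since a change of basis rotates the $\B_i$ by an orthogonal matrix and correspondingly re-mixes the coefficient functions $p_\sigma$ without affecting either $U_\rho$ or $\abs{\sigma}$.
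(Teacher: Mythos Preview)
Your proof is correct and follows exactly the approach the paper indicates: the paper states only that the lemma ``directly follows from Fact~\ref{fac:gaussiannoisy} and Lemma~\ref{lem:bonamibecknerdef} item 1,'' and you have simply spelled out those two substitutions in detail. There is nothing to add.
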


The main result in this section is a hypercontractive inequality for random operators stated as follows.

\begin{lemma}\label{lem:hypercontractivity}
	Given $0\leq\rho\leq \frac{1}{\sqrt{3}}$, integers $n,h\geq 0$, for any multilinear random operator $\mathbf{P}\in L^2\br{\M_2^{\otimes h},\gamma_n}$, it holds that
	\[N_{4}\br{\Gamma_{\rho}\br{\mathbf{P}}}\leq N_2\br{\mathbf{P}},\]
where $\Gamma_{\rho}$ is the noise operator acting on $L^2\br{\M_2^{\otimes h},\gamma_n}$ defined in Definition~\ref{def:gamma} and $N_p$ is the normalized $p$-norm of a random operator in Definition~\ref{def:randop}.
\end{lemma}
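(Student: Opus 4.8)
The plan is to prove the inequality by combining the two classical hypercontractivity results in a product-space fashion: King's hypercontractive inequality for the unital qubit depolarizing channel~\cite{King2014}, which controls the $\M_2^{\otimes h}$-part, and Wolff's (or the standard Gaussian Bonami--Beckner) hypercontractive inequality~\cite{PawelWolff2007} for multilinear polynomials on Gaussian space, which controls the $\gamma_n$-part. First I would reduce to a clean statement about the associated vector-valued function. Writing $\mathbf{P}=\sum_{\sigma\in[4]_{\geq 0}^h}p_\sigma(\mathbf{g})\B_\sigma$ with each $p_\sigma$ multilinear of degree $\le\deg(\mathbf{P})$, Lemma~\ref{lem:randoperator} gives $N_2(\mathbf{P})^2=\sum_\sigma\|p_\sigma\|_2^2$, so the target is
\[
\expec{\mathbf{g}\sim\gamma_n}{\nnorm{\sum_\sigma (U_\rho p_\sigma)(\mathbf{g})\,\T_\rho(\B_\sigma)}_4^4}\le\Br{\sum_\sigma\|p_\sigma\|_2^2}^2 .
\]
The idea is to view the left-hand side as a $4$-norm in a tensor product of two function systems — the matrix system $(\M_2^{\otimes h},\innerproduct{\cdot}{\cdot})$ equipped with the depolarizing semigroup $\T_\rho$, and the Gaussian system $L^2(\complex,\gamma_n)$ equipped with $U_\rho$ — and apply hypercontractivity coordinate-by-coordinate.

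Concretely, the key steps are as follows. (1) Recall King's result: for the single-qubit depolarizing channel $\T_{\rho}$ on $\M_2$ (extended to $\M_2^{\otimes h}$ as $\T_\rho^{\otimes h}$), one has $\nnorm{\T_\rho(M)}_4\le\nnorm{M}_2$ for every $M$, provided $\rho\le 1/\sqrt3$; moreover King's theorem is in fact a $(2\to 4)$ tensor-stable statement, so it survives tensoring with any other $(2\to4)$-hypercontractive system. (2) Recall that the Gaussian noise operator $U_\rho$ with $\rho\le1/\sqrt3$ is $(2\to4)$-hypercontractive on \emph{multilinear} polynomials, i.e. $\|U_\rho f\|_4\le\|f\|_2$ — here multilinearity is exactly the hypothesis on $\mathbf{P}$, which is why we needed it (a general Gaussian polynomial only enjoys $(2\to 4)$-hypercontractivity after degree-dependent attenuation, but for multilinear $f$ of any degree $\rho=1/\sqrt3$ suffices, by reducing to the $n$-cube Bonami--Beckner inequality via the CLT, or directly by Wolff's inequality). (3) Combine them: the operator $\Gamma_\rho$ acts as $U_\rho\otimes\T_\rho$ on the product $L^2(\complex,\gamma_n)\otimes\M_2^{\otimes h}$ in the obvious sense, with the "$2$-norm" on the product being $N_2$ and the "$4$-norm" being $N_4$ (both are $L^p$ norms of the Schatten/Gaussian norm, hence genuine $L^4$ norms on the product measure space once one notes $\nnorm{\cdot}_4$ is a norm coming from a trace). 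Since each factor is $(2\to 4)$-hypercontractive and the $(2\to q)$-hypercontractivity property tensorizes (this is the standard fact that $\|(A\otimes B)x\|_{q}\le\|x\|_2$ whenever $\|A\cdot\|_q\le\|\cdot\|_2$ and $\|B\cdot\|_q\le\|\cdot\|_2$, valid for $q\ge2$), we conclude $N_4(\Gamma_\rho(\mathbf{P}))\le N_2(\mathbf{P})$.

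The main obstacle is step (3): making the tensorization argument rigorous in this hybrid setting, because the two "coordinates" are of very different nature — one is a classical $L^2(\gamma_n)$ space, the other a noncommutative $\M_2^{\otimes h}$ with its normalized Schatten norms — and one must check that $N_p(\mathbf{P})=\expec{}{\nnorm{\mathbf{P}}_p^p}^{1/p}$ really is the $L^p$-norm of an element of $L^p(\gamma_n;S_p(\M_2^{\otimes h}))$ and that the noncommutative $(2\to4)$-hypercontractivity of $\T_\rho$ (as opposed to mere norm contraction) holds with constant $1$ at $\rho=1/\sqrt3$ after tensoring with a commutative hypercontractive system; this is exactly the content that King's theorem supplies, but one has to cite it in the tensor-stable form and verify the multilinear Gaussian side is compatible. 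A secondary technical point is justifying that it suffices to prove the inequality for finitely supported multilinear $\mathbf{P}$ (truncating the Hermite expansion) and then pass to the limit, and handling the case $\rho=0$ separately as trivial. I would organize the write-up so that the interchange-of-norms and tensorization lemma is isolated as its own sublemma, then feed in King's and Wolff's inequalities as black boxes.
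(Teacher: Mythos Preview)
Your high-level strategy matches the paper's: factor $\Gamma_\rho$ as $\T_\rho$ composed with (entrywise) $U_\rho$, apply King's qubit hypercontractivity to handle the matrix part, and Gaussian hypercontractivity for the rest. The paper executes this concretely as a two-step argument rather than invoking abstract tensorization: first apply King's inequality pointwise in $\mathbf{g}$ to obtain $N_4(\Gamma_\rho(\mathbf{P}))\le \expec{}{\nnorm{\mathbf{Q}}_2^4}^{1/4}$ where $\mathbf{Q}=\sum_\sigma (U_\rho p_\sigma)\B_\sigma$, and then prove the Hilbert-space-valued Gaussian bound
\[
\expec{\mathbf{x}}{\Bigl(\sum_{ij}|(U_\rho p_{ij})(\mathbf{x})|^2\Bigr)^2}^{1/4}\le \expec{\mathbf{x}}{\sum_{ij}|p_{ij}(\mathbf{x})|^2}^{1/2}
\]
directly by expanding the square, applying Cauchy--Schwarz to the cross terms, and then scalar Gaussian hypercontractivity to each entry (this is the paper's Lemma~\ref{lem:gausshyper}). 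So the ``sublemma'' you anticipate isolating is precisely this vector-valued statement, and no heavier tensorization machinery is needed.

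One correction: your explanation of why multilinearity is assumed is wrong. Gaussian $(2\to4)$-hypercontractivity $\|U_\rho f\|_4\le\|f\|_2$ for $\rho\le 1/\sqrt3$ holds for \emph{all} $f\in L^2(\gamma_n)$ (Nelson's theorem, stated in the paper as Fact~\ref{fac:gaussianhypercontractivity}), not only multilinear ones; the paper's proof never uses the multilinearity hypothesis, and it appears to be stated only because the downstream applications involve multilinear operators.
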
	

The concept of random operators is a hybrid of the operators in $\M_2^{\otimes h}$ and the the functions in the Gaussian space $L^2\br{\complex,\gamma_n}$. Thus the proof of Lemma~\ref{lem:hypercontractivity} is a combination of the hypercontractive inequality for unital quantum operators due to King~\cite{King2014} and the hypercontractive inequality for Gaussian variables~\cite{PawelWolff2007,MosselOdonnell:2010}. The proof is deferred to the end of this section.
The following is an application of Lemma~\ref{lem:hypercontractivity}.
\begin{lemma}\label{lem:Xhypercontractivity}
	Given integers $h,n\geq 0$, for any multilinear random operator $\mathbf{P}\in L^2\br{\M_2^{\otimes h},\gamma_n}$  with the associated vector-valued polynomial $p=\br{p_{\sigma}}_{\sigma\in[4]_{\geq 0}^h}$, it holds that
	\[N_4\br{\mathbf{P}}\leq 3^{d/2}N_2\br{\mathbf{P}},\]
where $d=\max_{\sigma\in[4]_{\geq 0}^h}\br{\deg\br{p_{\sigma}}+\abs{\sigma}}$.
\end{lemma}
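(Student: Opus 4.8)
The plan is to derive Lemma~\ref{lem:Xhypercontractivity} from the hypercontractive inequality for the hybrid noise operator $\Gamma_\rho$ (Lemma~\ref{lem:hypercontractivity}) by ``inverting'' $\Gamma_\rho$ at the extremal parameter $\rho=1/\sqrt3$. First I would write the multilinear random operator, using Lemma~\ref{lem:gammaoperator} and the fact that each $p_\sigma$ is multilinear (so only $\tau\in\set{0,1}^n$ contribute), as
\[\mathbf{P}=\sum_{\sigma\in[4]_{\geq 0}^h}\sum_{\tau\in\set{0,1}^n}\widehat{p_\sigma}\br{\tau}H_\tau\br{\mathbf{g}}\B_\sigma,\]
and define the rescaled operator
\[\mathbf{Q}\defeq\sum_{\sigma\in[4]_{\geq 0}^h}\sum_{\tau\in\set{0,1}^n}3^{\br{\abs{\sigma}+\abs{\tau}}/2}\,\widehat{p_\sigma}\br{\tau}\,H_\tau\br{\mathbf{g}}\,\B_\sigma .\]
By Lemma~\ref{lem:gammaoperator} this gives $\Gamma_{1/\sqrt3}\br{\mathbf{Q}}=\mathbf{P}$. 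Since $\mathbf{Q}$ has exactly the same Hermite--Fourier support as $\mathbf{P}$, it is again a multilinear random operator in $L^2\br{\M_2^{\otimes h},\gamma_n}$; moreover, writing $d=\max_\sigma\br{\deg\br{p_\sigma}+\abs{\sigma}}$, every exponent $\abs{\sigma}+\abs{\tau}$ appearing in either sum is at most $d$, because $\abs{\tau}=\sum_i\tau_i\le\deg\br{p_\sigma}$ whenever $\widehat{p_\sigma}\br{\tau}\neq0$.

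Next I would apply Lemma~\ref{lem:hypercontractivity} to the multilinear operator $\mathbf{Q}$ with $\rho=1/\sqrt3$ (which satisfies the hypothesis $0\le\rho\le1/\sqrt3$), obtaining
\[N_4\br{\mathbf{P}}=N_4\br{\Gamma_{1/\sqrt3}\br{\mathbf{Q}}}\le N_2\br{\mathbf{Q}}.\]
Finally I would estimate $N_2\br{\mathbf{Q}}$ crudely by pulling out the worst exponent, using Lemma~\ref{lem:randoperator} together with Parseval for vector-valued functions (Fact~\ref{fac:vecfun}):
\[N_2\br{\mathbf{Q}}^2=\sum_{\sigma}\sum_{\tau}3^{\abs{\sigma}+\abs{\tau}}\,\abs{\widehat{p_\sigma}\br{\tau}}^2\le 3^{d}\sum_{\sigma}\sum_{\tau}\abs{\widehat{p_\sigma}\br{\tau}}^2=3^{d}\,N_2\br{\mathbf{P}}^2 .\]
Combining the last two displays yields $N_4\br{\mathbf{P}}\le 3^{d/2}N_2\br{\mathbf{P}}$, which is the claim.

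The argument is short and I do not anticipate a real obstacle. The only points that need care are (i) confirming that $\mathbf{Q}$ is genuinely a multilinear element of $L^2\br{\M_2^{\otimes h},\gamma_n}$ — this is immediate because the finiteness of $d$ forces each $p_\sigma$ to be a polynomial of degree at most $d$, so only finitely many Hermite modes appear and the coefficient rescaling is harmless — and (ii) making sure the bound $\abs{\sigma}+\abs{\tau}\le d$ genuinely uses multilinearity, i.e. that for $\tau\in\set{0,1}^n$ the quantity $\abs{\tau}$ counts active coordinates and is therefore dominated by $\deg\br{p_\sigma}$. With these checks in place, the lemma follows from a single invocation of Lemma~\ref{lem:hypercontractivity}.
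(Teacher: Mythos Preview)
Your proof is correct and follows essentially the same approach as the paper's. The paper phrases it as decomposing $\mathbf{P}=\sum_i\mathbf{P}^{=i}$ into homogeneous parts and applying Lemma~\ref{lem:hypercontractivity} to $\sum_i\br{\sqrt3}^{i}\mathbf{P}^{=i}$, which is exactly your rescaled operator $\mathbf{Q}$; the orthogonality step used there is equivalent to your Parseval computation of $N_2\br{\mathbf{Q}}$.
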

\begin{proof}
	Suppose
$\mathbf{P}=\sum_{\sigma\in[4]_{\geq 0}^h}p_{\sigma}\br{\mathbf{g}}\B_{\sigma},$
    	where $\set{\B_i}_{i=0}^3$ is a standard orthonormal basis in $\M_2$, $p_{\sigma}\in L^2\br{\complex,\gamma_n}$ is multilinear and $\mathbf{g}\sim \gamma_n$. Set $$\mathbf{P}^{=i}\defeq\sum_{\br{\sigma,\tau}\in[4]_{\geq0}^h\times\mathbb{Z}_{\geq 0}^n:\atop\abs{\sigma}+\abs{\tau}=i}p_{\sigma}\br{\mathbf{g}}\B_{\sigma}.$$
	Applying Lemma~\ref{lem:hypercontractivity},
	\begin{eqnarray*}
		&&N_4\br{\mathbf{P}}=N_4\br{\Gamma_{\frac{1}{\sqrt{3}}}\br{\sum_{i=1}^{d}\br{\sqrt{3}}^i\mathbf{P}^{=i}}}\leq N_2\br{\sum_{i=1}^{d}\br{\sqrt{3}}^i\mathbf{P}^{=i}}=\expec{}{\nnorm{\sum_{i=1}^{d}\br{\sqrt{3}}^i\mathbf{P}^{=i}}^2_2}^{1/2}.
	\end{eqnarray*}
Note that
\[\expec{}{\Tr\br{\mathbf{P}^{=i}}^{\dagger}\mathbf{P}^{=j}}=0,\]
whenever $i\neq j$.
Therefore,
\[N_4\br{\mathbf{P}}=\br{\sum_{i=1}^{d}\br{\sqrt{3}}^{2i}\expec{}{\nnorm{\mathbf{P}^{=i}}^2_2}}^{\frac{1}{2}}\leq3^{d/2}\br{\sum_{i=1}^{d}\expec{}{\nnorm{\mathbf{P}^{=i}}^2_2}}^{\frac{1}{2}}=3^{d/2}N_2\br{\mathbf{P}}.\]
\end{proof}


The following fact is a direct consequence of a hypercontractive inequality for qubit channel due to King.
\begin{fact}\label{fac:hyperqubit}~\cite{King2014}
	Let $\T_{\rho}:\M_2\rightarrow\M_2$ be a noise operator in Definition~\ref{def:bonamibeckner}. For any integer $n\geq 1$, $0\leq \rho\leq \sqrt{\frac{1}{3}}$ and $M\in\M_2^{\otimes n}$, it holds that
	\[\norm{\Gamma_{\rho}\br{M}}_4\leq\twonorm{M}.\]
\end{fact}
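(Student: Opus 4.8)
The plan is to obtain Fact~\ref{fac:hyperqubit} as an immediate specialization of King's hypercontractivity theorem for the qubit depolarizing semigroup~\cite{King2014}. The first step is to recognize the operator in play: with respect to the inner product $\innerproduct{\cdot}{\cdot}$ of Eq.~\eqref{eqn:innerproduct} (equivalently, the state $\id_2/2$), the operator $\T_\rho$ of Definition~\ref{def:bonamibeckner} is exactly the single-qubit depolarizing channel $X \mapsto \rho X + (1-\rho)\br{\frac12\Tr X}\id_2$, and since $M \in \M_2^{\otimes n}$ carries no Gaussian coordinates, the operator $\Gamma_\rho$ of Definition~\ref{def:gamma} reduces on such $M$ to the $n$-fold tensor power $\T_\rho^{\otimes n}$ (in the notation of Definition~\ref{def:bonamibeckner}, just $\T_\rho$). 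Thus the claim $\norm{\Gamma_\rho(M)}_4 \le \twonorm{M}$ is precisely the assertion that $\T_\rho^{\otimes n}$ is a contraction from the (normalized) Schatten space $L_2\br{\M_2^{\otimes n}}$ to $L_4\br{\M_2^{\otimes n}}$ in the stated range of $\rho$.

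Next I would invoke King's result in the following form: the $n$-fold tensor power of the qubit depolarizing channel with parameter $\rho$ has $(p \to q)$-norm equal to $1$ between normalized Schatten spaces whenever $q-1 \le \rho^{-2}(p-1)$. Specializing to $p=2$, $q=4$ turns this into the requirement $3 \le \rho^{-2}$, i.e.\ $0 \le \rho \le \frac{1}{\sqrt3}$, which is exactly the hypothesis of the Fact. Combining this with the elementary observation that $\T_\rho$ is unital and trace preserving (hence an $L_2$-contraction, so in particular the $(2\to 4)$-norm is at most $1$ rather than only finite) one concludes $\nnorm{\T_\rho^{\otimes n}(M)}_4 \le \nnorm{M}_2$; passing between normalized and un-normalized norms to recover the literal form $\norm{\Gamma_\rho(M)}_4 \le \twonorm{M}$ is just bookkeeping with the factor $2^{-n/p}$. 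I would also record the parametrization match (King's semigroup $\Phi_t$ corresponds to $\rho = e^{-t}$) so that the two statements of the hypothesis coincide.

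The one genuinely substantive point — and the reason this is cited rather than proved from scratch — is the tensorization built into King's theorem. For a single qubit the bound $\nnorm{\T_\rho(X)}_4 \le \nnorm{X}_2$ for $\rho \le \frac{1}{\sqrt3}$ can be checked by an explicit two-dimensional computation (or a log-Sobolev argument), but $(p\to q)$-norms of channels are not in general multiplicative under tensor products, so passing from one qubit to $n$ qubits is not automatic and cannot be done by a naive induction. King's contribution is precisely that in the hypercontractive regime the ordinary $(p\to q)$-norm of a unital qubit channel agrees with its \emph{completely bounded} $(p\to q)$-norm, which \emph{does} multiply over tensor products; this is the step I would not attempt to reprove. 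Hence the only real obstacle is external: if a self-contained argument were required, it would amount to redoing exactly this completely-bounded-norm computation for $\T_\rho$.
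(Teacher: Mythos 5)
Your proposal is correct and takes the same route as the paper, which proves nothing here and simply cites King's hypercontractivity theorem for unital qubit channels~\cite{King2014}, specialized at $(p,q)=(2,4)$ where the admissible range becomes $\rho^2 \le 1/3$. Your auxiliary observations also correctly diagnose two slips in the Fact as printed: the operator in the conclusion should read $\T_\rho$ rather than $\Gamma_\rho$ (the latter is defined on random operators and only collapses to $\T_\rho^{\otimes n}$ because $M$ carries no Gaussian coordinates), and the norms must be the normalized $\nnorm{\cdot}_p$, since with unnormalized Schatten norms the inequality $\norm{X}_4\le\twonorm{X}$ holds trivially and would not yield the form $\nnorm{\T_\rho(\mathbf{Q})}_4\le\nnorm{\mathbf{Q}}_2$ actually invoked in the proof of Lemma~\ref{lem:hypercontractivity}.
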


%

The following fact is a well known hypercontractive inequality in Gaussian space.
\begin{fact}\footnote{The result in~\cite{PawelWolff2007,MosselOdonnell:2010} is for $f\in L^2\br{\reals, \gamma_n}$. But it can be extended to $L^2\br{\complex,\gamma_n}$ easily.
	
Let $f=f_1+\mathrm{i}\cdot f_2\in L^2\br{\complex,\gamma_n}$, where $f_i\in L^2\br{\reals,\gamma_n}$. Then \begin{eqnarray*}
	&&\norm{U_{\rho}f}_4=\expec{\mathbf{x}\sim \gamma_n}{\abs{\br{U_{\rho}f}\br{\mathbf{x}}}^4}^{\frac{1}{4}}\\
	&=&\expec{\mathbf{x}\sim \gamma_n}{\abs{\br{U_{\rho}f_1}\br{\mathbf{x}}}^4+\abs{\br{U_{\rho}f_2}\br{\mathbf{x}}}^4+2\abs{\br{U_{\rho}f_1}\br{\mathbf{x}}}^2\abs{\br{U_{\rho}f_2}\br{\mathbf{x}}}^2}^{\frac{1}{4}}\\
	&\leq&\br{\norm{U_{\rho}f_1}_4^4+\norm{U_{\rho}f_2}_4^4+2\norm{U_{\rho}f_1}_4^2\norm{U_{\rho}f_2}_4^2}^{\frac{1}{4}}\quad\quad\mbox{(Cauchy-Schwarz inequality)}\\
	&\leq&\br{\norm{f_1}_2^4+\norm{f_2}_2^4+2\norm{f_1}_2^2\norm{f_2}_2^2}^{\frac{1}{4}}\quad\quad\mbox{(Hypercontractive inequality in $L^2\br{\reals,\gamma_n}$)}\\
	&=&\br{\twonorm{f_1}^2+\twonorm{f_2}^2}^{\frac{1}{2}}=\twonorm{f}.
\end{eqnarray*}}\label{fac:gaussianhypercontractivity}~\cite{PawelWolff2007,MosselOdonnell:2010}
	For any $0\leq\rho\leq\frac{1}{\sqrt{3}}$, it holds that
	\[\sup_{f\in L^2\br{\complex, \gamma_n}}\norm{U_{\rho}f}_4\leq\twonorm{f},\]
	where $U_{\rho}$ is the Ornstein-Uhlenbeck operator acting on $L^2\br{\complex,\gamma_n}$ Definition~\ref{def:ornstein}.
\end{fact}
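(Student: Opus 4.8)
The plan is to reduce the statement, in two stages, to an elementary two-point inequality. First, the passage from complex- to real-valued $f$ is exactly the computation recorded in the footnote: writing $f=f_1+\mathrm{i}f_2$ with $f_1,f_2\in L^2\br{\reals,\gamma_n}$, the operator $U_\rho$ preserves realness, so $U_\rho f=U_\rho f_1+\mathrm{i}U_\rho f_2$ with both summands real-valued, hence $\abs{U_\rho f}^2=\abs{U_\rho f_1}^2+\abs{U_\rho f_2}^2$ pointwise. Expanding $\abs{U_\rho f}^4$ and bounding the cross term $\expec{\mathbf{x}\sim\gamma_n}{\abs{U_\rho f_1}^2\abs{U_\rho f_2}^2}$ by the Cauchy--Schwarz inequality gives $\normsub{U_\rho f}{4}^2\leq\normsub{U_\rho f_1}{4}^2+\normsub{U_\rho f_2}{4}^2$, so it suffices to establish $\normsub{U_\rho g}{4}\leq\twonorm{g}$ for real-valued $g$.

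Second, I would reduce from $n$ coordinates to one. Since $\gamma_n=\gamma_1^{\otimes n}$ and $U_\rho$ acts coordinatewise (Definition~\ref{def:ornstein}), the operator on $L^2\br{\reals^n,\gamma_n}$ is the $n$-fold tensor power of the one-dimensional operator $U_\rho^{(1)}$ on $L^2\br{\reals,\gamma_1}$. Because the target exponent $4$ is at least the source exponent $2$, the $\br{2,4}$-hypercontractivity bound tensorizes: applying the one-dimensional inequality to the first coordinate with the remaining coordinates frozen, then invoking Minkowski's integral inequality for the exponent $4/2=2\geq 1$ and inducting on $n$, one obtains $\normsub{U_\rho F}{4}\leq\twonorm{F}$ for all $F\in L^2\br{\reals^n,\gamma_n}$ once the case $n=1$ is known; this is the standard tensorization of hypercontractivity, see~\cite{Odonnell08,MosselOdonnell:2010}.

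For the one-dimensional real case, polynomials are dense in $L^2\br{\reals,\gamma_1}$ and $U_\rho^{(1)}$ is bounded, so it is enough to treat a polynomial $f$ of fixed degree $d$. Here I would run the classical Bonami--Beckner route. On the two-point space $\set{-1,1}$ with the noise operator $T_\rho\br{a+b\epsilon}=a+\rho b\epsilon$, a direct computation gives $\tfrac12\br{a+\rho b}^4+\tfrac12\br{a-\rho b}^4=a^4+6\rho^2a^2b^2+\rho^4b^4\leq a^4+2a^2b^2+b^4=\br{a^2+b^2}^2$ precisely because $\rho^2\leq\tfrac13$; that is, $\normsub{T_\rho h}{4}\leq\twonorm{h}$ for every $h$ on $\set{-1,1}$. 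Tensorizing this two-point estimate over $N$ coordinates, by the same Minkowski argument as above, yields $\normsub{T_\rho^{\otimes N}h}{4}\leq\twonorm{h}$ for every $h\colon\set{-1,1}^N\rightarrow\reals$. Finally I would pass to the Gaussian limit: letting $\epsilon_1,\dots,\epsilon_N$ be i.i.d.\ uniform signs and $S_N=\br{\epsilon_1+\cdots+\epsilon_N}/\sqrt{N}$, both $f\br{S_N}$ and its image under $T_\rho^{\otimes N}$ converge, in distribution and in every fixed moment, to $f\br{Z}$ and $\br{U_\rho^{(1)}f}\br{Z}$ respectively with $Z\sim\gamma_1$, the moment convergence following from the central limit theorem together with the standard moment bounds for sums of independent bounded variables; letting $N\to\infty$ in the hypercube inequality then gives $\normsub{U_\rho^{(1)}f}{4}\leq\twonorm{f}$.

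The main obstacle is this last step, where one must make precise that the noise operator on the discrete cube converges to the Ornstein--Uhlenbeck operator on Gaussian space. Concretely $f\br{S_N}$ is a symmetric but not multilinear function of $\epsilon_1,\dots,\epsilon_N$ --- powers $S_N^k$ produce terms with repeated indices that collapse via $\epsilon_i^2=1$ --- so one needs the elementary observation that replacing $f\br{S_N}$ by its multilinear part changes both sides of the hypercube inequality by $O\br{d/\sqrt{N}}$ and that $T_\rho^{\otimes N}$ interacts with this truncation with an error of the same order (since on the truncated function $T_\rho^{\otimes N}$ simply multiplies the degree-$k$ part by $\rho^k$, which in the limit is $U_\rho^{(1)}$ acting on the $k$-th Hermite component); combined with the CLT-based moment convergence this closes the argument. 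Alternatively the entire limiting procedure can be bypassed by citing Wolff's elementary one-dimensional proof~\cite{PawelWolff2007} directly, or by deducing the bound from the Gaussian logarithmic Sobolev inequality through Gross's differential-inequality argument, following the exponent schedule $q\br{t}=1+e^{2t}$, which reaches $4$ exactly at $\rho=e^{-t}=1/\sqrt{3}$.
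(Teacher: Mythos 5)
Your opening reduction is exactly the paper's: decompose $f=f_1+\mathrm{i}f_2$ into real and imaginary parts, observe $U_\rho$ preserves realness so $\abs{U_\rho f}^2=(U_\rho f_1)^2+(U_\rho f_2)^2$ pointwise, expand the fourth power, bound the cross term $\expec{}{(U_\rho f_1)^2(U_\rho f_2)^2}$ by Cauchy--Schwarz, and collapse back to $\twonorm{f}$. That is precisely the footnote's computation, and it is the only step the paper actually carries out --- for the real-valued case $\normsub{U_\rho g}{4}\le\twonorm{g}$ the paper simply cites Wolff and Mossel--O'Donnell. Where you diverge is in supplying a proof of that cited real-valued inequality: tensorization via Minkowski to reduce to $n=1$, the two-point Bonami--Beckner computation $\tfrac12(a+\rho b)^4+\tfrac12(a-\rho b)^4\le(a^2+b^2)^2$ valid exactly for $\rho^2\le 1/3$, and a central-limit passage from the hypercube to Gaussian space. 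This is the classical route and it is sound; you correctly flag the one delicate point, that $f(S_N)$ is not multilinear on the cube so $T_\rho^{\otimes N}$ does not literally act as multiplication by $\rho^k$ on the degree-$k$ Hermite part, and your sketch of why the error is $O(d/\sqrt{N})$ matches the standard treatment (O'Donnell Chapter 11). The trade-off is simply self-containedness versus length: the paper's citation is shorter and is what the argument actually needs, while your version proves the ingredient from scratch. Your alternative suggestions (Wolff's elementary one-dimensional proof, or Gross's log-Sobolev differential-inequality route with exponent schedule $q(t)=1+e^{2t}$ reaching $4$ at $\rho=1/\sqrt3$) are both standard and would also close the gap without the CLT machinery.
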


The following lemma is a generalization of Fact~\ref{fac:gaussianhypercontractivity} for technical purposes.

\begin{lemma}\label{lem:gausshyper}
Given $p_1,\ldots p_n\in L^2\br{\complex,\gamma_n}$, it holds that
\[\expec{\mathbf{x}\sim \gamma_n}{\br{\sum_{i=1}^n \abs{\br{U_{\rho}p_i}\br{\mathbf{x}}}^2}^2}^{\frac{1}{4}}\leq\expec{\mathbf{x}\sim \gamma_n}{\sum_{i=1}^n\abs{p_i\br{\mathbf{x}}}^2}^{\frac{1}{2}}.\]
\end{lemma}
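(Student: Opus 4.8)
The plan is to deduce this vector-valued estimate from the scalar Gaussian hypercontractive inequality (Fact~\ref{fac:gaussianhypercontractivity}), applied coordinate by coordinate, exactly along the lines of the footnote computation attached to that fact. Recall that here $\rho$ is in the hypercontractive range $0\le\rho\le\frac{1}{\sqrt{3}}$, so Fact~\ref{fac:gaussianhypercontractivity} is available for each $U_\rho p_i$, including complex-valued $p_i$.

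First I would expand the square inside the expectation on the left-hand side:
\[
\expec{\mathbf{x}\sim \gamma_n}{\br{\sum_{i=1}^n \abs{\br{U_{\rho}p_i}\br{\mathbf{x}}}^2}^2}=\sum_{i,j=1}^n\expec{\mathbf{x}\sim \gamma_n}{\abs{\br{U_{\rho}p_i}\br{\mathbf{x}}}^2\abs{\br{U_{\rho}p_j}\br{\mathbf{x}}}^2}.
\]
To each summand I would apply the Cauchy--Schwarz inequality in $L^2\br{\complex,\gamma_n}$ (in the variable $\mathbf{x}$), which gives
\[
\expec{\mathbf{x}\sim \gamma_n}{\abs{\br{U_{\rho}p_i}\br{\mathbf{x}}}^2\abs{\br{U_{\rho}p_j}\br{\mathbf{x}}}^2}\leq \norm{U_{\rho}p_i}_4^2\,\norm{U_{\rho}p_j}_4^2,
\]
so that the double sum is bounded by $\br{\sum_{i=1}^n\norm{U_{\rho}p_i}_4^2}^2$. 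Then, invoking Fact~\ref{fac:gaussianhypercontractivity} for each $p_i$ to get $\norm{U_{\rho}p_i}_4\leq\twonorm{p_i}$, I would conclude
\[
\expec{\mathbf{x}\sim \gamma_n}{\br{\sum_{i=1}^n \abs{\br{U_{\rho}p_i}\br{\mathbf{x}}}^2}^2}\leq\br{\sum_{i=1}^n\twonorm{p_i}^2}^2.
\]
Taking fourth roots and using $\sum_{i=1}^n\twonorm{p_i}^2=\expec{\mathbf{x}\sim \gamma_n}{\sum_{i=1}^n\abs{p_i\br{\mathbf{x}}}^2}$ yields exactly the claimed inequality.

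I do not expect a real obstacle here: the whole argument is the combination of expanding a square, one application of Cauchy--Schwarz per cross term, and the scalar hypercontractive bound. The only points needing a little care are (i) ensuring the inequality is stated for complex-valued functions, which is already how Fact~\ref{fac:gaussianhypercontractivity} is recorded above, and (ii) keeping track of the exponents so that the two expectations end up matched under the fourth root; both are routine.
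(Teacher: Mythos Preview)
Your proposal is correct and essentially identical to the paper's proof: both expand the square, bound each cross term by Cauchy--Schwarz to get $\norm{U_\rho p_i}_4^2\norm{U_\rho p_j}_4^2$, apply the scalar hypercontractive inequality coordinatewise, and recognise the resulting sum as $\br{\sum_i\twonorm{p_i}^2}^2$. The only cosmetic difference is that the paper splits the double sum into diagonal and off-diagonal parts, whereas you keep them together; your version is if anything slightly cleaner.
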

\begin{proof}
  Let $q_i\defeq U_{\rho}p_i$. Then
  \begin{eqnarray*}
  	&&\expec{\mathbf{x}\sim \gamma_n}{\br{\sum_i \abs{\br{U_{\rho}p_i}\br{\mathbf{x}}}^2}^2}^{\frac{1}{4}}\\
  	&=&\br{\sum_{i=1}^n\expec{\mathbf{x}\sim \gamma_n}{\abs{q_i\br{\mathbf{x}}}^4}+\sum_{i\neq j}\expec{\mathbf{x}}{\abs{q_i\br{\mathbf{x}}^2q_j\br{\mathbf{x}}^2}}}^{\frac{1}{4}}\\
  	&\leq&\br{\sum_{i=1}^n\norm{q_i}_4^4+\sum_{i\neq j}\norm{q_i}_4^2\norm{q_j}_4^2}^{\frac{1}{4}}\quad\quad\mbox{(Cauchy-Schwarz inequality)}\\
  	&\leq&\br{\sum_{i=1}^n\twonorm{p_i}^4+\sum_{i\neq j}\twonorm{p_i}^2\twonorm{p_j}^2}^{\frac{1}{4}}\quad\quad\mbox{(Fact~\ref{fac:gaussianhypercontractivity})}\\
  	&=&\br{\sum_i\twonorm{p_i}^2}^{\frac{1}{2}}\\
  	&=&\expec{\mathbf{x}\sim \gamma_n}{\sum_{i=1}^n\abs{p_i\br{\mathbf{x}}}^2}^{\frac{1}{2}}.
  \end{eqnarray*}
\end{proof}

It is now ready to prove Lemma~\ref{lem:hypercontractivity}.
\begin{proof}[Proof of Lemma~\ref{lem:hypercontractivity}]
	 Let $\mathbf{P}=\sum_{\sigma\in[4]^h_{\geq 0}}p_{\sigma}\br{\mathbf{g}}\B_{\sigma}$, where
	$\set{\B_i}_{i=0}^3$ is a standard orthonormal basis.
	Set $\mathbf{Q}=\sum_{\sigma\in[4]^h_{\geq 0}}\br{U_{\rho}p_{\sigma}}\br{\mathbf{g}}\B_{\sigma}$. Then by the definition of $\Gamma_{\rho}$,
	\[\Gamma_{\rho}\br{\mathbf{P}}=\T_{\rho}\br{\mathbf{Q}}.\]
\noindent Using Fact~\ref{fac:hyperqubit},
\begin{eqnarray*}
	&&N_4\br{\Gamma_{\rho}\br{\mathbf{P}}}=\expec{}{\nnorm{\T_{\rho}\br{\mathbf{Q}}}_4^4}^{\frac{1}{4}}\leq\expec{}{\nnorm{\mathbf{Q}}_2^4}^{\frac{1}{4}}.
\end{eqnarray*}
	Let $p_{ij}\in L^2\br{\complex,\gamma_n}$ and $q_{ij}\in L^2\br{\complex,\gamma_n}$ be the entries of $\mathbf{P}$ and $\mathbf{Q}$, respectively, for $1\leq i,j\leq 2^h$. Then $q_{ij}=U_{\rho}p_{ij}$.
	Notice that
	\begin{eqnarray*}
		&&N_4\br{\mathbf{Q}}=\expec{}{\nnorm{\mathbf{Q}}_2^4}^{\frac{1}{4}}=\expec{\mathbf{x}\sim \gamma_n}{\br{\sum_{ij}\abs{q_{ij}\br{\mathbf{x}}}^2}^2}^{\frac{1}{4}}\leq\expec{\mathbf{x}\sim \gamma_n}{\sum_{ij}\abs{p_{ij}\br{\mathbf{x}}}^2}^{\frac{1}{2}}=N_2\br{\mathbf{P}},
	\end{eqnarray*}
where the inequality is from Lemma~\ref{lem:gausshyper}. We conclude the result.
	
\end{proof}
\section{Quantum invariance principle}\label{sec:invariance}
Throughout this section we define the following joint random variables.
\[\br{\g_{1,0},\g_{1,1},\g_{1,2},\g_{1,3},\ldots, \g_{n,0},\g_{n,1},\g_{n,2},\g_{n,3}}\sim\br{\set{1}\times \gamma_3}^{\otimes n}.\]
For any $0\leq i\leq n$, define the hybrid basis and the hybrid random operators
\begin{align}
&\X_{\sigma}^{\br{i}}\defeq\br{\mathbf{g}_{1,\sigma_1}\P_0}\otimes\br{\mathbf{g}_{2,\sigma_2}\P_0}\otimes\ldots\otimes\br{\mathbf{g}_{i,\sigma_i}\P_0}\otimes \P_{\sigma_{>i}}\label{eqn:hybridxi}\\
&\mathbf{M}^{\br{i}}\defeq\sum_{\sigma\in[4]^n}\widehat{M}\br{\sigma}\X^{\br{i}}_{\sigma}.\label{eqn:hybridmi}
\end{align}
\begin{lemma}\label{lem:mg}
	$\mathbf{M}^{(i)}$ is independent of the choices of the basis. Namely, for any standard orthonormal basis $\set{\B_i}_{i=0}^{3}$ in $\M_2$ and $M=\sum_{\sigma\in[d]^n}\lambda_{\sigma}\B_{\sigma}$, set $\mathbf{N}=\sum_{\sigma\in[4]_{\geq 0}^n}\lambda_{\sigma}\br{\prod_{j=1}^i\g_{j,\sigma_j}}\B_0^{\otimes i}\otimes \B_{\sigma>i}$. Then $\mathbf{N}=\mathbf{M}^{\br{i}}.$
\end{lemma}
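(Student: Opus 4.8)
The plan is to realize $\mathbf{M}^{(i)}$ as the image of $M$ under a single linear ``substitution'' operator, and then compute what that operator does to an arbitrary-basis Fourier expansion of $M$. First, apply Fact~\ref{fac:unitarybasis} to the Pauli basis $\P$ and the given standard orthonormal basis $\set{\B_k}_{k=0}^{3}$ of $\M_2$: there is a $3\times 3$ orthogonal matrix $U$ with $\B_0=\P_0=\id_2$ and $\B_k=\sum_{l=1}^{3}U_{k,l}\P_l$ for $k\in\set{1,2,3}$. For each coordinate $j\in[n]$ define a linear map $\phi_j\colon\M_2\to L^2\br{\M_2,\gamma_3}$ on the Pauli basis by $\phi_j\br{\P_0}\defeq\P_0$ and $\phi_j\br{\P_k}\defeq\g_{j,k}\P_0$ for $k\in\set{1,2,3}$ (where $\br{\g_{j,1},\g_{j,2},\g_{j,3}}$ is the $j$-th Gaussian block), and set $\Phi^{(i)}\defeq\phi_1\otimes\cdots\otimes\phi_i\otimes\id^{\otimes(n-i)}$, a linear map on $\M_2^{\otimes n}$.

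I would then observe that $\mathbf{M}^{(i)}=\Phi^{(i)}\br{M}$: since $\g_{j,0}=1$, applying $\Phi^{(i)}$ to a Pauli basis element gives $\Phi^{(i)}\br{\P_\sigma}=\br{\g_{1,\sigma_1}\P_0}\otimes\cdots\otimes\br{\g_{i,\sigma_i}\P_0}\otimes\P_{\sigma_{>i}}=\X^{(i)}_\sigma$, so expanding $M=\sum_\sigma\widehat{M}\br{\sigma}\P_\sigma$ and using linearity yields $\Phi^{(i)}\br{M}=\mathbf{M}^{(i)}$. Next I would feed the $\B$-expansion $M=\sum_\sigma\lambda_\sigma\B_\sigma$ into the same operator. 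Here $\phi_j\br{\B_0}=\phi_j\br{\P_0}=\P_0=\B_0$, while for $k\in\set{1,2,3}$,
\[\phi_j\br{\B_k}=\sum_{l=1}^{3}U_{k,l}\,\phi_j\br{\P_l}=\Big(\sum_{l=1}^{3}U_{k,l}\g_{j,l}\Big)\P_0=\g'_{j,k}\B_0,\quad\text{where }\g'_{j,k}\defeq\sum_{l=1}^{3}U_{k,l}\g_{j,l},\ \g'_{j,0}\defeq 1.\]
By linearity this gives $\mathbf{M}^{(i)}=\Phi^{(i)}\br{M}=\sum_{\sigma\in[4]_{\geq 0}^{n}}\lambda_\sigma\big(\prod_{j=1}^{i}\g'_{j,\sigma_j}\big)\B_0^{\otimes i}\otimes\B_{\sigma_{>i}}$, which is literally $\mathbf{N}$ with each Gaussian block $\br{\g_{j,1},\g_{j,2},\g_{j,3}}$, $1\le j\le i$, replaced by its rotation under $U$.

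Finally I would invoke rotation invariance of the Gaussian measure: since $U$ is orthogonal, each block $\gamma_3$ is invariant under $U$, and because $U$ acts within each block the rotated blocks remain mutually independent, so $\br{\g'_{j,\cdot}}_{j=1}^{i}$ has the same joint law as $\br{\g_{j,\cdot}}_{j=1}^{i}$. Therefore $\mathbf{M}^{(i)}$ and $\mathbf{N}$ agree as random operators, i.e.\ have the same distribution, which is the assertion of the lemma.

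I do not anticipate any analytic difficulty here; the whole argument is the interplay of linearity of $\Phi^{(i)}$ with the orthogonal change of basis. The only points requiring care are the tensor-factor bookkeeping together with the convention $\g_{j,0}=1$, and the (standard) interpretation that ``equality of random operators'' means equality in distribution — the passage from $\g'$ back to $\g$ is exactly, and only, an application of the rotation invariance of $\gamma_3$, the same fact already used in the footnote to Definition~\ref{def:gamma}.
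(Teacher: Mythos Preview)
Your proposal is correct and follows essentially the same approach as the paper: invoke Fact~\ref{fac:unitarybasis} to relate the bases by an orthogonal matrix, then use rotation invariance of the Gaussian to identify the resulting random operators. The paper's proof is a two-sentence sketch of exactly this argument; you have simply filled in the bookkeeping (the linear substitution map $\Phi^{(i)}$ and the explicit rotated Gaussians $\g'_{j,k}$) and made explicit the interpretation of equality in distribution.
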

\begin{proof}
		From Fact~\ref{fac:unitarybasis}, all orthonormal basis are equivalent up to orthogonal transformations. The lemma follows from the well known fact that the Gaussian distribution $\gamma_n$ is invariant under any orthogonal transformation.		
\end{proof}
\begin{lemma}\label{lem:hybrid}
	For any integer $n>0$ and $0\leq i\leq n-1$ and $M\in\M_2^{\otimes n}$, it holds that
	\begin{align*}
		\abs{\expec{}{\Tr\zeta_{\lambda}\br{\mathbf{M}^{\br{i+1}}}-\Tr\zeta_{\lambda}\br{\mathbf{M}^{\br{i}}}}}
		\leq O\br{\frac{2^n3^d}{\lambda}\influence_{i+1}\br{M}^{3/2}},
	\end{align*}
	where $d=\deg M$.
\end{lemma}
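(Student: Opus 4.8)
The statement is a "swap one coordinate" estimate: we replace the $(i+1)$-st tensor factor, which in $\mathbf{M}^{(i)}$ still carries the Pauli basis $\P_{\sigma_{i+1}}$, by the hybrid Gaussian factor $\mathbf{g}_{i+1,\sigma_{i+1}}\P_0$. The natural route is a third-order Taylor expansion of $\Tr\zeta_\lambda(\cdot)$ around the common part, exactly in the spirit of the classical invariance principle of~\cite{MosselOdonnell:2010} but using the Fr\'echet-derivative Taylor expansion of Lemma~\ref{lem:zetataylor}. Concretely, write $M=\sum_\sigma\widehat M(\sigma)\X^{(i)}_\sigma$ and split off the $(i+1)$-st coordinate: let $R$ be the ``rest'' operator obtained by deleting the contribution of coordinate $i+1$ (i.e. the part of $\mathbf{M}^{(i)}$ supported on $\sigma_{i+1}=0$, which is common to $\mathbf{M}^{(i)}$ and $\mathbf{M}^{(i+1)}$ since $\P_0=\id$ and $\mathbf{g}_{j,0}=1$), and let $\Delta_0$ be the increment contributed by coordinate $i+1$ in $\mathbf{M}^{(i)}$ (terms $\sigma_{i+1}\in\{1,2,3\}$ with the Pauli $\P_{\sigma_{i+1}}$ placed on register $i+1$), and $\Delta_1$ the increment in $\mathbf{M}^{(i+1)}$ (the same Fourier coefficients but with $\mathbf{g}_{i+1,\sigma_{i+1}}\P_0$ on register $i+1$). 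Then $\mathbf{M}^{(i)}=R+\Delta_0$ and $\mathbf{M}^{(i+1)}=R+\Delta_1$, and $\|\Delta_0\|_2^2$ and $\|\Delta_1\|_2^2$ are both, after taking the expectation over the Gaussians, equal to $\sum_{\sigma:\sigma_{i+1}\neq 0}|\widehat M(\sigma)|^2=\influence_{i+1}(M)$ by Lemma~\ref{lem:partialvariance} item 3 and $\expec{}{\mathbf{g}_{i+1,k}^2}=1$.

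First I would apply Lemma~\ref{lem:zetataylor} to both $\Tr\zeta_\lambda(R+\Delta_0)$ and $\Tr\zeta_\lambda(R+\Delta_1)$, expanding around $R$:
\[
\Tr\zeta_\lambda(R+\Delta_b)=\Tr\zeta_\lambda(R)+\Tr D\zeta_\lambda(R)(\Delta_b)+\tfrac12\Tr D^2\zeta_\lambda(R)(\Delta_b)+O\!\br{\tfrac{\twonorm{\Delta_b}\norm{\Delta_b}_4^2}{\lambda}}.
\]
Subtracting, the zeroth-order terms cancel, and it remains to show that the first- and second-order terms agree in expectation over the Gaussians $\br{\mathbf{g}_{i+1,1},\mathbf{g}_{i+1,2},\mathbf{g}_{i+1,3}}$ (conditioned on everything else), and to bound the two error terms. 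The matching of the low-order terms is the ``invariance'' heart of the argument: $\Tr D\zeta_\lambda(R)(\Delta_b)$ is linear in $\Delta_b$, and $\expec{}{\Delta_0}=\expec{}{\Delta_1}=0$ coordinate-wise since the Paulis $\P_1,\P_2,\P_3$ are traceless (hence contribute $0$ after tracing against anything supported on register $i+1$ being $\id$ in $D\zeta_\lambda(R)$, which does not act on register $i+1$ — here one uses that $R$ is $\id$ on register $i+1$, so $D\zeta_\lambda(R)$ is too), and $\expec{}{\mathbf{g}_{i+1,k}}=0$. For the second-order term, $\Tr D^2\zeta_\lambda(R)(\Delta_b,\Delta_b)$: expanding $\Delta_b$ as a sum over $\sigma_{i+1}\in\{1,2,3\}$, the cross terms $\sigma_{i+1}=k,\ \sigma'_{i+1}=k'$ with $k\neq k'$ vanish in expectation (Paulis are orthogonal; Gaussians are independent with mean $0$), and the diagonal terms $k=k'$ match because $\expec{}{\mathbf{g}_{i+1,k}^2}=1$ equals the normalized trace of $\P_k^2=\id$ on register $i+1$ — this is the ``second-moment matching'' that makes the whole scheme work, and it should drop out from the structure of $D^2\zeta_\lambda(R)$ being a sum of terms each of which factors through register $i+1$ as either $\Tr(\P_k\cdot)\otimes\Tr(\P_k\cdot)$ or $\Tr(\P_k\P_k\cdot)$ pairings.

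The remaining (and I expect main) obstacle is the error estimate: I must show $\expec{}{|\,O(\twonorm{\Delta_b}\norm{\Delta_b}_4^2)/\lambda\,|}=O(2^n3^d\,\influence_{i+1}(M)^{3/2}/\lambda)$. Here $\twonorm{\cdot}$ and $\norm{\cdot}_4$ are the unnormalized Schatten norms on $\M_2^{\otimes n}$, so $\twonorm{\Delta_b}=2^{n/2}\nnorm{\Delta_b}_2$ and $\norm{\Delta_b}_4=2^{n/4}\nnorm{\Delta_b}_4$, giving a prefactor $2^n$. To control $\norm{\Delta_b}_4^2$ in terms of $\nnorm{\Delta_b}_2^2$ I would invoke a hypercontractivity bound: $\Delta_b$ has ``degree'' at most $d=\deg M$ (counting both the Pauli support and, in the $b=1$ case, the single Gaussian factor of degree $1$ on register $i+1$), so Lemma~\ref{lem:Xhypercontractivity} (or directly Fact~\ref{fac:hyperqubit} together with Fact~\ref{fac:gaussianhypercontractivity}) yields $N_4(\Delta_b)\leq 3^{d/2}N_2(\Delta_b)$, i.e. $\expec{}{\nnorm{\Delta_b}_4^2}^{1/2}\le\ldots$ — more carefully, $\expec{}{\nnorm{\Delta_b}_4^4}^{1/4}\le 3^{d/2}\nnorm{\Delta_b}_{L^2}$; combined with Cauchy–Schwarz to split $\expec{}{\nnorm{\Delta_b}_2\nnorm{\Delta_b}_4^2}$ and with $\nnorm{\Delta_b}_2\le\nnorm{\Delta_b}_4$, this gives $\expec{}{\nnorm{\Delta_b}_2\nnorm{\Delta_b}_4^2}\le\expec{}{\nnorm{\Delta_b}_4^4}^{3/4}\cdot(\ldots)\le 3^{d}\cdot\influence_{i+1}(M)^{3/2}$, up to the bookkeeping of which $3$-power survives. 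Putting the $2^n$ prefactor, the $1/\lambda$ from Lemma~\ref{lem:zetataylor}, and the $3^d\influence_{i+1}(M)^{3/2}$ together gives the claimed bound. The delicate points to get right are: (i) verifying that $R$ acts as $\id$ on register $i+1$ so that $D\zeta_\lambda(R)$ and $D^2\zeta_\lambda(R)$ restricted to that register are scalars/identities, which is what forces the first-order term to vanish and the second-order terms to match; (ii) justifying the exchange of $\expec{}{\cdot}$ with the Taylor expansion and the fact that $\zeta_\lambda\in\mathcal C^2$ with $\zeta_\lambda'''$ bounded a.e.\ (Lemma~\ref{lem:zeta}) is enough regularity for Lemma~\ref{lem:zetataylor} to apply with the $\Delta_b$'s as directions; and (iii) bounding the degree of $\Delta_b$ by $d$ uniformly so hypercontractivity applies with the stated constant.
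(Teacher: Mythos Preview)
Your proposal is correct and follows essentially the same route as the paper: the same splitting $\mathbf{M}^{(i)}=\mathbf{A}+\mathbf{B}$, $\mathbf{M}^{(i+1)}=\mathbf{A}+\mathbf{C}$ (your $R,\Delta_0,\Delta_1$), the Taylor expansion via Lemma~\ref{lem:zetataylor}, cancellation of orders $0,1,2$ in expectation, and the error bound via Cauchy--Schwarz plus Lemma~\ref{lem:Xhypercontractivity}, yielding $N_2(\mathbf{B})N_4(\mathbf{B})^2\le 3^dN_2(\mathbf{B})^3=3^d\influence_{i+1}(M)^{3/2}$.

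The one place where the paper does more than your sketch is the second-order matching. Your heuristic ``$D^2\zeta_\lambda(R)$ restricted to register $i{+}1$ is a scalar because $R$ is $\id$ there'' is morally right but not immediate, because $D^2\zeta_\lambda(R)(\Delta,\Delta)$ is \emph{not} simply a quadratic form $\Tr(\Delta\, \Xi\, \Delta)$ for some fixed $\Xi$ depending on $R$; the term $\kappa_Q(P)$ in the second Fr\'echet derivative (see Lemma~\ref{lem:dq}) involves the Lyapunov solution $L(|P|,\cdot)$, which is a nontrivial linear map in $Q$. The paper handles this by proving a more robust statement (Claim~\ref{claim:bc}): for \emph{any} functions $f,g$, $\expec{}{\Tr\,\mathbf{B}f(\mathbf{A})\mathbf{B}g(\mathbf{A})}=\expec{}{\Tr\,\mathbf{C}f(\mathbf{A})\mathbf{C}g(\mathbf{A})}$, exploiting exactly your observation that $\mathbf{A}=\id_2\otimes\widetilde{\mathbf{A}}$ on register $i{+}1$; then Claim~\ref{claim:1} reduces $\Tr D^2\zeta_\lambda(\mathbf{A})(\cdot,\cdot)$ to a combination (and, for the $\kappa_Q$ term, an integral via Fact~\ref{fac:lysol2}) of such expressions. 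So your ``delicate point (i)'' is indeed the crux, but carrying it through requires the explicit derivative formulas of Section~\ref{sec:derivative} rather than a direct tensor-factor argument on $D^2\zeta_\lambda$.
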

\begin{proof} Note that

	\begin{align*}
	&\mathbf{M}^{\br{i}}=\sum_{\sigma:\sigma_{i+1}=0}\widehat{M}\br{\sigma}\X_{\sigma}^{\br{i}}+\sum_{\sigma:\sigma_{i+1}\neq 0}\widehat{M}\br{\sigma}\X_{\sigma}^{\br{i}},\\
	&\mathbf{M}^{\br{i+1}}=\sum_{\sigma:\sigma_{i+1}= 0}\widehat{M}\br{\sigma}\X_{\sigma}^{\br{i+1}}+\sum_{\sigma:\sigma_{i+1}= 0}\widehat{M}\br{\sigma}\X_{\sigma}^{\br{i+1}},
	\end{align*}
	and
	\[\sum_{\sigma:\sigma_{i+1}=0}\widehat{M}\br{\sigma}\X_{\sigma}^{\br{i}}=\sum_{\sigma:\sigma_{i+1}= 0}\widehat{M}\br{\sigma}\X_{\sigma}^{\br{i+1}}.\]
	Set
	\begin{align*}
	&\mathbf{A}\defeq\sum_{\sigma:\sigma_{i+1}=0}\widehat{M}\br{\sigma}\X_{\sigma}^{\br{i}}\\
	&\mathbf{B}\defeq\sum_{\sigma:\sigma_{i+1}\neq 0}\widehat{M}\br{\sigma}\X_{\sigma}^{\br{i}},\\
	&\mathbf{C}\defeq\sum_{\sigma:\sigma_{i+1}\neq 0}\widehat{M}\br{\sigma}\X_{\sigma}^{\br{i+1}}.
	\end{align*}
	Then we have
	\begin{align*}
	\mathbf{M}^{\br{i}}=\mathbf{A}+\mathbf{B};~\mathbf{M}^{\br{i+1}}=\mathbf{A}+\mathbf{C}.
	\end{align*}
	From Lemma~\ref{lem:zetataylor},
	\begin{align*} &\abs{\expec{}{\Tr~\zeta_{\lambda}\br{\mathbf{M}^{\br{i+1}}}-\Tr~\zeta_{\lambda}\br{\mathbf{M}^{\br{i}}}}}\\
	\leq&\abs{\expec{}{\br{\Tr~D\zeta_{\lambda}\br{\mathbf{A}}\br{\mathbf{C}}+\frac{1}{2}\Tr~D^2\zeta_{\lambda}\br{\mathbf{A}}\br{\mathbf{C}}}-\atop\br{\Tr~D\zeta_{\lambda}\br{\mathbf{A}}\br{\mathbf{B}}+\frac{1}{2}\Tr~D^2\zeta_{\lambda}\br{\mathbf{A}}\br{\mathbf{B}}}}}\\
&+O\br{\expec{}{\frac{\twonorm{\mathbf{C}}\norm{\mathbf{C}}_4^2}{\lambda}}}+O\br{\expec{}{\frac{\twonorm{\mathbf{B}}\norm{\mathbf{B}}_4^2}{\lambda}}}\\
&\leq\abs{\expec{}{\br{\Tr~D\zeta_{\lambda}\br{\mathbf{A}}\br{\mathbf{C}}+\frac{1}{2}\Tr~D^2\zeta_{\lambda}\br{\mathbf{A}}\br{\mathbf{C}}}-\atop\br{\Tr~D\zeta_{\lambda}\br{\mathbf{A}}\br{\mathbf{B}}+\frac{1}{2}\Tr~D^2\zeta_{\lambda}\br{\mathbf{A}}\br{\mathbf{B}}}}}\\
&+O\br{\frac{2^n}{\lambda}\br{N_2\br{\mathbf{C}}N_4\br{\mathbf{C}}^2+N_2\br{\mathbf{B}}N_4\br{\mathbf{B}}^2}}\quad\quad\mbox{(Cauchy-Schwarz inequality)}\\
&=O\br{\frac{2^n}{\lambda}\br{N_2\br{\mathbf{C}}N_4\br{\mathbf{C}}^2+N_2\br{\mathbf{B}}N_4\br{\mathbf{B}}^2}}.\quad\quad\mbox{(Claim~\ref{claim:1})}
	\end{align*}
	 Applying Lemma~\ref{lem:Xhypercontractivity}, we have
	\[\abs{\expec{}{\Tr~\zeta_{\lambda}\br{\mathbf{M}^{\br{i+1}}}-\Tr~\zeta_{\lambda}\br{\mathbf{M}^{\br{i}}}}}\leq O\br{\frac{3^d2^n}{\lambda}\br{N_2\br{\mathbf{B}}^3+N_2\br{\mathbf{C}}^3}}.\]
Notice that
	\[N_2\br{\mathbf{B}}=N_2\br{\mathbf{C}}=\br{\sum_{\sigma:\sigma_{i+1}\neq 0}\abs{\widehat{M}\br{\sigma}^2}}^{1/2}=\influence_{i+1}\br{M}^{1/2}.\]
	Therefore,
	\[\abs{\expec{}{\Tr\zeta_{\lambda}\br{\mathbf{M}^{\br{i+1}}}-\Tr\zeta_{\lambda}\br{\mathbf{M}^{\br{i}}}}}\leq O\br{\frac{2^n3^d}{\lambda}\influence_{i+1}\br{M}^{3/2}}.\]
\end{proof}

\begin{claim}\label{claim:bc}
		\begin{eqnarray}
	&&\expec{}{\Tr~\mathbf{B}f\br{\mathbf{A}}}=\expec{}{\Tr~\mathbf{C} f\br{\mathbf{A}}}\label{eqn:bc}\\
	&&\expec{}{\Tr~\mathbf{B}f\br{\mathbf{A}}\mathbf{B}g\br{\mathbf{A}}}=\expec{}{\Tr~\mathbf{C}f\br{\mathbf{A}}\mathbf{C}g\br{\mathbf{A}}}\label{eqn:bc2}
	\end{eqnarray}
	for any $f,g\in L^2\br{\reals, \gamma_1}$.
\end{claim}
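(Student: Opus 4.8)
\textbf{Proof proposal for Claim~\ref{claim:bc}.}
The plan is to isolate the $(i+1)$-st tensor coordinate and exploit that, by construction, $\mathbf{A}$, $\mathbf{B}$ and $\mathbf{C}$ share the same ``other-coordinate'' part. Writing each $\sigma\in[4]^n$ as $\br{\sigma_{\leq i},\sigma_{i+1},\sigma_{\geq i+2}}$ and setting, for $k\in\set{0,1,2,3}$,
\[
\mathbf{R}_k\;\defeq\;\sum_{\sigma:\,\sigma_{i+1}=k}\widehat{M}\br{\sigma}\,\br{\bigotimes_{j\leq i}\g_{j,\sigma_j}\P_0}\otimes\br{\bigotimes_{j\geq i+2}\P_{\sigma_j}},
\]
a random operator on the $n-1$ coordinates $[n]\setminus\set{i+1}$ that depends only on $\set{\g_{j,\cdot}}_{j\leq i}$, one obtains, after inserting the appropriate $2\times2$ matrix into slot $i+1$,
\[
\mathbf{A}=\mathbf{R}_0\otimes\id_2,\qquad \mathbf{B}=\sum_{k=1}^3\mathbf{R}_k\otimes\P_k,\qquad \mathbf{C}=\sum_{k=1}^3\g_{i+1,k}\br{\mathbf{R}_k\otimes\id_2},
\]
where ``$\otimes$'' here means placing the second factor in position $i+1$; these follow from the definitions of $\X_\sigma^{\br{i}}$, $\X_\sigma^{\br{i+1}}$ in Eq.~\eqref{eqn:hybridxi} and the fact that passing from $\X_\sigma^{\br{i}}$ to $\X_\sigma^{\br{i+1}}$ changes only the slot-$(i+1)$ factor, from $\P_{\sigma_{i+1}}$ to $\g_{i+1,\sigma_{i+1}}\P_0$. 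The three elementary facts I would use are: multiplicativity of the trace over tensor factors; that spectral functional calculus is insensitive to tensoring by $\id_2$ in slot $i+1$, so $f\br{\mathbf{A}}=f\br{\mathbf{R}_0}\otimes\id_2$ and $g\br{\mathbf{A}}=g\br{\mathbf{R}_0}\otimes\id_2$ (legitimate since $\mathbf{A}$ is Hermitian — $M$ Hermitian forces $\widehat{M}\br{\sigma}\in\reals$ and all $\P_j$ are Hermitian — and acts as $\id_2$ on slot $i+1$); and the Pauli moment identities $\Tr\P_k=0$ and $\Tr\br{\P_k\P_l}=2\delta_{k,l}$ for $k,l\in\set{1,2,3}$, which mirror $\expec{}{\g_{i+1,k}}=0$ and $\expec{}{\g_{i+1,k}\g_{i+1,l}}=\delta_{k,l}$, the latter because $\br{\g_{i+1,1},\g_{i+1,2},\g_{i+1,3}}$ is a standard $3$-dimensional Gaussian independent of $\mathbf{R}_0,\ldots,\mathbf{R}_3$.

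For \eqref{eqn:bc} one has $\mathbf{B}f\br{\mathbf{A}}=\sum_{k=1}^3\br{\mathbf{R}_k f\br{\mathbf{R}_0}}\otimes\P_k$, hence $\Tr\,\mathbf{B}f\br{\mathbf{A}}=\sum_{k=1}^3\Tr\br{\mathbf{R}_k f\br{\mathbf{R}_0}}\cdot\Tr\P_k=0$, whereas $\mathbf{C}f\br{\mathbf{A}}=\sum_{k=1}^3\g_{i+1,k}\br{\mathbf{R}_k f\br{\mathbf{R}_0}}\otimes\id_2$, so pulling the independent $\g_{i+1,k}$ out of the expectation,
\[
\expec{}{\Tr\,\mathbf{C}f\br{\mathbf{A}}}=2\sum_{k=1}^3\expec{}{\g_{i+1,k}}\expec{}{\Tr\br{\mathbf{R}_k f\br{\mathbf{R}_0}}}=0,
\]
so both sides vanish. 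For \eqref{eqn:bc2} the same manipulation gives $\mathbf{B}f\br{\mathbf{A}}\mathbf{B}g\br{\mathbf{A}}=\sum_{k,l=1}^3\br{\mathbf{R}_k f\br{\mathbf{R}_0}\mathbf{R}_l g\br{\mathbf{R}_0}}\otimes\P_k\P_l$, hence
\[
\expec{}{\Tr\,\mathbf{B}f\br{\mathbf{A}}\mathbf{B}g\br{\mathbf{A}}}=\sum_{k,l=1}^3\Tr\br{\P_k\P_l}\expec{}{\Tr\br{\mathbf{R}_k f\br{\mathbf{R}_0}\mathbf{R}_l g\br{\mathbf{R}_0}}}=2\sum_{k=1}^3\expec{}{\Tr\br{\mathbf{R}_k f\br{\mathbf{R}_0}\mathbf{R}_k g\br{\mathbf{R}_0}}},
\]
whereas $\mathbf{C}f\br{\mathbf{A}}\mathbf{C}g\br{\mathbf{A}}=\sum_{k,l=1}^3\g_{i+1,k}\g_{i+1,l}\br{\mathbf{R}_k f\br{\mathbf{R}_0}\mathbf{R}_l g\br{\mathbf{R}_0}}\otimes\id_2$, so using independence and $\expec{}{\g_{i+1,k}\g_{i+1,l}}=\delta_{k,l}$,
\[
\expec{}{\Tr\,\mathbf{C}f\br{\mathbf{A}}\mathbf{C}g\br{\mathbf{A}}}=2\sum_{k,l=1}^3\expec{}{\g_{i+1,k}\g_{i+1,l}}\expec{}{\Tr\br{\mathbf{R}_k f\br{\mathbf{R}_0}\mathbf{R}_l g\br{\mathbf{R}_0}}}=2\sum_{k=1}^3\expec{}{\Tr\br{\mathbf{R}_k f\br{\mathbf{R}_0}\mathbf{R}_k g\br{\mathbf{R}_0}}},
\]
which agrees with the $\mathbf{B}$ side. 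This establishes both identities.

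There is no genuine analytic obstacle; the work is bookkeeping with tensor factors, and the only point needing attention is the assertion that $\mathbf{R}_0,\ldots,\mathbf{R}_3$ depend only on the Gaussian batches $\g_{1,\cdot},\ldots,\g_{i,\cdot}$ — clear because coordinates $j\geq i+2$ of $\X_\sigma^{\br{i}}$ and $\X_\sigma^{\br{i+1}}$ carry only the Pauli operators $\P_{\sigma_j}$ and no randomness — together with the functional-calculus step $f\br{\mathbf{A}}=f\br{\mathbf{R}_0}\otimes\id_2$, which is immediate from the spectral decomposition of the self-adjoint $\mathbf{A}=\mathbf{R}_0\otimes\id_2$. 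Everything else is $\Tr\br{X_1\otimes\cdots\otimes X_n}=\prod_j\Tr X_j$ and the orthonormality of the Pauli basis under $\tfrac12\Tr\br{\cdot^{\dagger}\,\cdot}$.
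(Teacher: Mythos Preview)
Your proof is correct and follows essentially the same approach as the paper: both isolate the $(i+1)$-st tensor slot, use that $\mathbf{A}$ acts as $\id_2$ there so $f(\mathbf{A})$ factors, and then match the Pauli orthogonality relations $\Tr\P_k=0$, $\Tr\P_k\P_l=2\delta_{kl}$ against the Gaussian moments $\expec{}{\g_{i+1,k}}=0$, $\expec{}{\g_{i+1,k}\g_{i+1,l}}=\delta_{kl}$. Your introduction of the operators $\mathbf{R}_k$ is a cleaner packaging than the paper's $\widetilde{\mathbf{A}},\widetilde{\mathbf{B}},\widetilde{\mathbf{C}}$, but the argument is the same.
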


\begin{claim}\label{claim:1}
	It holds that
	\[\expec{}{\br{\Tr~D\zeta_{\lambda}\br{\mathbf{A}}\br{\mathbf{C}}}}=\expec{}{\br{\Tr~D\zeta_{\lambda}\br{\mathbf{A}}\br{\mathbf{B}}}};\]
	\[\expec{}{\br{\Tr~D^2\zeta_{\lambda}\br{\mathbf{A}}\br{\mathbf{C}}}}=\expec{}{\br{\Tr~D^2\zeta_{\lambda}\br{\mathbf{A}}\br{\mathbf{B}}}}.\]		
\end{claim}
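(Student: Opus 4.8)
The plan is to reduce Claim~\ref{claim:1} to Claim~\ref{claim:bc}, and to prove Claim~\ref{claim:bc} by exploiting the fact that the only difference between $\mathbf{B}$ and $\mathbf{C}$ is that the $(i+1)$-st tensor slot carries a Pauli operator $\P_{\sigma_{i+1}}$ with $\sigma_{i+1}\neq 0$ in $\mathbf{B}$, versus the scalar Gaussian $\mathbf{g}_{i+1,\sigma_{i+1}}$ (times $\P_0=\id$) in $\mathbf{C}$, while $\mathbf{A}$ involves neither (its $(i+1)$-st slot is $\P_0$, since $\sigma_{i+1}=0$ for every term of $\mathbf{A}$). First I would record the structural decomposition already in the proof of Lemma~\ref{lem:hybrid}: $\mathbf{A}$ acts as the identity on register $i+1$, so $f(\mathbf{A})$ and $D^k\zeta_\lambda(\mathbf{A})(\cdot)$ also act as identity on register $i+1$ (the Fréchet derivatives of $\zeta_\lambda$ at $\mathbf{A}$ are built from spectral data of $\mathbf{A}$, hence inherit the tensor-identity structure on slot $i+1$). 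Consequently the trace over register $i+1$ factors out of every term, and the claim becomes a statement purely about the $(i+1)$-st factor.

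For Claim~\ref{claim:bc}, the key computation is the partial trace on register $i+1$. Writing $\mathbf{B}=\sum_{j\neq 0}\mathbf{B}_j\otimes_{(i+1)}\P_j$ and $\mathbf{C}=\sum_{j\neq 0}\mathbf{C}_j\otimes_{(i+1)}(\mathbf{g}_{i+1,j}\P_0)$ — where $\mathbf{B}_j,\mathbf{C}_j$ collect the remaining factors and are \emph{identically distributed across the two families} once we also track that $\mathbf{g}_{i+1,j}$ appears linearly — I would use $\Tr_{(i+1)}\P_j=0$ for $j\neq 0$ and $\Tr_{(i+1)}\P_0=2$. For Eq.~\eqref{eqn:bc}: expanding $\Tr\,\mathbf{B}f(\mathbf{A})$ over slot $i+1$ gives $\sum_{j\neq 0}\Tr_{(i+1)}\P_j\cdot(\text{rest})=0$; expanding $\Tr\,\mathbf{C}f(\mathbf{A})$ gives $\sum_{j\neq 0}2\,\mathbf{g}_{i+1,j}\cdot(\text{rest})$, and taking the expectation over the Gaussians, $\expec{}{\mathbf{g}_{i+1,j}}=0$ kills every term — so both sides are $0$. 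For Eq.~\eqref{eqn:bc2}: the left side, expanded on slot $i+1$, contains $\Tr_{(i+1)}(\P_j\P_k)=2\delta_{jk}$, leaving $2\sum_{j\neq 0}\expec{}{\Tr(\mathbf{B}_j f(\mathbf{A}')\mathbf{B}_j g(\mathbf{A}'))}$ on the remaining registers; the right side gives $2\sum_{j,k\neq 0}\expec{}{\mathbf{g}_{i+1,j}\mathbf{g}_{i+1,k}}\cdot(\cdots)=2\sum_{j\neq 0}\expec{}{\mathbf{g}_{i+1,j}^2}\cdot(\cdots)$ using independence of the $\mathbf{g}_{i+1,\cdot}$ across distinct $j$ and $\expec{}{\mathbf{g}_{i+1,j}^2}=1$; since $\mathbf{B}_j$ and $\mathbf{C}_j$ have the same joint law with $(\mathbf{A},\mathbf{g})$ restricted to the other coordinates, the two sums match term by term. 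This is exactly the classical ``matching first and second moments'' mechanism underlying the invariance principle, here carried out one register at a time.

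Finally, to pass from Claim~\ref{claim:bc} to Claim~\ref{claim:1}, I would invoke the explicit formulas for the first and second Fréchet derivatives of $\zeta_\lambda$ from Appendix~\ref{sec:frechet}/Section~\ref{sec:derivative}: $D\zeta_\lambda(\mathbf{A})(\mathbf{X})$ is of the form $\sum$ of terms $g_1(\mathbf{A})\,\mathbf{X}\,g_2(\mathbf{A})$ (via the Daleckii–Krein / divided-difference representation), and $D^2\zeta_\lambda(\mathbf{A})(\mathbf{X})=D^2\zeta_\lambda(\mathbf{A})(\mathbf{X},\mathbf{X})$ is a sum of terms $g_1(\mathbf{A})\,\mathbf{X}\,g_2(\mathbf{A})\,\mathbf{X}\,g_3(\mathbf{A})$. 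Taking traces and using cyclicity, $\Tr\,D\zeta_\lambda(\mathbf{A})(\mathbf{X})=\Tr\,\mathbf{X}\,h(\mathbf{A})$ for a suitable $h$, and $\Tr\,D^2\zeta_\lambda(\mathbf{A})(\mathbf{X})$ is a sum of $\Tr\,\mathbf{X}\,h_1(\mathbf{A})\,\mathbf{X}\,h_2(\mathbf{A})$; then substitute $\mathbf{X}=\mathbf{B}$ and $\mathbf{X}=\mathbf{C}$ and apply Eqs.~\eqref{eqn:bc} and~\eqref{eqn:bc2} directly (absorbing the scalar integration over the $\mathbf{g}$'s that are not $\mathbf{g}_{i+1,\cdot}$ into the functions $h,h_i$, which is legitimate since those act on registers $\neq i+1$). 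The main obstacle I anticipate is bookkeeping rather than conceptual: making the Daleckii–Krein representation of $D\zeta_\lambda,D^2\zeta_\lambda$ precise enough that ``$h(\mathbf{A})$ acts as the identity on register $i+1$'' is manifestly true — i.e., verifying that spectral functions of $\mathbf{A}$ respect the tensor factorization $\mathbf{A}=\mathbf{A}'\otimes_{(i+1)}\id$ — and carefully tracking that $\mathbf{B}_j$ and $\mathbf{C}_j$ genuinely share the same joint distribution with everything else, so that the term-by-term matching in Eq.~\eqref{eqn:bc2} is valid. Once that tensor-structure observation is in hand, both claims follow by the moment-matching computation above.
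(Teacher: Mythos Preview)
Your plan is correct and follows the same architecture as the paper's proof: reduce Claim~\ref{claim:1} to Claim~\ref{claim:bc} via the explicit first- and second-order Fr\'echet derivative formulas for $\zeta_\lambda$, and prove Claim~\ref{claim:bc} by the moment-matching computation on register $i+1$ (Pauli orthogonality $\Tr_{(i+1)}\P_j\P_k=2\delta_{jk}$ versus Gaussian second moments $\expec{}{\mathbf{g}_{i+1,j}\mathbf{g}_{i+1,k}}=\delta_{jk}$). Your structural observation that $\mathbf{A}$, and hence any $f(\mathbf{A})$, acts as the identity on slot $i+1$ is exactly what the paper records in Eqs.~\eqref{eqn:A}--\eqref{eqn:C}.

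The one point you underestimate as ``bookkeeping'' is genuinely the crux of the second equality. The second Fr\'echet derivative computed in Lemma~\ref{lem:dq} contains the term $\Tr\,Q\,\kappa_Q(P)$ with $\kappa_Q(P)=\{P,L(|P|,PQ+QP)\}$, and the Lyapunov solution $L(|P|,\cdot)$ is \emph{not} of the form $g_1(P)\,Q\,g_2(P)$: in the eigenbasis of $P$ its $(i,j)$-entry carries a factor $(|a_i|+|a_j|)^{-1}$, which does not split as a product in $a_i$ and $a_j$. Your assertion that $\Tr\,D^2\zeta_\lambda(A)(X)$ is a finite sum of $\Tr\,X\,h_1(A)\,X\,h_2(A)$ therefore fails as stated. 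The paper's fix is to use the integral representation (Fact~\ref{fac:lysol2}) $L(|P|,M)=\int_0^\infty e^{-t|P|}Me^{-t|P|}\,dt$ for $P$ invertible, which rewrites $\Tr\,Q\,\kappa_Q(P)$ as $\int_0^\infty \Tr\bigl(h_1^{(t)}(P)\,Q\,h_2^{(t)}(P)\,Q\bigr)\,dt$, i.e.\ an \emph{integral} of terms to which Eq.~\eqref{eqn:bc2} applies pointwise in $t$; the general case then follows by continuity of $D^2\zeta_\lambda$. This is a short but essential step beyond Daleckii--Krein alone.
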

The proofs of the both claims above are deferred to Appendix~\ref{sec:appinvariance}. Combining Lemma~\ref{lem:hybrid} and Lemma~\ref{lem:zeta}, we conclude the following lemma.

\begin{lemma}\label{lem:invariance}
	Given $M\in\H_2^{\otimes n}$, let $\set{\B_i}_{i=0}^3$ be a standard orthonormal basis in $\M_2$. Then for any $0<\lambda\leq\frac{1}{2}$ and $H\subseteq[n]$, it holds that
	\[\abs{\expec{}{\Tr\zeta\br{\sum_{\sigma\in[4]_{\geq 0}^n}\widehat{M}\br{\sigma}\br{\bigotimes_{i\in H}\B_{\sigma_i}}\otimes\br{\bigotimes_{i\notin H}\mathbf{g}_{i,\sigma_i}\id_2}}}-\Tr\zeta\br{M}}\leq 2^n\br{8\lambda^2+\frac{3^d}{\lambda}\sum_{i\notin H}\influence_i\br{M}^{3/2}}.\]
		where $d=\deg M$.
\end{lemma}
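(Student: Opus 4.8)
The proof is a hybrid (telescoping) argument: starting from $M$, one replaces the tensor factors indexed by $[n]\setminus H$ one at a time by the corresponding Gaussian variables $\mathbf{g}_{i,\sigma_i}\id_2$, controlling the change of $\Tr\,\zeta_{\lambda}(\cdot)$ at each swap by \Cref{lem:hybrid}, and then passes from the smooth approximation $\zeta_{\lambda}$ back to $\zeta$ using $\norm{\zeta_{\lambda}-\zeta}_{\infty}\leq 4\lambda^2$ from \Cref{lem:zeta}. First I would reduce to the case $H=\set{m+1,\ldots,n}$ with $m=n-\abs{H}$: permuting the $n$ tensor factors permutes the multi-indices $\sigma$ and the Fourier coefficients $\widehat{M}\br{\sigma}$ accordingly, leaves $\Tr$ and $\zeta_{\lambda}$ invariant (conjugation by the permutation operator), leaves the per-coordinate influences correspondingly permuted, and leaves the joint law of the families $\br{\mathbf{g}_{i,0},\ldots,\mathbf{g}_{i,3}}_{i\in[n]}$ unchanged since it is a product over $i$. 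Moreover, by \Cref{lem:mg} the hybrid operators $\mathbf{M}^{(i)}$ from \eqref{eqn:hybridmi} may be formed using the given standard orthonormal basis $\set{\B_i}_{i=0}^3$ rather than the Pauli basis, and $\id_2=\B_0$, so that (after the relabeling) $\mathbf{M}^{(m)}$ is exactly the random operator appearing on the left-hand side of the claim, while $\mathbf{M}^{(0)}=M$.

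\textbf{Step 1: replacing $\zeta$ by $\zeta_{\lambda}$.} For any realization of the Gaussians, $\mathbf{M}^{(m)}$ and $M$ are Hermitian operators on a space of dimension $2^n$, so by \Cref{lem:zeta} item 1, $\abs{\Tr\,\zeta\br{\mathbf{M}^{(m)}}-\Tr\,\zeta_{\lambda}\br{\mathbf{M}^{(m)}}}\leq 2^n\cdot 4\lambda^2$ and $\abs{\Tr\,\zeta\br{M}-\Tr\,\zeta_{\lambda}\br{M}}\leq 2^n\cdot 4\lambda^2$. Taking expectations and using the triangle inequality reduces the claim to bounding $\abs{\expec{}{\Tr\,\zeta_{\lambda}\br{\mathbf{M}^{(m)}}}-\Tr\,\zeta_{\lambda}\br{M}}$ at an additive cost of $8\cdot 2^n\lambda^2$.

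\textbf{Step 2: telescoping.} Write
\[\Tr\,\zeta_{\lambda}\br{\mathbf{M}^{(m)}}-\Tr\,\zeta_{\lambda}\br{M}=\sum_{i=0}^{m-1}\br{\Tr\,\zeta_{\lambda}\br{\mathbf{M}^{(i+1)}}-\Tr\,\zeta_{\lambda}\br{\mathbf{M}^{(i)}}},\]
since $\mathbf{M}^{(0)}=M$. Apply \Cref{lem:hybrid} to each summand with the coordinate $i+1$ (which lies outside $H$ as $i+1\leq m$), giving $\abs{\expec{}{\Tr\,\zeta_{\lambda}\br{\mathbf{M}^{(i+1)}}-\Tr\,\zeta_{\lambda}\br{\mathbf{M}^{(i)}}}}\leq O\br{\frac{2^n3^d}{\lambda}\influence_{i+1}\br{M}^{3/2}}$, where the influences are those of the fixed operator $M$ because the substitutions do not alter the Fourier coefficients $\widehat{M}\br{\sigma}$, and $d=\deg M$ is basis-independent by \Cref{lem:pt}. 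Summing over $i$ and recalling $\set{1,\ldots,m}=[n]\setminus H$ yields $\abs{\expec{}{\Tr\,\zeta_{\lambda}\br{\mathbf{M}^{(m)}}}-\Tr\,\zeta_{\lambda}\br{M}}\leq O\br{\frac{2^n3^d}{\lambda}\sum_{i\notin H}\influence_i\br{M}^{3/2}}$. Combining with Step 1 and collecting the absolute constants gives the stated bound $2^n\br{8\lambda^2+\frac{3^d}{\lambda}\sum_{i\notin H}\influence_i\br{M}^{3/2}}$.

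\textbf{Main obstacle.} At this level essentially no obstacle remains: all the analytic difficulty — the Taylor expansion of $\zeta_{\lambda}$ in terms of Fr\'echet derivatives, the cancellation of the first- and second-order derivative terms (\Cref{claim:1}), and the hypercontractive estimate $N_4\leq 3^{d/2}N_2$ for random operators (\Cref{lem:Xhypercontractivity}) — is already packaged inside \Cref{lem:hybrid}. The only points needing care are the legitimacy of the coordinate relabeling (resting on permutation-invariance of the trace, of $\zeta_{\lambda}$, and of the product law of the Gaussian families, together with \Cref{lem:mg}), the observation that the per-coordinate influences stay those of $M$ along the whole hybrid chain, and absorbing the implicit constants so that the final estimate takes the clean form stated.
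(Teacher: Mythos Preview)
Your proposal is correct and matches the paper's approach exactly: the paper proves \Cref{lem:invariance} in one line by ``Combining \Cref{lem:hybrid} and \Cref{lem:zeta},'' which is precisely the hybrid telescoping over the coordinates outside $H$ followed by the $\norm{\zeta_\lambda-\zeta}_\infty\le 4\lambda^2$ replacement that you spell out. Your handling of the coordinate relabeling via permutation invariance and your invocation of \Cref{lem:mg} to switch from the Pauli basis to the given $\set{\B_i}$ are the right justifications for the reductions the paper leaves implicit.
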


\begin{lemma}\label{lem:invarianceH}
	Given $0<\tau,\delta<1$, $M\in\H_2^{\otimes n}$, $0\leq M\leq \id$ $H\subseteq[n]$, an integer $d>0$ and standard orthonormal basis $\B=\set{\B_i}_{i=0}^3$, suppose $\influence_i\br{M}\leq \tau$ for all $i\notin H$ and $\twonorm{M^{>d}}^2\leq\delta2^n$, where $M^{>d}$ is defined in Definition~\ref{def:lowdegreehighdegree}. Set
	\[\mathbf{M}\defeq\sum_{\sigma\in[4]_{\geq 0}^n:\abs{\sigma}\leq d}\widehat{M}\br{\sigma}\br{\bigotimes_{i\in H}\B_{\sigma_i}}\otimes\br{\bigotimes_{i\notin H}\mathbf{g}_{i,\sigma_i}\id_2}.\]
	Then it holds that
	
\[\expec{}{\Tr~\zeta\br{\mathbf{M}}}\leq O\br{2^n\br{\br{3^d\sqrt{\tau} d}^{2/3}+\sqrt{\delta}}}\]
\end{lemma}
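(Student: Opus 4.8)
\textbf{Proof plan for Lemma~\ref{lem:invarianceH}.}
The plan is to run the hybrid argument of Lemma~\ref{lem:invariance} on the \emph{low-degree truncation} $M^{\leq d}$ rather than on $M$ itself, and then pay three separate errors: the degree-truncation error (comparing $\zeta$ of $M$ and of $M^{\leq d}$), the invariance error from Lemma~\ref{lem:invariance}, and the loss from optimizing the free parameter $\lambda$. Note first that since $0\leq M\leq\id$ the function $\zeta$ vanishes on the spectrum of $M$, so $\Tr~\zeta\br{M}=0$; this is the anchor point for everything.

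First I would bound $\Tr~\zeta\br{M^{\leq d}}$. Applying Lemma~\ref{lem:zetaadditivity} with $P=M^{\leq d}$, $Q=M^{>d}$ (so $P+Q=M$) gives $\abs{\Tr~\zeta\br{M}-\Tr~\zeta\br{M^{\leq d}}}\leq 4\br{\twonorm{M^{\leq d}}\twonorm{M^{>d}}+\twonorm{M^{>d}}^2}$. Using $\nnorm{M}_2\leq\nnorm{M}_\infty\leq 1$ (hence $\twonorm{M^{\leq d}}\leq\twonorm{M}\leq 2^{n/2}$) together with the hypothesis $\twonorm{M^{>d}}^2\leq\delta 2^n$, and $\Tr~\zeta\br{M}=0$, this yields $\Tr~\zeta\br{M^{\leq d}}\leq O\br{2^n\sqrt{\delta}}$. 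Next I would apply Lemma~\ref{lem:invariance} to $M^{\leq d}$ (which has degree $\leq d$): the random operator appearing there is exactly $\mathbf{M}$, so for any $0<\lambda\leq\tfrac12$, $\abs{\expec{}{\Tr~\zeta\br{\mathbf{M}}}-\Tr~\zeta\br{M^{\leq d}}}\leq 2^n\br{8\lambda^2+\tfrac{3^d}{\lambda}\sum_{i\notin H}\influence_i\br{M^{\leq d}}^{3/2}}$. Since $\influence_i\br{M^{\leq d}}\leq\influence_i\br{M}\leq\tau$ for $i\notin H$, we have $\sum_{i\notin H}\influence_i\br{M^{\leq d}}^{3/2}\leq\sqrt{\tau}\cdot\influence\br{M^{\leq d}}\leq\sqrt{\tau}\cdot d\cdot\nnorm{M^{\leq d}}_2^2\leq\sqrt{\tau}d$ by Lemma~\ref{lem:partialvariance} item~4. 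Combining the three pieces, $\expec{}{\Tr~\zeta\br{\mathbf{M}}}\leq O\br{2^n\sqrt{\delta}}+2^n\br{8\lambda^2+\tfrac{3^d\sqrt{\tau}d}{\lambda}}$. Setting $\lambda=\Theta\br{\br{3^d\sqrt{\tau}d}^{1/3}}$ optimizes the last bracket to $O\br{\br{3^d\sqrt{\tau}d}^{2/3}}$, which is the claimed bound. The only wrinkle is that Lemma~\ref{lem:invariance} requires $\lambda\leq\tfrac12$; if $3^d\sqrt{\tau}d$ is larger than an absolute constant then the optimal $\lambda$ exceeds $\tfrac12$, but in that regime $\br{3^d\sqrt{\tau}d}^{2/3}\geq\Omega(1)$, and the bound holds trivially because $\zeta(x)\leq x^2$ forces $\expec{}{\Tr~\zeta\br{\mathbf{M}}}\leq\expec{}{\Tr~\mathbf{M}^2}=2^n N_2\br{\mathbf{M}}^2\leq 2^n\nnorm{M^{\leq d}}_2^2\leq 2^n$.

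I do not expect a genuine obstacle here: the substance was already spent in proving the hybrid estimate Lemma~\ref{lem:hybrid}, Lemma~\ref{lem:invariance}, and the $\zeta$-perturbation bounds Lemma~\ref{lem:zetataylor} and Lemma~\ref{lem:zetaadditivity}. The remaining work is bookkeeping: tracking the (un)normalized $2$-norms correctly, checking $\influence\br{M^{\leq d}}\leq d$, and handling the $\lambda>\tfrac12$ case by the crude $\zeta(x)\leq x^2$ bound. The one point that needs a little care is making sure the operator inside the $\zeta$ in Lemma~\ref{lem:invariance}, when instantiated with $M^{\leq d}$, coincides \emph{on the nose} with the definition of $\mathbf{M}$ in the statement — which it does, since $\widehat{M^{\leq d}}\br{\sigma}=\widehat{M}\br{\sigma}$ for $\abs{\sigma}\leq d$ and $0$ otherwise.
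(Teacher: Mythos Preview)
Your proposal is correct and follows essentially the same route as the paper's proof: truncate to $M^{\leq d}$, use $\Tr\,\zeta(M)=0$ together with Lemma~\ref{lem:zetaadditivity} to control $\Tr\,\zeta(M^{\leq d})$, apply Lemma~\ref{lem:invariance} to $M^{\leq d}$, bound $\sum_{i\notin H}\influence_i(M^{\leq d})^{3/2}$ by $\sqrt{\tau}\,d$ via Lemma~\ref{lem:partialvariance} item~4, and optimize $\lambda=(3^d\sqrt{\tau}d)^{1/3}$. If anything you are a bit more careful than the paper, which does not explicitly address the $\lambda>\tfrac12$ regime.
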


\begin{proof}
Without loss of generality, we may assume that $H=\set{1,2,\ldots, s}$.
Then
\begin{equation}\label{eqn:etamlow}
\expec{}{\Tr~\zeta\br{\mathbf{M}}}\leq\abs{\expec{}{\Tr~\br{\zeta\br{\mathbf{M}}-\zeta\br{M^{\leq d}}}}}+\Tr~\zeta\br{M^{\leq d}}
\end{equation}
Applying Lemma~\ref{lem:invariance},
\begin{eqnarray}
&&\abs{\expec{}{\Tr~\br{\zeta\br{\mathbf{M}}-\zeta\br{M^{\leq d}}}}}\nonumber\\
&\leq&O\br{2^n\br{\lambda^2+\frac{3^d}{\lambda}\sum_{i\notin H}\influence_i\br{M^{\leq d}}^{3/2}}}\nonumber\\
&\leq&O\br{2^n\br{\lambda^2+\frac{3^d\sqrt{\tau}}{\lambda}\influence\br{M^{\leq d}}}}\nonumber\\
&\leq&O\br{2^n\br{\lambda^2+\frac{3^d\sqrt{\tau} d}{\lambda}}},\label{eqn:etamlowdiff}
\end{eqnarray}
where the last inequality is from Lemma~\ref{lem:partialvariance} item 4.

Note that $\zeta\br{M}=0$ since $0\leq M\leq 1$. Again applying Lemma~\ref{lem:zetaadditivity},
\begin{eqnarray}
&&\Tr~\zeta\br{M^{\leq d}}=\abs{\Tr~\zeta\br{M^{\leq d}}-\Tr~\zeta\br{M}}\nonumber\\
&\leq&4\br{\twonorm{M}\twonorm{M^{>d}}+\twonorm{M^{>d}}^2}\leq\sqrt{\delta}2^{n+3}.\label{eqn:msigmalow}
\end{eqnarray}
Combing Eqs.~\eqref{eqn:etamlow}\eqref{eqn:etamlowdiff}\eqref{eqn:msigmalow}, we have
\begin{equation}\label{eqn:etamlow2}
\expec{}{\Tr~\zeta\br{\mathbf{M}^{\leq d}}}\leq O\br{2^n\br{\lambda^2+\frac{3^d\sqrt{\tau} d}{\lambda}+8\sqrt{\delta}}}.
\end{equation}
Choosing $\lambda=\br{3^d\sqrt{\tau} d}^{1/3}$, we conclude the result.
\end{proof}

The following lemma converts random operators to operators.

\begin{lemma}\label{lem:invariancegaussian}
	Given integers $d, h,n>0$ and a degree-$d$ multilinear random operators $\mathbf{M}\in L^2\br{\H_2^{\otimes h},\gamma_n}$ with the associated vector-valued function $p:\reals^n\rightarrow\reals^{4^h}$ under a standard orthonormal basis $\set{\B_i}_{i=0}^3$. Then there exists $M^{(1)}\in\H_2^{\otimes (n+h)}$ satisfying that
	\[\abs{\expec{}{2^n\cdot\Tr~\zeta\br{\mathbf{M}}}-\Tr~\zeta\br{M^{(1)}}}\leq O\br{2^{n+h}\br{3^d\sum_{i=1}^n\influence_i\br{p}^{3/2}}^{2/3}}.\]
	In particular, if for all $i$, $\influence_i\br{p}\leq\tau$ for some $\tau\in(0,1)$, then from Lemma~\ref{lem:influencerandomoperator},
		\[\abs{\expec{}{2^n\cdot\Tr~\zeta\br{\mathbf{M}}}-\Tr~\zeta\br{M^{(1)}}}\leq O\br{2^{n+h}\br{3^dd\sqrt{\tau}}^{2/3}}.\]
\end{lemma}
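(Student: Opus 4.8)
Looking at Lemma~\ref{lem:invariancegaussian}, this is the "reverse invariance" step: converting a multilinear random operator $\mathbf{M} \in L^2(\H_2^{\otimes h}, \gamma_n)$ (a hybrid of $h$ genuine qubit registers and $n$ Gaussian coordinates) back into an honest operator $M^{(1)} \in \H_2^{\otimes(n+h)}$, by replacing each Gaussian coordinate with a fresh qubit register, while controlling the change in $\Tr\,\zeta(\cdot)$.

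Let me think about how to prove this.

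The setup: we have $\mathbf{M} = \sum_{\sigma \in [4]_{\geq 0}^h} p_\sigma(\mathbf{g}) \B_\sigma$ where $p_\sigma$ are multilinear degree-$d$ polynomials in $n$ Gaussian variables. We want an operator on $n+h$ qubits. The natural thing: the $h$ qubit registers stay, and each of the $n$ Gaussian coordinates gets replaced by... well, a qubit register carrying the appropriate Fourier weight. Since each $p_\sigma$ is multilinear in $\mathbf{g}_1, \dots, \mathbf{g}_n$, we have $p_\sigma(\mathbf{g}) = \sum_{S \subseteq [n]} \widehat{p_\sigma}(S) \prod_{i \in S} \mathbf{g}_i$. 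To replace $\mathbf{g}_i$ by an operator, we need a Hermitian operator $\B \in \M_2$ with $\Tr\,\B/2 = 0$ (mean zero) and $\langle \B, \B\rangle = 1$ (unit variance) — that is, a single nonzero Pauli-type basis element. Pick $\B_1$ from the standard orthonormal basis. Then define
\[ M^{(1)} \defeq \sum_{\sigma \in [4]_{\geq 0}^h} \sum_{S \subseteq [n]} \widehat{p_\sigma}(S) \Big(\bigotimes_{i \in S} \B_1^{(i)}\Big)\otimes\Big(\bigotimes_{i\notin S} \id_2^{(i)}\Big) \otimes \B_\sigma. \]

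So this is the reverse of Lemma~\ref{lem:invariance}/\ref{lem:invarianceH}, where we went from operator to Gaussian. The cleanest proof strategy:

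First, note that the hybrid sequence argument used in Lemma~\ref{lem:hybrid} is symmetric — there we swapped a genuine qubit register for a Gaussian coordinate one at a time, with error $O\big(\frac{2^n 3^d}{\lambda}\, \influence_{i+1}(M)^{3/2}\big)$ per swap. Here we want to do the reverse: start from $\mathbf{M}$ (which lives in the "Gaussian on $n$ coords, qubit on $h$ coords" world, considered inside the $(n+h)$-register picture by treating the $n$ Gaussian coords abstractly) and swap each Gaussian coordinate back to a genuine qubit. The influence of coordinate $i$ in $M^{(1)}$ (as an $(n+h)$-qubit operator) restricted to the swap equals $\influence_i(p) = \sum_\sigma \influence_i(p_\sigma)$ by Parseval/Lemma~\ref{lem:partialvariance}. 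So the plan is:

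\textbf{Step 1.} Set up the hybrid chain. For $0 \le i \le n$, define intermediate objects $\mathbf{M}^{(i)}$ where the first $i$ of the $n$ "extra" coordinates are genuine qubits (carrying $\B_1$ or $\id_2$ according to the multilinear expansion) and the remaining $n-i$ are Gaussian variables $\mathbf{g}_{i+1}, \dots, \mathbf{g}_n$; the $h$ original registers carry $\B_\sigma$ throughout. Then $\mathbf{M}^{(0)} = \mathbf{M}$ (viewed with trace over the absent qubit registers contributing a factor) and $\mathbf{M}^{(n)} = M^{(1)}$. Note $\expec{}{2^n \Tr\,\zeta(\mathbf{M})} = \expec{}{2^n\Tr\,\zeta(\mathbf{M}^{(0)})}$, where the $2^n$ accounts for the normalization of the trace over $n$ new qubit registers versus the expectation over $n$ Gaussians.

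\textbf{Step 2.} Bound each swap. This is essentially a re-run of Lemma~\ref{lem:hybrid}'s Taylor-expansion-plus-matching-moments argument (using Lemma~\ref{lem:zetataylor}, Claim~\ref{claim:1}, and the hypercontractive inequality Lemma~\ref{lem:Xhypercontractivity}) applied to $\zeta_\lambda$: swapping coordinate $i+1$ from Gaussian to qubit changes $\expec{}{\Tr\,\zeta_\lambda(\cdot)}$ by at most $O\big(\frac{2^{n+h}3^d}{\lambda}\,\influence_{i+1}(p)^{3/2}\big)$. The key point is that the first- and second-order Fréchet-derivative terms match in expectation because a mean-zero, unit-variance Gaussian and the operator $\B_1$ (mean zero, $\langle\B_1,\B_1\rangle = 1$) agree on first and second moments — exactly the hypothesis making Claim~\ref{claim:1}-type cancellations go through — while the third-order remainder is controlled by the $4$-norm via hypercontractivity.

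\textbf{Step 3.} Telescope and pass from $\zeta_\lambda$ to $\zeta$. Sum over $i = 1, \dots, n$: by power-mean (concavity of $x \mapsto x^{3/2}$ fails the wrong way, so instead use $\sum_i a_i^{3/2} \le (\sum_i a_i)^{1/2}\max_i a_i \le \dots$ — actually the cleaner route, matching the lemma statement, is to write $\sum_i \influence_i(p)^{3/2} \le (\max_i \influence_i(p))^{1/2} \sum_i \influence_i(p)$ and keep it as $\sum_i \influence_i(p)^{3/2}$, then optimize $\lambda$). Total $\zeta_\lambda$-error over the chain is $O\big(\frac{2^{n+h}3^d}{\lambda}\sum_i \influence_i(p)^{3/2}\big)$. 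Then use Lemma~\ref{lem:zeta} item 1, $\|\zeta_\lambda - \zeta\|_\infty \le 4\lambda^2$, to replace $\zeta_\lambda$ by $\zeta$ at both endpoints, incurring $O(2^{n+h}\lambda^2)$. Balancing $\lambda^2 \asymp \frac{3^d}{\lambda}\sum_i\influence_i(p)^{3/2}$, i.e. $\lambda = \big(3^d\sum_i\influence_i(p)^{3/2}\big)^{1/3}$, gives the stated bound $O\big(2^{n+h}(3^d\sum_i\influence_i(p)^{3/2})^{2/3}\big)$. The "in particular" clause follows from $\influence_i(p)\le\tau$ and $\sum_i\influence_i(p) = \influence(p) \le d\cdot N_2(\mathbf{M})^2 \le d$ by Lemma~\ref{lem:influencerandomoperator} (using that $N_2(\mathbf{M}) \le$ something bounded — one may need $\mathbf{M}$ normalized, but $\sum_i \influence_i(p)^{3/2} \le \tau^{1/2}\sum_i\influence_i(p) \le \tau^{1/2} d N_2(\mathbf{M})^2$, and absorbing constants gives $(3^d d\sqrt\tau)^{2/3}$ up to the $N_2$ factor, which I'd track explicitly or assume bounded by the context in which this lemma is invoked).

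\textbf{Main obstacle.} The genuinely delicate part is Step 2: verifying that the Taylor-expansion argument of Lemma~\ref{lem:hybrid} runs unchanged in this "reversed" direction — specifically that the analogue of Claim~\ref{claim:1} holds when $\mathbf{B}$ is the Gaussian-coordinate block and $\mathbf{C}$ is the $\B_1$-qubit block (rather than the other way around). This requires checking that the relevant matched moments — $\expec{}{\mathbf{B}f(\mathbf{A})} = \expec{}{\mathbf{C}f(\mathbf{A})}$ and $\expec{}{\Tr\,\mathbf{B}f(\mathbf{A})\mathbf{B}g(\mathbf{A})} = \expec{}{\Tr\,\mathbf{C}f(\mathbf{A})\mathbf{C}g(\mathbf{A})}$ — still hold. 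They do, because $\mathbf{A}$ (the common part) is independent of both the Gaussian variable $\mathbf{g}_{i+1}$ and the new qubit register, and because $\Tr\,\B_1/2 = \expec{}{\mathbf{g}_{i+1}} = 0$ while $\langle \B_1,\B_1\rangle = \expec{}{\mathbf{g}_{i+1}^2} = 1$; the cross terms with $f(\mathbf{A}), g(\mathbf{A})$ factor out by independence and the trace structure on the new register reduces to these two identities. Beyond that, one must be careful with the normalization factor $2^n$ vs. the Gaussian expectation when setting up $\mathbf{M}^{(i)}$ — each swapped coordinate trades an $\expec{}{\,\cdot\,}$ over $\gamma_1$ for a $\frac12\Tr(\cdot)$ over $\M_2$, and the bookkeeping of these factors through the telescoping sum must be done consistently so that the final comparison is between $\expec{}{2^n\Tr\,\zeta(\mathbf{M})}$ and $\Tr\,\zeta(M^{(1)})$ as stated.
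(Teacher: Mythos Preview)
Your proposal is correct and follows essentially the same approach as the paper: the paper's proof simply says to substitute each Gaussian variable $\mathbf{g}_i$ by the basis element $\B_1$ (with products becoming tensor products) to obtain $M^{(1)}$, and then notes that the argument is identical to that of Lemma~\ref{lem:invariance}. You have spelled out exactly this hybrid-chain argument in detail, including the moment-matching verification (your ``main obstacle'') and the optimization of $\lambda$, which the paper leaves implicit by referring back to the earlier lemma.
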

\begin{proof}
	 Let $\set{\B_i}_{i=0}^3$ be an arbitrary standard orthonormal basis in $\M_2$. Let $\set{\mathbf{g}_i}_{i=1}^n$ be the random variables in $\mathbf{M}$. Substitute each random variable by $\B_1$ and the products of random variables by tensor products. The following proof is same as the one of Lemma~\ref{lem:invariance}.
\end{proof}
We finally reach the main lemma in this section.
\begin{lemma}\label{lem:jointinvariance}
  Given $0<\tau,\delta,\rho<1$, integers $n>h>0, d>0$, $P,Q\in\H_2^{\otimes n}, 0\leq P,Q\leq\id$, $H\subseteq[n]$ of size $\abs{H}=h$, a noisy EPR state $\psi_{AB}$ with the maximal correlation $\rho=\rho\br{\psi_{AB}}$,  suppose $\influence_i\br{P}\leq\tau$ and $\influence_i\br{Q}\leq\tau$ for all $i\notin H$ and $\twonorm{P^{>d}}^2\leq\delta2^n,\twonorm{Q^{>d}}^2\leq\delta2^n$. Then there exist maps $f,g:\H_2^{\otimes n}\times\reals^{n-h}\rightarrow\H_2^{\otimes h}$ satisfying that
  \[\br{\mathbf{P},\mathbf{Q}}\defeq\br{f\br{P,\mathbf{g}},g\br{Q,\mathbf{h}}}_{\br{\mathbf{g},\mathbf{h}}\sim\G_{\rho}^{\otimes n-h}}\in L^2\br{\H_2^{\otimes h},\gamma_{3\br{n-h}}}\times  L^2\br{\H_2^{\otimes h},\gamma_{3\br{n-h}}}\]
are  degree-$d$ multilinear joint random operators with the joint random variables drawn from $\G_{\rho}^{\otimes 3\br{n-h}}$. And
  \begin{enumerate}
    \item $N_2\br{\mathbf{P}}\leq\nnorm{P}_2$ and $N_2\br{\mathbf{Q}}\leq\nnorm{Q}_2$;
    \item $\Tr~\br{P\otimes Q}\psi_{AB}^{\otimes n}=\expec{}{\Tr~\br{\br{\mathbf{P}\otimes\mathbf{Q}}\psi_{AB}^{\otimes h}}}$.
    \item $2^{n-h}\expec{}{\Tr~\mathbf{P}}=\Tr~P$ and $2^{n-h}\expec{}{\Tr~\mathbf{Q}}=\Tr~Q$.
    \item $\expec{}{\Tr~\zeta\br{\mathbf{P}}}\leq O\br{2^h\br{\br{3^d\sqrt{\tau}d}^{2/3}+\sqrt{\delta}}}$ and $\expec{}{\Tr~\zeta\br{\mathbf{Q}}}\leq O\br{2^h\br{\br{3^d\sqrt{\tau}d}^{2/3}+\sqrt{\delta}}}$.
  \end{enumerate}
\end{lemma}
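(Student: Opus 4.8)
The plan is to produce $\mathbf{P}$ and $\mathbf{Q}$ by the usual device of substituting correlated Gaussian variables for the ``tail'' coordinates $H^c\defeq[n]\setminus H$ of the Fourier expansions of $P$ and $Q$, in a basis adapted to $\psi_{AB}$, and then to read off the four conclusions from the one-sided quantum invariance principle \Cref{lem:invarianceH} together with a few orthogonality computations. First I would invoke \Cref{lem:normofM} to fix standard orthonormal bases $\A=\set{\A_i}_{i=0}^3$ and $\B=\set{\B_i}_{i=0}^3$ of $\M_2$ with $\A_0=\B_0=\id_2$ and $\Tr\br{\A_i\otimes\B_j}\psi_{AB}=\delta_{i,j}c_i$, where $1=c_0\geq c_1=\rho\geq c_2\geq c_3\geq 0$; by tensorisation (\Cref{lem:convariancetensor}) this gives $\Tr\br{\A_\sigma\otimes\B_\tau}\psi_{AB}^{\otimes n}=\delta_{\sigma,\tau}c_\sigma$ with $c_\sigma=\prod_i c_{\sigma_i}$. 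Expanding $P=\sum_\sigma\widehat P\br{\sigma}\A_\sigma$ and $Q=\sum_\sigma\widehat Q\br{\sigma}\B_\sigma$ (real coefficients, \Cref{fac:basicfourier}) and assuming without loss of generality that $H=\set{1,\dots,h}$, I introduce mutually independent Gaussian pairs $\set{\br{\mathbf{g}_{i,j},\mathbf{h}_{i,j}}}_{i\notin H,\,1\le j\le 3}$ with $\br{\mathbf{g}_{i,j},\mathbf{h}_{i,j}}\sim\G_{c_j}$ (all $c_j\leq\rho$), and set $\mathbf{g}_{i,0}=\mathbf{h}_{i,0}=1$; the tuples $\mathbf{g},\mathbf{h}$ then each have marginal $\gamma_{3\br{n-h}}$. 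I define
\[\mathbf{P}\defeq f\br{P,\mathbf{g}}\defeq\sum_{\sigma\in[4]_{\geq 0}^n:\,\abs{\sigma}\leq d}\widehat P\br{\sigma}\Big(\prod_{i\notin H}\mathbf{g}_{i,\sigma_i}\Big)\A_{\sigma_H},\]
and $\mathbf{Q}\defeq g\br{Q,\mathbf{h}}$ analogously with $\B$ and $\mathbf{h}$. Each monomial $\prod_{i\notin H}\mathbf{g}_{i,\sigma_i}$ is multilinear in the $\mathbf{g}_{i,j}$ and uses at most $\abs{\sigma_{H^c}}\leq d$ of them, so $\br{\mathbf{P},\mathbf{Q}}$ are degree-$d$ multilinear joint random operators in $L^2\br{\H_2^{\otimes h},\gamma_{3\br{n-h}}}$. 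Since $f$ depends only on $P$, $\mathbf{g}$ and the shared data $\br{\A,H}$, the consistency requirement $P_u=P_v\Rightarrow f\br{P_u,\cdot}=f\br{P_v,\cdot}$ holds automatically.

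Next I would verify items~1--3 by direct computation. Writing $\mathbf{P}=\sum_\alpha p_\alpha\br{\mathbf{g}}\A_\alpha$ with $p_\alpha\br{\mathbf{g}}=\sum_{\sigma:\sigma_H=\alpha,\abs{\sigma}\leq d}\widehat P\br{\sigma}\prod_{i\notin H}\mathbf{g}_{i,\sigma_i}$, the identities $\expec{}{\mathbf{g}_{i,j}\mathbf{g}_{i,j'}}=\delta_{j,j'}$ and $\expec{}{\mathbf{g}_{i,j}\mathbf{h}_{i,j'}}=\delta_{j,j'}c_j$ give, via \Cref{lem:randoperator}, $N_2\br{\mathbf{P}}^2=\twonorm{p}^2=\sum_{\abs{\sigma}\leq d}\widehat P\br{\sigma}^2=\nnorm{P^{\leq d}}_2^2\leq\nnorm{P}_2^2$ (item~1), and
\[\expec{}{\Tr\br{\mathbf{P}\otimes\mathbf{Q}}\psi_{AB}^{\otimes h}}=\sum_{\abs{\sigma}\leq d}c_\sigma\widehat P\br{\sigma}\widehat Q\br{\sigma}=\Tr\br{P^{\leq d}\otimes Q^{\leq d}}\psi_{AB}^{\otimes n},\]
which differs from $\Tr\br{P\otimes Q}\psi_{AB}^{\otimes n}$ by at most $\nnorm{P^{>d}}_2\nnorm{Q}_2+\nnorm{P^{\leq d}}_2\nnorm{Q^{>d}}_2\leq 2\sqrt\delta$ by \Cref{fac:cauchyschwartz}; this is item~2 up to an $O\br{\sqrt\delta}$ slack that is absorbed into the global error budget, and it is an exact equality for the untruncated operators, which is presumably what the statement records. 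For item~3, the only surviving term of $\expec{}{\Tr\mathbf{P}}$ is $\sigma=0^n$ (since $\Tr\A_{\sigma_H}=2^h$ when $\sigma_H=0^h$ and $0$ otherwise, and $\expec{}{\mathbf{g}_{i,j}}=0$ for $j\neq 0$), so $2^{n-h}\expec{}{\Tr\mathbf{P}}=2^{n-h}\cdot 2^h\widehat P\br{0^n}=\Tr P$, and symmetrically for $\mathbf{Q}$.

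Item~4 is where the real work --- already carried out in \Cref{lem:invarianceH} --- is invoked. The observation is that, as operators on $\H_2^{\otimes n}$,
\[\mathbf{P}\otimes\id_2^{\otimes\br{n-h}}=\sum_{\abs{\sigma}\leq d}\widehat P\br{\sigma}\Big(\bigotimes_{i\in H}\A_{\sigma_i}\Big)\otimes\Big(\bigotimes_{i\notin H}\mathbf{g}_{i,\sigma_i}\id_2\Big),\]
which is exactly the hybrid random operator that \Cref{lem:invarianceH} builds from $P$, the set $H$ and the basis $\A$; and the hypotheses of \Cref{lem:invarianceH} on that operator ($0\leq P\leq\id$, $\influence_i\br{P}\leq\tau$ for $i\notin H$, $\twonorm{P^{>d}}^2\leq\delta 2^n$) are precisely ours. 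Only the marginal law of $\mathbf{g}$, which is $\gamma_{3\br{n-h}}$ and is unaffected by its coupling with $\mathbf{h}$, enters here, so I may apply \Cref{lem:invarianceH} verbatim. Using $\zeta\br{\mathbf{P}\otimes\id_2^{\otimes\br{n-h}}}=\zeta\br{\mathbf{P}}\otimes\id_2^{\otimes\br{n-h}}$, hence $\Tr\zeta\br{\mathbf{P}\otimes\id_2^{\otimes\br{n-h}}}=2^{n-h}\Tr\zeta\br{\mathbf{P}}$, I get
\[\expec{}{\Tr\zeta\br{\mathbf{P}}}=2^{-\br{n-h}}\expec{}{\Tr\zeta\br{\mathbf{P}\otimes\id_2^{\otimes\br{n-h}}}}\leq 2^{-\br{n-h}}O\Big(2^n\big(\br{3^d\sqrt\tau\,d}^{2/3}+\sqrt\delta\big)\Big)=O\Big(2^h\big(\br{3^d\sqrt\tau\,d}^{2/3}+\sqrt\delta\big)\Big),\]
and the same computation with $\B$ and $Q$ yields the bound for $\mathbf{Q}$.

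The analytically heavy ingredients (the Fr\'echet--Taylor expansion, the hypercontractive inequality for random operators, and the quantum invariance principle \Cref{lem:invarianceH}) are already established by this point, so the difficulty of this lemma is organizational rather than technical: the hard part is to arrange a \emph{single} coupling of $\mathbf{g}$ and $\mathbf{h}$ that simultaneously preserves the marginal quantities $N_2$, $\Tr$, $\Tr\zeta$ (which see only the $\gamma_{3\br{n-h}}$ marginals) and the joint quantity $\Tr\br{\mathbf{P}\otimes\mathbf{Q}}\psi_{AB}^{\otimes h}$ (which needs correlation $c_j$ on matching tail directions and independence on non-matching ones), while tracking the $2^{n-h}$ dimension factor that converts a statement about $\H_2^{\otimes n}$, the home of \Cref{lem:invarianceH}, into one about $\H_2^{\otimes h}$. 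The diagonalisation of the correlation matrix in \Cref{lem:normofM} together with the tensorisation \Cref{lem:convariancetensor} is exactly what makes the joint identity factor cleanly, and the only genuine loss --- the degree-$d$ truncation --- is accounted for by the $\sqrt\delta$ terms in item~4 (and the $O\br{\sqrt\delta}$ slack in item~2).
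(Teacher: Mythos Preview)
Your approach is the paper's: fix the bases via \Cref{lem:normofM}, substitute correlated Gaussians for the $H^c$-coordinates, verify items~1--3 by orthogonality, and obtain item~4 from \Cref{lem:invarianceH} after padding by $\id_2^{\otimes(n-h)}$. Your handling of the truncation/item-2 tension (an $O(\sqrt\delta)$ slack) is honest; the paper writes $\mathbf{P}^{(0)}$ without truncation and asserts exact equality in item~2 while also claiming degree~$d$, which cannot both hold literally.

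There is, however, one genuine gap. You take $(\mathbf{g}_{i,j},\mathbf{h}_{i,j})\sim\G_{c_j}$, so the joint law of $(\mathbf{g},\mathbf{h})$ is $(\G_{c_1}\times\G_{c_2}\times\G_{c_3})^{\otimes(n-h)}$, not $\G_\rho^{\otimes 3(n-h)}$ as the lemma demands and as the next step (\Cref{lem:dimensionreduction}) requires. Noting ``all $c_j\le\rho$'' does not fix this. The paper closes the gap with an extra rescaling you omit: draw $(\mathbf{g}_k,\mathbf{h}_k)_{k=1}^{3(n-h)}\sim\G_\rho^{\otimes 3(n-h)}$, leave $\mathbf{P}$ as the substitution with the $\mathbf{g}_k$'s, but in $\mathbf{Q}$ replace each $\mathbf{h}^{(0)}_{i,j}$ by $\tfrac{c_j}{\rho}\,\mathbf{h}_{3(i-1)+j}$. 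Then $\expec{}{\mathbf{g}_k\cdot\tfrac{c_j}{\rho}\mathbf{h}_k}=c_j$, so item~2 is preserved; the $\mathbf{g}$-side is untouched, so items~1,~3,~4 for $\mathbf{P}$ stand; and since $\lvert c_j/\rho\rvert\le 1$ the Fourier coefficients of the $q_\sigma$'s only shrink, which keeps $N_2(\mathbf{Q})\le\nnorm{Q}_2$ and leaves item~3 intact (only $\sigma=0^n$ contributes to the trace). Item~4 for $\mathbf{Q}$ after rescaling needs a word of justification: the influences $\influence_i$ only decrease, so the right-hand side of \Cref{lem:invariance} is controlled as before, but one must also argue that the comparison term $\Tr\zeta(\cdot)$ stays small for the operator with rescaled coefficients; the paper dispatches this in one line (``absolute values of all Fourier coefficients do not increase''), and you should at least record the same step.
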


\begin{proof}
	Let $\set{\A_i}_{i=0}^3$ and $\set{\B_i}_{i=0}^3$ be standard orthonormal basis in $\M_2$ satisfying that $\Tr\br{\A_i\otimes\B_j}\psi_{AB}=c_i\delta_{i,j}$ for $0\leq i,j\leq 3$, where $1=c_0>c_1=\rho\geq c_2\geq c_3$. Set joint random variables
	\[\br{\br{\mathbf{g}_{i,0}^{(0)},\mathbf{h}_{i,0}^{(0)}},\br{\mathbf{g}_{i,1}^{(0)},\mathbf{h}_{i,1}^{(0)}},\br{\mathbf{g}_{i,2}^{(0)},\mathbf{h}_{i,2}^{(0)}},\br{\mathbf{g}_{i,3}^{(0)},\mathbf{h}_{i,3}^{(0)}}}_{i=1}^h\sim\br{\set{\br{1,1}}\times\G_{c_1}\times \G_{c_2}\times\G_{c_3}}^{\otimes h}.\]
	Define
	\[\mathbf{P}^{(0)}\defeq\sum_{\sigma\in[4]_{\geq 0}^n}\widehat{P}\br{\sigma}\prod_{i\notin H}\mathbf{g}^{(0)}_{i,\sigma_i}\A_{\sigma_H},\]
	and
	\[\mathbf{Q}^{(0)}\defeq\sum_{\sigma\in[4]_{\geq 0}^n}\widehat{Q}\br{\sigma}\prod_{i\notin H}\mathbf{h}^{(0)}_{i,\sigma_i}\B_{\sigma_H}.\]
	Then
    \[N_2\br{\mathbf{P}^{(0)}}^2=\sum_{\sigma}\abs{\widehat{P}\br{\sigma}}^2=\nnorm{P}_2^2,N_2\br{\mathbf{Q}^{(0)}}^2=\sum_{\sigma}\abs{\widehat{Q}\br{\sigma}}^2=\nnorm{Q}_2^2,\]

    and

	\[\Tr~\br{P\otimes Q}\psi_{AB}^{\otimes n}=\expec{}{\Tr~\br{\br{\mathbf{P}^{(0)}\otimes\mathbf{Q}^{(0)}}\psi_{AB}^{\otimes h}}}=\sum_{\sigma\in[4]_{\geq 0}^n}c_{\sigma},\]
	and
	\[2^{n-h}\expec{}{\Tr~\mathbf{P}^{(0)}}=\Tr~P, 2^{n-h}\expec{}{\Tr~\mathbf{Q}^{(0)}}=\Tr~Q.\]
	From Lemma~\ref{lem:invarianceH},
	\[\expec{}{\Tr~\zeta\br{\mathbf{P}^{(0)}}}\leq O\br{2^h\br{\br{3^d\sqrt{\tau}d}^{2/3}+\sqrt{\delta}}}~\mbox{and}~\expec{}{\Tr~\zeta\br{\mathbf{Q}^{(0)}}}\leq O\br{2^h\br{\br{3^d\sqrt{\tau}d}^{2/3}+\sqrt{\delta}}}.\]
However, the correlation of $\br{\mathbf{g}_{i,j}^{(0)},\mathbf{h}_{i,j}^{(0)}}$ is not exactly the one we need. Given $\br{\mathbf{g}_i,\mathbf{h}_i}_{i=1}^{3\br{n-h}}\sim \G_{\rho}^{\otimes 3(n-h)}$, we perform the following substitution in $\mathbf{P}^{(0)}$ and $\mathbf{Q}^{(0)}$
\[\mathbf{g}_{i,b}^{(0)}\leftarrow\begin{cases}
                                    1, & \mbox{if $b=0$} \\
                                    \mathbf{g}_{3(i-1)+b}, & \mbox{otherwise};
                                  \end{cases}~\mbox{and}
~\mathbf{h}_{i,b}^{(0)}\leftarrow\begin{cases}
                                    1, & \mbox{if $b=0$} \\
                                    \frac{c_i}{\rho}\mathbf{h}_{3(i-1)+b}, & \mbox{otherwise}
                                  \end{cases}\]
to get $\mathbf{P}$ and $\mathbf{Q}$, respectively. The items 1,2,3 still hold obviously. To argue item 4, note that the absolute values of all Fourier coefficients do not increase. By Lemma~\ref{lem:invariance}, item 4 follows.
\end{proof}

Analogously, the following lemma converts joint random operators back to operators.

\begin{lemma}\label{lem:invariancejointgaussian}
	Given $0\leq\rho<1$, $\delta,\tau\in(0,1)$, integers $n,h,d>0$, a noisy EPR state $\psi_{AB}$ with the maximal correlation $\rho=\rho\br{\psi_{AB}}$, standard orthonormal basis $\set{\A_i}_{i=0}^3$ and $\set{\B_i}_{i=0}^3$ in $\M_2$, there exist maps $f,g:L^2\br{\H_2^{\otimes h},\gamma_n}\rightarrow\H_2^{\otimes n+h}$ such that for any degree-$d$ multilinear joint random operators
	\[\br{\mathbf{P},\mathbf{Q}}=\br{\sum_{\sigma\in[4]_{\geq 0}^h}p_{\sigma}\br{\mathbf{g}}\A_{\sigma},\sum_{\sigma\in[4]_{\geq 0}^h}q_{\sigma}\br{\mathbf{h}}\B_{\sigma}}_{\br{\mathbf{g},\mathbf{h}}\sim\G_{\rho}^{\otimes n}}\in L^2\br{\H_2^{\otimes h},\gamma_n}\times L^2\br{\H_2^{\otimes h},\gamma_n},\]
	satisfying that
		\[(\forall i\in[n]):~\sum_{\sigma\in[4]_{\geq 0}^h}\influence_i\br{p_{\sigma}}\leq\tau~\mbox{and}~\sum_{\sigma\in[4]_{\geq 0}^h}\influence_i\br{q_{\sigma}}\leq\tau.\]
		Let $\br{P,Q}=\br{f\br{\mathbf{P}},g\br{\mathbf{Q}}}$. The following holds.
		\begin{enumerate}

            \item $\Tr~P=2^n\expec{}{\Tr~\mathbf{P}} ~\mbox{and}~\Tr~Q=2^n\expec{}{\Tr~\mathbf{Q}};$

			\item $\Tr~\br{P\otimes Q}\psi_{AB}^{\otimes (n+h)}=\expec{}{\Tr~\br{\mathbf{P}\otimes\mathbf{Q}}\psi_{AB}^{\otimes h}};$

\item $N_2\br{\mathbf{P}}=\nnorm{P}_2~\mbox{and}~N_2\br{\mathbf{Q}}=\nnorm{Q}_2;$
			
			\item $\abs{\expec{}{2^n\cdot\Tr~\zeta\br{\mathbf{P}}}-\Tr~\zeta\br{P}}\leq O\br{2^{n+h}\br{3^dd\sqrt{\tau}}^{2/3}}$

			and

			$\abs{\expec{}{2^n\cdot\Tr~\zeta\br{\mathbf{Q}}}-\Tr~\zeta\br{Q}}\leq O\br{2^{n+h}\br{3^dd\sqrt{\tau}}^{2/3}}.$
		\end{enumerate}
\end{lemma}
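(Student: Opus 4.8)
The plan is to define the maps $f$ and $g$ as the evident ``inverse'' of the substitution used in Lemma~\ref{lem:jointinvariance}: every Gaussian coordinate is replaced by a fixed non-identity basis operator living on a fresh qubit register. Concretely, first invoke Lemma~\ref{lem:normofM} to fix standard orthonormal basis $\set{\X_i}_{i=0}^3$ and $\set{\Y_i}_{i=0}^3$ in $\M_2$ with $\Tr\br{\X_i\otimes\Y_j}\psi_{AB}=c_i\delta_{i,j}$ and $c_1=\rho$. Given a degree-$d$ multilinear $\mathbf{P}=\sum_{\sigma}p_{\sigma}\br{\mathbf{g}}\A_{\sigma}$, write the Hermite expansion $p_{\sigma}\br{\mathbf{g}}=\sum_{T\subseteq[n]}\widehat{p_{\sigma}}\br{T}\prod_{i\in T}\mathbf{g}_i$ (only $T$ with $\abs{T}\le d$ occur, by multilinearity and the degree bound) and set
\[ P\defeq f\br{\mathbf{P}}=\sum_{\sigma\in[4]_{\geq 0}^h}\sum_{T\subseteq[n]}\widehat{p_{\sigma}}\br{T}\Br{\bigotimes_{i\in T}\X_1^{(i)}}\otimes\A_{\sigma}, \]
where the superscript $(i)$ refers to one of $n$ fresh qubit registers (empty slots filled with $\X_0=\id_2$) and $\A_{\sigma}$ occupies the original $h$ registers; $Q\defeq g\br{\mathbf{Q}}$ is defined identically with $\Y_1$ and $\set{\B_i}$. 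Since $p_{\sigma}\in L^2\br{\reals,\gamma_n}$ and $\X_1,\A_{\sigma}$ are Hermitian, $P,Q\in\H_2^{\otimes n+h}$.

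Items 1--3 then follow by direct computation. For item 1, the partial trace over the fresh registers kills every summand with $T\neq\emptyset$ because $\Tr\X_1=0$, so $\Tr P=2^n\sum_{\sigma}\widehat{p_{\sigma}}\br{\emptyset}\Tr\A_{\sigma}=2^n\expec{}{\Tr\mathbf{P}}$. For item 3, the displayed sum is precisely the Fourier expansion of $P$ in the product basis, so Parseval (Fact~\ref{fac:basicfourier} item 3) together with Lemma~\ref{lem:randoperator} gives $\nnorm{P}_2^2=\sum_{\sigma,T}\widehat{p_{\sigma}}\br{T}^2=\twonorm{p}^2=N_2\br{\mathbf{P}}^2$. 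For item 2, factorize $\psi_{AB}^{\otimes(n+h)}$ register by register: a fresh register $i$ contributes $\Tr\br{\X_{[i\in T]}\otimes\Y_{[i\in T']}}\psi_{AB}$, which equals $\rho$ when $i\in T=T'$, equals $1$ when $i\notin T\cup T'$, and vanishes otherwise (using $\Tr\X_1\psi_A=\frac{1}{2}\Tr\X_1=0$); summed over $T,T'$ this reproduces the factor $\rho^{\abs{T}}\,[T=T']=\expec{\br{\mathbf{g},\mathbf{h}}\sim\G_{\rho}^{\otimes n}}{\prod_{i\in T}\mathbf{g}_i\prod_{i\in T'}\mathbf{h}_i}$, while the original $h$ registers contribute $\prod_j\Tr\br{\A_{\sigma_j}\otimes\B_{\tau_j}}\psi_{AB}$ on both sides.

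For item 4, reduce to Lemma~\ref{lem:invariancegaussian} applied to the single random operator $\mathbf{P}$ (and separately to $\mathbf{Q}$). The hypothesis $\sum_{\sigma}\influence_i\br{p_{\sigma}}\le\tau$ is exactly $\influence_i\br{p}\le\tau$ for the associated vector-valued function, and $\expec{}{\Tr\zeta\br{\mathbf{P}}}$ depends only on the $\gamma_n$-marginal of $\G_{\rho}^{\otimes n}$, so Lemma~\ref{lem:invariancegaussian} produces some $M^{(1)}\in\H_2^{\otimes n+h}$ with $\abs{\expec{}{2^n\Tr\zeta\br{\mathbf{P}}}-\Tr\zeta\br{M^{(1)}}}\le O\br{2^{n+h}\br{3^d d\sqrt{\tau}}^{2/3}}$. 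It remains only to identify $M^{(1)}$ with $P$: both are obtained from the same Fourier data by the substitution $\prod_{i\in T}\mathbf{g}_i\mapsto\bigotimes_{i\in T}(\text{a traceless Hermitian element of }\M_2\text{ with }\nnorm{\cdot}_2=1)$, and any two such elements have spectrum $\set{\pm1}$ and hence are unitarily equivalent; thus $P$ and $M^{(1)}$ differ by conjugation by a product of local unitaries on the fresh registers, under which $\Tr\zeta\br{\cdot}$ is invariant, giving $\Tr\zeta\br{P}=\Tr\zeta\br{M^{(1)}}$, and likewise for $Q$. The main obstacle here is not analytic but organizational: one must ensure the \emph{single} pair $\br{P,Q}$ verifies all four conclusions simultaneously, i.e.\ that the Gaussian-to-operator conversion supplied by Lemma~\ref{lem:invariancegaussian} is literally the restriction of the explicit substitution map used to establish items 1--3; once that identification is pinned down, the rest is the elementary algebra above together with a single invocation of Lemma~\ref{lem:invariancegaussian}.
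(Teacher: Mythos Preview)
Your proposal is correct and follows essentially the same approach as the paper's proof. The paper's proof simply says to take the Hermitian pair $\br{\A,\B}$ achieving the maximal correlation (via Proposition~\ref{prop:maximalvariance}), substitute each $\br{\mathbf{g}_i,\mathbf{h}_i}$ by $\br{\A,\B}$ with tensor products, and then declares the rest ``same as the one of Lemma~\ref{lem:jointinvariance}''; your $(\X_1,\Y_1)$ from Lemma~\ref{lem:normofM} plays exactly the role of that pair, and your explicit verification of items 1--3 together with the reduction of item 4 to Lemma~\ref{lem:invariancegaussian} (plus the unitary-equivalence observation identifying $P$ with $M^{(1)}$) spells out precisely what the paper leaves implicit.
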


\begin{proof}
	From Proposition~\ref{prop:maximalcorrelationone}, let $\A$ and $\B$ be the Hermitian matrices achieved the maximal correlation of $\psi_{AB}$ in Definition~\ref{def:maximalcorrelation}. Substitute each pair $\br{\mathbf{g}_i,\mathbf{h}_i}$ by $\br{\A,\B}$ and the products of random variables by tensor products. The following proof is same as the one of Lemma~\ref{lem:jointinvariance}.
\end{proof}
\section{Dimension reduction for random operators}\label{sec:dimensionreduction}
The following is the main lemma in this section.

\begin{lemma}\label{lem:dimensionreduction}
Given parameters $\rho\in[0,1], \delta,\alpha>0$, integers $d,n,h>0$, a noisy EPR state $\psi_{AB}$ with the maximal correlation $\rho=\rho\br{\psi_{AB}}$, and degree-$d$ multilinear joint random operators
\[\br{\mathbf{P},\mathbf{Q}}=\br{\sum_{\sigma\in[4]_{\geq 0}^h}p_{\sigma}\br{\mathbf{g}}\A_{\sigma},\sum_{\sigma\in[4]_{\geq 0}^h}q_{\sigma}\br{\mathbf{h}}\B_{\sigma}}_{\br{\mathbf{g},\mathbf{h}}\sim\G_{\rho}^{\otimes n}}\in L^2\br{\H_2^{\otimes h},\gamma_n}\times L^2\br{\H_2^{\otimes h},\gamma_n},\]
where $\set{\A_i}_{i=0}^3,\set{\B_i}_{i=0}^3$ are both standard orthonormal basis in $\M_2$, then there exists an explicitly computable $n_0=n_0\br{d,h,\delta}$, and joint random operators
\[\br{\mathbf{P}_M^{\br{1}},\mathbf{Q}_M^{\br{1}}}=\br{\sum_{\sigma\in[4]_{\geq 0}^h}p^{\br{1}}_{\sigma,M}\br{\mathbf{x}}\A_{\sigma},\sum_{\sigma\in[4]_{\geq 0}^h}q^{\br{1}}_{\sigma,M}\br{\mathbf{y}}\B_{\sigma}}_{\br{\mathbf{x},\mathbf{y}}\sim\G_{\rho}^{\otimes n_0}},\]
for any $M\in\reals^{n\times D}$ such that the following holds.

If we sample $\mathbf{M}\sim \gamma_{n\times D}$, then with probability at least  $1-\delta-2\alpha$, it holds that
\begin{enumerate}
  \item $N_2\br{\mathbf{P}_{\mathbf{M}}^{(1)}}_2\leq(1+\delta)N_2\br{\mathbf{P}}$ and $N_2\br{\mathbf{Q}_{\mathbf{M}}^{(1)}}\leq(1+\delta)N_2\br{\mathbf{Q}}$;
      \item $\abs{\expec{\mathbf{P}}{\Tr~\mathbf{P}^{(1)}_{\mathbf{M}}}-\expec{\mathbf{P}}{\Tr~\mathbf{P}}}\leq\delta2^hN_2\br{\mathbf{P}}$ and $\abs{\expec{\mathbf{Q}}{\Tr~\mathbf{Q}_{\mathbf{M}}^{(1)}}-\expec{\mathbf{Q}}{\Tr~\mathbf{Q}}}\leq\delta2^hN_2\br{\mathbf{Q}}$;
  \item $\expec{\mathbf{P}}{\Tr~\zeta\br{\mathbf{P}_{\mathbf{M}}^{\br{1}}}}\leq \frac{1}{\sqrt{\alpha}}\expec{\mathbf{P}}{\Tr~\zeta\br{\mathbf{P}}}$ and $\expec{\mathbf{Q}}{\Tr~\zeta\br{\mathbf{Q}_{\mathbf{M}}^{\br{1}}}}\leq \frac{1}{\sqrt{\alpha}}\expec{\mathbf{Q}}{\Tr~\zeta\br{\mathbf{Q}}}$;
  \item $\abs{\expec{\mathbf{P},\mathbf{Q}}{\Tr\br{\mathbf{P}_{\mathbf{M}}^{\br{1}}}\otimes \mathbf{Q}_\mathbf{M}^{\br{1}}}\psi_{AB}^{\otimes h}-\expec{\mathbf{P},\mathbf{Q}}{\Tr\br{\mathbf{P}\otimes \mathbf{Q}}\psi_{AB}^{\otimes h}}}\leq\delta N_2\br{\mathbf{P}}N_2\br{\mathbf{Q}}$.
\end{enumerate}
	
	In particular, one may take $n_0=\frac{4^{3h+4}d^{O\br{d}}}{\delta^2}$.
\end{lemma}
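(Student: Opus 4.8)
The plan is to reduce the statement to the dimension‑reduction theorem of Ghazi, Kamath and Sudan~\cite{Ghazi:2018:DRP:3235586.3235614} for low‑degree multilinear polynomials of (correlated) Gaussian variables, applied to the associated vector‑valued functions $p=\br{p_{\sigma}}_{\sigma\in[4]_{\geq 0}^h}:\reals^n\rightarrow\reals^{4^h}$ and $q=\br{q_{\sigma}}_{\sigma\in[4]_{\geq 0}^h}:\reals^n\rightarrow\reals^{4^h}$ of $\mathbf{P}$ and $\mathbf{Q}$; by hypothesis both are multilinear of degree $d$. The first step is to record that, after the basis expansion, all four quantities appearing in the conclusion are standard functionals of $p$ and $q$. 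By Lemma~\ref{lem:randoperator}, $N_2\br{\mathbf{P}}=\twonorm{p}$ and $N_2\br{\mathbf{Q}}=\twonorm{q}$. Since $\Tr\A_{\sigma}=0$ for $\sigma\neq 0^h$ and $\Tr\A_{0^h}=2^h$ (and likewise for $\set{\B_i}$), $\expec{}{\Tr\mathbf{P}}=2^h\widehat{p_{0^h}}\br{0}$. Using Lemma~\ref{lem:closedelta} together with the fact that the basis expansion $X\mapsto\br{\widehat{X}\br{\sigma}}_{\sigma}$ is an $\ell_2$‑isometry up to the fixed scalar $2^h$ (as $\twonorm{X}^2=2^h\sum_{\sigma}\abs{\widehat{X}\br{\sigma}}^2$), one gets $\Tr\zeta\br{\mathbf{P}}=2^h\,\mathrm{dist}_2\br{p\br{\mathbf{g}},\Delta'}^2$, where $\Delta'\subseteq\reals^{4^h}$ is the image of the convex body $\set{X\in\H_2^{\otimes h}:0\leq X\leq\id}$ and the nearest point is computed by the rounding map of Fact~\ref{fac:rounding}. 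Finally $\expec{}{\Tr\br{\mathbf{P}\otimes\mathbf{Q}}\psi_{AB}^{\otimes h}}=\sum_{\sigma,\tau\in[4]_{\geq 0}^h}W_{\sigma,\tau}\,\expec{}{p_{\sigma}\br{\mathbf{g}}q_{\tau}\br{\mathbf{h}}}$, a bilinear form in $\br{p,q}$ whose weight matrix $W=\mathsf{Corr}\br{\psi_{AB},\A,\B}^{\otimes h}$ has operator norm $1$ by Lemma~\ref{lem:normofM} and Lemma~\ref{lem:convariancetensor}.

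For the construction I would sample $\mathbf{M}\sim\gamma_{n\times n_0}$ with entries scaled so that each row $\mathbf{M}_i$ satisfies $\expec{}{\twonorm{\mathbf{M}_i}^2}=1$, and replace, simultaneously in $\mathbf{P}$ and $\mathbf{Q}$, the $i$‑th Gaussian pair $\br{\mathbf{g}_i,\mathbf{h}_i}$ by $\br{\innerproduct{\mathbf{M}_i}{\mathbf{x}},\innerproduct{\mathbf{M}_i}{\mathbf{y}}}$ with $\br{\mathbf{x},\mathbf{y}}\sim\G_{\rho}^{\otimes n_0}$; using the \emph{same} matrix $\mathbf{M}$ on both sides is what keeps the joint correlation structure, since $\br{\innerproduct{\mathbf{M}_i}{\mathbf{x}},\innerproduct{\mathbf{M}_i}{\mathbf{y}}}$ is jointly Gaussian with variances $\twonorm{\mathbf{M}_i}^2\approx 1$ and covariance $\rho\twonorm{\mathbf{M}_i}^2\approx\rho$, while $\expec{}{\innerproduct{\mathbf{M}_i}{\mathbf{x}}\innerproduct{\mathbf{M}_j}{\mathbf{x}}}=\innerproduct{\mathbf{M}_i}{\mathbf{M}_j}\approx 0$. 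Writing $p^{\br{1}}_{\sigma,\mathbf{M}}$, $q^{\br{1}}_{\sigma,\mathbf{M}}$ for the resulting functions of $\mathbf{x},\mathbf{y}$ (which need not be multilinear or low degree — this is not required of the output here, and is restored in later steps) and setting $\mathbf{P}_{\mathbf{M}}^{\br{1}}=\sum_{\sigma}p^{\br{1}}_{\sigma,\mathbf{M}}\br{\mathbf{x}}\A_{\sigma}$, $\mathbf{Q}_{\mathbf{M}}^{\br{1}}=\sum_{\sigma}q^{\br{1}}_{\sigma,\mathbf{M}}\br{\mathbf{y}}\B_{\sigma}$, the four conclusions follow from the vector‑valued, $\rho$‑correlated form of the dimension‑reduction theorem of~\cite{Ghazi:2018:DRP:3235586.3235614}: for $n_0$ at least a suitable $\mathrm{poly}\br{4^h,d^{O\br{d}},1/\delta}$, with probability at least $1-\delta-2\alpha$ over $\mathbf{M}$ one simultaneously has (i) $\twonorm{p^{\br{1}}_{\mathbf{M}}}\leq\br{1+\delta}\twonorm{p}$, giving item 1 via Lemma~\ref{lem:randoperator}; (ii) the mean preserved up to $\delta$ in the right normalisation, giving item 2; (iii) the expected squared distance to the fixed convex body $\Delta'$ preserved up to the Markov factor $1/\sqrt{\alpha}$ (applied to the fact that dimension reduction does not increase it in expectation over $\mathbf{M}$), which multiplied by $2^h$ is item 3; and (iv) every $\rho$‑correlated bilinear pairing of the components of $p$ and $q$, hence the $W$‑weighted combination above, preserved up to $\delta\twonorm{p}\twonorm{q}=\delta\,N_2\br{\mathbf{P}}N_2\br{\mathbf{Q}}$ since $\norm{W}\leq 1$, which is item 4. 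The explicit $n_0=\frac{4^{3h+4}d^{O\br{d}}}{\delta^2}$ is the quantitative bound of~\cite{Ghazi:2018:DRP:3235586.3235614} specialised to ambient dimension $4^h$ and accuracy $\delta$, and one union bound over the exceptional events accounts for the claimed failure probability.

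The genuinely new ingredient, beyond invoking~\cite{Ghazi:2018:DRP:3235586.3235614}, is the observation that the quantum rounding quantity $\Tr\zeta\br{\cdot}$ is, via Lemma~\ref{lem:closedelta} and the isometry of the basis expansion, exactly a squared Euclidean distance to a convex body in $\reals^{4^h}$, so the classical dimension‑reduction machinery applies verbatim to the ``$\zeta$ is small'' guarantee. The main technical obstacle is that the joint substitution preserves $\G_{\rho}^{\otimes n}$ only approximately, so one must control, for degree‑$d$ polynomials, the error arising from $\twonorm{\mathbf{M}_i}^2\neq 1$ and $\innerproduct{\mathbf{M}_i}{\mathbf{M}_j}\neq 0$; this is handled inside~\cite{Ghazi:2018:DRP:3235586.3235614} through concentration of the Gram matrix of the $d$‑fold tensor powers of the rows of $\mathbf{M}$, and what remains to be checked is only that those arguments are insensitive to the functions being $4^h$‑vector‑valued (which merely inflates the failure probability, and hence the required $n_0$, by a $\mathrm{poly}\br{4^h}$ factor) and to the two sides being $\rho$‑correlated rather than identical (which is purely linear‑algebraic), together with careful bookkeeping of the factor $2^h$ and of the normalisations $\twonorm{\cdot}$ versus $\nnorm{\cdot}_2$ throughout.
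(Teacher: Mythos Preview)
Your proposal is correct and follows essentially the same route as the paper: reduce all four items to statements about the associated vector-valued functions $p,q$, then invoke the dimension-reduction results of~\cite{Ghazi:2018:DRP:3235586.3235614} (Facts~\ref{fac:dimensionreduction} and~\ref{fac:reductionrounding} in the paper) coordinatewise with a union bound over $\sigma\in[4]_{\geq 0}^h$, using Lemma~\ref{lem:closedelta} to identify $\Tr\zeta(\cdot)$ with a squared distance to a convex body. The only notable difference is that the paper first passes to the diagonal bases of Lemma~\ref{lem:normofM} so that $W$ becomes diagonal with entries $c_{\sigma}\leq 1$, which makes the bound for item~4 a direct Cauchy--Schwarz $\sum_{\sigma}c_{\sigma}\,\delta_0\twonorm{p_{\sigma}}\twonorm{q_{\sigma}}\leq\delta_0\twonorm{p}\twonorm{q}$; your general-$W$ argument also works (since $\abs{W_{\sigma,\tau}}\leq 1$ entrywise by Fact~\ref{fac:cauchyschwartz}), but costs an extra polynomial factor in $4^h$ that is already absorbed in the stated $n_0$.
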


\begin{fact}~\cite{Ghazi:2018:DRP:3235586.3235614}\label{fac:dimensionreduction}
	Given parameters $n,d\in\mathbb{Z}_{>0}$, $\rho\in[0,1]$ and $\delta>0$, there exists an explicitly computable $D=D\br{d,\delta}$ such that the following holds.
	
	For any $n$ and any degree-$d$ multilinear polynomials $f,g:\reals^n\rightarrow\reals$, and $M\in\reals^{n\times D}$, define functions $f_M, g_M:\reals^{D}\rightarrow\reals$ as
\begin{equation}\label{eqn:fmgm}
  f_M\br{x}\defeq f\br{\frac{Mx}{\twonorm{x}}}~\mbox{and}~g_M\br{x}\defeq g\br{\frac{Mx}{\twonorm{x}}}.
\end{equation}
Then
\[\Pr_{\mathbf{M}\sim \gamma_{n\times D}}\Br{\abs{\innerproduct{f_{\mathbf{M}}}{g_{\mathbf{M}}}_{\G_{\rho}^{\otimes D}}-\innerproduct{f}{g}_{\G_{\rho}^{\otimes n}}}<\delta\twonorm{f}\twonorm{g}}\geq 1-\delta.\]
In particular, one may take $D=\frac{d^{O\br{d}}}{\delta^2}$.
\end{fact}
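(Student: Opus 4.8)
The plan is to prove Fact~\ref{fac:dimensionreduction} by the moment method applied to the random variable $X_{\mathbf M}\defeq\innerproduct{f_{\mathbf M}}{g_{\mathbf M}}_{\G_{\rho}^{\otimes D}}$, viewed as a function of the Gaussian matrix $\mathbf M\sim\gamma_{n\times D}$. By bilinearity and homogeneity of both inner products we may assume $\twonorm f=\twonorm g=1$. Expand $f$ and $g$ in the multilinear monomial (degree-one Hermite) basis, $f=\sum_{\abs S\le d}\widehat f\br S\prod_{i\in S}x_i$ and similarly $g$, so that $\innerproduct fg_{\G_{\rho}^{\otimes n}}=\sum_S\rho^{\abs S}\widehat f\br S\widehat g\br S$ while $\sum_S\widehat f\br S^2=\sum_S\widehat g\br S^2=1$. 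Writing $m_i\in\reals^D$ for the $i$-th row of $\mathbf M$ and $\theta_x\defeq x/\twonorm x$, one has $f_{\mathbf M}\br x=\sum_S\widehat f\br S\prod_{i\in S}\innerproduct{m_i}{\theta_x}$, and likewise for $g_{\mathbf M}$; note this depends on $x$ only through its direction $\theta_x$, which under $x\sim\gamma_D$ is uniform on the sphere and independent of $\twonorm x$.

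First I would compute $\expec{\mathbf M}{X_{\mathbf M}}$. Since the rows $m_i$ are i.i.d.\ with $\expec{}{m_im_i^{\top}}=\id_D$, the $\mathbf M$-expectation of $\prod_{i\in S}\innerproduct{m_i}{\theta_x}\prod_{j\in T}\innerproduct{m_j}{\theta_y}$ vanishes unless $S=T$, in which case it equals $\innerproduct{\theta_x}{\theta_y}^{\abs S}=\cos^{\abs S}\br{x,y}$. Hence $\expec{\mathbf M}{X_{\mathbf M}}=\sum_S\widehat f\br S\widehat g\br S\,\expec{\br{\mathbf x,\mathbf y}\sim\G_{\rho}^{\otimes D}}{\cos^{\abs S}\br{\mathbf x,\mathbf y}}$. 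Writing $\mathbf y=\rho\mathbf x+\sqrt{1-\rho^2}\,\mathbf z$ with $\mathbf z\sim\gamma_D$ independent of $\mathbf x$, standard Gaussian concentration for $\innerproduct{\mathbf x}{\mathbf y}$, $\twonorm{\mathbf x}$ and $\twonorm{\mathbf y}$ gives $\cos\br{\mathbf x,\mathbf y}=\rho+O\br{D^{-1/2}}$ in $L^1$, and therefore $\abs{\expec{}{\cos^k}-\rho^k}\le k\,\expec{}{\abs{\cos-\rho}}=O\br{d\,D^{-1/2}}$ for every $k\le d$. Summing over $S$ and applying Cauchy--Schwarz in the form $\sum_S\abs{\widehat f\br S\widehat g\br S}\le1$ yields $\abs{\expec{\mathbf M}{X_{\mathbf M}}-\innerproduct fg_{\G_{\rho}^{\otimes n}}}=O\br{d\,D^{-1/2}}$, a bound independent of $n$.

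Next I would bound $\var{X_{\mathbf M}}$ by a Wick--Isserlis computation. Expanding $X_{\mathbf M}^2$ with two independent copies $\br{\mathbf x,\mathbf y},\br{\mathbf x',\mathbf y'}\sim\G_{\rho}^{\otimes D}$ and taking $\expec{\mathbf M}{\cdot}$ coordinatewise over the independent rows, each coordinate contributes the Gaussian moment of a product of at most four linear forms in $m_i$, which by Isserlis is a sum of products of the pairwise inner products $\innerproduct{\theta_a}{\theta_b}$ among the four directions $\theta_{\mathbf x},\theta_{\mathbf y},\theta_{\mathbf x'},\theta_{\mathbf y'}$. The ``diagonal'' contribution, in which the monomials of the unprimed copy are paired among themselves and those of the primed copy among themselves, reproduces $\br{\expec{\mathbf M}{X_{\mathbf M}}}^2$ and cancels in the variance; every surviving term carries at least two ``cross'' factors drawn from $\innerproduct{\theta_{\mathbf x}}{\theta_{\mathbf x'}},\innerproduct{\theta_{\mathbf y}}{\theta_{\mathbf y'}},\innerproduct{\theta_{\mathbf x}}{\theta_{\mathbf y'}},\innerproduct{\theta_{\mathbf y}}{\theta_{\mathbf x'}}$, i.e.\ inner products between directions of independent Gaussian vectors, for which $\expec{}{\innerproduct{\theta_a}{\theta_b}^2}=D^{-1}$ while $\abs{\innerproduct{\theta_a}{\theta_b}}\le1$. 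Bounding each term by Cauchy--Schwarz, using $\abs{\innerproduct{\theta_a}{\theta_b}}\le1$ for the within-copy factors, and counting that there are at most $d^{O\br d}$ surviving terms, gives $\var{X_{\mathbf M}}\le d^{O\br d}/D$ uniformly in $n$ and $\rho$. Since $X_{\mathbf M}$ is a degree-$O\br d$ polynomial in the Gaussian entries of $\mathbf M$, a hypercontractive tail bound converts this into $\Pr_{\mathbf M}\Br{\abs{X_{\mathbf M}-\innerproduct fg_{\G_{\rho}^{\otimes n}}}<\delta}\ge1-\delta$ once $D=d^{O\br d}/\delta^2$; combined with the first-moment estimate and rescaled to general $f,g$ this is exactly the claim.

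I expect the variance estimate to be the main obstacle. One must organize the Isserlis expansion so that the main term cancels exactly, verify that no surviving term can carry only a single cross factor to the first power (which would lose a $D^{-1/2}$ instead of a $D^{-1}$ and destroy the rate), and control the number and size of the surviving terms by bounds uniform in $n$ — this is where the explicit $d^{O\br d}$ dependence is read off, and it is delicate precisely because the normalization by $\twonorm x$ makes $f_{\mathbf M}$ a genuine non-polynomial function of $x$, coupling its radial and angular behaviour. A secondary technical point is making the concentration statements for $\cos\br{\mathbf x,\mathbf y}$, $\twonorm{\mathbf x}$, $\twonorm{\mathbf y}$ and for the cross inner products quantitatively sharp in the powers of $D$, which is routine Gaussian analysis but must be carried out carefully to obtain the stated $\delta^{-2}$ scaling of $D$.
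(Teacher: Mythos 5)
This Fact is imported from~\cite{Ghazi:2018:DRP:3235586.3235614} and is not proved in the present paper, so there is no in-paper argument to compare against; you are reconstructing a proof the author chose to cite. Your moment-method plan --- expand in the multilinear basis, control the first moment through concentration of $\cos\br{\mathbf{x},\mathbf{y}}$ near $\rho$, control the second moment through a Wick/Isserlis expansion organized around within-copy versus cross pairings --- is the natural route and almost certainly the one taken in~\cite{Ghazi:2018:DRP:3235586.3235614}. Two of the points you flag as delicate can be settled quickly. The vanishing of single-cross-factor terms follows from the invariance of the law of $\br{\mathbf{x},\mathbf{y},\mathbf{x}',\mathbf{y}'}$ under $\br{\mathbf{x}',\mathbf{y}'}\mapsto\br{-\mathbf{x}',-\mathbf{y}'}$, which flips the sign of every cross factor and fixes every within-copy factor, so any term with an odd number of cross factors has zero expectation over $\br{\mathbf{x},\mathbf{y},\mathbf{x}',\mathbf{y}'}$. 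And your bias estimate $O\br{d\,D^{-1/2}}$ is conservative: a Taylor expansion of $1/\br{\twonorm{\mathbf{x}}\twonorm{\mathbf{y}}}$ together with Isserlis gives $\expec{}{\cos\br{\mathbf{x},\mathbf{y}}}=\rho\br{1-O\br{1/D}}$, so the bias is $O\br{\mathrm{poly}(d)/D}$ and is never the bottleneck.

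Two points, however, are genuine gaps rather than routine details. First, the variance estimate as sketched is not $n$-uniform: ``counting that there are at most $d^{O\br{d}}$ surviving terms'' does not bound the Wick sum, which ranges over tuples $\br{S,T,S',T'}$ of subsets of $[n]$ with up to $\binom{n}{d}^{4}$ nonzero contributions, so a term-by-term absolute bound would reintroduce $n$. You must group tuples by their overlap pattern --- of which there are indeed only $d^{O\br{d}}$ --- and within each pattern collapse the Fourier mass by Parseval/Cauchy--Schwarz rather than the triangle inequality; already the $d=1$ case, where $X_{\mathbf{M}}$ is essentially $c\,a^{\top}\mathbf{M}\mathbf{M}^{\top}b$ and the $\Theta\br{\rho^{2}/D}$ variance emerges only after cancellation in the Wick sum, shows that a naive term count cannot work. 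Second, the target $D=d^{O\br{d}}/\delta^{2}$ does not follow from $\mathrm{Var}\Br{X_{\mathbf{M}}}\le d^{O\br{d}}/D$ by the route you describe: Chebyshev gives failure probability $\delta$ only at $D=d^{O\br{d}}/\delta^{3}$, and the hypercontractive tail for a degree-$2d$ Gaussian polynomial introduces $\log^{O\br{d}}\br{1/\delta}$ factors that $d^{O\br{d}}$ does not absorb. Already at $d=1$, $X_{\mathbf{M}}$ is a rescaled $\chi^{2}_{D}$ variable, for which failure probability $\delta$ at additive deviation $\delta$ forces $D=\Theta\br{\rho^{2}\log\br{1/\delta}/\delta^{2}}$. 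So either the stated $\delta$-dependence carries hidden polylogarithmic slack, or your tail-bound step needs a different idea; check the precise statement and proof in~\cite{Ghazi:2018:DRP:3235586.3235614} before relying on this route.
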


Choosing $g\equiv 1$, we get
\begin{equation}\label{eqn:expec}
  \Pr_{\mathbf{M}\sim \gamma_{n\times D}}\Br{\abs{\widehat{f_{\mathbf{M}}}\br{\mathbf{0}}-\widehat{f}\br{\mathbf{0}}}<\delta\twonorm{f}}\geq 1-\delta.
\end{equation}

If $f$ and $g$ are identical and $\rho=1$, we have

\begin{eqnarray}
  &&\Pr_{\mathbf{M}\sim \gamma_{n\times D}}\Br{\abs{\twonorm{f_{\mathbf{M}}}^2-\twonorm{f}^2}\leq\delta\twonorm{f}^2}\geq1-\delta; \label{eqn:normf}\\
  &&\Pr_{\mathbf{M}\sim \gamma_{n\times D}}\Br{\abs{\twonorm{g_{\mathbf{M}}}^2-\twonorm{g}^2}\leq\delta\twonorm{g}^2}\geq1-\delta. \label{eqn:normg}
\end{eqnarray}

\begin{fact}~\cite{Ghazi:2018:DRP:3235586.3235614}\label{fac:reductionrounding}
  Given integers $n, k, D>0$, let $f\in L^2\br{\reals^k,\gamma_n}$, and $\Delta$ be a convex body in $\reals^k$ with rounding map $\R$ defined in Section~\ref{subsec:misc}. Let $f_M:\reals^D\rightarrow\reals^k$ be defined analogously to Eq.~\eqref{eqn:fmgm}. It holds that,
  \[\Pr_{\mathbf{M}\sim \gamma_{n\times D}}\Br{\twonorm{\R\circ f_{\mathbf{M}}-f_{\mathbf{M}}}\leq\frac{1}{\delta}\twonorm{\R\circ f-f}}\geq1-\delta^2,\]
  for $0<\delta<1$.
\end{fact}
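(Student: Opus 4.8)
The plan is to reduce the assertion to a one-line Markov bound, after noting that the random projection $\mathbf M\mapsto f_{\mathbf M}$ \emph{exactly preserves in expectation} the squared $L^2$-distance from the function to the convex body $\Delta$. None of the dimension-reduction content (Fact~\ref{fac:dimensionreduction}) is used; the argument is short and self-contained.

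First I would fix notation. For the convex body $\Delta\subseteq\reals^k$ with rounding map $\R$, write $\mathrm d_{\Delta}\br y\defeq\twonorm{\R\br y-y}=\min_{z\in\Delta}\twonorm{z-y}$ for the Euclidean distance from $y$ to $\Delta$; by Fact~\ref{fac:rounding} and the triangle inequality $\mathrm d_{\Delta}$ is $1$-Lipschitz, hence continuous and measurable. For any $m$ and any $g\in L^2\br{\reals^k,\gamma_m}$ we then have $\twonorm{\R\circ g-g}^2=\expec{\mathbf x\sim\gamma_m}{\mathrm d_{\Delta}\br{g\br{\mathbf x}}^2}$. Set $Z\defeq\twonorm{\R\circ f-f}^2=\expec{\mathbf y\sim\gamma_n}{\mathrm d_{\Delta}\br{f\br{\mathbf y}}^2}$; the target inequality is then equivalent to $\Pr_{\mathbf M\sim\gamma_{n\times D}}\Br{\twonorm{\R\circ f_{\mathbf M}-f_{\mathbf M}}^2\leq Z/\delta^2}\geq1-\delta^2$. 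The only structural input is the elementary Gaussian identity: for every fixed $x\in\reals^D\setminus\set{0}$ and $\mathbf M\sim\gamma_{n\times D}$, the vector $\mathbf Mx/\twonorm x$ is distributed as $\gamma_n$, since the $i$-th coordinate of $\mathbf Mx$ is $\sum_j x_j\mathbf M_{ij}$ and hence $\mathbf Mx$ is centered Gaussian with covariance $\twonorm x^2\id_n$. Consequently, if $\mathbf x\sim\gamma_D$ is drawn independently of $\mathbf M$, then $\mathbf M\mathbf x/\twonorm{\mathbf x}\sim\gamma_n$ as well, by integrating the conditional law over $\mathbf x$ (note $\mathbf x\neq0$ almost surely).

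Next I would average the rounding error over $\mathbf M$. Using $f_M\br x=f\br{Mx/\twonorm x}$ and Tonelli (the integrand is nonnegative),
\[
\expec{\mathbf M\sim\gamma_{n\times D}}{\twonorm{\R\circ f_{\mathbf M}-f_{\mathbf M}}^2}
=\expec{\mathbf M}{\expec{\mathbf x\sim\gamma_D}{\mathrm d_{\Delta}\br{f\br{\mathbf M\mathbf x/\twonorm{\mathbf x}}}^2}}
=\expec{\mathbf y\sim\gamma_n}{\mathrm d_{\Delta}\br{f\br{\mathbf y}}^2}=Z,
\]
the middle equality being Fubini combined with the distributional identity above. (If $Z=\infty$ the asserted bound is vacuous and if $Z=0$ it is immediate, so assume $0<Z<\infty$; the same computation also shows $f_{\mathbf M}\in L^2\br{\reals^k,\gamma_D}$ for $\gamma_{n\times D}$-almost every $\mathbf M$.) Now let $W\defeq\twonorm{\R\circ f_{\mathbf M}-f_{\mathbf M}}^2$, a nonnegative random variable of mean $Z$. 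The bad event $\set{\twonorm{\R\circ f_{\mathbf M}-f_{\mathbf M}}>\frac1\delta\twonorm{\R\circ f-f}}$ is contained in $\set{W\geq Z/\delta^2}$, and Markov's inequality gives
\[
\Pr_{\mathbf M}\Br{W\geq\frac{Z}{\delta^2}}\leq\frac{\expec{\mathbf M}{W}}{Z/\delta^2}=\frac{Z}{Z/\delta^2}=\delta^2,
\]
whence $\Pr_{\mathbf M}\Br{\twonorm{\R\circ f_{\mathbf M}-f_{\mathbf M}}\leq\frac1\delta\twonorm{\R\circ f-f}}\geq1-\delta^2$, as required.

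As for difficulty: there is essentially none beyond routine bookkeeping. The two points needing care are the exact $\gamma_n$-law of $\mathbf Mx/\twonorm x$ (a one-line covariance computation) and the measurability and integrability of $\mathrm d_\Delta\circ f_{\mathbf M}$, both of which are handled once we invoke that $\R$ is a contraction (Fact~\ref{fac:rounding}); if convenient one may translate $\Delta$ so that $0\in\Delta$ without loss of generality. All the genuine work of dimension reduction sits in Fact~\ref{fac:dimensionreduction}; the rounding statement is a pure preserved-mean argument, and I expect the written proof to be only a few lines.
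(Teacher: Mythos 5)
Your argument is correct, and it matches the standard way this rounding-preservation lemma is proved. Note that the paper imports Fact~\ref{fac:reductionrounding} from~\cite{Ghazi:2018:DRP:3235586.3235614} as a black box, so there is no in-paper proof to compare against; the essential ingredients you identify — that $\mathbf{M}x/\twonorm{x}\sim\gamma_n$ for each fixed $x\neq 0$, hence $\expec{\mathbf{M}}{\twonorm{\R\circ f_{\mathbf{M}}-f_{\mathbf{M}}}^2}=\twonorm{\R\circ f-f}^2$ by Tonelli, followed by Markov on the nonnegative variable $\twonorm{\R\circ f_{\mathbf{M}}-f_{\mathbf{M}}}^2$ — are exactly the content of the cited lemma, and the $1$-Lipschitz property of $\R$ (Fact~\ref{fac:rounding}) indeed takes care of measurability and of the fact that $f_{\mathbf{M}}\in L^2(\reals^k,\gamma_D)$ almost surely once $f\in L^2(\reals^k,\gamma_n)$.
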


We are now ready to prove the main lemma.
	
\begin{proof}[Proof of Lemma~\ref{lem:dimensionreduction}]
      From Lemma~\ref{lem:normofM} and Lemma~\ref{lem:mg}, we may assume that the basis $\set{\A_i}_{i=0}^3$ and $\set{\B_i}_{i=0}^3$ satisfy  $\Tr\br{\A_i\otimes\B_j}\psi_{AB}=c_i\delta_{i,j}$ without loss of generality.
	Then from Lemma~\ref{lem:convariancetensor}
	\begin{eqnarray*}
	 \expec{\br{\mathbf{g},\mathbf{h}}\sim\G_{\rho}^{\otimes n-h}}{\Tr\br{P\br{\mathbf{g}}\otimes Q\br{\mathbf{h}}}\br{\psi_{AB}}^{\otimes n}}=\sum_{\sigma\in[4]_{\geq 0}^{\otimes h}}c_{\sigma}\innerproduct{p_{\sigma}}{g_{\sigma}}_{\G_{\rho}^{\otimes n}},
	\end{eqnarray*}
where $c_{\sigma}\defeq c_{\sigma_1}\cdots c_{\sigma_h}$.

Applying Fact~\ref{fac:dimensionreduction} with $\delta\leftarrow\frac{\delta^2}{ 4^{h+2}}, n_0\leftarrow\frac{4^{3h+4}d^{O\br{d}}}{\delta^2}$ and the union bound, it holds

\begin{equation}\label{eqn:mpq}
  \Pr_{\mathbf{M}\sim \gamma_{n\times n_0}}\Br{\br{\forall \sigma\in[4]_{\geq 0}^h}~\abs{\innerproduct{p_{\sigma,\mathbf{M}}}{q_{\sigma,\mathbf{M}}}_{\G_{\rho}^{\otimes n_0}}-\innerproduct{p_{\sigma}}{q_{\sigma}}_{\G_{\rho}^{\otimes n}}}\leq \delta\twonorm{p_{\sigma}}\twonorm{q_{\sigma}}}\geq1-\delta/16,
\end{equation}
and
\begin{equation}\label{eqn:mpq1}
  \Pr_{\mathbf{M}\sim \gamma_{n\times n_0}}\Br{\br{\forall \sigma\in[4]_{\geq 0}^h}:~\abs{\widehat{p_{\sigma,\mathbf{M}}}\br{\mathbf{0}}-\widehat{p_{\sigma}}\br{\mathbf{0}}}\leq \delta\twonorm{p_{\sigma}}}\geq1-\delta/16,
\end{equation}
and
\begin{equation}\label{eqn:mpq2}
  \Pr_{\mathbf{M}\sim \gamma_{n\times n_0}}\Br{\br{\forall \sigma\in[4]_{\geq 0}^h}:~\abs{\widehat{q_{\sigma,\mathbf{M}}}\br{\mathbf{0}}-\widehat{q_{\sigma}}\br{\mathbf{0}}}\leq \delta\twonorm{q_{\sigma}}}\geq1-\delta/16,
\end{equation}
and
\begin{eqnarray}
  &&\Pr_{\mathbf{M}\sim \gamma_{n\times n_0}}\Br{\br{\forall\sigma\in[4]_{\geq 0}^h}:~\abs{\twonorm{p_{\sigma,\mathbf{M}}}^2-\twonorm{p_{\sigma}}^2}\leq\delta\twonorm{p_{\sigma}}^2}\geq1-\delta,\label{eqn:mpq5}
\end{eqnarray}
and
\begin{eqnarray}
  &&\Pr_{\mathbf{M}\sim \gamma_{n\times n_0}}\Br{\br{\forall\sigma\in[4]_{\geq 0}^h}:~\abs{\twonorm{q_{\sigma,\mathbf{M}}}^2-\twonorm{q_{\sigma}}^2}\leq\delta\twonorm{q_{\sigma}}^2}\geq1-\delta,\label{eqn:mpq6}.
\end{eqnarray}
Define
\[\br{\mathbf{P}^{\br{1}}_M,\mathbf{P}^{\br{1}}_M}\defeq\br{\sum_{\sigma\in[4]_{\geq 0}^h}p_{\sigma,M}\br{\mathbf{g}}\A_{\sigma},\sum_{\sigma\in[4]_{\geq 0}^h}q_{\sigma,M}\br{\mathbf{h}}\B_{\sigma}}_{\br{\mathbf{g},\mathbf{h}}\sim\G_{\rho}^{\otimes n_0}}.\]
For any $M$ satisfying Eq.~\eqref{eqn:mpq}, we have
\begin{eqnarray*}
  &&\abs{\expec{}{\Tr\br{\mathbf{P}_M^{\br{1}}\otimes \mathbf{Q}_M^{\br{1}}}\psi_{AB}^{\otimes h}}-\expec{}{\Tr\br{\mathbf{P}\otimes \mathbf{Q}}\psi_{AB}^{\otimes h}}} \\
  &=&\abs{\sum_{\sigma\in[4]_{\geq 0}^h}c_{\sigma}\br{\innerproduct{p^{\br{1}}_{\sigma,M}}{q^{\br{1}}_{\sigma,M}}_{\G_{\rho}^{\otimes n_0}}-\innerproduct{p_{\sigma}}{q_{\sigma}}_{\G_{\rho}^{\otimes n}}}}\quad\quad\mbox{(Lemma~\ref{lem:convariancetensor})}\\
  &\leq&\delta\sum_{\sigma\in[4]_{\geq 0}^h}\twonorm{p_{\sigma}}\twonorm{q_{\sigma}}\quad\quad\mbox{(Eq.~\eqref{eqn:mpq} and $c_{\sigma}\leq 1$ due to Lemma~\ref{lem:normofM})}\\
  &\leq&\delta\br{\sum_{\sigma\in[4]_{\geq 0}^h}\twonorm{p_{\sigma}}^2}^{1/2}\br{\sum_{\sigma\in[4]_{\geq 0}^h}\twonorm{q_{\sigma}}^2}^{1/2}\\
  &=&\delta N_2\br{\mathbf{P}}N_2\br{\mathbf{Q}}\quad\quad\mbox{(Lemma~\ref{lem:randoperator})},
\end{eqnarray*}
Thus
\begin{equation}\label{eqn:m1}
  \Pr_{\mathbf{M}\sim \gamma_{n\times n_0}}\Br{\abs{\expec{}{\Tr\br{\mathbf{P}_M^{\br{1}}\otimes \mathbf{Q}_M^{\br{1}}}\psi_{AB}^{\otimes h}}-\expec{}{\Tr\br{\mathbf{P}\otimes \mathbf{Q}}\psi_{AB}^{\otimes h}}}\leq\delta N_2\br{\mathbf{P}}N_2\br{\mathbf{Q}}}\geq1-\delta/16.
\end{equation}
For any $M$ satisfying Eq.~\eqref{eqn:mpq1},
\begin{eqnarray*}
  &&\abs{\expec{}{\Tr~\mathbf{P}_M^{(1)}}-\expec{}{\Tr~\mathbf{P}}} \\
   &=&2^h\abs{\widehat{p_{\mathbf{0},M}}\br{\mathbf{0}}-\widehat{p_{\mathbf{0}}}\br{\mathbf{0}}}\\ &\leq&\delta2^h\twonorm{p_{\mathbf{0}}}\leq\delta2^hN_2\br{\mathbf{P}}.
\end{eqnarray*}
Hence
\begin{equation}\label{eqn:expecp}
  \Pr_{\mathbf{M}\sim \gamma_{n\times n_0}}\Br{\abs{\expec{}{\Tr~\mathbf{P}_{\mathbf{M}}^{(1)}}-\expec{}{\Tr~\mathbf{P}}}\leq\delta2^hN_2\br{\mathbf{P}}}\geq1-\delta/16.
\end{equation}
Symmetrically, we have
\begin{equation}\label{eqn:expecq}
  \Pr_{\mathbf{M}\sim \gamma_{n\times n_0}}\Br{\abs{\expec{}{\Tr~\mathbf{Q}_{\mathbf{M}}^{(1)}}-\expec{}{\Tr~\mathbf{Q}}}\leq\delta2^hN_2\br{\mathbf{Q}}}\geq1-\delta/16.
\end{equation}
For any $M$ satisfying Eq.~\eqref{eqn:mpq5},
\[N_2\br{\mathbf{P}_M^{(1)}}^2=\sum_{\sigma\in[4]_{\geq 0}^h}\twonorm{p_{\sigma,M}}^2\leq\br{1+\delta}\sum_{\sigma\in[4]_{\geq 0}^h}\twonorm{p_{\sigma}}^2=\br{1+\delta}N_2\br{\mathbf{P}}^2,\]
where the both equalities are from Lemma~\ref{lem:randoperator}. Hence
\begin{equation}\label{eqn:expecnormp}
  \Pr_{\mathbf{M}\sim \gamma_{n\times n_0}}\Br{N_2\br{\mathbf{P}^{(1)}_{\mathbf{M}}}\leq\br{1+\delta}N_2\br{\mathbf{P}}}\geq1-\delta/16.
\end{equation}
Symmetrically, we have
\begin{equation}\label{eqn:expecnormq}
  \Pr_{\mathbf{M}\sim \gamma_{n\times n_0}}\Br{N_2\br{\mathbf{Q}^{(1)}_{\mathbf{M}}}\leq\br{1+\delta}N_2\br{\mathbf{Q}}}\geq1-\delta/16.
\end{equation}
Set \[\Delta\defeq\set{x\in\reals^{4^h}:0\leq\sum_{\sigma\in[4]_{\geq 0}^h}x_{\sigma}\A_{\sigma}\leq\id}.\]
It is easy to verify that $\Delta$ is a convex body. Let $\R$ be the rounding map of $\Delta$. Note that for any random operator $\mathbf{P}\in L^2\br{\H_2^{\otimes h},\gamma_n}$ with the associated vector-valued function $p$, $\twonorm{\R\circ p-p}^2=\frac{1}{2^h}\Tr~\zeta\br{\mathbf{P}}$ due to Lemma~\ref{lem:closedelta}.
Hence Fact~\ref{fac:reductionrounding} implies that
\begin{equation}\label{eqn:rpq}
  \Pr_{\mathbf{M}\sim \gamma_{n\times n_0}}\Br{\Tr~\zeta\br{\mathbf{P}}\leq \frac{1}{\sqrt{\alpha}}\Tr~\zeta\br{\mathbf{P}^{\br{1}}}}\geq1-\alpha.
\end{equation}
Applying the same argument to $\mathbf{Q}$ and $\mathbf{Q}^{\br{1}}$, we have
\begin{equation}\label{eqn:rpq2}
  \Pr_{\mathbf{M}\sim \gamma_{n\times n_0}}\Br{\Tr~\zeta\br{\mathbf{Q}}\leq \frac{1}{\sqrt{\alpha}}\Tr~\zeta\br{\mathbf{Q}^{\br{1}}}}\geq1-\alpha.
\end{equation}
Again applying the union bound to Eqs.~\eqref{eqn:m1}\eqref{eqn:expecp}\eqref{eqn:expecq}\eqref{eqn:expecnormp}\eqref{eqn:expecnormq}\eqref{eqn:rpq}\eqref{eqn:rpq2}, with probability at least $1-\delta-2\alpha$ over $\mathbf{M}\sim \gamma_{n\times D}$ all the events in Eqs.~\eqref{eqn:m1}\eqref{eqn:expecp}\eqref{eqn:expecq}\eqref{eqn:expecnormp}\eqref{eqn:expecnormq}\eqref{eqn:rpq}\eqref{eqn:rpq2} occur. Setting $p^{\br{1}}_{\sigma,\mathbf{M}}=p_{\sigma,\mathbf{M}}$ and $q^{\br{1}}_{\sigma,\mathbf{M}}=q_{\sigma,\mathbf{M}}$, we conclude the lemma.

\end{proof}

\section{Smoothing random operators}\label{sec:smoothrandom}

The main result in this section is the following, which is a generalization of Lemma~\ref{lem:smoothing of strategies} to random operators.

		\begin{lemma}\label{lem:smoothgaussian}
	Given integers $n,h>0$, a noisy EPR state $\psi_{AB}$ with the maximal correlation $\rho\defeq \rho\br{\psi_{AB}}<1$, there exist an explicit $d=d\br{\rho, \delta}$ and a map $f:L^2\br{\H_2^{\otimes h},\gamma_n}\rightarrow L^2\br{\H_2^{\otimes h},\gamma_n}$ such that, for any joint random operators $\br{\mathbf{P},\mathbf{Q}}\in L^2\br{\H_2^{\otimes h},\gamma_n}\times L^2\br{\H_2^{\otimes h},\gamma_n}$  with random variables drawn from $\G_{\rho}^{\otimes n}$, $\br{\mathbf{P}^{(1)},\mathbf{Q}^{(1)}}\defeq \br{f\br{\mathbf{P}},f\br{\mathbf{Q}}}\in L^2\br{\H_2^{\otimes h},\gamma_n}\times L^2\br{\H_2^{\otimes h},\gamma_n}$ satisfy the following.
	\begin{enumerate}
		\item $\deg\br{\mathbf{P}^{(1)}}\leq d$ and $\deg\br{\mathbf{Q}^{(1)}}\leq d$.
		
		\item $\expec{}{\Tr~\mathbf{P}^{(1)}}=\expec{}{\Tr~\mathbf{P}}$ and $\expec{}{\Tr~\mathbf{Q}^{(1)}}=\expec{}{\Tr~\mathbf{Q}}$.
		\item $N_2\br{\mathbf{P}^{(1)}}\leq N_2\br{\mathbf{P}}$ and $N_2\br{\mathbf{Q}^{(1)}}\leq N_2\br{\mathbf{Q}}$.
		\item $\expec{}{\Tr~\zeta\br{\mathbf{P}^{(1)}}}\leq2\br{\expec{}{\Tr~\zeta\br{\mathbf{P}}}+\delta 2^hN_2\br{\mathbf{P}}^2}$ and~\\ $\expec{}{\Tr~\zeta\br{\mathbf{Q}^{(1)}}}\leq2\br{\expec{}{\Tr~\zeta\br{\mathbf{Q}}}+\delta 2^hN_2\br{\mathbf{Q}}^2}$
		\item $\abs{\expec{}{\Tr\br{\mathbf{P}\otimes\mathbf{Q}}\psi_{AB}^{\otimes h}}-\expec{}{\Tr\br{\mathbf{P}^{(1)}\otimes\mathbf{Q}^{(1)}}\psi_{AB}^{\otimes h}}}\leq \delta N_2\br{\mathbf{P}}N_2\br{\mathbf{Q}}$.
	\end{enumerate}
	In particular, one may take $d=O\br{\frac{\log^2\frac{1}{\delta}}{\delta\br{1-\rho}}}$.
\end{lemma}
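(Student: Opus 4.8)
\textbf{Proof plan for Lemma~\ref{lem:smoothgaussian}.}
The plan is to mimic the proof of Lemma~\ref{lem:smoothing of strategies} for ordinary operators, but now the noise has to act on \emph{both} the operator part and the Gaussian part of a random operator, so the natural tool is the hybrid noise operator $\Gamma_{1-\gamma}$ from Definition~\ref{def:gamma}. Concretely, I would set $f\br{\mathbf{P}}\defeq \br{\Gamma_{1-\gamma}\mathbf{P}}^{\leq d}$ for a suitable $\gamma=\gamma\br{\rho,\delta}$ and degree cutoff $d$, where the degree of a random operator counts $\abs{\sigma}$ in the basis part together with $\abs{\tau}$ in the Hermite part, as in Lemma~\ref{lem:Xhypercontractivity}. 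First I would verify items 1--3: item 1 is immediate from the definition of the cutoff; for item 2 note that $\Gamma_{1-\gamma}$ fixes the $\br{\sigma,\tau}=\br{0,0}$ coefficient of $\mathbf{P}$ and truncation does not touch it either, so $\expec{}{\Tr\mathbf{P}^{(1)}}=\widehat{p_{\mathbf{0}}}\br{\mathbf{0}}2^h=\expec{}{\Tr\mathbf{P}}$; item 3 follows because $\Gamma_{1-\gamma}$ shrinks every Fourier--Hermite coefficient by a factor $\br{1-\gamma}^{\abs{\sigma}+\abs{\tau}}\leq 1$ and truncation only discards coefficients, so by Lemma~\ref{lem:randoperator} and Parseval $N_2$ cannot increase.

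For item 5, I would first show the analogue of Lemma~\ref{lem:Tsmooth} for random operators: $\abs{\expec{}{\Tr\br{\mathbf{P}\otimes\mathbf{Q}}\psi_{AB}^{\otimes h}}-\expec{}{\Tr\br{\Gamma_{1-\gamma}\mathbf{P}\otimes\Gamma_{1-\gamma}\mathbf{Q}}\psi_{AB}^{\otimes h}}}\leq\delta N_2\br{\mathbf{P}}N_2\br{\mathbf{Q}}$, and then observe that the further truncation to degree $d$ changes the value by at most $N_2\br{\br{\Gamma_{1-\gamma}\mathbf{P}}^{>d}}N_2\br{\mathbf{Q}}\leq\br{1-\gamma}^{d}N_2\br{\mathbf{P}}N_2\br{\mathbf{Q}}$ via Fact~\ref{fac:cauchyschwartz}, which is absorbed by choosing $d$ large. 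To prove the $\Gamma$-smoothing estimate I would write $\expec{}{\Tr\br{\mathbf{P}\otimes\mathbf{Q}}\psi_{AB}^{\otimes h}}=\sum_{\sigma}c_{\sigma}\innerproduct{p_{\sigma}}{q_{\sigma}}_{\G_{\rho}^{\otimes n}}$ (after passing to the basis of Lemma~\ref{lem:normofM}), expand each $p_{\sigma},q_{\sigma}$ in its Hermite expansion, so that the difference becomes a sum over $\br{\sigma,\tau}$ weighted by $c_{\sigma}\cdot\rho^{\abs{\tau}}\br{1-\br{1-\gamma}^{2\br{\abs{\sigma}+\abs{\tau}}}}$; the quantum maximal correlation $\rho=\rho\br{\psi_{AB}}$ bounds $c_{\sigma}\leq\rho^{\abs{\sigma}}$ for $\sigma\neq 0$ by Lemma~\ref{lem:normofM}, so the whole multiplier is at most $\min\br{\rho^{\abs{\sigma}+\abs{\tau}},1-\br{1-\gamma}^{2\br{\abs{\sigma}+\abs{\tau}}}}$, and choosing $\gamma=C\frac{\br{1-\rho}\delta}{\log\br{1/\delta}}$ makes this $\leq\delta$ term by term exactly as in the proof of Lemma~\ref{lem:Tsmooth}; Cauchy--Schwarz on the $\br{\sigma,\tau}$-indexed sequences then yields the $\delta N_2\br{\mathbf{P}}N_2\br{\mathbf{Q}}$ bound.

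The main obstacle is item 4, the control of $\expec{}{\Tr\zeta\br{\mathbf{P}^{(1)}}}$ in terms of $\expec{}{\Tr\zeta\br{\mathbf{P}}}$. The issue is that $\Gamma_{1-\gamma}$ and the degree truncation are linear smoothing operations on $\mathbf{P}$, but $\zeta$ is nonlinear, so I cannot simply say $\Tr\zeta$ decreases. My plan is to use Lemma~\ref{lem:zetaadditivity}: writing $\mathbf{P}^{(1)}=\Gamma_{1-\gamma}\mathbf{P}-\br{\Gamma_{1-\gamma}\mathbf{P}}^{>d}$ and then $\mathbf{P}^{(1)}=\mathbf{P}-\br{\mathbf{P}-\mathbf{P}^{(1)}}$, each application of Lemma~\ref{lem:zetaadditivity} (applied pointwise in the Gaussian variable and then averaged) costs a term $O\br{N_2\br{\mathbf{P}}N_2\br{\mathbf{P}-\mathbf{P}^{(1)}}+N_2\br{\mathbf{P}-\mathbf{P}^{(1)}}^2}$. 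I need $N_2\br{\mathbf{P}-\mathbf{P}^{(1)}}^2$ to be small; but here is the subtlety --- $\Gamma_{1-\gamma}$ only damps \emph{nonconstant} modes, and the weight of low-degree nonconstant modes need not be small, so in general $N_2\br{\mathbf{P}-\mathbf{P}^{(1)}}$ is \emph{not} $O\br{\sqrt{\delta}}N_2\br{\mathbf{P}}$. The resolution, matching the factor-$2$ and the $\delta 2^h N_2\br{\mathbf{P}}^2$ additive term in the statement, is to not bound the full $\zeta\br{\mathbf{P}^{(1)}}$ by $\zeta\br{\mathbf{P}}$ plus a small error, but rather to split the spectral analysis: since $\zeta$ is $1$-Lipschitz-in-derivative and $0$ on $[0,1]$, and since rounding to $\Delta=\set{0\leq X\leq\id}$ is a contraction (Fact~\ref{fac:rounding}, Lemma~\ref{lem:closedelta}), I would bound $\Tr\zeta\br{\mathbf{P}^{(1)}}=2^h\twonorm{\R\circ p^{(1)}-p^{(1)}}^2$ and compare $p^{(1)}$ with $p$ through the chain $\twonorm{\R\circ p^{(1)}-p^{(1)}}\leq\twonorm{\R\circ p-p}+2\twonorm{p^{(1)}-p}$, squaring via $(a+b)^2\leq 2a^2+2b^2$ to get the factor $2$, and then controlling $\twonorm{p^{(1)}-p}^2$ by $\delta\,\twonorm{p}^2=\delta N_2\br{\mathbf{P}}^2$ --- which holds provided the degree cutoff $d$ is taken large enough that $\br{1-\gamma}^{2d}\leq\delta/4$ and $\gamma$ itself is chosen so that $\br{1-\Gamma_{1-\gamma}}$ has operator norm squared at most $\delta/4$ on the relevant weight, i.e.\ using $1-\br{1-\gamma}^{2\br{\abs{\sigma}+\abs{\tau}}}\leq\delta$ only on the modes that actually carry weight after the analogous argument already used for item 5. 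Taking $d=O\br{\frac{\log^2\br{1/\delta}}{\delta\br{1-\rho}}}$ makes all these simultaneously true, finishing the proof.
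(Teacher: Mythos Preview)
Your overall strategy---apply a noise operator and truncate, then verify items 1--5 directly---is sound, and items 1--3 and 5 go through essentially as you describe. The paper instead quotes the black-box Fact~\ref{fac:smoothgaussian} from Ghazi--Kamath--Raghavendra on the associated vector-valued functions $p,q$, so you are effectively re-deriving that fact in this setting; that is fine and arguably more self-contained.

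The gap is in item 4. You correctly identify the difficulty: $N_2\br{\mathbf{P}-\mathbf{P}^{(1)}}$ is \emph{not} small, because $\Gamma_{1-\gamma}$ shrinks a degree-$k$ mode only by a factor $1-(1-\gamma)^k$, and for $k$ of order $d$ this is essentially $1$, so a random operator whose weight sits on modes of degree $\Theta(d)$ has $N_2\br{\mathbf{P}-\Gamma_{1-\gamma}\mathbf{P}}\asymp N_2\br{\mathbf{P}}$. Your proposed fix---``$(1-\Gamma_{1-\gamma})$ has operator norm squared at most $\delta/4$ on the relevant weight, using $1-(1-\gamma)^{2(|\sigma|+|\tau|)}\leq\delta$ only on the modes that actually carry weight after the analogous argument already used for item 5''---does not work: the argument for item 5 exploits the correlation factor $c_{\sigma}\rho^{|\tau|}$ coming from the \emph{inner product} under $\G_{\rho}^{\otimes n}$, but $\twonorm{p-p^{(1)}}$ is an $L^2(\gamma_n)$ norm with no such damping. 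There is no reason the weight of $p$ sits only on modes where $1-(1-\gamma)^{2(|\sigma|+|\tau|)}\leq\delta$.

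The correct argument bypasses $\twonorm{p-p^{(1)}}$ entirely. The key point (implicit in Fact~\ref{fac:smoothgaussian}) is that $\Gamma_{1-\gamma}$ is an \emph{averaging} operator that preserves the convex body $\Delta=\set{X:0\leq X\leq\id}$: the Ornstein--Uhlenbeck part $U_{1-\gamma}$ is a convex average of point evaluations, and the depolarizing part $\T_{1-\gamma}$ maps $\Delta$ to $\Delta$ by Lemma~\ref{lem:bonamibecknerdef}. Hence if $\widetilde{\mathbf{P}}=\R(\mathbf{P})$ is the pointwise rounding, then $\Gamma_{1-\gamma}\widetilde{\mathbf{P}}$ still takes values in $\Delta$, so by Lemma~\ref{lem:closedelta}
\[
\br{2^{-h}\expec{}{\Tr\zeta(\Gamma_{1-\gamma}\mathbf{P})}}^{1/2}\leq N_2\br{\Gamma_{1-\gamma}\widetilde{\mathbf{P}}-\Gamma_{1-\gamma}\mathbf{P}}\leq N_2\br{\widetilde{\mathbf{P}}-\mathbf{P}}=\br{2^{-h}\expec{}{\Tr\zeta(\mathbf{P})}}^{1/2},
\]
using only that $\Gamma_{1-\gamma}$ is an $N_2$-contraction. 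The subsequent degree truncation then costs at most $2N_2\br{(\Gamma_{1-\gamma}\mathbf{P})^{>d}}\leq 2(1-\gamma)^d N_2(\mathbf{P})$ via your triangle-inequality chain, and choosing $(1-\gamma)^d\leq\sqrt{\delta}/2$ and squaring gives exactly the factor $2$ and the additive $\delta 2^h N_2(\mathbf{P})^2$ in the statement. (A minor side remark: the paper's $\deg(\mathbf{P})$ in Definition~\ref{def:randoperatorsbasic} counts only the Hermite degree of $p_{\sigma}$, not $|\sigma|$; so to match item 1 literally, smooth and truncate only in the Gaussian variables via $U_{1-\gamma}$, as the paper does through Fact~\ref{fac:smoothgaussian}. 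The argument above goes through unchanged.)
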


	\begin{fact}~\cite{Ghazi:2018:DRP:3235586.3235614}\label{fac:smoothgaussian}
		Let $\rho\in[0,1),\delta>0,k,n\in\mathbb{Z}_{>0}$ be any given constant parameters, $f,g\in L^2\br{\reals^k,\gamma_n}$; $\Delta_1,\Delta_2\subseteq\reals^k$ be convex bodies. Set $\R_1$ and $\R_2$ be the rounding maps with respect to $\Delta_1$ and $\Delta_2$, respectively. There exist an explicit $d=d\br{\rho,\delta}$ and functions $f^{(1)}, g^{(1)}\in L^2\br{\reals^k,\gamma_n}$,where $f^{(1)}$ only depends on $f$ and $g^{(1)}$ only depends on $g$, such that the following hold.
		\begin{enumerate}
		\item $f^{(1)}$ and $g^{(1)}$ are degree at most $d$.
\item $\expec{}{f^{(1)}}=\expec{}{f}$ and $\expec{}{g^{(1)}}=\expec{}{g}$.
		\item For any $i\in[k]$, it holds that $\twonorm{f_i^{(1)}}\leq\twonorm{f_i}$ and $\twonorm{g_i^{(1)}}\leq\twonorm{g_i}$.
		\item $\twonorm{\R\circ f^{(1)}-f^{(1)}}\leq\twonorm{\R\circ f-f}+\delta\twonorm{f}$ and $\twonorm{\R\circ g^{(1)}-g^{(1)}}\leq\twonorm{\R\circ g-g}+\delta\twonorm{g}$.
		\item For every $i\in [k]$,
\[\abs{\innerproduct{f_i\br{\mathbf{x}}}{g_i\br{\mathbf{y}}}_{\G_{\rho}^{\otimes n}}-\innerproduct{f_i^{(1)}\br{\mathbf{x}}}{g_i^{(1)}\br{\mathbf{y}}}_{\G_{\rho}^{\otimes n}}}\leq\delta\twonorm{f_i}\twonorm{g_i}.\]
\end{enumerate}
		In particular, one may take $d=O\br{\frac{\log^2\frac{1}{\delta}}{\delta\br{1-\rho}}}$.
	\end{fact}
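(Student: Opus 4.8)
The plan is to prove this by the standard Ornstein--Uhlenbeck smoothing followed by a degree truncation, applied coordinatewise and separately to $f$ and to $g$. Concretely, I would fix a small $\gamma\in\br{0,1}$ and an integer $d$, both depending only on $\rho,\delta$ and pinned down at the end, and set
\[
f^{(1)}\defeq\sum_{\sigma:\abs{\sigma}\leq d}\br{1-\gamma}^{\abs{\sigma}}\widehat{f}\br{\sigma}H_{\sigma},\qquad g^{(1)}\defeq\sum_{\sigma:\abs{\sigma}\leq d}\br{1-\gamma}^{\abs{\sigma}}\widehat{g}\br{\sigma}H_{\sigma},
\]
i.e.\ $f^{(1)}$ is the truncation to Hermite frequencies of degree $\leq d$ of $U_{1-\gamma}f$ (the Hermite form of $U_{1-\gamma}$ being Fact~\ref{fac:vecfun}, item~6), and likewise $g^{(1)}$ from $U_{1-\gamma}g$; here $\widehat{f}\br{\sigma}\in\reals^k$. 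By construction $f^{(1)}$ depends only on $f$ and $g^{(1)}$ only on $g$, and both have degree at most $d$, which is item~1. Items~2 and~3 are immediate Hermite bookkeeping: $U_{1-\gamma}$ and the truncation both fix the degree-$\mathbf{0}$ coefficient, so $\expec{}{f^{(1)}}=\widehat{f}\br{\mathbf{0}}=\expec{}{f}$; and coordinatewise $\twonorm{f^{(1)}_i}^2=\sum_{\abs{\sigma}\leq d}\br{1-\gamma}^{2\abs{\sigma}}\abs{\widehat{f_i}\br{\sigma}}^2\leq\twonorm{f_i}^2$ by Hermite--Parseval.

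For item~5 I would expand the $\rho$-correlated inner product in the Hermite basis: since conditioning the second coordinate of a $\rho$-correlated Gaussian pair on the first is exactly the action defining $U_{\rho}$ (Definition~\ref{def:ornstein}), one has $\innerproduct{f_i}{g_i}_{\G_{\rho}^{\otimes n}}=\innerproduct{f_i}{U_{\rho}g_i}_{\gamma_n}=\sum_{\sigma}\rho^{\abs{\sigma}}\widehat{f_i}\br{\sigma}\widehat{g_i}\br{\sigma}$ by Fact~\ref{fac:vecfun}. Hence $\innerproduct{f_i}{g_i}_{\G_{\rho}^{\otimes n}}-\innerproduct{f^{(1)}_i}{g^{(1)}_i}_{\G_{\rho}^{\otimes n}}=\sum_{\sigma}a_{\sigma}\,\widehat{f_i}\br{\sigma}\widehat{g_i}\br{\sigma}$, where
\[
a_{\sigma}=\begin{cases}\rho^{\abs{\sigma}}\br{1-\br{1-\gamma}^{2\abs{\sigma}}}&\text{if }\abs{\sigma}\leq d,\\ \rho^{\abs{\sigma}}&\text{if }\abs{\sigma}>d.\end{cases}
\]
Splitting the frequencies at a threshold $m\leq d$, one has $\abs{a_{\sigma}}\leq 1-\br{1-\gamma}^{2m}\leq 2m\gamma$ when $\abs{\sigma}\leq m$ and $\abs{a_{\sigma}}\leq\rho^{m}$ when $\abs{\sigma}>m$; so choosing $m,\gamma$ with $\rho^{m}\leq\delta$ and $2m\gamma\leq\delta$ forces $\abs{a_{\sigma}}\leq\delta$ for every $\sigma$, and Cauchy--Schwarz gives $\abs{\sum_{\sigma}a_{\sigma}\widehat{f_i}\br{\sigma}\widehat{g_i}\br{\sigma}}\leq\delta\twonorm{f_i}\twonorm{g_i}$, which is item~5.

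The heart of the argument is item~4, and the point I would stress is that the naive estimate ``$U_{1-\gamma}$ is close to the identity'' is false: $\twonorm{f-U_{1-\gamma}f}$ can be of order $\twonorm{f}$ when $f$ carries a lot of high-degree mass. The right observation is that $U_{1-\gamma}$ is a Markov averaging operator --- $U_{1-\gamma}h\br{z}$ is an average over $\mathbf{x}\sim\gamma_n$ of values $h\br{\br{1-\gamma}z+\sqrt{1-\br{1-\gamma}^2}\,\mathbf{x}}$ --- so it maps any $\Delta$-valued function to a $\Delta$-valued function, $\Delta$ being closed and convex. Since $\twonorm{\R\circ f-f}=\min\set{\twonorm{h-f}:h\text{ is }\Delta\text{-valued}}$ (because $\R\circ f$ is the pointwise-nearest $\Delta$-valued function to $f$), applying $U_{1-\gamma}$ to the minimizer $\R\circ f$ and using that $U_{1-\gamma}$ is a contraction gives
\[
\twonorm{\R\circ\br{U_{1-\gamma}f}-U_{1-\gamma}f}\ \leq\ \twonorm{U_{1-\gamma}\br{\R\circ f}-U_{1-\gamma}f}\ \leq\ \twonorm{\R\circ f-f}.
\]
The truncation then perturbs the function by $\twonorm{f^{(1)}-U_{1-\gamma}f}^2=\sum_i\sum_{\abs{\sigma}>d}\br{1-\gamma}^{2\abs{\sigma}}\abs{\widehat{f_i}\br{\sigma}}^2\leq\br{1-\gamma}^{2d}\twonorm{f}^2$, and since $\R$ is $1$-Lipschitz (Fact~\ref{fac:rounding}), two triangle inequalities yield $\twonorm{\R\circ f^{(1)}-f^{(1)}}\leq\twonorm{\R\circ f-f}+2\br{1-\gamma}^{d}\twonorm{f}$; it then remains to require $2\br{1-\gamma}^{d}\leq\delta$. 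The identical argument with $\Delta_2,\R_2$ in place of $\Delta_1,\R_1$ handles $g^{(1)}$, which depends only on $g$.

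Finally I would set the parameters so that all constraints hold at once: $m\defeq\ceil{\log\br{2/\delta}/\log\br{1/\rho}}$ so $\rho^{m}\leq\delta/2$; then $\gamma\defeq\delta/\br{4m}$ so $2m\gamma\leq\delta/2$; then $d\defeq\ceil{m+\log\br{2/\delta}/\gamma}$ so $d\geq m$ and $2\br{1-\gamma}^{d}\leq 2e^{-\gamma d}\leq\delta$ --- rescaling $\delta$ by an absolute constant at the outset absorbs the factors of $2$. Using $\log\br{1/\rho}\geq 1-\rho$ one gets $m=O\br{\log\br{1/\delta}/\br{1-\rho}}$, hence $\gamma=\Omega\br{\delta\br{1-\rho}/\log\br{1/\delta}}$ and $d=O\br{\log^{2}\br{1/\delta}/\br{\delta\br{1-\rho}}}$, matching the stated bound. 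The main obstacle is exactly item~4: the other four items are Hermite-coefficient arithmetic, whereas the distance-to-$\Delta$ estimate genuinely exploits the pairing of two features of the Ornstein--Uhlenbeck operator --- it preserves the convex set of $\Delta$-valued functions and it is a contraction --- rather than any quantitative closeness of $U_{1-\gamma}$ to the identity.
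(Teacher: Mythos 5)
Your proof is correct; note that the paper itself does not prove this statement but imports it verbatim from \cite{Ghazi:2018:DRP:3235586.3235614}, so there is no internal proof to compare against. Your argument — apply $U_{1-\gamma}$, then truncate to Hermite degree $\le d$, and handle item~4 by exploiting that $U_{1-\gamma}$ both preserves the set of $\Delta$-valued functions (Markov averaging into a closed convex set) and is an $L^2$-contraction, rather than any quantitative closeness of $U_{1-\gamma}$ to the identity — is exactly the mechanism behind the cited lemma, and your parameter bookkeeping correctly recovers $d=O\br{\log^2(1/\delta)/(\delta(1-\rho))}$.
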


\begin{proof}[Proof of Lemma~\ref{lem:smoothgaussian}]
 From Lemma~\ref{lem:normofM}, there exist standard orthonormal basis $\set{\A_i}_{i=0}^3$ and $\set{\B_i}_{i=0}^3$  satisfying that $\Tr\br{\A_i\otimes\B_j}\psi_{AB}=c_i\delta_{i,j}$, where $1=c_0>c_1=\rho\geq c_2\geq c_3\geq 0$. Let $p,q:\reals^n\rightarrow\reals^{4^h}$ be the associated vector-valued functions of $\mathbf{P}$ and $\mathbf{Q}$ under the basis $\set{\A_i}_{i=0}^3$ and $\set{\B_i}_{i=0}^3$, respectively. Then
	\[\mathbf{P}=\sum_{\sigma\in[4]_{\geq 0}^h}p_{\sigma}\br{\mathbf{g}}\A_{\sigma}~\mbox{and}~\mathbf{Q}=\sum_{\sigma\in[4]_{\geq 0}^h}q_{\sigma}\br{\mathbf{h}}\B_{\sigma},\]
where $\br{\mathbf{g},\mathbf{h}}\sim\G_{\rho}^{\otimes n}$.
	From Lemma~\ref{lem:convariancetensor},
	\begin{eqnarray*}
	 \expec{}{\Tr~\br{\mathbf{P}\otimes \mathbf{Q}}\br{\psi_{AB}}^{\otimes h}}=\sum_{\sigma\in[4]_{\geq 0}^{\otimes h}}c_{\sigma}\innerproduct{p_{\sigma}}{q_{\sigma}}_{\G_{\rho}^{\otimes n}}.
\end{eqnarray*}
Applying Fact~\ref{fac:smoothgaussian} to $\br{p,q}$, we obtain
$\br{p^{(1)},q^{(1)}}$.
Define
\[\mathbf{P}^{(1)}\defeq\sum_{\sigma\in[4]_{\geq 0}^h}p^{(1)}_{\sigma}\br{\mathbf{g}}\A_{\sigma} ~\mbox{and}~\mathbf{Q}^{(1)}\defeq\sum_{\sigma\in[4]_{\geq 0}^h}q^{(1)}_{\sigma}\br{\mathbf{h}}\B_{\sigma}.\]
Item 1 follows directly.

Note that $\expec{}{\Tr~\mathbf{P}^{(1)}}=2^h\expec{}{p^{(1)}_{\mathbf{0}}}$ and $\expec{}{\Tr~\mathbf{P}}=2^h\expec{}{p_{\mathbf{0}}}$. Thus the first equality in item 2 follows from Fact~\ref{fac:smoothgaussian} item 2. The second equality follows similarly.

To prove item 3, we have that $$N_2\br{\mathbf{P}^{(1)}}=\twonorm{p^{(1)}}\leq\twonorm{p}=N_2\br{\mathbf{P}},$$
where the both equalities are from Lemma~\ref{lem:randoperator}; the inequality is from Fact~\ref{fac:smoothgaussian} item 3. The second part of item 2 in Lemma~\ref{lem:smoothgaussian} follows by the same argument.
To prove item 3, we define \[\Delta_1\defeq\set{x\in\reals^{4^h}:0\leq\sum_{\sigma\in[4]_{\geq 0}^h}x_{\sigma}\A_{\sigma}\leq\id},\]
and
\[\Delta_2\defeq\set{x\in\reals^{4^h}:0\leq\sum_{\sigma\in[4]_{\geq 0}^h}x_{\sigma}\B_{\sigma}\leq\id}.\]
It is easy to verify that both $\Delta_1$ and $\Delta_2$ are convex bodies. From Lemma~\ref{lem:closedelta},
\begin{eqnarray*}
  \twonorm{\R\circ p-p}^2&=&\frac{1}{2^h}\expec{}{\Tr~\zeta\br{\mathbf{P}}},\\
  \twonorm{\R\circ q-q}^2&=&\frac{1}{2^h}\expec{}{\Tr~\zeta\br{\mathbf{Q}}}, \\
  \twonorm{\R\circ p^{(1)}-p^{(1)}}^2&=&\frac{1}{2^h}\expec{}{\Tr~\zeta\br{\mathbf{P}^{(1)}}},\\
  \twonorm{\R\circ q^{(1)}-q^{(1)}}^2&=&\frac{1}{2^h}\expec{}{\Tr~\zeta\br{\mathbf{Q}^{(1)}}}.
\end{eqnarray*}
The  Fact~\ref{fac:smoothgaussian} item 4 implies that
\[\br{\frac{1}{2^h}\expec{}{\Tr~\zeta\br{\mathbf{P}^{(1)}}}}^{1/2}\leq\br{\frac{1}{2^h}\expec{}{\Tr~\zeta\br{\mathbf{P}}}}^{1/2}+\delta\twonorm{p}.\]
Note that $\twonorm{p}^2=N_2\br{\mathbf{P}}^2$ by Lemma~\ref{lem:randoperator}. Taking square on both sides of the inequality above, we conclude the first inequality in Lemma~\ref{lem:smoothgaussian} item 3. The second inequality follows exactly same.

To prove item 4, consider
\begin{eqnarray*}
&&\abs{\expec{}{\Tr\br{\mathbf{P}\otimes \mathbf{Q}}\psi_{AB}^{\otimes h}}-\expec{}{\Tr\br{\mathbf{P}^{\br{1}}\otimes \mathbf{Q}^{\br{1}}}\psi_{AB}^{\otimes h}}}\\
&=&\abs{\sum_{\sigma\in[4]_{\geq 0}^ h}c_{\sigma}\br{\innerproduct{p_{\sigma}}{q_{\sigma}}_{\G_{\rho}^{\otimes n}}-\innerproduct{p^{(1)}_{\sigma}}{q^{(1)}_{\sigma}}_{\G_{\rho}^{\otimes n}}}}\\
&\leq&\delta\sum_{\sigma\in[4]_{\geq 0}^{\otimes h}}\twonorm{p_{\sigma}}\twonorm{q_{\sigma}}\quad\quad\mbox{(Fact~\ref{fac:smoothgaussian} item 5)and $c_{\sigma}\leq 1$ due to Lemma~\ref{lem:normofM})}\\
&\leq&\delta\br{\sum_{\sigma\in[4]_{\geq 0}^h}\twonorm{p_{\sigma}}^2}^{1/2}\br{\sum_{\sigma\in[4]_{\geq 0}^h}\twonorm{q_{\sigma}}^2}^{1/2}\\
&=&\delta N_2\br{\mathbf{P}}N_2\br{\mathbf{Q}}\quad\quad\mbox{(Lemma~\ref{lem:randoperator})}.
\end{eqnarray*}
\end{proof}

\section{Multilinearization of random operators}\label{sec:multilinear}

	\begin{lemma}\label{lem:multiliniearization}
		Given $0\leq \rho<1$, integers $d,h,n>0$, a noisy EPR state $\psi_{AB}$ with the maximal correlation $\rho\defeq \rho\br{\psi_{AB}}$, there exists a map $f:L^2\br{\H_2^{\otimes h},\gamma_n}\rightarrow L^2\br{\H_2^{\otimes h},\gamma_{nt}}$ such that, for any degree-$d$ joint random operators
\[\br{\mathbf{P},\mathbf{Q}}=\br{\sum_{\sigma\in[4]_{\geq 0}^h}p_{\sigma}\br{\mathbf{g}}\A_{\sigma},\sum_{\sigma\in[4]_{\geq 0}^h}q_{\sigma}\br{\mathbf{h}}\B_{\sigma}}_{\br{\mathbf{g},\mathbf{h}}\sim\G_{\rho}^{\otimes n}}\in L^2\br{\H_2^{\otimes h},\gamma_n}\times L^2\br{\H_2^{\otimes h},\gamma_n},\]
where $\set{\A_i}_{i=0}^3$ and $\set{\B_i}_{i=0}^3$ are standard orthonormal basis in $\M_2$,
\begin{eqnarray*}
	&&\br{\mathbf{P}^{(1)},\mathbf{Q}^{(1)}}\defeq\br{f\br{\mathbf{P}},f\br{\mathbf{Q}}}\\
	&=&\br{\sum_{\sigma\in[4]_{\geq 0}^h}p^{(1)}_{\sigma}\br{\mathbf{x}}\A_{\sigma},\sum_{\sigma\in[4]_{\geq 0}^h}q^{(1)}_{\sigma}\br{\mathbf{y}}\B_{\sigma}}_{\br{\mathbf{x},\mathbf{y}}\sim\G_{\rho}^{\otimes n}}\in L^2\br{\H_2^{\otimes h},\gamma_{nt}}\times L^2\br{\H_2^{\otimes h},\gamma_{nt}}
\end{eqnarray*}
are multilinear joint random operators. It further holds that
\begin{enumerate}
  \item Both $\deg\br{\mathbf{P}^{(1)}}$ and $\deg\br{\mathbf{Q}^{(1)}}$ are at most $d$.
  \item For all $\br{i,j}\in[n]\times[t]$ and $\sigma\in[4]_{\geq 0}^h$,

  $\influence_{in+j}\br{p^{(1)}_{\sigma}}\leq\delta\cdot\influence_i\br{p_{\sigma}}~\mbox{and}~\influence_{in+j}\br{q^{(1)}_{\sigma}}\leq\delta\cdot\influence_i\br{q_{\sigma}};$
  \item $N_2\br{\mathbf{P}^{(1)}}\leq N_2\br{\mathbf{P}}~\mbox{and}~N_2\br{\mathbf{Q}^{(1)}}\leq N_2\br{\mathbf{Q}};$
  \item $\expec{}{\Tr~\mathbf{P}^{(1)}}=\expec{}{\Tr~\mathbf{P}}~\mbox{and}~\expec{}{\Tr~\mathbf{Q}^{(1)}}=\expec{}{\Tr~\mathbf{Q}};$
  \item $\abs{\expec{}{\Tr~\zeta\br{\mathbf{P}^{(1)}}}-\expec{}{\Tr~\zeta\br{\mathbf{P}}}}\leq \delta 2^{h+2}N_2\br{\mathbf{P}}^2$  and

      $\abs{\expec{}{\Tr~\zeta\br{\mathbf{Q}^{(1)}}}-\expec{}{\Tr~\zeta\br{\mathbf{Q}}}}\leq \delta2^{h+2}N_2\br{\mathbf{Q}}^2;$
  \item $\abs{\expec{}{\Tr~\br{\mathbf{P}^{(1)}\otimes\mathbf{Q}^{(1)}}\psi_{AB}^{\otimes h}}-\expec{}{\Tr~\br{\mathbf{P}\otimes\mathbf{Q}}\psi_{AB}^{\otimes h}}}\leq\delta N_2\br{\mathbf{P}}N_2\br{\mathbf{Q}}.$
\end{enumerate}
In particular, we may take $t=O\br{\frac{d^2}{\delta^2}}$.
	\end{lemma}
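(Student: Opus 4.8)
The plan is to strip away the matrix structure and reduce all six conclusions to the classical multilinearization lemma for bounded-degree Gaussian polynomials of~\cite{Ghazi:2018:DRP:3235586.3235614}, applied to the associated vector-valued functions of $\mathbf{P}$ and $\mathbf{Q}$. First I would use Lemma~\ref{lem:normofM} to fix standard orthonormal bases $\set{\A_i}_{i=0}^3$ and $\set{\B_i}_{i=0}^3$ with $\Tr\br{\A_i\otimes\B_j}\psi_{AB}=c_i\delta_{i,j}$, $1=c_0\geq c_1\geq c_2\geq c_3\geq 0$, and write $p=\br{p_\sigma}_\sigma$, $q=\br{q_\sigma}_\sigma$ for the associated vector-valued functions. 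Then Lemmas~\ref{lem:randoperator} and~\ref{lem:convariancetensor} give $N_2\br{\mathbf{P}}=\twonorm{p}$, $\expec{}{\Tr~\mathbf{P}}=2^h\widehat{p_{\mathbf{0}}}\br{\mathbf{0}}$, and $\expec{}{\Tr\br{\mathbf{P}\otimes\mathbf{Q}}\psi_{AB}^{\otimes h}}=\sum_\sigma c_\sigma\innerproduct{p_\sigma}{q_\sigma}_{\G_\rho^{\otimes n}}$, while the crucial reduction of the quantum object is Lemma~\ref{lem:closedelta}: $\frac{1}{2^h}\expec{}{\Tr~\zeta\br{\mathbf{P}}}=\twonorm{\mathcal{R}\circ p-p}^2$, where $\mathcal{R}$ is the rounding map of the convex body $\Delta=\set{x\in\reals^{4^h}:0\leq\sum_\sigma x_\sigma\A_\sigma\leq\id}$ (and likewise for $\mathbf{Q}$ with the body attached to $\set{\B_i}$). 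So it suffices to multilinearize $p$ and $q$ as vector-valued Gaussian polynomials while controlling the norm, the expectation, these bilinear forms, and these rounding distances.

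Second, I would take $f$ to be the standard split-and-truncate operator acting only on the Gaussian part: replace each coordinate $\mathbf{g}_i$ by $\frac{1}{\sqrt{t}}\sum_{j=1}^t\mathbf{x}_{ij}$ with $t=O\br{d^2/\delta^2}$, which pushes $\gamma_n$ forward to $\gamma_{nt}$ and yields a polynomial $p'$ with exactly the same joint law as $p$, and then drop every Hermite monomial of $p'$ that is not multilinear, obtaining $p^{(1)}$; set $\mathbf{P}^{(1)}=\sum_\sigma p^{(1)}_\sigma\br{\mathbf{x}}\A_\sigma$ and, applying the same averaging $\mathbf{h}_i\mapsto\frac{1}{\sqrt{t}}\sum_j\mathbf{y}_{ij}$, $\mathbf{Q}^{(1)}=\sum_\sigma q^{(1)}_\sigma\br{\mathbf{y}}\B_\sigma$. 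Since the same averaging is used on both sides, $\expec{}{\mathbf{g}_i\mathbf{h}_i}=\frac{1}{t}\sum_j\expec{}{\mathbf{x}_{ij}\mathbf{y}_{ij}}=\rho$, so $\br{\mathbf{x},\mathbf{y}}\sim\G_\rho^{\otimes nt}$ as required; and because $f$ never touches the $\M_2^{\otimes h}$ factor it is well defined independently of the chosen basis and depends only on its argument, which gives the consistency needed when the lemma is invoked inside Theorem~\ref{thm:nijs}. Applied coordinatewise with this common splitting, the classical multilinearization lemma of~\cite{Ghazi:2018:DRP:3235586.3235614} yields: $p^{(1)}$ is multilinear of degree at most $d$; the influence of the $j$-th copy of coordinate $i$ in $p^{(1)}_\sigma$ is at most $\delta\,\influence_i\br{p_\sigma}$; $\twonorm{p^{(1)}}\leq\twonorm{p}$; $\widehat{p^{(1)}_{\mathbf{0}}}\br{\mathbf{0}}=\widehat{p_{\mathbf{0}}}\br{\mathbf{0}}$; $\abs{\innerproduct{p^{(1)}_\sigma}{q^{(1)}_\sigma}_{\G_\rho^{\otimes nt}}-\innerproduct{p_\sigma}{q_\sigma}_{\G_\rho^{\otimes n}}}\leq\delta\twonorm{p_\sigma}\twonorm{q_\sigma}$; and the core estimate $\twonorm{p^{(1)}_\sigma-p'_\sigma}\leq\delta\twonorm{p_\sigma}$, which is exactly where the choice $t=O\br{d^2/\delta^2}$ enters.

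Third, I would read off the conclusions. Items~1--4 are immediate translations via Lemmas~\ref{lem:randoperator} and~\ref{lem:convariancetensor}: the degree and multilinearity pass through, $N_2$ is non-increasing because $\twonorm{\cdot}$ is, and the expectation is governed by the preserved constant Hermite coefficient $\widehat{p_{\mathbf{0}}}\br{\mathbf{0}}$. Item~6 follows by writing the bilinear form via Lemma~\ref{lem:convariancetensor}, applying the coordinatewise bilinear estimate with $c_\sigma\leq 1$ (Lemma~\ref{lem:normofM}), and finishing with Cauchy--Schwarz and Lemma~\ref{lem:randoperator}, which bound the change by $\delta\sum_\sigma\twonorm{p_\sigma}\twonorm{q_\sigma}\leq\delta N_2\br{\mathbf{P}}N_2\br{\mathbf{Q}}$. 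Item~5 is the one place that needs a short argument: using Lemma~\ref{lem:closedelta} to rewrite $\Tr~\zeta$ as a rounding distance, the fact that $p'$ has the same law as $p$ (hence $\twonorm{\mathcal{R}\circ p'-p'}=\twonorm{\mathcal{R}\circ p-p}$), that $\mathcal{R}$ is $1$-Lipschitz and contractive since $\mathbf{0}\in\Delta$ (Fact~\ref{fac:rounding}), and the core estimate $\twonorm{p^{(1)}-p'}\leq\delta\twonorm{p}$, a triangle inequality and squaring give $\abs{\expec{}{\Tr~\zeta\br{\mathbf{P}^{(1)}}}-\expec{}{\Tr~\zeta\br{\mathbf{P}}}}=O\br{\delta\,2^hN_2\br{\mathbf{P}}^2}$, the precise constant $2^{h+2}$ coming out after a harmless rescaling of $\delta$; the same argument handles $\mathbf{Q}$.

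The main obstacle is not the quantum bookkeeping: Lemmas~\ref{lem:randoperator}, \ref{lem:convariancetensor} and especially~\ref{lem:closedelta} absorb the matrix structure completely, turning $\Tr~\zeta$ of an operator into a squared $\ell_2$-distance to a convex body. The real work is twofold: (i) invoking the classical multilinearization of~\cite{Ghazi:2018:DRP:3235586.3235614} in a vector-valued form, that is, with a \emph{single} common splitting of the $n$ Gaussian coordinates that simultaneously yields the per-coordinate degree, influence, norm, and bilinear-form estimates for all $4^h$ Fourier coordinates; and (ii) tracking the $\zeta$-distance through the rounding reformulation, where one must keep the two (basis-dependent) convex bodies fixed and containing $\mathbf{0}$ so that $\mathcal{R}$ is a contraction. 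Once these are in place the rest is triangle inequalities and Cauchy--Schwarz.
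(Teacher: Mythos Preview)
Your proposal is correct and matches the paper's proof essentially line for line: the paper too invokes the classical multilinearization fact of~\cite{Ghazi:2018:DRP:3235586.3235614} (stated there as Fact~\ref{fac:mulilinear}) on the associated vector-valued functions, reads off items~1--4 and~6 exactly as you describe, and handles item~5 via Lemma~\ref{lem:closedelta}, the equality in law of $p$ and $\bar p$, the $1$-Lipschitz/contractive property of the rounding map, and the core estimate $\twonorm{\bar p-\bar p^{\text{ml}}}\leq\frac{\delta}{2}\twonorm{p}$. The only cosmetic difference is that the paper factors $a^2-b^2=(a-b)(a+b)$ rather than ``squaring'' in item~5, which is how the constant $2^{h+2}$ drops out without any rescaling of $\delta$.
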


	\begin{definition}\label{def:linear}
		Suppose $f\in L^2\br{\reals^n,\gamma_n}$ is given by the Hermite expansion $f=\sum_{\mathbf{\sigma}\in\mathbb{Z}_{\geq 0}^n}\widehat{f}\br{\mathbf{\sigma}}H_{\mathbf{\sigma}}$. The {\em multilinear truncation} of $f$ is defined to be the function $f^{\text{ml}}\in L^2\br{\reals^n,\gamma_n}$ given by
		\[f^{\text{ml}}\defeq\sum_{\mathbf{\sigma}\in\set{0,1}^n}\widehat{f}\br{\mathbf{\sigma}}H_{\mathbf{\sigma}}\br{x}.\]
	\end{definition}
\begin{fact}~\cite{Ghazi:2018:DRP:3235586.3235614}\label{fac:mulilinear}
	Given parameters $\rho\in[0,1], \delta>0$ and $d\in\mathbb{Z}_{\geq 0}$, there exists $t=t\br{d,\delta}$ such that the following holds:
	
	Let $f,g\in L^2\br{\reals^n,\gamma_n}$ be degree-$d$ polynomials. There exist polynomials $\bar{f},\bar{g}\in L^2\br{\reals^{nt},\gamma_{nt}}$ over variables $\bar{x}\defeq\set{x_j^{\br{i}}:\br{i,j}\in[n]\times[t]}$ and $\bar{y}\defeq\set{y_j^{\br{i}}:\br{i,j}\in[n]\times[t]}$, respectively as
	\[\bar{f}\br{\bar{x}}\defeq f\br{x^{(1)},\ldots,x^{(n)}}~\text{and}~\bar{g}\br{\bar{y}}\defeq g\br{y^{(1)},\ldots,y^{\br{n}}}.\]
	Let $\bar{f}^{\text{ml}}$ and $\bar{g}^{\text{ml}}$ be the multilinear truncations of $\bar{f}$ and $\bar{g}$, respectively. Then the following holds.
	\begin{enumerate}
		\item $\bar{f}^{\text{ml}}$ and $\bar{g}^{\text{ml}}$ are multilinear with degree $d$.
		\item $\var{\bar{f}^{\text{ml}}}\leq\var{f}$ and $\var{\bar{g}^{\text{ml}}}\leq\var{g}$.
		\item $\twonorm{\bar{f}^{\text{ml}}}\leq\twonorm{\bar{f}}=\twonorm{f}$ and $\twonorm{\bar{g}^{\text{ml}}}\leq\twonorm{\bar{g}}=\twonorm{g}$.
\item Given two independent distributions $\mathbf{g}\sim \gamma_n$ and $\mathbf{x}\sim \gamma_{nt}$. The distributions of $f\br{\mathbf{g}}$ and $\bar{f}\br{\mathbf{x}}$ are identical. The distributions of $g\br{\mathbf{g}}$ and $\bar{g}\br{\mathbf{x}}$ are identical.

    \item $\twonorm{\bar{f}-\bar{f}^{\text{ml}}}\leq\frac{\delta}{2}\twonorm{f}$ and $\twonorm{\bar{g}-\bar{g}^{\text{ml}}}\leq\frac{\delta}{2}\twonorm{g}$
		\item For all $\br{i,j}\in[n]\times[t]$, it holds that $\influence_{x^{\br{i}}_j}\br{\bar{f}^{\text{ml}}}\leq\delta\cdot\influence_i\br{f}$ and $\influence_{y^{\br{i}}_j}\br{\bar{g}^{\text{ml}}}\leq\delta\cdot \influence_i\br{g}$.
\item $\widehat{f}\br{\mathbf{0}}=\widehat{\bar{f}}\br{\mathbf{0}}=\widehat{\bar{f}^{\text{ml}}}\br{\mathbf{0}}$
    and
    $\widehat{g}\br{\mathbf{0}}=\widehat{\bar{g}}\br{\mathbf{0}}=\widehat{\bar{g}^{\text{ml}}}\br{\mathbf{0}}$
		\item $\abs{\innerproduct{\bar{f}^{\text{ml}}}{\bar{g}^{\text{ml}}}_{\G^{\otimes nt}_{\rho}}-\innerproduct{f}{g}_{\G^{\otimes n}_{\rho}}}\leq\delta\twonorm{f}\twonorm{g}$.
	\end{enumerate}
In particular, we may take $t=O\br{\frac{d^2}{\delta^2}}.$
\end{fact}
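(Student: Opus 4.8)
The plan is to deduce this theorem directly from Theorem~\ref{thm:nijs} by a reduction that converts any finite-dimensional entangled strategy using many copies of $\psi_{AB}$ into one using only $D$ copies while barely changing the acceptance probability. First I would fix $n$ and a strategy $\br{\set{P^x_a}_{x\in\X,a\in\set{0,1}},\set{Q^y_b}_{y\in\Y,b\in\set{0,1}}}$ on $\psi_{AB}^{\otimes n}$ with winning probability $\omega$; since binary POVMs are determined by their first component, it suffices to track $P^x_0$ and $Q^y_0$. To match the hypothesis of Theorem~\ref{thm:nijs}, I would feed it the two length-$s$ sequences with $s=\abs{\X}\cdot\abs{\Y}$ obtained by repeating each $P^x_0$ exactly $\abs{\Y}$ times and cycling $Q^1_0,\dots,Q^{\abs{\Y}}_0$ through $\abs{\X}$ blocks, so that the $u$-th pair is $\br{P^x_0,Q^y_0}$ for $u\leftrightarrow\br{x,y}$, and run it with error parameter $\epsilon/8$. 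This yields maps $f,g:\H_2^{\otimes n}\to\H_2^{\otimes D}$ and operators $\widetilde{P^x_0}=f\br{P^x_0}$, $\widetilde{Q^y_0}=g\br{Q^y_0}$; the ``$P_u=P_v\Rightarrow\widetilde P_u=\widetilde P_v$'' clauses carried through every step of Theorem~\ref{thm:nijs} guarantee these are well defined, i.e.\ the operator assigned to $x$ does not depend on which $y$ it was paired with. Setting $\widetilde{P^x_1}\defeq\id_D-\widetilde{P^x_0}$ and $\widetilde{Q^y_1}\defeq\id_D-\widetilde{Q^y_0}$, item~1 of Theorem~\ref{thm:nijs} makes all of these valid binary POVMs, so they form a legal $D$-copy strategy.

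Next I would control the output statistics. Write $\nu_{xy}\br{a,b}\defeq\Tr\br{P^x_a\otimes Q^y_b}\psi_{AB}^{\otimes n}$ and $\widetilde\nu_{xy}\br{a,b}\defeq\Tr\br{\widetilde{P^x_a}\otimes\widetilde{Q^y_b}}\psi_{AB}^{\otimes D}$. Using $\psi_A=\psi_B=\id_2/2$ together with Fact~\ref{fac:cauchyschwartz}(1), we have $\nu_{xy}\br{0,0}+\nu_{xy}\br{0,1}=\frac{1}{2^n}\Tr P^x_0$ and $\nu_{xy}\br{0,0}+\nu_{xy}\br{1,0}=\frac{1}{2^n}\Tr Q^y_0$, and similarly for the tildes with $2^D$; hence item~2 of Theorem~\ref{thm:nijs} bounds the two marginal discrepancies by $\epsilon/8$, while item~3 bounds $\abs{\nu_{xy}\br{0,0}-\widetilde\nu_{xy}\br{0,0}}$ by $\epsilon/8$. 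Because the answer alphabets are binary, each of the four probabilities $\nu_{xy}\br{a,b}$ is an integer combination of these three quantities (e.g.\ $\nu_{xy}\br{1,1}=1-\nu_{xy}\br{0,0}-\nu_{xy}\br{0,1}-\nu_{xy}\br{1,0}$), so $\abs{\nu_{xy}\br{a,b}-\widetilde\nu_{xy}\br{a,b}}\leq\epsilon/4$ for all $a,b\in\set{0,1}$ and all $x,y$. Finally, with $\widetilde\omega=\sum_{xy}\mu\br{x,y}\sum_{ab}V\br{x,y,a,b}\widetilde\nu_{xy}\br{a,b}$, the triangle inequality gives $\abs{\omega-\widetilde\omega}\leq\sum_{xy}\mu\br{x,y}\sum_{ab}\abs{\nu_{xy}\br{a,b}-\widetilde\nu_{xy}\br{a,b}}\leq\epsilon$, so the new strategy wins with probability at least $\omega-\epsilon$.

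To finish, note that $D=D\br{\rho,\epsilon/8,\abs{\X}\abs{\Y}}$ produced by Theorem~\ref{thm:nijs} is independent of $n$ and of the particular strategy; choosing $\omega$ within $\epsilon$ of $\omega\br{G,\psi_{AB}}$ and applying the reduction shows that $D$ copies already achieve acceptance probability at least $\omega\br{G,\psi_{AB}}-\epsilon$. Substituting the explicit bound $D=\exp\br{\mathrm{poly}\br{s,\exp\br{\mathrm{poly}\br{\frac1\epsilon,\frac1{1-\rho}}}}}$ with $s=\abs{\X}\abs{\Y}$ gives the claimed form. The only delicate point is the consistency requirement — a single operator $P^x_0$ must be transformed identically regardless of the opponent's question — which is precisely why Theorem~\ref{thm:nijs} is phrased for sequences of (possibly repeated) operators and keeps the repetition clauses intact through every step, obtained by a union bound over the question pairs when the common low-influence coordinate set $H$ is selected; once that bookkeeping is in place, the remainder is an elementary error-propagation estimate, so I expect no further obstacle.
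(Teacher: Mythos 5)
Your proposal does not address the statement at all. Fact~\ref{fac:mulilinear} is a purely classical result about Hermite analysis on Gaussian space, taken from~\cite{Ghazi:2018:DRP:3235586.3235614}: given degree-$d$ polynomials $f,g\in L^2(\reals^n,\gamma_n)$, one ``spreads'' each coordinate into $t$ fresh Gaussian coordinates, forms $\bar f,\bar g$, and takes multilinear truncations $\bar f^{\mathrm{ml}},\bar g^{\mathrm{ml}}$; the eight itemized claims then bound how this operation affects degree, variance, $2$-norm, distribution, truncation error, influences, the constant Hermite coefficient, and the $\G_\rho$-inner product, with $t=O(d^2/\delta^2)$. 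No measurement operators, bipartite states, or games enter this statement.

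What you have written instead is a proof of the unnamed theorem immediately following Theorem~\ref{thm:nijs} in Section~\ref{sec:mainresult} --- the one asserting that a mono-state binary game $(G,\psi_{AB})$ with a noisy EPR state can be approximated by strategies using a bounded number $D$ of copies. Your argument (fix an $n$-copy strategy, feed the $|\X|\cdot|\Y|$ pairs $(P^x_0,Q^y_0)$ into Theorem~\ref{thm:nijs} with error $\epsilon/8$, use the consistency clauses to define a legal $D$-copy strategy, bound the four probabilities $\nu_{xy}(a,b)$ via the marginals and the joint $(0,0)$ probability, then apply the triangle inequality) is essentially the paper's own proof of \emph{that} theorem, and it looks correct as such. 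But it has no bearing on the multilinearization fact you were asked to prove. A genuine proof of Fact~\ref{fac:mulilinear} would begin with the explicit substitution $x_i\mapsto\frac{1}{\sqrt t}\sum_{j=1}^t x^{(i)}_j$, observe that this is an orthogonal projection of Gaussians so item~4 is immediate, and then analyze how $H_k\bigl(\frac{1}{\sqrt t}\sum_j x^{(i)}_j\bigr)$ decomposes into multilinear and higher-order terms to quantify the truncation error and the per-coordinate influence drop; none of that machinery appears in your write-up.
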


	\begin{proof}[Proof of Lemma~\ref{lem:multiliniearization}]
	Applying Fact~\ref{fac:mulilinear} to $\set{p_{\sigma}}_{\sigma\in[4]_{\geq 0}^h}$ and $\set{q_{\sigma}}_{\sigma\in[4]_{\geq 0}^h}$ we get  $\set{\overline{p_{\sigma}}}_{\sigma\in[4]_{\geq 0}^h}$ and $\set{\overline{q_{\sigma}}}_{\sigma\in[4]_{\geq 0}^h}$.
Define joint random operators
\[\br{\overline{\mathbf{P}},\overline{\mathbf{Q}}}\defeq\br{\sum_{\sigma\in[4]_{\geq 0}^h}\overline{p_{\sigma}}\br{\mathbf{x}}\A_{\sigma},\sum_{\sigma\in[4]_{\geq 0}^h}\overline{q_{\sigma}}\br{\mathbf{y}}\B_{\sigma}}_{\br{\mathbf{x},\mathbf{y}}\sim\G_{\rho}^{\otimes n}}\in L^2\br{\H_2^{\otimes h},\gamma_{nt}}\times L^2\br{\H_2^{\otimes h},\gamma_{nt}}.\]
Let $p^{(1)}_{\sigma}\br{\cdot}\defeq\overline{p_{\sigma}}^{\text{ml}}\br{\cdot}$ and
$q^{(1)}_{\sigma}\br{\cdot}\defeq\overline{q_{\sigma}}^{\text{ml}}\br{\cdot}$.
Define
\[\br{\mathbf{P}^{(1)},\mathbf{Q}^{(1)}}\defeq\br{\sum_{\sigma\in[4]_{\geq 0}^h}p^{(1)}_{\sigma}\br{\mathbf{x}}\A_{\sigma},\sum_{\sigma\in[4]_{\geq 0}^h}q^{(1)}_{\sigma}\br{\mathbf{y}}\B_{\sigma}}_{\br{\mathbf{x},\mathbf{y}}\sim\G_{\rho}^{\otimes nt}}\]
Items 1 and item 2 are implied by Fact~\ref{fac:mulilinear} item 1 and item 5, respectively.
Item 3 is from Lemma~\ref{lem:randoperator} and the item 3 in Fact~\ref{fac:mulilinear}.
Item 4 follows from the fact that $\expec{}{\mathbf{P}}=2^h\widehat{p}_{\mathbf{0}}\br{\mathbf{0}}$ and
$\expec{}{\mathbf{Q}}=2^h\widehat{q}_{\mathbf{0}}\br{\mathbf{0}}$ and the item 7 in Fact~\ref{fac:mulilinear}.

We will prove the first inequality in item 5. The second one can be proved similarly. Define a convex body
\[\Delta\defeq\set{x\in\reals^{4^h}:0\leq\sum_{\sigma\in[4]_{\geq 0}^h}x_{\sigma}\A_{\sigma}\leq\id}\]
with the rounding map $\R$. Set $p\defeq\br{p_{\sigma}}_{\sigma\in[4]_{\geq 0}^h}, \bar{p}\defeq\br{\bar{p}_{\sigma}}_{\sigma\in[4]_{\geq 0}^h}$ and $p^{(1)}\defeq\br{p^{(1)}_{\sigma}}_{\sigma\in[4]_{\geq 0}^h}$ to be vector-valued functions. Then by Lemma~\ref{lem:closedelta},
\[\twonorm{p-\R\circ p}^2=\frac{1}{2^h}\expec{}{\Tr~\zeta\br{\mathbf{P}}}~\mbox{and}~\twonorm{p^{(1)}-\R\circ p^{(1)}}^2=\frac{1}{2^h}\expec{}{\Tr~\zeta\br{\mathbf{P}^{(1)}}}.\]
Hence
\begin{eqnarray*}
  &&\frac{1}{2^h}\abs{\expec{}{\Tr~\zeta\br{\mathbf{P}^{(1)}}}-\expec{}{\Tr~\zeta\br{\mathbf{P}}}}\\ \\
  &=&\abs{\twonorm{p^{(1)}-\R\circ p^{(1)}}^2-\twonorm{p-\R\circ p}^2}\\
  &=&\abs{\twonorm{\bar{p}^{\text{ml}}-\R\circ \bar{p}^{\text{ml}}}^2-\twonorm{\bar{p}-\R\circ \bar{p}}^2}\quad\quad\mbox{(Fact~\ref{fac:mulilinear} item 4)}\\
  &=&\abs{\br{\twonorm{\bar{p}^{\text{ml}}-\R\circ \bar{p}^{\text{ml}}}-\twonorm{\bar{p}-\R\circ \bar{p}}}\br{\twonorm{\bar{p}^{\text{ml}}-\R\circ \bar{p}^{\text{ml}}}+\twonorm{\bar{p}-\R\circ \bar{p}}}}\\
  &\leq&4\twonorm{p}\abs{\twonorm{\bar{p}^{\text{ml}}-\R\circ \bar{p}^{\text{ml}}}-\twonorm{\bar{p}-\R\circ \bar{p}}}\\
  &&\quad\quad\quad\quad\quad\quad\mbox{(Fact~\ref{fac:rounding}, Fact~\ref{fac:mulilinear} item 4 and $\R$ is a contraction map)}\\
  &\leq&4\twonorm{p}\br{\twonorm{\bar{p}-\bar{p}^{\text{ml}}}+\twonorm{\R\circ\bar{p}-\R\circ\bar{p}^{\text{ml}}}}\quad\quad\mbox{(Triangle inequality)}\\
  &\leq&8\twonorm{p}\twonorm{\bar{p}-\bar{p}^{\text{ml}}}\quad\quad\mbox{(Fact~\ref{fac:rounding} and $\R$ is a contraction map)}\\
  &\leq&4\delta\twonorm{p}^2\quad\quad\mbox{(Fact~\ref{fac:mulilinear} item 5)}\\
  &=&4\delta N_2\br{\mathbf{P}}^2\quad\quad\mbox{(Lemma~\ref{lem:randoperator})}.
 \end{eqnarray*}

To prove item 7, consider
\begin{eqnarray*}
&&\abs{\expec{}{\Tr\br{\mathbf{P}\otimes \mathbf{Q}}\psi_{AB}^{\otimes h}}-\expec{}{\Tr\br{\mathbf{P}^{\br{1}}\otimes \mathbf{Q}^{\br{1}}}\psi_{AB}^{\otimes h}}}\\
&=&\abs{\sum_{\sigma\in[4]_{\geq 0}^ h}c_{\sigma}\br{\innerproduct{p_{\sigma}}{q_{\sigma}}_{\G_{\rho}^{\otimes n}}-\innerproduct{p^{(1)}_{\sigma}}{q^{(1)}_{\sigma}}_{\G_{\rho}^{\otimes n}}}}\\
&\leq&\delta\sum_{\sigma\in[4]_{\geq 0}^{\otimes h}}\twonorm{p_{\sigma}}\twonorm{q_{\sigma}}\quad\quad\mbox{(Fact~\ref{fac:mulilinear} item 8)}\\
&\leq&\delta\br{\sum_{\sigma\in[4]_{\geq 0}^h}\twonorm{p_{\sigma}}^2}^{1/2}\br{\sum_{\sigma\in[4]_{\geq 0}^h}\twonorm{q_{\sigma}}^2}^{1/2}\\
&=&\delta N_2\br{\mathbf{P}}N_2\br{\mathbf{Q}}\quad\quad\mbox{(Lemma~\ref{lem:randoperator})}.
\end{eqnarray*}	
	\end{proof}
	\bibliographystyle{alpha}
	\bibliography{references}
	
	\appendix
	
	\section{Facts on Fr\'{e}chet derivative}\label{sec:frechet}
	In this section, we summarize some basic facts on Fr\'echet derivatives.
	\begin{fact}\label{fac:frechetderivative}
		Given $f_1,f_2,g:\M_d\rightarrow \M_d$ and $P,Q_1,\ldots, Q_k\in \M_d$, it holds that
		\begin{enumerate}
			\item $D\br{f_1+f_2}\br{P}\br{Q}=Df_1\br{P}\br{Q}+Df_2\br{P}\br{Q}$.
			
			\item $D\br{f_1\cdot f_2}\br{P}\br{Q}=Df_1\br{P}\br{Q}\cdot f_2\br{P}+f_1\br{P}\cdot Df_2\br{P}\br{Q}$.
			
			\item $D\br{g\circ f}\br{P}\br{Q}=\br{Dg\br{f\br{P}}\circ Df\br{P}}\br{Q}$.
			
			\item $D^kf\br{P}\br{Q_1,\ldots, Q_k}=D^kf\br{P}\br{Q_{\sigma\br{1}},\ldots, Q_{\sigma\br{k}}}$ for any integer $k>0$ and permutation $\sigma\in S_k$.	
		\end{enumerate}
	\end{fact}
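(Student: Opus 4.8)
The plan is to reduce every assertion to the corresponding elementary fact of multivariable calculus, using that $\M_d$ is a finite-dimensional vector space (isomorphic to $\reals^{2d^2}$ with any norm), so that the Fr\'echet derivative of \Cref{def:frechetderivative} coincides with the ordinary, entrywise, derivative; throughout we tacitly assume the maps in question are smooth enough (say $\C^k$ when a $k$-th order derivative is taken) so that the Gateaux derivatives appearing in \Cref{def:frechetderivative} are genuine Fr\'echet derivatives and Schwarz's theorem on mixed partials applies. Item~1 is then immediate: the scalar curve $t\mapsto\big(f_1(P+tQ)+f_2(P+tQ)\big)_{ij}$ is the sum of $t\mapsto f_1(P+tQ)_{ij}$ and $t\mapsto f_2(P+tQ)_{ij}$, and ordinary differentiation is linear, so differentiating at $t=0$ entrywise gives $D(f_1+f_2)(P)(Q)=Df_1(P)(Q)+Df_2(P)(Q)$.

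For item~2, consider the $\M_d$-valued curve $\phi(t)\defeq f_1(P+tQ)\,f_2(P+tQ)$. Writing the matrix product entrywise, $\phi(t)_{ij}=\sum_k f_1(P+tQ)_{ik}\,f_2(P+tQ)_{kj}$ is a finite sum of products of scalar $\C^1$ functions of $t$, so the Leibniz rule yields $\phi'(0)_{ij}=\sum_k\big(\tfrac{d}{dt}f_1(P+tQ)_{ik}\big|_0\big)f_2(P)_{kj}+\sum_k f_1(P)_{ik}\big(\tfrac{d}{dt}f_2(P+tQ)_{kj}\big|_0\big)$, i.e. $\phi'(0)=Df_1(P)(Q)\,f_2(P)+f_1(P)\,Df_2(P)(Q)$; the order of the factors is exactly the one dictated by the (non-commutative) matrix product, which is what the statement records. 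For item~3, set $\gamma(t)\defeq f(P+tQ)$, a differentiable curve in $\M_d$ with $\gamma(0)=f(P)$ and $\gamma'(0)=Df(P)(Q)$. Since $g$ is (Fr\'echet) differentiable at $f(P)$, the classical chain rule for the composition of a curve with a map on a finite-dimensional space gives $\tfrac{d}{dt}g(\gamma(t))\big|_{0}=Dg(\gamma(0))\big(\gamma'(0)\big)=Dg(f(P))\big(Df(P)(Q)\big)$, which is precisely $\big(Dg(f(P))\circ Df(P)\big)(Q)$; here one uses that $Dg(f(P))$ is a \emph{linear} map, so the composition on the right-hand side is well-defined.

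For item~4, I would first unfold the recursion in \Cref{def:frechetderivative} to the symmetric form
\[
D^kf(P)(Q_1,\dots,Q_k)=\frac{\partial}{\partial t_1}\cdots\frac{\partial}{\partial t_k}\,f\!\Big(P+\textstyle\sum_{i=1}^k t_iQ_i\Big)\Big|_{t_1=\dots=t_k=0},
\]
proved by induction on $k$: applying the inductive hypothesis to $D^{k-1}f(P+t_kQ_k)(Q_1,\dots,Q_{k-1})$ and then $\partial_{t_k}|_{t_k=0}$ produces an order-$k$ iterated partial of $F(t_1,\dots,t_k)\defeq f(P+\sum_i t_iQ_i)$ at the origin, and since $F$ is $\C^k$ entrywise the order in which the single-variable partials $\partial_{t_1},\dots,\partial_{t_k}$ are applied is irrelevant by Schwarz's theorem. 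Now, for a permutation $\sigma\in S_k$, the quantity $D^kf(P)(Q_{\sigma(1)},\dots,Q_{\sigma(k)})$ equals the same iterated partial of $f\big(P+\sum_i s_iQ_{\sigma(i)}\big)$; reindexing the summands by $r_j\defeq s_{\sigma^{-1}(j)}$ turns the argument into $P+\sum_j r_jQ_j$ and the differentiation operator $\partial_{s_1}\cdots\partial_{s_k}$ into $\partial_{r_{\sigma(1)}}\cdots\partial_{r_{\sigma(k)}}$, which by Schwarz's theorem again equals $\partial_{r_1}\cdots\partial_{r_k}$; hence $D^kf(P)(Q_{\sigma(1)},\dots,Q_{\sigma(k)})=D^kf(P)(Q_1,\dots,Q_k)$.

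The main point of care is not a deep obstacle but rather bookkeeping: in item~2 one must keep the matrix factors in the non-commutative order that the Leibniz rule produces, and in items~3 and~4 one must make the smoothness hypotheses explicit (so that the Gateaux derivative of \Cref{def:frechetderivative} is a bona fide Fr\'echet derivative, and so that equality of mixed partials is available); once these are in place, every identity follows by reading off the corresponding classical fact entrywise. For the applications in this paper it suffices to invoke \Cref{fac:frechetderivative} for $\C^3$ maps such as $\zeta_\lambda$ and for the linear and polynomial maps built from them, for which all the regularity hypotheses are trivially met.
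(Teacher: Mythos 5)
The paper states this as a \emph{Fact} and does not give a proof at all --- the surrounding text refers to these as ``the most basic properties'' of Fr\'echet derivatives, implicitly treating them as standard and pointing to~\cite{Coleman} for background. Your proof fills that gap correctly with the standard argument: identify $\M_d$ with a finite-dimensional real vector space so that Gateaux and Fr\'echet derivatives coincide under adequate smoothness, derive linearity and the Leibniz rule entrywise (correctly keeping the non-commutative order of the matrix factors in item~2), prove the chain rule via the composition-of-a-curve-with-a-map form of the classical chain rule, and derive the symmetry in item~4 by first establishing the ``polarized'' representation $D^kf(P)(Q_1,\dots,Q_k)=\partial_{t_1}\cdots\partial_{t_k}f\big(P+\sum_i t_iQ_i\big)\big|_{t=0}$ by induction and then invoking Schwarz's theorem on equality of mixed partials together with the reindexing $r_j=s_{\sigma^{-1}(j)}$. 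The one caveat you rightly flag --- that these manipulations require $\C^k$ regularity so that the Gateaux derivatives in \Cref{def:frechetderivative} are honest Fr\'echet derivatives and mixed partials commute --- is indeed needed and is satisfied wherever the fact is invoked in the paper (the maps $\zeta_\lambda$, $q$, $p$, and their polynomial combinations are pieced together from $\C^2$ or $\C^3$ functions, and \Cref{fac:sendov} supplies the corresponding existence of the matrix derivatives). So your proposal supplies a correct proof of a statement the paper leaves unproved, and it is the proof one would expect.
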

	
	The following fact follows from elementary matrix calculations. Readers who are interested may refer to~\cite [Chapter X.4]{Bhatia}.
	\begin{fact}~\cite[Page 311, Example X.4.2]{Bhatia}
		\begin{itemize}
			\item Let $f\br{x}=x^2$. Then
			\[Df\br{P}\br{Q}=\anticommutator{P}{Q}.\]
			
			\item Let $f\br{x}=x^{-1}$. Then for any invertible $P$,
			\[Df\br{P}\br{Q}=-P^{-1}QP^{-1}.\]
		\end{itemize}
	\end{fact}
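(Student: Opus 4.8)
The plan is to evaluate both Fr\'echet derivatives directly from \Cref{def:frechetderivative}: for each $f$ I would form the one-parameter family $t\mapsto f\br{P+tQ}$ and differentiate it at $t=0$, which in both cases reduces to a short matrix computation requiring no commutativity. For $f\br{x}=x^2$, expanding in $\M_d$ by bilinearity of matrix multiplication gives $\br{P+tQ}^2=P^2+t\br{PQ+QP}+t^2Q^2$, a polynomial in the real variable $t$ with matrix coefficients; hence $\frac{d}{dt}\br{P+tQ}^2\big|_{t=0}=PQ+QP=\anticommutator{P}{Q}$, which is the claimed identity.

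For $f\br{x}=x^{-1}$ the one preliminary point is that $t\mapsto\br{P+tQ}^{-1}$ must be shown to be well defined and differentiable near $t=0$ before it can be differentiated: since $P$ is invertible, $P+tQ=P\br{\id_d+tP^{-1}Q}$ is invertible for all sufficiently small $\abs{t}$, with Neumann series $\br{P+tQ}^{-1}=\br{\sum_{k\geq0}\br{-tP^{-1}Q}^k}P^{-1}$, and this series may be differentiated term by term. Differentiating the identity $\br{P+tQ}\br{P+tQ}^{-1}=\id_d$ with respect to $t$ yields $Q\br{P+tQ}^{-1}+\br{P+tQ}\frac{d}{dt}\br{P+tQ}^{-1}=0$; solving for the derivative and setting $t=0$ gives $\frac{d}{dt}\br{P+tQ}^{-1}\big|_{t=0}=-P^{-1}QP^{-1}$. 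Equivalently, one may simply read off the coefficient of $t$ in the Neumann series displayed above.

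I do not expect a genuine obstacle here: the only care needed is the remark that $P+tQ$ stays invertible on a neighborhood of $t=0$ so that the derivative is meaningful, and beyond that both statements are the standard derivatives of a matrix square and of a matrix inverse, so the whole argument is a couple of lines.
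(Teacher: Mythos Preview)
Your proposal is correct and is exactly the standard computation the paper has in mind: the paper does not give its own proof but simply remarks that the fact ``follows from elementary matrix calculations'' and refers to Bhatia~\cite[Chapter X.4]{Bhatia}, which carries out precisely the derivations you describe.
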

	\section{Facts on analysis}\label{sec:analysis}
	
	In this section, we list some basic results of matrix-valued functions. Most of the proofs are the direct generalization of the analysis of one-variable real functions. Here we include proofs for completeness.
	\begin{lemma}\label{lem:meanvalue}
		Suppose $f:[a,b]\rightarrow\H_d$ is a continuous mapping and is differentiable in $(a,b)$. Then there exists $x\in(a,b)$ such that
		\[\twonorm{f\br{a}-f\br{b}}\leq\br{b-a}\twonorm{f'\br{x}}\]
	\end{lemma}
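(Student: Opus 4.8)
The plan is to reduce the matrix-valued statement to the ordinary scalar mean value theorem by pairing $f$ against the fixed direction $f\br{b}-f\br{a}$ in the Hilbert--Schmidt inner product, exactly as one proves the mean value inequality for curves in $\reals^k$. Set $v\defeq f\br{b}-f\br{a}\in\H_d$ and define $g:[a,b]\rightarrow\reals$ by $g\br{t}\defeq\Tr\br{v^{\dagger}f\br{t}}$; this is real valued because $v$ and $f\br{t}$ are both Hermitian, so $\Tr\br{v^{\dagger}f\br{t}}$ equals its own complex conjugate. Since each entry of $f$ is a continuous function on $[a,b]$ that is differentiable on $\br{a,b}$, and $g$ is a fixed finite linear combination of these entries, $g$ is continuous on $[a,b]$ and differentiable on $\br{a,b}$ with $g'\br{t}=\Tr\br{v^{\dagger}f'\br{t}}$, where $f'\br{t}$ denotes the entrywise derivative.

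First I would apply the classical (scalar) mean value theorem to $g$ to obtain $x\in\br{a,b}$ with $g\br{b}-g\br{a}=\br{b-a}g'\br{x}$. Next I would evaluate the left-hand side: $g\br{b}-g\br{a}=\Tr\br{v^{\dagger}\br{f\br{b}-f\br{a}}}=\Tr\br{v^{\dagger}v}=\twonorm{v}^2$. For the right-hand side I would invoke the Cauchy--Schwarz inequality for the Hilbert--Schmidt inner product, namely $\abs{\Tr\br{A^{\dagger}B}}\leq\twonorm{A}\twonorm{B}$ (recall that $\twonorm{\cdot}$ is precisely the norm induced by the inner product $\br{A,B}\mapsto\Tr\br{A^{\dagger}B}$), to get $g'\br{x}=\Tr\br{v^{\dagger}f'\br{x}}\leq\twonorm{v}\,\twonorm{f'\br{x}}$. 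Combining the three identities yields $\twonorm{v}^2\leq\br{b-a}\twonorm{v}\,\twonorm{f'\br{x}}$.

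Finally I would finish with a trivial case split on $\twonorm{v}$: if $\twonorm{v}=0$ the assertion holds since its left-hand side $\twonorm{f\br{a}-f\br{b}}=\twonorm{v}$ is then $0$; and if $\twonorm{v}>0$ we divide the last inequality through by $\twonorm{v}$ to conclude $\twonorm{f\br{a}-f\br{b}}=\twonorm{v}\leq\br{b-a}\twonorm{f'\br{x}}$. There is essentially no obstacle in this argument; the only points deserving a word of justification are the differentiability of $g$ with the stated formula for $g'$ (immediate from linearity of $\Tr$) and the identification of $\twonorm{\cdot}$ with the Hilbert--Schmidt norm, which is what makes Cauchy--Schwarz directly applicable.
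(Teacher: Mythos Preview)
Your proof is correct and essentially identical to the paper's: both pair $f$ against $f(b)-f(a)$ via the trace inner product, apply the scalar mean value theorem to the resulting real function, and finish with Cauchy--Schwarz. You are slightly more explicit about the $\twonorm{v}=0$ case, but otherwise the arguments match step for step.
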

	\begin{proof}
		Set $Z\defeq f\br{b}-f\br{a}$ and define
		\[\psi\br{t}\defeq\Tr~Zf\br{t},\]
		for $t\in[a,b]$. Then $\psi$ is a real-valued continuous function on $[a,b]$ which is differentiable in $(a,b)$. The mean value theorem shows that
		\[\psi\br{b}-\psi\br{a}=\br{b-a}\psi'\br{x}=\br{b-a}\Tr~Zf'\br{x},\]
		for some $x\in(a,b)$. On the other hand,
		\[\psi\br{b}-\psi\br{a}=\Tr\br{Z\cdot\br{f\br{b}-f\br{a}}}=\twonorm{Z}^2.\]
		Thus
		\[\twonorm{Z}^2=\br{b-a}\Tr~Zf'\br{x}\leq\br{b-a}\twonorm{Z}\twonorm{f'\br{x}}.\]
	\end{proof}
	\begin{lemma}\label{lem:limitderivative}
		Let $\set{f_n}_{n\in\mathbb{N}}$ be a sequence of functions mapping $\reals$ to $\H_d$, differentiable in $[a,b]$ and $\lim_{n\rightarrow\infty}f_n\br{x_0}=f\br{x_0}$ for some point $x_0\in[a,b]$. Suppose $\set{f_n'\br{x}}_{n\in\mathbb{N}}$ converges uniformly on $[a,b]$. Namely for any $\epsilon>0$, there exists $n_0=n\br{\epsilon}$ such that for any $m\geq n>n_0$ and $x\in[a,b]$, $\twonorm{f_n'\br{x}-f_m'\br{x}}\leq\epsilon.$ Then $\set{f_n}_{n\in\mathbb{N}}$ converges uniformly on $[a,b]$ to a function $f$ and
		\[\lim_{n\rightarrow\infty}f_n'\br{x}=f'\br{x}\quad\br{a<x<b}.\]
	\end{lemma}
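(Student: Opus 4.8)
The plan is to transplant the classical argument for uniform convergence of derivatives (as in Rudin's \emph{Principles of Mathematical Analysis}, Theorem~7.17) to the $\H_d$-valued setting. The only structural fact needed beyond what the excerpt already provides is that $\br{\H_d,\twonorm{\cdot}}$ is a finite-dimensional, hence complete, normed vector space, so that a uniformly Cauchy sequence of $\H_d$-valued functions on $[a,b]$ converges uniformly to an $\H_d$-valued function; the vector-valued mean value inequality I will need is exactly Lemma~\ref{lem:meanvalue}.

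First I would show that $\set{f_n}$ converges uniformly on $[a,b]$. Given $\epsilon>0$, choose $n_0$ with $\twonorm{f_n'\br{\xi}-f_m'\br{\xi}}\leq\epsilon$ for all $m\geq n>n_0$ and all $\xi\in[a,b]$. Applying Lemma~\ref{lem:meanvalue} to the function $f_n-f_m$ on the interval with endpoints $t$ and $x$ gives
\[\twonorm{\br{f_n-f_m}\br{t}-\br{f_n-f_m}\br{x}}\leq\abs{t-x}\cdot\sup_{\xi\in[a,b]}\twonorm{f_n'\br{\xi}-f_m'\br{\xi}}\leq\br{b-a}\epsilon\]
for all $t,x\in[a,b]$. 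Taking $x=x_0$ and using that $\set{f_n\br{x_0}}$ is Cauchy (it converges by hypothesis), the triangle inequality bounds $\sup_{t\in[a,b]}\twonorm{f_n\br{t}-f_m\br{t}}$ for $m,n$ large; thus $\set{f_n}$ is uniformly Cauchy and converges uniformly to some $f:[a,b]\rightarrow\H_d$, whose value at $x_0$ agrees with the given limit. Next, fix $x\in[a,b]$ and set $\phi_n\br{t}\defeq\br{f_n\br{t}-f_n\br{x}}/\br{t-x}$ for $t\in[a,b]\setminus\set{x}$. Dividing the displayed inequality by $\abs{t-x}$ shows $\twonorm{\phi_n\br{t}-\phi_m\br{t}}\leq\sup_{\xi}\twonorm{f_n'\br{\xi}-f_m'\br{\xi}}$ uniformly in $t$, so $\set{\phi_n}$ is uniformly Cauchy on $[a,b]\setminus\set{x}$ and converges uniformly to $\phi\br{t}=\br{f\br{t}-f\br{x}}/\br{t-x}$ by the pointwise convergence $f_n\to f$. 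Writing $g\br{x}\defeq\lim_n f_n'\br{x}$ and using $\lim_{t\to x}\phi_n\br{t}=f_n'\br{x}$, the interchange-of-limits lemma for uniformly convergent families of maps into the complete space $\H_d$ yields that $\lim_{t\to x}\phi\br{t}$ exists and equals $\lim_n f_n'\br{x}=g\br{x}$. Since $\lim_{t\to x}\phi\br{t}=f'\br{x}$ by definition of the (ordinary, $\H_d$-valued) derivative of $f$, we conclude $f'\br{x}=g\br{x}=\lim_n f_n'\br{x}$.

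No genuine obstacle arises; the single point warranting a line of care is the interchange of the limits $t\to x$ and $n\to\infty$, which I would prove directly: given $\epsilon>0$ choose $n$ with $\twonorm{\phi_n\br{t}-\phi\br{t}}\leq\epsilon$ for all $t\neq x$ and $\twonorm{f_n'\br{x}-g\br{x}}\leq\epsilon$, then choose $\delta>0$ with $\twonorm{\phi_n\br{t}-f_n'\br{x}}\leq\epsilon$ whenever $0<\abs{t-x}<\delta$, and combine by the triangle inequality to obtain $\twonorm{\phi\br{t}-g\br{x}}\leq3\epsilon$ on that punctured neighbourhood. Everything else is the same bookkeeping as in the scalar case, with $\twonorm{\cdot}$ on $\H_d$ playing the role of the absolute value and Lemma~\ref{lem:meanvalue} the role of the scalar mean value theorem.
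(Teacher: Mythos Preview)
Your proposal is correct and follows essentially the same approach as the paper: both apply the mean value inequality (Lemma~\ref{lem:meanvalue}) to $f_n-f_m$ to show $\{f_n\}$ is uniformly Cauchy, then pass to the difference quotients $\phi_n(t)=(f_n(t)-f_n(x))/(t-x)$, show these are uniformly Cauchy by the same estimate, and finish by interchanging the limits $t\to x$ and $n\to\infty$. The only cosmetic difference is that you spell out the interchange-of-limits argument explicitly, whereas the paper cites the corresponding result from Rudin.
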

	\begin{remark}
		Note that all $p$-norms of matrices are topologically equivalent. Thus the norm $\twonorm{\cdot}$ used in Lemma~\ref{lem:limitderivative} can be replaced by $\norm{\cdot}_p$ for any $1\leq p\leq+\infty$.
	\end{remark}
	\begin{proof}
		Let $\epsilon>0$ be given. Choose $N$ such that for any $m, n\geq N$,
		\begin{equation}\label{eqn:limdiffx0}
		\twonorm{f_n\br{x_0}-f_m\br{x_0}}<\epsilon/2,
		\end{equation}
		and for any $t\in[a,b]$
		\[\twonorm{f'_n\br{t}-f'_m\br{t}}<\frac{\epsilon}{2\br{b-a}}.\]
		Applying Lemma~\ref{lem:meanvalue} to $f_n-f_m$, we have
		\begin{equation}\label{eqn:limconvmn}
		\twonorm{f_n\br{x}-f_m\br{x}-f_n\br{t}+f_m\br{t}}\leq\frac{\abs{x-t}\epsilon}{2\br{b-a}}\leq\epsilon/2,
		\end{equation}
		for any $x,t$ on $[a,b]$ if $m, n\geq N$. Then from Eqs.~\eqref{eqn:limdiffx0}~\eqref{eqn:limconvmn}
		\begin{eqnarray*}
			&&\twonorm{f_n\br{x}-f_m\br{x}}\leq\twonorm{f_n\br{x}-f_m\br{x}-f_n\br{x_0}+f_m\br{x_0}}+\twonorm{f_n\br{x_0}-f_m\br{x_0}}\leq\epsilon,
		\end{eqnarray*}
		for any $m,n\geq N$ and $x\in[a,b]$.
		So $\set{f_n}_{n\in\mathbb{N}}$ converges uniformly on $[a,b]$. Let $f\br{x}\defeq\lim_{n\rightarrow\infty}f_n\br{x}$ for $x\in[a,b]$.
		
		Fix $x\in[a,b]$ and define
		\[\psi_n\br{t}\defeq\frac{f_n\br{t}-f_n\br{x}}{t-x},\quad \psi\br{t}\defeq\frac{f\br{t}-f\br{x}}{t-x},\]
		for $t\in[a,b], t\neq x$. Then Eq.~\eqref{eqn:limconvmn} implies that
		\[\twonorm{\psi_n\br{t}-\psi_m\br{t}}\leq\frac{\epsilon}{2\br{b-a}}.\]
		Thus $\set{\psi_n}_{n\in\mathbb{N}}$ uniformly converges for $t\neq x$. Note that $\lim_{x\rightarrow t}\psi_n\br{x}=f'_n\br{t}$. Thus
		\[\lim_{n\rightarrow\infty}\psi_n\br{t}=\psi\br{t},\]
		uniformly for $a\leq t\leq b, t\neq x$.
		Thus From Theorem 7.17 in~\cite{Rudin76}  , we conclude
		\[f'(x)=\lim_{t\rightarrow x}\psi\br{t}=\lim_{n\rightarrow\infty}f'_n\br{x}.\]
	\end{proof}


\begin{lemma}\label{lem:taylor}
	Let $f$ be a real function on $[a,b]$, $f^{(n-1)}$ is continuous on $[a,b]$ and $f^{\br{n}}\br{t}$ exists for all $t\in\br{a,b}$ except finite points $\set{t_1,\ldots, t_m}\subseteq(a,b)$. Moreover, $\abs{f^{\br{n}}\br{t}}\leq M$ for all $t\in\br{a,b}$ and $t\notin\set{t_1,\ldots, t_m}$. Then for any distinct points $\alpha,\beta$ in $[a,b]$, we have
	\[\abs{f\br{\beta}-P\br{\beta}}\leq\frac{M}{n!}\abs{\beta-\alpha}^n,\]
	where
\[P\br{t}\defeq\sum_{k=0}^{n-1}\frac{f^{\br{k}}\br{\alpha}}{k!}\br{t-\alpha}^k.\]
\end{lemma}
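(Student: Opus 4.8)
The plan is to reduce the statement to an estimate on the Taylor remainder and then recover it by peeling off one derivative at a time, so that the finitely many points where $f^{(n)}$ may fail to exist only ever matter at the top level. Without loss of generality assume $\alpha<\beta$ (the case $\beta<\alpha$ is identical, integrating over $[\beta,\alpha]$ and replacing $t-\alpha$ by $\abs{t-\alpha}$), and set $g\defeq f-P$ on $[\alpha,\beta]$. Since $P$ is a polynomial of degree at most $n-1$ with $P^{(k)}(\alpha)=f^{(k)}(\alpha)$ for $0\le k\le n-1$, we get $g^{(k)}(\alpha)=0$ for $0\le k\le n-1$; moreover $g^{(n-1)}=f^{(n-1)}-f^{(n-1)}(\alpha)$ is continuous on $[a,b]$, and $g^{(n)}(t)=f^{(n)}(t)$ wherever the latter exists, so $\abs{g^{(n)}(t)}\le M$ for all $t\in(a,b)\setminus\set{t_1,\dots,t_m}$. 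It suffices to show $\abs{g(\beta)}\le\frac{M}{n!}(\beta-\alpha)^n$, since $g(\beta)=f(\beta)-P(\beta)$.

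The first step is the base estimate $\abs{g^{(n-1)}(t)}\le M(t-\alpha)$ for all $t\in[\alpha,\beta]$. This is a mean value inequality tolerating finitely many points of nondifferentiability, and here is where finiteness of the exceptional set (as opposed to mere measure zero) is essential: list $\set{\alpha,t}\cup\br{\set{t_1,\dots,t_m}\cap(\alpha,t)}$ as $\alpha=s_0<s_1<\dots<s_r=t$; on each $[s_i,s_{i+1}]$ the function $g^{(n-1)}$ is continuous and differentiable on the open interval, all of whose points lie in $(a,b)$ and avoid the $t_j$, so the ordinary mean value theorem gives $\abs{g^{(n-1)}(s_{i+1})-g^{(n-1)}(s_i)}\le M(s_{i+1}-s_i)$; summing over $i$ with the triangle inequality and using $g^{(n-1)}(\alpha)=0$ yields the claim.

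The second step is an upward induction by integration. Because $f^{(n-1)}$ is continuous, each of $f^{(n-2)},\dots,f^{(0)}$ is continuously differentiable on $[a,b]$ — the derivative of $f^{(n-1-j)}$ being $f^{(n-j)}$, which is either continuous by hypothesis or differentiable hence continuous — and therefore so is $g^{(n-1-j)}$ for every $j\ge 1$; in particular no exceptional points intervene below the top level, so the fundamental theorem of calculus applies legitimately. I would then prove, by induction on $j=0,1,\dots,n-1$, the estimate $\abs{g^{(n-1-j)}(t)}\le\frac{M}{(j+1)!}(t-\alpha)^{j+1}$ for all $t\in[\alpha,\beta]$: the case $j=0$ is the base estimate just established, and for $j\ge 1$ one writes $g^{(n-1-j)}(t)=g^{(n-1-j)}(\alpha)+\int_\alpha^t g^{(n-j)}(s)\,ds=\int_\alpha^t g^{(n-j)}(s)\,ds$ and bounds the integrand by the inductive hypothesis for $j-1$. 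Taking $j=n-1$ and $t=\beta$ gives $\abs{g(\beta)}\le\frac{M}{n!}(\beta-\alpha)^n$, as required.

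There is no serious obstacle here; the only point demanding care is the bookkeeping of where the exceptional points $t_1,\dots,t_m$ can do damage. They are confined entirely to the single mean value step for $g^{(n-1)}$ and are neutralized by the subinterval decomposition, while every subsequent step deals with genuinely $C^1$ functions and is routine calculus. The argument is the familiar proof of Taylor's theorem with Lagrange-type remainder (cf.\ \cite[Theorem~5.15]{Rudin76}), weakened to allow finitely many points where $f^{(n)}$ need not exist.
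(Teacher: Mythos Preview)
Your proof is correct, but it takes a different route from the paper. The paper follows the classical Lagrange-remainder argument: it defines $L$ by $f(\beta)=P(\beta)+\tfrac{L}{n!}(\beta-\alpha)^n$, sets $g(t)=f(t)-P(t)-\tfrac{L}{n!}(t-\alpha)^n$, uses $g(\alpha)=\cdots=g^{(n-1)}(\alpha)=0$ together with $g(\beta)=0$ and repeated Rolle/mean-value to produce a point $\beta_{n-1}$ with $g^{(n-1)}(\beta_{n-1})=0$, and only then handles the exceptional points by subdividing $[\alpha,\beta_{n-1}]$ and applying the mean value theorem on each piece to conclude $\abs{L}\leq M$. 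You instead work directly with $g=f-P$: you neutralize the exceptional points at the very top level to obtain $\abs{g^{(n-1)}(t)}\leq M(t-\alpha)$, and then integrate this estimate down to $g$ using the fundamental theorem of calculus, which is legitimate because everything below the top derivative is genuinely $C^1$. Your argument is a bit more elementary in that it avoids the auxiliary constant $L$ and the chain of Rolle applications; the paper's version is closer to the textbook Lagrange form and yields the remainder as an explicit number before bounding it. Both confine the awkward step (MVT across the exceptional points via subdivision) to a single place, so the bookkeeping is comparable.
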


\begin{proof}
	Let $L$ be the number satisfying that
	\[f\br{\beta}=P\br{\beta}+\frac{L}{n!}\br{\beta-\alpha}^n.\]
	It suffices to show that $\abs{L}\leq M$. Set
	\[g\br{t}\defeq f\br{t}-P\br{t}-\frac{L}{n!}\br{t-\alpha}^n.\]
	We assume that $t_1<t_2<\ldots<t_m$, without loss of generality.
	Then
	\[g\br{\alpha}=g'\br{\alpha}=\ldots=g^{\br{n-1}}\br{\alpha}=0.\]
	Note that $g\br{\beta}=0$. By the mean value theorem, $g'\br{\beta_1}=0$ for some $\beta_1\in(\alpha,\beta)$. Repeat this for $n-1$ steps, we get $\beta_{n-1}\in\br{\alpha,\beta}$ such that $g^{\br{n-1}}\br{\beta_{n-1}}=0$.
	Note that
	\[g^{\br{n-1}}\br{t}=f^{\br{n-1}}\br{t}-f^{\br{n-1}}\br{\alpha}-L\br{t-\alpha}.\]
	Set $t_0=\alpha$.
	Let $i_0$ be the largest integer such that $t_{i_0}<\beta_{n-1}$. Then
	\[g^{\br{n-1}}\br{\beta_{n-1}}=\br{f^{\br{n-1}}\br{\beta_{n-1}}-f^{\br{n-1}}\br{t_{i_0}}}+\sum_{i=0}^{i_0-1}\br{f^{\br{n-1}}\br{t_{i+1}}-f^{\br{n-1}}\br{t_i}}-L\br{t-\alpha}.\]
	Applying the mean value theorem, we have
	\[g^{\br{n-1}}\br{\beta_{n-1}}=f^{\br{n}}\br{\xi_{i_0}}\br{\beta_{n-1}-t_{i_0}}+\sum_{i=0}^{i_0-1}f^{\br{n}}\br{\xi_i}\br{t_{i+1}-t_i}-L\br{\beta-\alpha},\]
	where $\xi_{i_0}\in[t_{i_0},\beta]$ and $\xi_i\in[t_i,t_{i+1}]$.
	As $g^{\br{n-1}}\br{\beta_{n-1}}=0$ and $\abs{f^{\br{n}}\br{t}}\leq M$ for any $t$ where $f^{\br{n}}\br{t}$ is defined, we have
	\[\abs{L}\br{\beta-\alpha}\leq \abs{M\br{\beta-\alpha}}.\]
	Thus $\abs{L}\leq M$.
\end{proof}

\section{Proofs in Section~\ref{sec:derivative}}\label{sec:zetataylor}

Before proving Lemma~\ref{lem:zetataylor} and Lemma~\ref{lem:zetaadditivity}, we first introduce Lyapunov equation, a well studied equation in control theory~\cite{doi:10.1080/00207179208934253}.

\begin{definition}\label{def:sylvester}
	Let $P,Q$ be two Hermitian matrices in $\H_d$. We define Lyapunov equation.
	\begin{equation}\label{eqn:lyapunov}
	PX+XP=Q.
	\end{equation}
	The solution to Eq.~\eqref{eqn:lyapunov} is denoted by $L\br{P,Q}$.
\end{definition}

\begin{lemma}\label{lem:lyapunovsol}
	Given Hermitian matrices $P, Q\in\H_d$, the Lyapunov equation ~\eqref{eqn:lyapunov} has an unique solution if and only if $P$ and $-P$ has no common eigenvalues. Namely, $I_d\otimes P+P\otimes I_d$ is invertible.
	
	Moreover, let $P=UDU^{\dagger}$ be a spectral decomposition of $P$, where $D=\textsf{Diag}\br{d_1,\ldots,d_n}$ satisfies that $d_i+d_j\neq 0$ for any $0\leq i, j\leq n$. Then Eq.~\eqref{eqn:lyapunov} has a unique solution $X_0$ and it satisfies that
	\[\br{U^{\dagger}X_0U}_{i,j}=\frac{\br{U^{\dagger}QU}_{i,j}}{d_i+d_j}.\]
\end{lemma}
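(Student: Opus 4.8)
The plan is to diagonalize $P$ and reduce Eq.~\eqref{eqn:lyapunov} to a decoupled family of scalar equations. First I would note that $\Phi_P \colon X \mapsto PX + XP$ is a $\complex$-linear operator on the $d^2$-dimensional space $\M_d$, so Eq.~\eqref{eqn:lyapunov} has a (necessarily unique) solution for every right-hand side $Q$ if and only if $\Phi_P$ is invertible, equivalently $\ker \Phi_P = \{0\}$. Taking a spectral decomposition $P = U D U^{\dagger}$ with $D = \mathsf{Diag}(d_1,\ldots,d_d)$ and conjugating by $U$, the substitution $Y \defeq U^{\dagger} X U$, $R \defeq U^{\dagger} Q U$ turns $PX + XP = Q$ into the equivalent equation $D Y + Y D = R$. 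Comparing the $(i,j)$ entries yields $(d_i + d_j)\, Y_{i,j} = R_{i,j}$, so in this basis $\Phi_P$ acts diagonally with eigenvalues $\{d_i + d_j : 1 \le i,j \le d\}$.

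The first assertion follows at once: $\Phi_P$ is invertible iff none of its eigenvalues vanishes, i.e.\ iff $d_i + d_j \ne 0$ for all $i,j$; since the eigenvalues of $-P$ are $-d_1,\ldots,-d_d$, this is precisely the condition that $P$ and $-P$ have no common eigenvalue. For the Kronecker-sum reformulation I would vectorize: with the convention $\mathrm{vec}(AXB) = (B^{\mathsf T} \otimes A)\,\mathrm{vec}(X)$ one has $\mathrm{vec}(\Phi_P(X)) = (\id_d \otimes P + P^{\mathsf T} \otimes \id_d)\,\mathrm{vec}(X)$, and since $P$ is Hermitian, $P^{\mathsf T} = \overline P$ is similar to $P$ and in particular isospectral to it; hence $\id_d \otimes P + P^{\mathsf T} \otimes \id_d$, and therefore also the matrix $\id_d \otimes P + P \otimes \id_d$ appearing in the statement (a Kronecker sum with the same pair of spectra, hence isospectral), has spectrum $\{d_i + d_j\}$ and is invertible under exactly the same condition. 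This gives the claimed equivalence with unique solvability of Eq.~\eqref{eqn:lyapunov}.

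Finally, for the ``moreover'' part, once $d_i + d_j \ne 0$ for all $i,j$ the displayed relation $Y_{i,j} = R_{i,j}/(d_i + d_j)$ uniquely determines $Y$, and undoing the conjugation gives the unique solution $X_0 = U Y U^{\dagger}$ of Eq.~\eqref{eqn:lyapunov}, which satisfies $(U^{\dagger} X_0 U)_{i,j} = Y_{i,j} = (U^{\dagger} Q U)_{i,j}/(d_i + d_j)$, as claimed. I do not expect any genuine obstacle here; the only point deserving a moment of care is the bookkeeping of the vectorization convention (transpose versus conjugate, and the order of the Kronecker factors) needed to match the literal expression $\id_d \otimes P + P \otimes \id_d$, and this is harmless precisely because $P$ is Hermitian and all the operators in question are unitarily, hence isospectrally, equivalent.
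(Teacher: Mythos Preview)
Your proposal is correct and follows essentially the same approach as the paper: diagonalize $P$ via $P=UDU^{\dagger}$, conjugate to reduce $PX+XP=Q$ to $DY+YD=R$, and read off the entrywise relation $(d_i+d_j)Y_{i,j}=R_{i,j}$. The paper's proof is terser and omits the vectorization/Kronecker-sum discussion that you include to justify the ``Namely'' clause, but the core argument is identical.
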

\begin{proof}
	Let  $X'\defeq U^{\dagger}XU$ and $Q'\defeq U^{\dagger}QU$. Then we have
	\[DX'+X'D=Q',\]
	which is equivalent to
	\[\br{d_i+d_j}X'_{ij}=Q'_{ij},\]
	for $1\leq i,j\leq n$.
	Hence is has a unique solution if and only if  $d_i+d_j\neq 0$ for all $i, j$.
\end{proof}

\begin{fact}\label{fac:lysol2}~\cite[Page 205, Theorem VII.2.3]{Bhatia}
Let $P$ be a positive definite matrix. Then
\[L\br{P,Q}=\int_{0}^{\infty}e^{-tP}Qe^{-tP}dt.\]
\end{fact}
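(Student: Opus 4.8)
The plan is to verify directly that the operator $X_0\defeq\int_0^{\infty}e^{-tP}Qe^{-tP}\,dt$ is well defined and solves the Lyapunov equation $PX+XP=Q$, and then to read off uniqueness from Lemma~\ref{lem:lyapunovsol}. First I would settle convergence. Since $P$ is positive definite, taking a spectral decomposition $P=\sum_i\lambda_i\ketbra{u_i}$ with all $\lambda_i>0$ gives $e^{-tP}=\sum_i e^{-t\lambda_i}\ketbra{u_i}$, hence $\norm{e^{-tP}}\leq e^{-\lambda_{\min}t}$ where $\lambda_{\min}\defeq\min_i\lambda_i>0$. Consequently $\norm{e^{-tP}Qe^{-tP}}\leq e^{-2\lambda_{\min}t}\norm{Q}$, which is integrable on $[0,\infty)$; this shows the defining integral converges absolutely (entrywise), and the uniform exponential decay also licenses differentiation under the integral sign in the next step.

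The key step is a short computation. Because $P$ commutes with $e^{-tP}$ we have $\frac{d}{dt}e^{-tP}=-Pe^{-tP}=-e^{-tP}P$, so
\[\frac{d}{dt}\br{e^{-tP}Qe^{-tP}}=-Pe^{-tP}Qe^{-tP}-e^{-tP}Qe^{-tP}P.\]
Integrating and using the fundamental theorem of calculus yields
\[PX_0+X_0P=\int_0^{\infty}\br{Pe^{-tP}Qe^{-tP}+e^{-tP}Qe^{-tP}P}\,dt=-\int_0^{\infty}\frac{d}{dt}\br{e^{-tP}Qe^{-tP}}\,dt=\Br{-e^{-tP}Qe^{-tP}}_{t=0}^{t=\infty}.\]
The upper endpoint vanishes by the bound above, while at $t=0$ one has $e^{-0\cdot P}Qe^{-0\cdot P}=Q$; therefore $PX_0+X_0P=Q$, i.e.\ $X_0=L\br{P,Q}$.

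For the final part, since every eigenvalue of $P$ is strictly positive, $P$ and $-P$ share no common eigenvalue, so by Lemma~\ref{lem:lyapunovsol} the Lyapunov equation has a unique solution, which must then coincide with $X_0$. The argument is elementary throughout; the only place needing (mild) care is justifying the interchange of $\frac{d}{dt}$ and $\int_0^\infty$, and this is immediate from the uniform estimate $\norm{e^{-tP}Qe^{-tP}}\leq e^{-2\lambda_{\min}t}\norm{Q}$ together with the analogous bound on the differentiated integrand. I do not expect any genuine obstacle here.
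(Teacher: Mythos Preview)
Your proof is correct and is exactly the standard argument for this integral representation. Note that the paper does not actually supply a proof of this statement: it is recorded as a Fact with a citation to~\cite[Page 205, Theorem VII.2.3]{Bhatia}, so there is nothing in the paper to compare against beyond observing that your verification is the same one found in that reference.
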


\begin{fact}\cite{SENDOV2007240}\label{fac:sendov}
Let $g$ be a $k$-times differentiable real-valued function defined on a set $I\subseteq\reals$ which is a union of a constant number of open intervals. Let $X\in\H_d$ have eigenvalues in $I$. Then the $k$-th order Fr\'echet derivative $\Tr~D^kf\br{X}\br{Y,\ldots, Y}$ exists for any $Y\in \H_d$.
\footnote{The theorem in~\cite{SENDOV2007240} is stated for real symmetric matrices and $I$ is an interval. But the proofs can be directly generalized to Hermitian matrices and a union of constant number of open intervals.}
\end{fact}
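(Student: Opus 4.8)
By linearity of the trace, whenever the $k$-th Fr\'echet derivative of the matrix functional calculus $X\mapsto g\br{X}$ exists one has $\Tr~D^kg\br{X}\br{Y,\ldots,Y}=D^k\br{\Tr\circ g}\br{X}\br{Y,\ldots,Y}$, and in general the latter is the quantity to produce; so the plan is to set $\Phi\defeq\Tr\circ g$ and show that $\Phi$ is $k$-times Fr\'echet differentiable on the open set $\Omega\defeq\set{X\in\H_d:\text{every eigenvalue of }X\text{ lies in }I}$ ($\Omega$ is open because $I$ is open and eigenvalues vary continuously). A $k$-times differentiable $g$ is in $C^{k-1}$, so the standard Daleckii--Krein divided-difference formulas already give $\Phi\in C^{k-1}\br{\Omega}$; the new content is the $k$-th derivative, and since that is a local and pointwise question I would fix $X_0\in\Omega$, with distinct eigenvalues $\mu_1<\cdots<\mu_m$, and study $\Phi$ on a small ball $B$ around $X_0$.

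The main device is a cluster decomposition that isolates the dependence on $g$. First I would shrink $B$ so that for every $X\in B$ the spectrum of $X$ lies in a union of pairwise disjoint tiny intervals $I_\ell\ni\mu_\ell$ with $\overline{I_\ell}\subset I$, and let $P_\ell\br{X}\defeq\frac{1}{2\pi\mathrm{i}}\oint_{\Gamma_\ell}\br{z\id_d-X}^{-1}\,dz$ be the Riesz spectral projection onto the eigenvalues inside $I_\ell$, where $\Gamma_\ell$ is a fixed contour separating $I_\ell$ from the other clusters. The resolvent is rational in the entries of $X$ on $B$, so each $P_\ell$ is real-analytic there, \emph{independently of} $g$; its range has constant dimension $d_\ell$, and a real-analytic choice of partial isometries produces a real-analytic map $X\mapsto X_\ell\in\H_{d_\ell}$ whose spectrum lies near $\mu_\ell$ and equals $\set{\mu_\ell}$ exactly at $X_0$. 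Then $\Phi\br{X}=\sum_\ell\Tr~g\br{X_\ell}$. Writing $T_\ell\br{s}\defeq\sum_{j=0}^{k}\frac{g^{\br{j}}\br{\mu_\ell}}{j!}\br{s-\mu_\ell}^j$ and $R_\ell\defeq g-T_\ell$ on $I_\ell$, the part $\sum_\ell\Tr~T_\ell\br{X_\ell}$ is a finite sum of traces of polynomials in the analytic matrices $X_\ell$, hence $C^\infty$ on $B$, while $R_\ell$ is $k$-times differentiable with $R_\ell^{\br{j}}\br{\mu_\ell}=0$ for \emph{all} $0\le j\le k$.

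It then remains to show each $X\mapsto\Tr~R_\ell\br{X_\ell}$ is $k$-times Fr\'echet differentiable at $X_0$. Since $X\mapsto X_\ell$ is analytic, by the chain rule for higher-order derivatives this reduces to showing that $\Psi_\ell\defeq\Tr\circ R_\ell$ is $k$-times Fr\'echet differentiable at the scalar matrix $\mu_\ell\id_{d_\ell}$ — the heart of the matter, which I expect to carry out as follows. Because $R_\ell\in C^{k-1}$, $\Psi_\ell\in C^{k-1}$ near $\mu_\ell\id_{d_\ell}$, and in an eigenbasis of $\tilde X$ the quantity $D^{k-1}\Psi_\ell\br{\tilde X}\br{Y,\ldots,Y}$ is a finite sum of products of entries of $Y$ with $(k-1)$-st divided differences of $R_\ell$ evaluated at eigenvalues of $\tilde X$. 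At $\tilde X=\mu_\ell\id_{d_\ell}$ each such divided difference equals $R_\ell^{\br{k-1}}\br{\mu_\ell}/\br{k-1}!=0$, so $D^{k-1}\Psi_\ell\br{\mu_\ell\id_{d_\ell}}=0$; and for $\tilde X$ near $\mu_\ell\id_{d_\ell}$ the mean-value property of divided differences writes each term as $R_\ell^{\br{k-1}}\br{\xi}/\br{k-1}!$ with $\xi$ near $\mu_\ell$, which is $o\br{\abs{\xi-\mu_\ell}}=o\br{\twonorm{\tilde X-\mu_\ell\id_{d_\ell}}}$ since $R_\ell^{\br{k-1}}$ is differentiable at $\mu_\ell$ with $R_\ell^{\br{k-1}}\br{\mu_\ell}=R_\ell^{\br{k}}\br{\mu_\ell}=0$. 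Hence the difference quotient $\br{D^{k-1}\Psi_\ell\br{\mu_\ell\id_{d_\ell}+tY}-D^{k-1}\Psi_\ell\br{\mu_\ell\id_{d_\ell}}}/t\to0$ as $t\to0$, so $D^k\Psi_\ell\br{\mu_\ell\id_{d_\ell}}$ exists. Assembling, $\Phi=\br{\text{a }C^\infty\text{ function}}+\sum_\ell\Tr~R_\ell\br{X_\ell}$ is $k$-times Fr\'echet differentiable near $X_0$, and since $X_0\in\Omega$ was arbitrary the statement follows for every $X\in\Omega$ and $Y\in\H_d$.

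The hard part is exactly the estimate in the previous paragraph: in general $D^k$ of the matrix functional calculus needs $g\in C^k$, and the order-$k$ divided difference of a merely $k$-times differentiable $g$ fails to be continuous, so the crucial point is that after taking the trace only order-$(k-1)$ divided differences of $R_\ell$ survive, and one further differentiation therefore lands on a quantity governed by the single value $R_\ell^{\br{k}}\br{\mu_\ell}$ (which exists and, by design, vanishes). Turning the Daleckii--Krein expansion of $\Tr~D^{k-1}\Psi_\ell$ and its difference quotient — in particular the terms coming from eigenvalue collisions away from $\mu_\ell$ — into a fully rigorous argument is the technical work, and this is precisely Sendov's theorem; so an equivalent route is to cite \cite{SENDOV2007240} directly, or to recognize $\Phi$ as the spectral function $h\circ\lambda$ of the symmetric separable function $h\br{x_1,\ldots,x_d}=\sum_i g\br{x_i}$ on $I^d$ and invoke the Lewis--Sendov differentiability theory for spectral functions, which also handles the stated generality in which $I$ is a union of finitely many open intervals, since locally around $X_0$ only the tiny intervals $I_\ell$ — each contained in one component of $I$ — are relevant.
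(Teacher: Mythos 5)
The paper does not prove this statement; it is cited as a black-box fact from Sendov's work, with only a footnote asserting that the original proof (for real symmetric matrices on a single interval) extends to Hermitian matrices and finite unions of open intervals. Your proposal is therefore a proof sketch where the paper offers none, and it captures the correct mechanism behind the result: localize to a ball around $X_0$ via Riesz projections so each spectral cluster is handled separately, subtract the order-$k$ Taylor polynomial of $g$ at the cluster center $\mu_\ell$ to produce a remainder $R_\ell$ with $R_\ell^{(j)}(\mu_\ell)=0$ for $0\le j\le k$, reduce by the analytic chain rule to differentiability of $\Tr R_\ell$ at a scalar matrix, and then observe that after taking the trace the $(k-1)$-st derivative is controlled by $(k-1)$-st order divided-difference data of $R_\ell$ (equivalently, $(k-2)$-nd divided differences of $R_\ell'$), so that the final differentiation only requires the existence of the pointwise $k$-th derivative of $g$ at $\mu_\ell$, not its continuity. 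You also correctly and honestly flag that making the divided-difference bookkeeping rigorous is the actual content of Sendov's theorem, and that for the ``union of intervals'' extension locality suffices.

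Two small remarks. First, the worry you mention about ``terms coming from eigenvalue collisions away from $\mu_\ell$'' does not actually arise in your setup: after the cluster decomposition, every eigenvalue of $\tilde X$ is within $O(\norm{\tilde X-\mu_\ell\id}_2)$ of $\mu_\ell$, so all divided-difference nodes are near $\mu_\ell$; the only issue is confluent nodes, which is handled since $R_\ell\in C^{k-1}$ makes the $(k-1)$-st divided difference continuous and gives the mean-value representation $R_\ell^{(k-1)}(\xi)/(k-1)!$ you invoke. Second, you address the union-of-intervals generalization from the paper's footnote but not the Hermitian-versus-real-symmetric one. That is also easy: either re-run Sendov's argument with the Hermitian Daleckii--Krein formula, or use the realification $\H_d\hookrightarrow\mathrm{Sym}_{2d}(\reals)$, $X=A+\mathrm{i}B\mapsto\begin{pmatrix}A&-B\\B&A\end{pmatrix}$, under which eigenvalues double with multiplicity and $\Tr g(X)$ becomes half of $\Tr g(\cdot)$ of the real symmetric image, so differentiability transfers immediately. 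With those two remarks, your outline is sound and accurately identifies the cited theorem as the technical core, which is exactly the level at which the paper uses this fact.
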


 \begin{definition}\label{def:hfunction}
	For any Hermitian matrices $P, Q$ that $P$ is invertible, we define
	\[\ell_Q\br{P}\defeq L\br{\abs{P},PQ+QP}.\]
\end{definition}
\noindent It is easy to verify that $\ell_Q\br{P}=Q$ if $P>0$.

\begin{definition}\label{def:kappa}
	For any Hermitian matrices $P$ and $Q$ and $P$ is invertible,
	\[\kappa_Q\br{P}\defeq\anticommutator{P}{\ell_Q\br{P}}=
	\anticommutator{P}{L\br{\abs{P},PQ+QP}}.\]
\end{definition}

\begin{lemma}\label{lem:derivative}
	Let $P, Q$ be Hermitian matrices, where $P$ is invertible. The following holds.
	\begin{enumerate}
		\item Let $f\br{x}\defeq \sqrt{x}$ for $x\geq 0$. Then
		$Df\br{P}\br{Q}=L\br{\sqrt{P}, Q}$ if $P$ is positive definite.		
		\item Let $f\br{x}\defeq \abs{x}$. Then
		$Df\br{P}\br{Q}=\ell_Q\br{P}$.
		
		\item Let $f\br{x}=x\abs{x}$. Then    $Df\br{P}\br{Q}=\frac{1}{2}\br{\anticommutator{\abs{P}}{Q}+\kappa_Q\br{P}}.$
		
		\item Let $p\br{x}=\begin{cases}
		x^2~\mbox{if $x\geq 0$}\\
		0~\mbox{otherwise}.
		\end{cases}$
		Then 		\[Dp\br{P}\br{Q}=\frac{1}{2}\anticommutator{P}{Q}+\frac{1}{4}\anticommutator{\abs{P}}{Q}+\frac{1}{4}\kappa_Q\br{P}.\]
	\end{enumerate}
\end{lemma}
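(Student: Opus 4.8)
The plan is to obtain the four identities in the stated order, each from the previous ones via the chain and product rules for Fr\'echet derivatives (\Cref{fac:frechetderivative}) together with the elementary formula $Df\br{Y}\br{Z}=\anticommutator{Y}{Z}$ for $f\br{y}=y^2$ recorded in \Cref{sec:frechet} (\cite[Example X.4.2]{Bhatia}). Existence of all the derivatives is not an issue on the relevant domains: $\sqrt{\cdot}$ is smooth on $(0,\infty)$, while $\abs{\cdot}$, $x\abs{x}$ and $p$ are smooth on $\reals\setminus\set{0}$, and the hypotheses ($P>0$ in item 1, $P$ invertible in items 2--4) keep the spectrum of $P$ inside these domains, so the induced matrix functions are Fr\'echet differentiable at $P$ by the standard matrix-function calculus (cf.~\Cref{fac:sendov} and~\cite{Bhatia}).

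For item 1, set $g\br{y}\defeq y^2$ and $f\br{y}\defeq\sqrt{y}$, so that $g\circ f$ is the identity on a neighbourhood of the spectrum of $P$. Differentiating this identity with the chain rule gives $Q=Dg\br{\sqrt{P}}\br{Df\br{P}\br{Q}}=\anticommutator{\sqrt{P}}{Df\br{P}\br{Q}}$, i.e. $Df\br{P}\br{Q}$ solves the Lyapunov equation $\sqrt{P}X+X\sqrt{P}=Q$; since $\sqrt{P}$ is positive definite this solution is unique and equals $L\br{\sqrt{P},Q}$ by \Cref{lem:lyapunovsol}. (Equivalently, one may deduce both the existence of $Df\br{P}$ and this formula from the inverse function theorem on the Banach space of Hermitian matrices, using that $Z\mapsto\anticommutator{\sqrt{P}}{Z}$ is invertible.) For item 2, write $\abs{y}=\sqrt{y^2}$, i.e. $f=h\circ s$ with $s\br{y}=y^2$ and $h\br{y}=\sqrt{y}$; since $P^2$ is positive definite, the chain rule and item 1 yield $Df\br{P}\br{Q}=Dh\br{P^2}\br{\anticommutator{P}{Q}}=L\br{\sqrt{P^2},PQ+QP}=L\br{\abs{P},PQ+QP}=\ell_Q\br{P}$, using $\sqrt{P^2}=\abs{P}$ and \Cref{def:hfunction}.

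For item 3, the product rule applied to $x\abs{x}$ (a product of the identity map and $\abs{\cdot}$) together with item 2 gives $Df\br{P}\br{Q}=Q\abs{P}+P\,\ell_Q\br{P}$. Because $f$ sends Hermitian matrices to Hermitian matrices and $Q$ is Hermitian, $Df\br{P}\br{Q}$ is a limit of Hermitian difference quotients and hence Hermitian, so it coincides with its adjoint $\abs{P}Q+\ell_Q\br{P}P$; averaging the two representations gives $Df\br{P}\br{Q}=\frac12\anticommutator{\abs{P}}{Q}+\frac12\anticommutator{P}{\ell_Q\br{P}}=\frac12\br{\anticommutator{\abs{P}}{Q}+\kappa_Q\br{P}}$ by \Cref{def:kappa}. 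Finally, item 4 follows from the pointwise identity $p\br{x}=\frac12\br{x^2+x\abs{x}}$, valid for every real $x$ and hence giving $p\br{X}=\frac12\br{X^2+X\abs{X}}$ for every Hermitian $X$; linearity of the Fr\'echet derivative, the formula $Df\br{P}\br{Q}=\anticommutator{P}{Q}$ for $f\br{y}=y^2$, and item 3 then combine to give $Dp\br{P}\br{Q}=\frac12\anticommutator{P}{Q}+\frac14\anticommutator{\abs{P}}{Q}+\frac14\kappa_Q\br{P}$, as claimed.

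I expect the only delicate point to be bookkeeping rather than depth: making sure the chain and product rules are evaluated at the correct base points (e.g. $\sqrt{P}$ and $P^2$ rather than $P$), that the relevant spectra stay away from $0$, and that the symmetrization in item 3 is legitimately justified by the Hermiticity of $Df\br{P}\br{Q}$ (alternatively, item 3 can be checked by a direct entrywise computation in an eigenbasis of $P$ via the divided-difference formula, which would also render items 1--2 self-contained). No genuinely new idea beyond the Fr\'echet calculus set up earlier appears to be needed.
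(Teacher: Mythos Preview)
Your proposal is correct and matches the paper's proof essentially step for step: the same chain-rule reduction for items 1 and 2, the product rule for item 3, and the identity $p(x)=\tfrac12(x^2+x|x|)$ for item 4. The only cosmetic difference is that in item 3 the paper obtains the second expression $|P|Q+\ell_Q(P)P$ by applying the product rule in the order $h\cdot g$ rather than by invoking Hermiticity of $Df(P)(Q)$, but the two justifications are equivalent.
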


\begin{proof}
	\begin{enumerate}
		\item Let $g\br{x}\defeq x^2$ and $Df\br{P}\br{Q}=X$. Applying the composition rule in Fact~\ref{fac:frechetderivative}, we have
		\[Q=Dg\circ f\br{P}\br{Q}=\br{Dg\br{f\br{P}}\circ Df\br{P}}\br{Q}=Dg\br{\sqrt{P}}\br{X}=\anticommutator{\sqrt{P}}{X}.\]
		Hence $X=L\br{\sqrt{P}, Q}$.
		
		\item 	Let $g\br{x}=x^2$ and $h\br{x}=\sqrt{x}$. Then $f=h\circ g$.
		\[Df\br{P}\br{Q}=Dh\br{g\br{P}}\circ Dg\br{P}\br{Q}=Dh\br{P^2}\br{PQ+QP}=\ell_Q\br{P}.\]

		\item Let $g\br{x}=x$ and $h\br{x}=\abs{x}$. From Fact~\ref{fac:frechetderivative} item 2,
		\begin{align*}
		&Df\br{P}\br{Q}=Dg\br{P}\br{Q}h\br{P}+g\br{P}Dh\br{P}\br{Q}\\
		&=Q\abs{P}+P\ell_Q\br{P}.
		\end{align*}
		Using $f=h\cdot g$. we have
		\begin{align*}
		&Df\br{P}\br{Q}=Dg\br{P}\br{Q}h\br{P}+g\br{P}Dh\br{P}\br{Q}\\
		&=\abs{P}Q+\ell_Q\br{P}P
		\end{align*}
		Then
		\begin{align*}
		&Df\br{P}\br{Q}=\frac{1}{2}\br{\br{Q\abs{P}+P\ell_Q\br{P}}+\br{\abs{P}Q+\ell_Q\br{P}P}}\\
		&=\frac{1}{2}\br{\anticommutator{\abs{P}}{Q}+\kappa_Q\br{P}}.
		\end{align*}

		\item It follows from that $f\br{x}=\frac{1}{2}x^2+\frac{1}{2}x\abs{x}.$
	\end{enumerate}
\end{proof}
\begin{lemma}\label{lem:derivativeh}
	Let $P, Q$ be Hermitian matrices where $P$ is invertible. It holds that
	\begin{equation}\label{eqn:dh}
D\ell_Q\br{P}\br{Q}=L\br{\abs{P},2Q^2-2\ell_Q\br{P}^2}.
\end{equation}
	Moreover, if $P=\textsf{Diag}\br{a_1,\ldots,a_d}$ diagonal, then
	\begin{equation}\label{eqn:hb}
	\br{\ell_Q\br{P}}_{i,j}=\frac{Q_{ij}\br{a_i+a_j}}{\abs{a_i}+\abs{a_j}}.
	\end{equation}
	\begin{equation}\label{eqn:Ds}
	\br{D\ell_Q\br{P}\br{Q}}_{i,j}=2\frac{\sum_kQ_{ik}Q_{kj}\br{1-\frac{\br{a_i+a_k}\br{a_k+a_j}}{\br{\abs{a_i}+\abs{a_k}}\br{\abs{a_k}+\abs{a_j}}}}}{\abs{a_i}+\abs{a_j}}.
	\end{equation}

\end{lemma}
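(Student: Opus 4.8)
The plan is to derive \eqref{eqn:dh} by differentiating the Lyapunov identity that defines $\ell_Q$, and then to obtain \eqref{eqn:hb} and \eqref{eqn:Ds} by passing to the eigenbasis of $P$. Recall from Definition~\ref{def:hfunction} that $\ell_Q\br P\defeq L\br{\abs P,PQ+QP}$, so by the definition of $L\br{\cdot,\cdot}$ the matrix $S\defeq\ell_Q\br P$ is the unique solution of $\abs P\,S+S\,\abs P=PQ+QP$. Since $P$ is invertible we have $\abs P>0$, hence $\abs P$ and $-\abs P$ share no eigenvalue and Lemma~\ref{lem:lyapunovsol} guarantees $L\br{\abs P,\cdot}$ is a well-defined linear bijection. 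Moreover $P\mapsto\abs P=\sqrt{P^2}$ is real-analytic near an invertible $P$ (the scalar square root being analytic on the positive reals), and $A\mapsto L\br{A,\cdot}$ is analytic wherever $I_d\otimes A+A\otimes I_d$ is invertible, so $P\mapsto\ell_Q\br P=L\br{\abs P,PQ+QP}$ is Fr\'echet differentiable near the given point; this is what makes the formal differentiation below legitimate (see also Fact~\ref{fac:sendov} and Appendix~\ref{sec:analysis}).

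Concretely, I would differentiate both sides of
\[\abs{P+tQ}\,\ell_Q\br{P+tQ}+\ell_Q\br{P+tQ}\,\abs{P+tQ}=\br{P+tQ}Q+Q\br{P+tQ}\]
at $t=0$. Writing $A\defeq\abs P$, $S\defeq\ell_Q\br P$ and $S'\defeq D\ell_Q\br P\br Q$, and using that $\frac{d}{dt}\abs{P+tQ}|_{t=0}=\ell_Q\br P=S$ (Lemma~\ref{lem:derivative} item 2, the Fr\'echet derivative of $x\mapsto\abs x$) together with the product rule (Fact~\ref{fac:frechetderivative} item 2), the left side differentiates to $S\cdot S+AS'+S'A+S\cdot S=2S^2+AS'+S'A$, while the right side differentiates to $Q\cdot Q+Q\cdot Q=2Q^2$. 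Hence $\abs P\,S'+S'\,\abs P=2Q^2-2\ell_Q\br P^2$, which by uniqueness of the Lyapunov solution (Lemma~\ref{lem:lyapunovsol}) is exactly $S'=L\br{\abs P,2Q^2-2\ell_Q\br P^2}$, i.e.\ \eqref{eqn:dh}. (As a sanity check, when $P>0$ one has $\ell_Q\br{P'}\equiv Q$ near $P$, so $S'=0$, consistent with $L\br{P,2Q^2-2Q^2}=0$.)

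For the diagonal case, take $P=\textsf{Diag}\br{a_1,\dots,a_d}$, so $\abs P=\textsf{Diag}\br{\abs{a_1},\dots,\abs{a_d}}$ and $\br{PQ+QP}_{ij}=\br{a_i+a_j}Q_{ij}$; note $a_i\neq0$ for all $i$ since $P$ is invertible. The defining equation $\abs P\,S+S\,\abs P=PQ+QP$ then reads entrywise $\br{\abs{a_i}+\abs{a_j}}S_{ij}=\br{a_i+a_j}Q_{ij}$, which gives \eqref{eqn:hb}. Substituting into \eqref{eqn:dh} and again passing to entries, $\br{\abs{a_i}+\abs{a_j}}S'_{ij}=2\br{Q^2}_{ij}-2\br{S^2}_{ij}$; since $\br{Q^2}_{ij}=\sum_kQ_{ik}Q_{kj}$ and, by \eqref{eqn:hb}, $\br{S^2}_{ij}=\sum_k\frac{\br{a_i+a_k}\br{a_k+a_j}}{\br{\abs{a_i}+\abs{a_k}}\br{\abs{a_k}+\abs{a_j}}}Q_{ik}Q_{kj}$, collecting the common factor $Q_{ik}Q_{kj}$ and dividing by $\abs{a_i}+\abs{a_j}$ yields \eqref{eqn:Ds}.

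The routine algebra aside, I expect the only genuine obstacle to be the regularity point flagged in the first paragraph: one must be sure that $P\mapsto\ell_Q\br P$ is Fr\'echet differentiable in a neighbourhood of the invertible matrix $P$, so that differentiating the Lyapunov identity is valid and the resulting $S'$ really equals $D\ell_Q\br P\br Q$. This should follow cleanly from analyticity of the matrix square root off its singular set together with smoothness of matrix inversion, as above, so no new idea beyond what is already used in Appendix~\ref{sec:analysis} and Fact~\ref{fac:sendov} seems to be needed.
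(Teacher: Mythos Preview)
Your proposal is correct and follows essentially the same route as the paper: differentiate the defining Lyapunov identity $\abs{P}\ell_Q\br{P}+\ell_Q\br{P}\abs{P}=PQ+QP$ in the direction $Q$, use $D\abs{P}\br{Q}=\ell_Q\br{P}$, and solve the resulting Lyapunov equation. Your treatment is in fact more explicit than the paper's (which dispatches the whole argument in two lines and leaves \eqref{eqn:hb}, \eqref{eqn:Ds} implicit), and your extra care about differentiability near invertible $P$ is appropriate and not more than what the paper already relies on.
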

\begin{proof}
	From the definition of $\ell_Q\br{\cdot}$ in Definition~\ref{def:hfunction}, we have
	\[\abs{P}\ell_Q\br{P}+\ell_Q\br{P}\abs{P}=PQ+QP.\]
	Taking Fr\'echet derivative on both sides with respect to $Q$, we have
	\[\abs{P}D\ell_Q\br{P}\br{Q}+D\ell_Q\br{P}\br{Q}\abs{P}=2Q^2-2\ell_Q\br{P}^2.\]
We conclude Eq.~\eqref{eqn:dh}.

	%
\end{proof}


\begin{lemma}\label{lem:aiaj}
	Given nonzero reals $a_1,\ldots, a_d$, let $M$ be a $d\times d$ Hermitian matrix defined to be $M_{ij}\defeq\frac{a_i+a_j}{\abs{a_i}+\abs{a_j}}$. For any $d\times d$ Hermitian matrix $A$, it holds that
	\[\twonorm{M\circ A}\leq \twonorm{A},\]
	and
	\[\norm{M\circ A}_4\leq c\norm{A}_4,\]
	where the $c\geq 1$ is an absolute constant.
\end{lemma}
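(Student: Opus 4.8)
\textbf{Proof plan for Lemma~\ref{lem:aiaj}.}

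The plan is to recognize that the matrix $M$ with entries $M_{ij} = \frac{a_i + a_j}{|a_i| + |a_j|}$ is a so-called \emph{Schur multiplier}, and to bound its Schur-multiplier norm from $S_p$ to $S_p$ for $p = 2$ and $p = 4$. For $p=2$ the claim is elementary: since each entry satisfies $|M_{ij}| \le 1$, the Hadamard product $M \circ A$ has each singular-value-square sum bounded via $\twonorm{M\circ A}^2 = \sum_{ij} |M_{ij}|^2 |A_{ij}|^2 \le \sum_{ij} |A_{ij}|^2 = \twonorm{A}^2$, using that $\twonorm{\cdot}$ is the Frobenius (Hilbert--Schmidt) norm. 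So the first step disposes of the $2$-norm bound with no real work.

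The substantive step is the $4$-norm bound, i.e.\ showing $\norm{M\circ A}_4 \le c\,\norm{A}_4$ for an absolute constant $c$. Here I would first reduce to the case where all $a_i > 0$ or all $a_i < 0$ is \emph{not} available, so instead I would split the index set into $I_+ = \{i : a_i > 0\}$ and $I_- = \{i : a_i < 0\}$. On the block $I_+ \times I_+$ we have $M_{ij} = 1$; on $I_- \times I_-$ we have $M_{ij} = -1$; the only nontrivial blocks are the off-diagonal ones $I_+ \times I_-$ and $I_- \times I_+$, where $M_{ij} = \frac{a_i + a_j}{|a_i| + |a_j|} = \frac{a_i - |a_j|}{a_i + |a_j|}$ (for $i \in I_+$, $j \in I_-$), a quantity in $(-1,1)$. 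The plan is then to write $M = J - 2P$ where $J$ is the all-ones-up-to-sign matrix $J_{ij} = \mathrm{sign}(a_i)\cdot\mathrm{sign}(a_j)\cdot$(something) — more precisely, to express $M$ as a sum of a rank-structured ``trivial'' multiplier plus a correction that is itself a difference of two \emph{positive semidefinite} Schur multipliers of bounded norm. The key classical fact I would invoke is that a Schur multiplier given by a positive semidefinite matrix with diagonal entries equal to $1$ is a contraction on every $S_p$ (this is the statement that PSD Schur multipliers with unit diagonal are completely positive and unital, hence $S_p$-contractive). The matrix with entries $\frac{1}{|a_i| + |a_j|}$ is PSD (it is a Cauchy/Loewner-type kernel, being $\int_0^\infty e^{-t|a_i|}e^{-t|a_j|}\,dt$), so multipliers built from it via Schur products with rank-one sign patterns inherit good norms; the numerator $a_i + a_j$ is handled by writing $a_i + a_j = a_i \cdot 1 + 1 \cdot a_j$ and distributing, since multiplication of row $i$ by the scalar $a_i$ and column $j$ by $1$ is a diagonal conjugation which does not change $S_p$ norms up to the obvious rescaling — but one must be careful that this rescaling is controlled, which is exactly where positivity/normalization is used to keep constants absolute.

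I expect the main obstacle to be organizing the decomposition of $M$ so that every piece is either a unit-diagonal PSD kernel (hence an $S_4$-contraction by the cited classical theorem) or a harmless diagonal conjugation, while keeping the total constant $c$ independent of $d$ and of the $a_i$'s. A cleaner route, which I would pursue in parallel, is to use the integral representation directly: $\frac{a_i + a_j}{|a_i| + |a_j|} = (a_i + a_j)\int_0^\infty e^{-t(|a_i|+|a_j|)}\,dt$, so $M \circ A$ becomes $\int_0^\infty \big( D_t A D_t + D_t A D_t \big)$-type expressions with $D_t = \mathrm{Diag}(a_i e^{-t|a_i|})$ and $E_t = \mathrm{Diag}(e^{-t|a_i|})$; then $\norm{M\circ A}_4 \le \int_0^\infty \big( \norm{D_t A E_t}_4 + \norm{E_t A D_t}_4 \big)dt \le \int_0^\infty \big(\norm{D_t}_\infty \norm{E_t}_\infty + \norm{E_t}_\infty\norm{D_t}_\infty\big)\,dt \cdot \norm{A}_4$, and $\norm{D_t}_\infty = \sup_i |a_i| e^{-t|a_i|}$, $\norm{E_t}_\infty \le 1$, so the integral $\int_0^\infty \sup_i |a_i| e^{-t|a_i|}\,dt$ must be shown to be an absolute constant — but $\sup_i |a_i| e^{-t|a_i|} \le \sup_{s>0} s e^{-ts} = \frac{1}{et}$, which is \emph{not} integrable at $0$ or $\infty$, so this naive bound fails and needs the refinement that $\sup_i |a_i| e^{-t|a_i|} \le \min\{\max_i|a_i|, \frac{1}{et}\}$ together with a smarter splitting of the $t$-integral, or replacing the crude operator-norm bound by the $S_2$-bound on the appropriate range of $t$. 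Reconciling these — getting an honest absolute constant $c$ — is the crux, and I would ultimately lean on the PSD-Schur-multiplier theorem rather than the integral estimate to close it cleanly.
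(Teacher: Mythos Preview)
Your $2$-norm argument and your sign-based block decomposition are exactly what the paper does: split the indices into $I_+=\{i:a_i>0\}$ and $I_-=\{i:a_i<0\}$, observe that on the diagonal blocks $M$ is identically $\pm 1$ (so $M\circ A$ agrees with $\pm A$ there), and isolate the off-diagonal block, where the multiplier becomes $\frac{|a_i|-|a_j|}{|a_i|+|a_j|}$ with $|a_i|,|a_j|>0$. The paper then finishes in one stroke by invoking a known black-box result (Davies, \emph{Lipschitz continuity of functions of operators in the Schatten classes}, 1988, Corollary~5): the Schur multiplier with entries $\frac{a_i-b_j}{a_i+b_j}$, $a_i,b_j>0$, is bounded on $S_4$ by an absolute constant. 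A triangle inequality over the block decomposition then gives the constant $c+1$.

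The gap in your plan is the treatment of that off-diagonal block. Neither of your two routes can close it. The integral-representation route you already correctly diagnose as failing: $\int_0^\infty\sup_i|a_i|e^{-t|a_i|}\,dt$ is not an absolute constant, and no splitting of the $t$-range rescues an $a$-independent bound. More seriously, the PSD-Schur-multiplier route is doomed for a structural reason: a PSD Schur multiplier with unit diagonal is a contraction on \emph{every} $S_p$, including $S_1$ and $S_\infty$, so any finite combination of such multipliers (with absolute constants) would produce a Schur multiplier bounded on $S_\infty$. But the multiplier $\frac{|a_i|-|a_j|}{|a_i|+|a_j|}$ is of Hilbert-transform/triangular-truncation type and is \emph{not} bounded on $S_\infty$ uniformly in the choice of $a_i$'s (take $a_i=2^i$; the multiplier approximates the sign of $i-j$). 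Hence no decomposition into PSD unit-diagonal pieces with absolute coefficients can exist. The $S_4$-boundedness here is a genuine $1<p<\infty$ phenomenon (Macaev/Gohberg--Krein, or Davies via double operator integrals), and you should cite it rather than try to rebuild it from the PSD contraction lemma.
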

\begin{proof}
	Note that $\twonorm{A}^2=\sum_{ij}\abs{A\br{i,j}}^2$. The first inequality follows from the fact that $\abs{M\br{i,j}}\leq 1$ for all $i,j$.
	To prove the second inequality, we may assume that $a_1,\ldots, a_s\geq 0$ and $a_{s+1},\ldots, a_d<0$ without loss of generality.
	Let $A=\begin{pmatrix}
	A_1 & A_2\\
	A_2^{\dagger} & A_3
	\end{pmatrix},$
	where $A_1,A_2,A_3$ are of size $s\times s$, $s\times (d-s)$ and $(d-s)\times (d-s)$, respectively. Let $M=\begin{pmatrix}
	M_1 & M_2\\
	M_2^{\dagger} & M_3
	\end{pmatrix}$ be the decomposition of the same size.
	Let $P$ be a $d\times d$ matrix defined to be
	\[P\br{i,j}\defeq\frac{\abs{a_i}-\abs{a_j}}{\abs{a_i}+\abs{a_j}}.\]
	Then
	\[\begin{pmatrix}
	0 & A_2\\
	A_2^{\dagger} & 0
	\end{pmatrix}\circ P=\begin{pmatrix}
	0 & A_2\circ M_2\\
	-A_2^{\dagger}\circ M_2^{\dagger} & 0
	\end{pmatrix}.\]
	Fact~\ref{fac:davis} implies that
	\[\norm{\begin{pmatrix}
		0 & A_2\circ M_2\\
		-A_2^{\dagger}\circ M_2^{\dagger} & 0
		\end{pmatrix}}_4\leq c \norm{\begin{pmatrix}
		0 & A_2\\
		A_2^{\dagger} & 0
		\end{pmatrix}}_4,\]
	for some absolute constant $c$, which implies that
	\[\norm{\begin{pmatrix}
		0 & A_2\circ M_2\\
		A_2^{\dagger}\circ M_2^{\dagger} & 0
		\end{pmatrix}}_4\leq c \norm{\begin{pmatrix}
		0 & A_2\\
		A_2^{\dagger} & 0
		\end{pmatrix}}_4.\]
	Then
	\begin{eqnarray*}
		&&\norm{A\circ M}_4\leq\norm{\begin{pmatrix}
				A_1 & 0\\
				0 & A_3
		\end{pmatrix}}_4+\norm{\begin{pmatrix}
		0 & A_2\circ M_2\\
		A_2^{\dagger}\circ M_2^{\dagger} & 0
	\end{pmatrix}}_4\leq \norm{\begin{pmatrix}
     		A_1 & 0\\
     		0 & A_3
     \end{pmatrix}}_4+ c \norm{\begin{pmatrix}
     0 & A_2\\
     A_2^{\dagger} & 0
 \end{pmatrix}}_4\\
&\leq& (c+1)\norm{A}_4,
	\end{eqnarray*}
where the last inequality is from the fact that
\[\norm{\begin{pmatrix}
	A_1 & 0\\
	0 & A_3
	\end{pmatrix}}_4\leq\norm{A}_4~\mbox{and}~\norm{\begin{pmatrix}
	0 & A_2\\
	A_2^{\dagger} & 0
	\end{pmatrix}}_4\leq\norm{A}_4.\]
\end{proof}
\begin{fact}~\cite{Davies:1988}[Corollary 5]\label{fac:davis}
	Given $a_1,\ldots, a_d,b_1,\ldots b_d>0$, let $M$ be a $d\times d$ matrix defined to be $M\br{i,j}\defeq\frac{a_i-b_j}{a_i+b_j}$. For any $d\times d$ matrix $A$, it holds that
	\[\norm{A\circ M}_4\leq c\norm{A}_4,\]
	for some absolute constant $c$.
\end{fact}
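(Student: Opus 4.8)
The statement bundles two inequalities of rather different character, and the plan is to dispatch the $\norm{\cdot}_2$ bound in one line and to reduce the $\norm{\cdot}_4$ bound to Davies's Schur-multiplier estimate for Cauchy kernels (Fact~\ref{fac:davis}). For the first bound, the triangle inequality gives $\abs{M_{ij}}=\abs{a_i+a_j}/\br{\abs{a_i}+\abs{a_j}}\leq 1$ for all $i,j$, and since $\norm{X}_2^2=\sum_{i,j}\abs{X_{ij}}^2$ this yields $\norm{M\circ A}_2^2=\sum_{i,j}\abs{M_{ij}}^2\abs{A_{ij}}^2\leq\sum_{i,j}\abs{A_{ij}}^2=\norm{A}_2^2$. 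This pointwise bound is useless for the $4$-norm, so the second inequality genuinely requires exploiting the sign pattern of $M$.

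For the $\norm{\cdot}_4$ bound I would first permute the coordinates so that $a_1,\dots,a_s\geq 0$ and $a_{s+1},\dots,a_d<0$ (a permutation conjugation changes neither the hypotheses nor any Schatten norm), and write $A=\begin{pmatrix}A_1 & A_2\\ A_2^{\dagger} & A_3\end{pmatrix}$ and $M=\begin{pmatrix}M_1 & M_2\\ M_2^{\dagger} & M_3\end{pmatrix}$ in the induced $2\times2$ block form with first block of size $s$. The key observation is that the diagonal blocks of $M$ are constant: for $i,j\leq s$ one has $\abs{a_i}+\abs{a_j}=a_i+a_j$, so $M_{ij}=1$; for $i,j>s$ one has $\abs{a_i}+\abs{a_j}=-(a_i+a_j)$, so $M_{ij}=-1$. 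For $i\leq s<j$ one gets $M_{ij}=(a_i+a_j)/(\abs{a_i}+\abs{a_j})=(\abs{a_i}-\abs{a_j})/(\abs{a_i}+\abs{a_j})$, so the block $M_2$ is the $[1,s]\times[s+1,d]$ corner of the Cauchy kernel $P$ defined on $[d]\times[d]$ by $P_{ij}=(\abs{a_i}-\abs{a_j})/(\abs{a_i}+\abs{a_j})$, which is antisymmetric with vanishing diagonal. Hence $M\circ A$ splits as the sum of its block-diagonal part $A_1\oplus(-A_3)$ and its off-diagonal part $N=\begin{pmatrix}0 & M_2\circ A_2\\ (M_2\circ A_2)^{\dagger} & 0\end{pmatrix}$.

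Then the estimate follows from the triangle inequality $\norm{M\circ A}_4\leq\norm{A_1\oplus(-A_3)}_4+\norm{N}_4$ together with two elementary facts. First, with $S=\mathrm{diag}\br{\id_s,-\id_{d-s}}$ one has $A_1\oplus A_3=\tfrac12\br{A+SAS}$ and $\begin{pmatrix}0 & A_2\\ A_2^{\dagger} & 0\end{pmatrix}=\tfrac12\br{A-SAS}$, so by unitary invariance and the triangle inequality each of these ``pinchings'' has $\norm{\cdot}_4\leq\norm{A}_4$; and flipping the sign of a block by left- or right-multiplication with a signature matrix preserves Schatten norms, so $\norm{A_1\oplus(-A_3)}_4=\norm{A_1\oplus A_3}_4\leq\norm{A}_4$. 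Second, writing $\widetilde A_2\defeq\begin{pmatrix}0 & A_2\\ A_2^{\dagger} & 0\end{pmatrix}$, the Schur product $\widetilde A_2\circ P$ equals $N$ up to a sign on its lower-left block (since $P$ is antisymmetric off the diagonal), so $\norm{N}_4=\norm{\widetilde A_2\circ P}_4$; applying Fact~\ref{fac:davis} with both sequences taken to be $\br{\abs{a_1},\dots,\abs{a_d}}$ gives $\norm{\widetilde A_2\circ P}_4\leq c\norm{\widetilde A_2}_4\leq c\norm{A}_4$ for an absolute constant $c$. Combining, $\norm{M\circ A}_4\leq\br{c+1}\norm{A}_4$.

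The entire argument is driven by a single non-trivial input — the dimension-free boundedness of Cauchy-type Schur multipliers on the Schatten-$4$ class, i.e.\ Fact~\ref{fac:davis} — and this is where all the difficulty lies: the pointwise estimate $\abs{M_{ij}}\leq 1$ that settles the $2$-norm case says nothing about the $4$-norm. Everything else (permuting to separate the two sign classes, identifying the constant diagonal blocks, and the bookkeeping with pinchings and signature conjugations) is routine, and the absolute constant $c+1$ is harmless for the applications.
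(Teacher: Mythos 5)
You have proved the wrong statement. The target here is Fact~\ref{fac:davis}, the Schur-multiplier estimate for the kernel $M(i,j)=\frac{a_i-b_j}{a_i+b_j}$ built from two independent positive sequences, which the paper imports verbatim as Corollary~5 of~\cite{Davies:1988} and does not reprove (that is why it is labelled a \texttt{fact} and cited). What you have written is instead a proof of Lemma~\ref{lem:aiaj}, which concerns the different kernel $M_{ij}=\frac{a_i+a_j}{\abs{a_i}+\abs{a_j}}$ built from a single sequence of nonzero reals, and which also contains a $\twonorm{\cdot}$ assertion that Fact~\ref{fac:davis} does not. Since your argument invokes Fact~\ref{fac:davis} itself as its sole nontrivial ingredient, it cannot stand as a proof of Fact~\ref{fac:davis}: read against the stated target, it is circular.

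Nor could an argument of the kind you outline supply the missing proof. The entrywise bound $\abs{M(i,j)}\le 1$ controls only the $\mathcal{C}_2$ Schur-multiplier norm; pointwise-contractive kernels do not in general admit dimension-free Schur-multiplier bounds on $\mathcal{C}_4$. Davies's theorem crucially exploits the specific Cauchy structure of $\frac{a-b}{a+b}$ and is a nontrivial result in the harmonic analysis of Schatten classes; nothing of that sort appears in your write-up, so there is a fundamental gap between what you proved and what was asked. For completeness: as a proof of Lemma~\ref{lem:aiaj}, your argument is correct and coincides with the paper's own, step for step --- same sign-based $2\times 2$ block decomposition, same observation that the diagonal blocks of $M$ are constantly $\pm1$, same pinching bound for the block-diagonal part, and same reduction of the off-diagonal block $M_2\circ A_2$ to a Cauchy kernel in the $\abs{a_i}$'s handled by Fact~\ref{fac:davis}.
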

\begin{lemma}\label{lem:dlq3}
Let $P$ and $Q$ be Hermitian matrices where $P$ is invertible. It holds that
\[\twonorm{\ell_Q\br{P}}\leq\twonorm{Q},\]
and
\[\norm{\ell_Q\br{P}}_4\leq c\norm{Q}_4,\]
for some absolute constant $c\geq 1$.
\end{lemma}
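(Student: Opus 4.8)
The plan is to reduce to the case where $P$ is diagonal and then quote Lemma~\ref{lem:derivativeh} together with Lemma~\ref{lem:aiaj}. First I would observe that $\ell_Q\br{\cdot}$ transforms covariantly under unitary conjugation: if $P=UDU^{\dagger}$ is a spectral decomposition, then conjugating the defining identity $\abs{P}\ell_Q\br{P}+\ell_Q\br{P}\abs{P}=PQ+QP$ by $U^{\dagger}$ and using $U^{\dagger}\abs{P}U=\abs{D}$ gives $\abs{D}\br{U^{\dagger}\ell_Q\br{P}U}+\br{U^{\dagger}\ell_Q\br{P}U}\abs{D}=D\br{U^{\dagger}QU}+\br{U^{\dagger}QU}D$, so that $U^{\dagger}\ell_Q\br{P}U=\ell_{U^{\dagger}QU}\br{D}$ by uniqueness of the solution. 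Since $P$ is invertible, $\abs{D}$ is positive definite, hence $\abs{D}\otimes\id+\id\otimes\abs{D}$ is invertible and Lemma~\ref{lem:lyapunovsol} guarantees that all objects above are well defined. Because $\twonorm{\cdot}$ and $\norm{\cdot}_4$ are unitarily invariant, with $\twonorm{U^{\dagger}QU}=\twonorm{Q}$ and $\norm{U^{\dagger}QU}_4=\norm{Q}_4$, it suffices to prove the two inequalities when $P=D=\textsf{Diag}\br{a_1,\ldots,a_d}$ with $a_i\neq 0$ for all $i$.

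In the diagonal case, Eq.~\eqref{eqn:hb} of Lemma~\ref{lem:derivativeh} gives $\br{\ell_Q\br{P}}_{i,j}=\frac{a_i+a_j}{\abs{a_i}+\abs{a_j}}\,Q_{i,j}$, i.e. $\ell_Q\br{P}=M\circ Q$, where $M$ is exactly the matrix $M_{i,j}\defeq\frac{a_i+a_j}{\abs{a_i}+\abs{a_j}}$ appearing in Lemma~\ref{lem:aiaj} (well defined since $\abs{a_i}+\abs{a_j}>0$ as $P$ is invertible). Since $Q$ is Hermitian and $M$ is real symmetric, $M\circ Q$ is Hermitian, so Lemma~\ref{lem:aiaj} applies directly and yields $\twonorm{\ell_Q\br{P}}=\twonorm{M\circ Q}\leq\twonorm{Q}$ and $\norm{\ell_Q\br{P}}_4=\norm{M\circ Q}_4\leq c\norm{Q}_4$ for the absolute constant $c\geq 1$ of Lemma~\ref{lem:aiaj}. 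Combined with the reduction of the previous paragraph, this proves the lemma.

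The bulk of the technical content has already been absorbed into Lemma~\ref{lem:aiaj}, whose $4$-norm bound rests on the Davies Hadamard-multiplier estimate (Fact~\ref{fac:davis}); the only genuinely new point here is the reduction to the diagonal case, which is routine. The one place to be mildly careful is to confirm that Eq.~\eqref{eqn:hb} is indeed the correct closed form for $\ell_Q\br{P}$ when $P$ is diagonal — this is immediate from reading $\abs{P}\ell_Q\br{P}+\ell_Q\br{P}\abs{P}=PQ+QP$ entrywise — and to note that invertibility of $P$ is precisely what rules out a vanishing denominator $\abs{a_i}+\abs{a_j}$, so no division by zero occurs.
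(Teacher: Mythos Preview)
Your proof is correct and follows essentially the same route as the paper: reduce to diagonal $P$, identify $\ell_Q(P)=M\circ Q$ via Eq.~\eqref{eqn:hb}, and invoke Lemma~\ref{lem:aiaj}. The only difference is that you spell out the unitary-covariance reduction to diagonal $P$ in detail, whereas the paper simply declares ``without loss of generality''.
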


\begin{proof}
We assume that $P=\mathsf{Diag}\br{a_1,\ldots, a_d}$ is a diagonal matrix without loss of generality. Define $M$ be a $d\times d$ matrix defined to be $M_{ij}\defeq\frac{a_i+a_j}{\abs{a_i}+\abs{a_j}}$ Then by Eq.~\eqref{eqn:hb} and Lemma~\ref{lem:aiaj},
\begin{eqnarray*}
  &&\twonorm{\ell_Q\br{P}} =\twonorm{Q\circ M}\leq\twonorm{Q};\\
  &&\norm{\ell_Q\br{P}}_4=\norm{Q\circ M}_4\leq c\norm{Q}_4
\end{eqnarray*}
\end{proof}

\begin{lemma}\label{lem:trlkappa}
  For any Hermitian matrices $P$ and $Q$, it holds that
  \begin{enumerate}
    \item $\Tr~\kappa_Q\br{P}=2\Tr~\abs{P}Q$.
    \item $\Tr~P\kappa_Q\br{P}=2\Tr~\abs{P}PQ$.
  \end{enumerate}
\end{lemma}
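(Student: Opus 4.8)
The plan is to reduce to the case of a diagonal $P$ and then read both identities straight off the entrywise formula for $\ell_Q\br{P}$ recorded in Lemma~\ref{lem:derivativeh}. Both sides of each identity are invariant under the simultaneous conjugation $\br{P,Q}\mapsto\br{UPU^{\dagger},UQU^{\dagger}}$ by a unitary $U$: the right-hand sides by cyclicity of the trace, and the left-hand sides because $\kappa_Q\br{P}=\anticommutator{P}{\ell_Q\br{P}}$ while $\ell_Q\br{P}=L\br{\abs{P},PQ+QP}$ transforms covariantly (conjugate the defining Lyapunov equation $\abs{P}X+X\abs{P}=PQ+QP$ by $U$). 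Hence I may assume $P=\mathsf{Diag}\br{a_1,\ldots,a_d}$, and, to begin with, that $P$ is invertible, i.e. every $a_i\neq0$.

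In that case Eq.~\eqref{eqn:hb} gives the diagonal entries $\br{\ell_Q\br{P}}_{ii}=\frac{2a_i}{2\abs{a_i}}Q_{ii}=\frac{a_i}{\abs{a_i}}\,Q_{ii}$. For the first identity, $\Tr~\kappa_Q\br{P}=\Tr\br{P\ell_Q\br{P}+\ell_Q\br{P}P}=2\Tr~P\ell_Q\br{P}=2\sum_i a_i\br{\ell_Q\br{P}}_{ii}=2\sum_i\abs{a_i}Q_{ii}=2\Tr~\abs{P}Q$. For the second, $\Tr~P\kappa_Q\br{P}=\Tr\br{P^2\ell_Q\br{P}+P\ell_Q\br{P}P}=2\Tr~P^2\ell_Q\br{P}=2\sum_i a_i^2\br{\ell_Q\br{P}}_{ii}=2\sum_i a_i\abs{a_i}Q_{ii}=2\Tr~\abs{P}PQ$, using in both lines cyclicity of the trace together with $a_i^2/\abs{a_i}=\abs{a_i}$ and $a_i^3/\abs{a_i}=a_i\abs{a_i}$.

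To drop the invertibility assumption I would apply the case just proved to the perturbed matrices $P_\epsilon\defeq P+\epsilon\,\mathsf{Diag}\br{s_1,\ldots,s_d}$, with $s_i=1$ if $a_i\geq0$ and $s_i=-1$ otherwise so that $\abs{P_\epsilon}>0$, and let $\epsilon\downarrow0$: every one of the four trace quantities involved is a sum over $i$ of terms carrying a factor $a_i+\epsilon s_i$ or $\abs{a_i+\epsilon s_i}$, so it varies continuously with $\epsilon$ and each index with $a_i=0$ contributes $0$ in the limit; equivalently, one may simply discard from the start the indices with $a_i=0$, since they contribute nothing to $\Tr~\kappa_Q\br{P}$, $\Tr~P\kappa_Q\br{P}$, $\Tr~\abs{P}Q$ or $\Tr~\abs{P}PQ$. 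Since each computation is a single line of trace algebra, I do not anticipate a real obstacle here; the only points demanding a little care are justifying the reduction to diagonal $P$ and handling the degenerate entries of $\ell_Q\br{P}$, both dispatched above. (A basis-free variant is also available for the second identity: multiplying the Lyapunov relation $\abs{P}\ell_Q\br{P}+\ell_Q\br{P}\abs{P}=PQ+QP$ on the left by $\abs{P}$, taking the trace, and using $\abs{P}^2=P^2$ together with $\abs{P}P=P\abs{P}$ gives $2\Tr~P^2\ell_Q\br{P}=2\Tr~\abs{P}PQ$ at once.)
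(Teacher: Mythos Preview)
Your proof is correct and follows essentially the same approach as the paper: diagonalize $P$, read off the diagonal entries of $\ell_Q\br{P}$ from Eq.~\eqref{eqn:hb}, and compute both traces directly. You are in fact more careful than the paper in justifying the unitary reduction and in treating the non-invertible case (the paper simply assumes $P$ diagonal and implicitly invertible without comment); your parenthetical basis-free argument for item~2 via the Lyapunov relation is a nice bonus not in the paper.
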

\begin{proof}
    Without loss of generality, we assume that $P=\mathsf{Diag}\br{a_1,\ldots, a_d}$ be a diagonal matrix. Then  $\kappa_Q\br{P}_{i,j}=\frac{Q_{ij}\br{a_i+a_j}}{\abs{a_i}+\abs{a_j}}$ from the definition of $\kappa_Q$. Thus have
  \[\Tr~\kappa_Q\br{P}=2\Tr~P\ell_Q\br{P}=2\sum_i\abs{a_i}Q_{ii}=2\Tr~\abs{P}Q.\]
  For the second equality, consider
  \[\Tr~P\kappa_Q\br{P}=2\Tr~P^2\ell_Q\br{P}=2\sum_i\abs{a_i}a_iQ_{ii}=2\Tr~\abs{P}PQ.\]
\end{proof}

Before proving Lemma~\ref{lem:zetataylor}, we need to compute the the first three orders of Fr\'echet derivatives of the function
\begin{equation}\label{eqn:qfunction}
  q\br{x}=\begin{cases}
	x^3~&\mbox{if $x\geq 0$}\\
	0~&\mbox{otherwise}.
	\end{cases}
\end{equation}

\begin{lemma}\label{lem:dq}
	Given an integer $d>0$ and $P,Q\in\H_d$, let $f(t)=\Tr~q\br{P+tQ}$. Then $f', f''$ exist on $\reals$ and $f'''$ exists except for finite points.
	
	Moreover, it holds that
	\begin{eqnarray}
		&&f'\br{0}=\Tr~\br{Qp\br{P}+P^2Q
		+P\abs{P}Q}; \label{eqn:dzetaprime}\\
&&f''\br{0}=\Tr~\br{4PQ^2+\frac{3}{2}\abs{P}Q^2+\frac{3}{4}Q\kappa_Q\br{P}};\label{eqn:dzetaprimetwo}\\
\end{eqnarray}
If $P$ is invertible, then
\begin{eqnarray}
		&&f'''\br{0}=\Tr\br{4Q^3+3Q^2\ell_Q\br{P}+\frac{3}{4}Q\anticommutator{P}{D\ell_Q\br{P}\br{Q}}}.\label{eqn:dzetaprimze3}
	\end{eqnarray}
\end{lemma}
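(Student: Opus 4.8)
The plan is to reduce the Fr\'echet derivatives of $q$ to those of a cubic polynomial and of the model functions $\abs{x}$ and $x\abs{x}$, whose first and second derivatives are already recorded in Lemma~\ref{lem:derivative} and Lemma~\ref{lem:derivativeh}. Indeed $q(x)=x^3\cdot\mathds{1}_{x\ge 0}=\tfrac12\br{x^3+x^2\abs{x}}=x\,p(x)$ with $p$ as in Lemma~\ref{lem:derivative}. On the cubic summand the trace of every Fr\'echet derivative is an elementary sum of products of $P$'s and $Q$'s; on the remaining summand one applies the product rule of Fact~\ref{fac:frechetderivative} together with $D\br{\abs{\cdot}}(P)(Q)=\ell_Q\br{P}$, $D\br{x\abs{x}}(P)(Q)=\tfrac12\br{\anticommutator{\abs{P}}{Q}+\kappa_Q\br{P}}$, and the identity $D\ell_Q\br{P}(Q)=L\br{\abs{P},2Q^2-2\ell_Q\br{P}^2}$ from Eq.~\eqref{eqn:dh}.

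First I would settle the regularity. The function $q$ is $\mathcal{C}^2$ on $\reals$, is $\mathcal{C}^3$ on $(-\infty,0)\cup(0,\infty)$, and $\abs{q'''}$ is bounded on every bounded interval. Hence, by Fact~\ref{fac:sendov}, $\Tr~D^k q(P)(Q,\ldots,Q)$ exists for $k\le 2$ for all Hermitian $P,Q$, and $\Tr~D^3 q(P)(Q,Q,Q)$ exists whenever $0\notin\mathrm{spec}(P)$, i.e.\ whenever $P$ is invertible. To identify $f^{(k)}(0)$ with these traces and to justify differentiating under the trace, I would approximate $q$ by $\mathcal{C}^\infty$ functions $q_\epsilon$ with $q_\epsilon^{(k)}\to q^{(k)}$ locally uniformly (for $k\le 2$ everywhere, for $k=3$ away from $0$), apply the smooth case, and pass to the limit using Lemma~\ref{lem:limitderivative}; the claim that $f'''$ exists off a finite set follows because the exceptional set is contained in the zero set of the polynomial $t\mapsto\det(P+tQ)$, which is finite when $P$ is invertible, while on its complement $P+tQ$ is invertible and Fact~\ref{fac:sendov} and Lemma~\ref{lem:taylor} apply.

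Then come the three computations. For $f'(0)=\Tr~Dq(P)(Q)$, expanding by the product rule and inserting Lemma~\ref{lem:derivative} produces a sum of traces; using $\Tr~P\anticommutator{\abs{P}}{Q}=2\Tr~\abs{P}PQ$ and $\Tr~P\kappa_Q\br{P}=2\Tr~\abs{P}PQ$ (Lemma~\ref{lem:trlkappa}), together with $P\abs{P}=\abs{P}P$ and cyclicity of the trace, collapses it to Eq.~\eqref{eqn:dzetaprime}. For $f''(0)=\Tr~D^2 q(P)(Q,Q)$, I would differentiate the expression for $f'(t)$ once more; this needs $D\br{\abs{P}}(Q)=\ell_Q\br{P}$ and, via the product rule applied to $\kappa_Q\br{P}=\anticommutator{P}{\ell_Q\br{P}}$, the identity $D\kappa_Q\br{P}(Q)=\anticommutator{Q}{\ell_Q\br{P}}+\anticommutator{P}{D\ell_Q\br{P}(Q)}$. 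The residual traces $\Tr~P\anticommutator{\ell_Q\br{P}}{Q}$, $\Tr~P^2 D\ell_Q\br{P}(Q)$, and the like are then evaluated by diagonalising $P$ and substituting the entrywise formulas \eqref{eqn:hb}--\eqref{eqn:Ds} of Lemma~\ref{lem:derivativeh}; collecting the terms gives Eq.~\eqref{eqn:dzetaprimetwo}. For $f'''(0)$, with $P$ invertible, one differentiates the $f''(t)$ expression a third time: the cubic part yields the $\Tr~Q^3$ contribution, and $\Tr~Q\kappa_Q\br{P}$ yields $\Tr~Q\,D\kappa_Q\br{P}(Q)=2\Tr~Q^2\ell_Q\br{P}+\Tr~Q\anticommutator{P}{D\ell_Q\br{P}(Q)}$, whence Eq.~\eqref{eqn:dzetaprimze3}.

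The main obstacle is the non-commutative bookkeeping at the $f''$ and $f'''$ steps: the naive product-rule expansions generate many anticommutator terms, and collapsing them to the short closed forms stated relies on repeated use of trace cyclicity, the commutation $P\abs{P}=\abs{P}P$, the identities of Lemma~\ref{lem:trlkappa}, and---most delicately---the explicit diagonal description of $\ell_Q$ and $D\ell_Q$ in Lemma~\ref{lem:derivativeh}, whose Hadamard weight matrices do not commute with $P$, so the simplification cannot be carried out purely symbolically. A secondary point requiring care is the passage through the kink of $q$ at $0$: the third-derivative statement is asserted only off a finite exceptional set and its value is computed only when $P$ is invertible, so the mollification-and-limit argument must be arranged so that the exceptional $t$ are exactly those at which an eigenvalue of $P+tQ$ meets $0$.
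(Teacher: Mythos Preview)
Your proposal is correct and follows essentially the same route as the paper: write $q(x)=x\cdot p(x)$, invoke Fact~\ref{fac:sendov} for existence, then iterate the product rule with the derivative formulas of Lemma~\ref{lem:derivative} and Lemma~\ref{lem:derivativeh} and collapse using the trace identities of Lemma~\ref{lem:trlkappa}. The only tactical difference is that the paper, after simplifying $f'$ to the $\kappa_Q$-free form \eqref{eqn:dzetaprime}, handles the $\Tr\,P\abs{P}Q$ term in the $f''$ step by averaging the two product-rule orderings (the ``by symmetry'' line) rather than by your diagonalisation against \eqref{eqn:hb}--\eqref{eqn:Ds}; this avoids the $D\ell_Q$ terms at second order entirely and keeps the bookkeeping lighter.
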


\begin{proof}
	Note that $q',q''$ exist in $\reals$ and $q'''$ exists on $\reals\setminus\set{0}$. Thus $f'$ and $f''$ exist followed from Fact~\ref{fac:sendov}. If $P$ is invertible, $P+tQ$ has eigenvalue $0$ only for finite choices of $t$. Again applying Fact~\ref{fac:sendov}, $f'''\br{t}$ exists except for finite $t$'s.
	Note that $q\br{x}=xp\br{x}$, where $p\br{\cdot}$ is defined in Lemma~\ref{lem:derivative} item 4.
	\begin{eqnarray}
	&&\Tr~Dq\br{P}\br{Q}=\Tr~Qp\br{P}+\Tr~P^2Q
	+\frac{1}{2}P\abs{P}Q+\frac{1}{4}\Tr~P\kappa_Q\br{P}\nonumber\\
	&=&\Tr~Qp\br{P}+\Tr~P^2Q
	+\Tr~P\abs{P}Q\nonumber\\
	&\defeq&g_{1,Q}\br{P}+g_{2,Q}\br{P}+g_{3,Q}\br{P},\label{eqn:dq}
	\end{eqnarray}
	where the second equality is from Lemma~\ref{lem:trlkappa}.

	Further taking derivates of $g_{1,Q}$, $g_{2,Q}$, and  $g_{3,Q}$ we have	
	 \begin{eqnarray}
	&&Dg_{1,Q}\br{P}\br{Q}=\Tr\br{\frac{1}{2}Q\anticommutator{P}{Q}+\frac{1}{4}Q\anticommutator{\abs{P}}{Q}+\frac{1}{4}Q\kappa_Q\br{P}}\nonumber\\
	&=&\Tr~PQ^2+\frac{1}{2}\Tr~\abs{P}Q^2+\frac{1}{4}\Tr~Q\kappa_Q\br{P}.\label{eqn:dg1b}
	\end{eqnarray}

	\begin{eqnarray}
	&&Dg_{2,Q}\br{P}\br{Q}=2\Tr~PQ	^2;\label{eqn:dg2b}\\
	&&Dg_{3,Q}\br{P}\br{Q}=\Tr~\br{\abs{P}Q^2+QP\ell_Q\br{P}}.\nonumber
	\end{eqnarray}
	By symmetry,
	\[Dg_{3,Q}\br{P}\br{Q}=\Tr~\br{\abs{P}Q^2+Q\ell_Q\br{P}P}.\]
	Thus
	\begin{equation}\label{eqn:dg3b}
		Dg_{3,Q}\br{P}\br{Q}=\Tr~\br{\abs{P}Q^2+\frac{1}{2}Q\kappa_Q\br{P}}.
	\end{equation}
	
		Combining Eqs.~\eqref{eqn:dg1b}\eqref{eqn:dg2b}\eqref{eqn:dg3b} we conclude
		\begin{eqnarray}
			&&f''\br{t}=\Tr~\br{4PQ^2+\frac{3}{2}\abs{P}Q^2+\frac{3}{4}Q\kappa_Q\br{P}}\nonumber\\
			&\defeq&g_{4,Q}\br{P}+g_{5,Q}\br{P}+g_{6,Q}\br{P}. \label{eqn:g789}
		\end{eqnarray}
		Further taking the derivative on Eq.~\eqref{eqn:dg1b},
			\begin{equation}\label{eqn:a}
		Dg_{4,Q}\br{P}\br{Q}=4\Tr~Q^3.
		\end{equation}
		
		Applying Lemma~\ref{lem:derivative} item 2,
		
		\begin{equation}\label{eqn:b}
		Dg_{5,Q}\br{P}\br{Q}=\frac{3}{2}\Tr~\ell_Q\br{P}Q^2.
		\end{equation}
		
	From Definition~\ref{def:kappa}, 	
\begin{equation*}
  D\kappa_Q\br{P}\br{Q}=\anticommutator{Q}{\ell_Q\br{P}}+\anticommutator{P}{D\ell_Q\br{P}\br{Q}}
\end{equation*}
Thus
	\begin{equation} Dg_{6,Q}\br{P}\br{Q}=\Tr~\br{\frac{3}{2}Q^2\ell_Q\br{P}+\frac{3}{4}Q\anticommutator{P}{D\ell_Q\br{P}\br{Q}}}\label{eqn:dg4b}.
	\end{equation}

Combining Eqs.~\eqref{eqn:a}\eqref{eqn:b}\eqref{eqn:dg4b}, we conclude Eq.~\eqref{eqn:dzetaprimze3}.
	
\end{proof}

\begin{lemma}\label{lem:zetataylorbound}
Given an integer $d>0$ and $P,Q\in\H_d$ and $P$ is invertible, let $f(t)=\Tr~q\br{P+tQ}$. It holds that
\[\abs{f'''\br{0}}= c\twonorm{Q}\norm{Q}_4^2,\]
for some absolute constant $c$.
\end{lemma}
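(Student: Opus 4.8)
The plan is to invoke the closed-form expression for $f'''(0)$ from Lemma~\ref{lem:dq},
\[f'''(0) = \Tr\left(4Q^3 + 3Q^2\ell_Q(P) + \frac{3}{4}Q\{P, D\ell_Q(P)(Q)\}\right),\]
which is valid since $P$ is invertible (so $0$ is not among the finitely many exceptional points of that lemma), and to bound each of the three summands by $O(\|Q\|_2\|Q\|_4^2)$. For the first, Hölder's inequality for Schatten norms with exponents $(2,4,4)$ gives $|\Tr Q^3| \le \|Q\|_2\|Q\|_4^2$. For the second, I would write $\Tr(Q^2\ell_Q(P)) = \Tr(Q\cdot Q\cdot\ell_Q(P))$ and apply Hölder with exponents $(4,4,2)$ together with $\|\ell_Q(P)\|_2 \le \|Q\|_2$ from Lemma~\ref{lem:dlq3}, obtaining $|\Tr(Q^2\ell_Q(P))| \le \|Q\|_4^2\|Q\|_2$.

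The third summand is the crux. First I would rewrite $D\ell_Q(P)(Q) = L(|P|,\,2Q^2 - 2\ell_Q(P)^2)$ using Eq.~\eqref{eqn:dh}. Passing to an orthonormal eigenbasis of $P$ (legitimate because $P$ is invertible, so every eigenvalue $a_i\neq 0$ and $|a_i|+|a_j|>0$), both $P$ and $|P|$ act diagonally; the defining relation of the Lyapunov solution then shows $(D\ell_Q(P)(Q))_{ij} = (2Q^2-2\ell_Q(P)^2)_{ij}/(|a_i|+|a_j|)$, and therefore
\[\{P, D\ell_Q(P)(Q)\} = M\circ\big(2Q^2 - 2\ell_Q(P)^2\big), \qquad M_{ij} := \frac{a_i+a_j}{|a_i|+|a_j|},\]
which is exactly the Schur multiplier of Lemma~\ref{lem:aiaj}, satisfying $|M_{ij}|\le 1$. (One checks via uniqueness of the Lyapunov solution, as in Definition~\ref{def:hfunction}, that $\ell_Q(P)$ is Hermitian, so $Q^2 - \ell_Q(P)^2$ is Hermitian and Lemma~\ref{lem:aiaj} applies.) Then Cauchy--Schwarz for the Hilbert--Schmidt inner product and the contractivity $\|M\circ A\|_2\le\|A\|_2$ from Lemma~\ref{lem:aiaj} give
\[\big|\Tr\!\big(Q\{P, D\ell_Q(P)(Q)\}\big)\big| \le 2\|Q\|_2\big(\|M\circ Q^2\|_2 + \|M\circ \ell_Q(P)^2\|_2\big) \le 2\|Q\|_2\big(\|Q\|_4^2 + \|\ell_Q(P)\|_4^2\big),\]
and $\|\ell_Q(P)\|_4 \le c\|Q\|_4$ from Lemma~\ref{lem:dlq3} finishes it. Combining the three estimates yields $|f'''(0)| \le C\|Q\|_2\|Q\|_4^2$ for an absolute constant $C$.

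The hard part is precisely this third term: a naive triangle inequality on matrix entries would lead to $\Tr(\mathsf{abs}(Q)^3)$, which cannot be bounded by $\|Q\|_2\|Q\|_4^2$ because passing to entrywise absolute values can inflate Schatten norms arbitrarily. The essential trick is the exact cancellation recorded in Eq.~\eqref{eqn:dh}, which converts $\{P, D\ell_Q(P)(Q)\}$ --- an expression that a priori carries the possibly-large operator $P$ --- into a $1$-bounded Schur multiplier applied to $Q^2 - \ell_Q(P)^2$; after that, the Davies-type Schatten bounds of Lemma~\ref{lem:aiaj} and Lemma~\ref{lem:dlq3} close the argument with no dependence on $\|P\|$.
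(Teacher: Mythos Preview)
Your proof is correct and follows essentially the same route as the paper's: you start from the closed-form expression of Lemma~\ref{lem:dq}, bound the first two terms by H\"older/Cauchy--Schwarz plus Lemma~\ref{lem:dlq3}, and for the third term exploit that $\{P, D\ell_Q(P)(Q)\}$ is the bounded Schur multiplier $M$ applied to $2(Q^2-\ell_Q(P)^2)$. The only cosmetic difference is that the paper writes this identity out in coordinates via Eq.~\eqref{eqn:Ds} and recognizes the two resulting pieces as $\Tr(\ell_Q(P)Q^2)$ and $\Tr(\ell_Q(P)^3)$ before invoking Lemma~\ref{lem:dlq3}, whereas you stay at the Schur-multiplier level and use the Frobenius contractivity $\|M\circ A\|_2\le\|A\|_2$ directly; both arrive at the same bound.
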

\begin{proof}
  We upper bound each term in Eq.~\eqref{eqn:dzetaprimze3}.
  For the first term, consider
  \begin{equation}\label{eqn:q3}
    \abs{4\Tr~Q^3}\leq4\twonorm{Q}\norm{Q^2}_2=4\twonorm{Q}\norm{Q}_4^2.
  \end{equation}
  For the second term,
  \begin{eqnarray}
   &&\abs{3\Tr~Q^2\ell_Q\br{P}}\leq3\norm{Q}_4^2\twonorm{\ell_Q\br{P}}\leq3\norm{Q}_4^2\twonorm{Q},\label{eqn:qlq}
  \end{eqnarray}
  where the second inequality is from Lemma~\ref{lem:dlq3}.
  For the final term, assuming that $P=\mathsf{Diag}\br{a_1,\ldots, a_d}$ is a diagonal matrix and applying Lemma~\ref{lem:derivativeh}, we have
  \begin{eqnarray}
  &&\abs{\frac{3}{4}\Tr Q\anticommutator{P}{D\ell_Q\br{P}\br{Q}}}\nonumber\\
  &=&\frac{3}{4}\abs{\sum_{ijk}Q_{ij}Q_{jk}Q_{ki}\br{\frac{a_i+a_j}{\abs{a_i}+\abs{a_j}}-\frac{\br{a_i+a_j}\br{a_j+a_k}\br{a_k+a_i}}{\br{\abs{a_i}+\abs{a_j}}\br{\abs{a_j}+\abs{a_k}}\br{\abs{a_k}+\abs{a_i}}}}}\nonumber\\
  &\leq&\frac{3}{4}\abs{\sum_{ijk}Q_{ij}Q_{jk}Q_{ki}\frac{a_i+a_j}{\abs{a_i}+\abs{a_j}}}+\frac{3}{4}\abs{\sum_{ijk}Q_{ij}Q_{jk}Q_{ki}\frac{\br{a_i+a_j}\br{a_j+a_k}\br{a_k+a_i}}{\br{\abs{a_i}+\abs{a_j}}\br{\abs{a_j}+\abs{a_k}}\br{\abs{a_k}+\abs{a_i}}}}\nonumber\\
  &=&\frac{3}{4}\abs{\Tr~\br{\ell_Q\br{P}Q^2}}+\frac{3}{4}\abs{\Tr~\ell_Q\br{P}^3}\nonumber\quad\quad\mbox{(Eq.~\eqref{eqn:hb})}\\
  &\leq&\frac{3}{4}\twonorm{\ell_Q\br{P}}\norm{Q}_4^2+\frac{3}{4}\twonorm{\ell_Q\br{P}}\norm{\ell_Q\br{P}}_4^2\nonumber\\
  &\leq&c\twonorm{Q}\norm{Q}_4^2\quad\quad\mbox{(Lemma~\ref{lem:dlq3})}.\label{eqn:qdpl}
  \end{eqnarray}
  Combining Eqs.~\eqref{eqn:q3}\eqref{eqn:qlq}\eqref{eqn:qdpl}, the result follows.

\end{proof}

It is now ready to prove Lemma~\ref{lem:zetataylor}.
\begin{proof}[Proof of Lemma~\ref{lem:zetataylor}]
 We assume $P$ is invertible. The general case follows by the continuity. Then $P+tQ$ is invertible except for the finite $t$'s.

From the definition of $\zeta_{\lambda}$, we have
 \begin{equation}\label{eqn:zetap}
 \zeta_{\lambda}\br{x}=x^2+\frac{\lambda^2}{3}-\frac{q\br{\lambda+x}}{6\lambda}+\frac{q\br{x-\lambda}}{6\lambda}+\frac{q\br{x-1+\lambda}}{6\lambda}-\frac{q\br{x-1-\lambda}}{6\lambda}.
 \end{equation}
Note that $q\br{\cdot}$ is the 1st order and 2nd order differentiable. $q'''\br{\cdot}$ exists except for finite points. Thus from Lemma~\ref{lem:taylor} and Fact~\ref{fac:sendov}, it suffices to upper bound $\Tr~D^3\zeta_{\lambda}\br{P}\br{Q}$, which is directly implied by Lemma~\ref{lem:zetataylorbound}.
\end{proof}

\begin{proof}[Proof of Lemma~\ref{lem:zetaadditivity}]
  Note that $\zeta\br{x}=p\br{x-1}+p\br{-x}$. Then from Item 4 of Lemma~\ref{lem:derivative}
  \begin{eqnarray*}
    &&\Tr~D\zeta\br{P}\br{Q}=\Tr\br{2P-I}Q+\frac{1}{2}\Tr~\br{\abs{P-I}-\abs{P}}Q+\frac{1}{4}\br{\kappa_Q\br
  {P-I}-\kappa_Q\br{-P}} \\
   &=&\Tr\br{2P-I}Q+\Tr~\br{\abs{P-I}-\abs{P}}Q,
  \end{eqnarray*}
  where the second equality is from Lemma~\ref{lem:trlkappa}.

  Assuming that $P=\mathrm{Diag}\br{a_1,\ldots,a_d}$ is a diagonal matrix, we have
  \[\abs{\Tr~D\zeta\br{P}\br{Q}}=\abs{\sum_i\br{2a_i-1+\abs{a_i-1}-\abs{a_i}}Q_{ii}}\leq4\sum_i\abs{a_iQ_{ii}}.\]
  Thus for any $P,Q\in\H_d$, there exists a unitary $U$ such that
  \[\abs{\Tr~D\zeta\br{P}\br{Q}}\leq4\Tr\abs{UPU^{\dagger}}\abs{UQU^{\dagger}}.\]
  Then by the mean value theorem,
  \[\abs{\Tr\br{\zeta\br{P+Q}-\zeta\br{P}}}=\br{Tr~D\zeta\br{P+\theta Q}\br{Q}}\leq4\Tr~\abs{U\br{P+\theta Q}U^{\dagger}}\abs{UQU^{\dagger}},\]
  for some $\theta\in[0,1]$ and unitary $U$.
  Moreover,
  \begin{eqnarray*}
  &&\Tr~\abs{U\br{P+\theta Q}U^{\dagger}}\abs{UQU^{\dagger}}\\
  &\leq&\twonorm{U\br{P+\theta Q}U^{\dagger}}\cdot\twonorm{UQU^{\dagger}}\\
  &=&\twonorm{P+\theta Q}\twonorm{Q}\\
  &\leq&\twonorm{P}\twonorm{Q}+\twonorm{Q}^2.
  \end{eqnarray*}
\end{proof}

\section{Proofs in Section~\ref{sec:invariance}}\label{sec:appinvariance}

\begin{proof}[Proof of Claim~\ref{claim:bc}]
		A crucial observation is that
	\begin{align}
	&\mathbf{A}=\id_{2^i}\otimes\id_2\otimes\widetilde{\mathbf{A}}\label{eqn:A}\\
	&\mathbf{B}=\id_{2^i}\otimes\widetilde{\mathbf{B}}\label{eqn:B}\\
	&\mathbf{C}=\id_{2^i}\otimes\widetilde{\mathbf{C}}.\label{eqn:C}
	\end{align}
	The both equalities can be proved by expanding the both sides. For the first equality,
	\begin{eqnarray*}
		&&\expec{}{\Tr~\mathbf{B}f\br{\mathbf{A}}}\\
		&=&2^i\expec{}{\Tr~\widetilde{\mathbf{B}}\br{\id_2\otimes f\br{\widetilde{\mathbf{A}}}}}\\
		&=&2^i\sum_{\sigma\in[4]^n}\widehat{M}\br{\sigma}\expec{}{\Tr~\br{\prod_{j=1}^{i}\mathbf{g}_{j,\sigma_j}}\P_{\sigma_{> i}}\br{\id_2\otimes f\br{\widetilde{\mathbf{A}}}}}\\
		&=&2^{i+1}\sum_{\sigma\in[4]^n:\sigma_{i+1}=0}\widehat{M}\br{\sigma}\expec{}{\Tr~\br{\prod_{j=1}^{i}\mathbf{g}_{j,\sigma_j}}\P_{\sigma_{> i+1}}f\br{\widetilde{\mathbf{A}}}}.
	\end{eqnarray*}
	
	And
	\begin{eqnarray*}
		&&\expec{}{\Tr~\mathbf{C}f\br{\mathbf{A}}}\\
		&=&2^i\expec{}{\Tr~\widetilde{\mathbf{C}}\br{\id_2\otimes f\br{\widetilde{\mathbf{A}}}}}\\
		&=&2^i\sum_{\sigma\in[4]^n}\widehat{M}\br{\sigma}\expec{}{\Tr~\br{\id_2\otimes\br{\prod_{j=1}^{i+1}\mathbf{g}_{j,\sigma_j}}\P_{\sigma_{> i+1}}}\br{\id_2\otimes f\br{\widetilde{\mathbf{A}}}}}\\
		&=&2^{i+1}\sum_{\sigma\in[4]^n:\sigma_{i+1}=0}\widehat{M}\br{\sigma}\expec{}{\Tr~\br{\prod_{j=1}^{i}\mathbf{g}_{j,\sigma_j}}\P_{\sigma_{> i+1}}f\br{\widetilde{\mathbf{A}}}}.
	\end{eqnarray*}
	For the second equality,
	\begin{eqnarray*}
		&&\expec{}{\Tr~\mathbf{B}f\br{\mathbf{A}}\mathbf{B}g\br{\mathbf{A}}}\\
		&=&2^i\expec{}{\Tr~\widetilde{\mathbf{B}}\br{\id_2\otimes f\br{\widetilde{\mathbf{A}}}}\widetilde{\mathbf{B}}\br{\id_2\otimes g\br{\widetilde{\mathbf{A}}}}}\\
		&=&2^i\sum_{\sigma,\tau\in[4]^n}\widehat{M}\br{\sigma}\widehat{M}\br{\tau}\expec{}{\Tr~\br{\prod_{j=1}^{i}\mathbf{g}_{j,\sigma_j}\mathbf{g}_{j,\tau_j}}\P_{\sigma_{> i}}\br{\id_2\otimes f\br{\widetilde{\mathbf{A}}}}P_{\tau_{> i}}\br{\id_2\otimes g\br{\widetilde{\mathbf{A}}}}}\\
		&=&2^{i+1}\sum_{\sigma,\tau\in[4]^n:\sigma_{i+1}=\tau_{i+1}}\widehat{M}\br{\sigma}\widehat{M}\br{\tau}\expec{}{\Tr~\br{\prod_{j=1}^{i}\mathbf{g}_{j,\sigma_j}\mathbf{g}_{j,\tau_j}}\P_{\sigma_{> i+1}} f\br{\widetilde{\mathbf{A}}}P_{\tau_{> i+1}} g\br{\widetilde{\mathbf{A}}}}.
	\end{eqnarray*}
	And
	\begin{eqnarray*}
		&&\expec{}{\Tr~\mathbf{C}f\br{\mathbf{A}}\mathbf{C}g\br{\mathbf{A}}}\\
		&=&2^i\expec{}{\Tr~\widetilde{\mathbf{C}}\br{\id_2\otimes f\br{\widetilde{\mathbf{A}}}}\widetilde{\mathbf{C}}\br{\id_2\otimes g\br{\widetilde{\mathbf{A}}}}}\\
		&=&2^i\sum_{\sigma,\tau\in[4]^n}\widehat{M}\br{\sigma}\widehat{M}\br{\tau}\expec{}{\Tr~\br{\br{\prod_{j=1}^{i+1}\mathbf{g}_{j,\sigma_j}\mathbf{g}_{j,\tau_j}}\br{\id_2\otimes \P_{\sigma_{> i+1}}}\br{\id_2\otimes f\br{\widetilde{\mathbf{A}}}}\atop\br{\id_2\otimes \P_{\tau_{> i+1}}}\br{\id_2\otimes g\br{\widetilde{\mathbf{A}}}}}}\\
		&=&2^{i+1}\sum_{\sigma,\tau\in[4]^n:\sigma_{i+1}=\tau_{i+1}}\widehat{M}\br{\sigma}\widehat{M}\br{\tau}\expec{}{\Tr~\br{\prod_{j=1}^{i}\mathbf{g}_{j,\sigma_j}\mathbf{g}_{j,\tau_j}}\P_{\sigma_{> i+1}} f\br{\widetilde{\mathbf{A}}}\P_{\tau_{> i+1}} g\br{\widetilde{\mathbf{A}}}}.
	\end{eqnarray*}
\end{proof}
\begin{proof}[Proof of Claim~\ref{claim:1}]
	
	To prove the first equality, from Eqs.~\eqref{eqn:dq}\eqref{eqn:zetap}, it suffices to show that
	\begin{equation}\label{eqn:tcase1}
	\expec{}{\Tr~t\br{\mathbf{A},\mathbf{B}}}=\expec{}{\Tr~t\br{\mathbf{A},\mathbf{C}}},
	\end{equation}
	for
	\begin{align*}
	t\br{A,B}\in\set{p\br{A}B,A^2B,A\abs{A}B},
	\end{align*}
	which directly follows by Eq.~\eqref{eqn:bc} in Claim~\ref{claim:bc}.
	
	To prove the second equality, from Eqs.~\eqref{eqn:zetap}\eqref{eqn:g789}, we first prove that
	
	\[\expec{}{\Tr~t\br{\mathbf{A},\mathbf{B}}}=\expec{}{\Tr~t\br{\mathbf{A},\mathbf{C}}}\]
	for
	\begin{align*}
	t\br{A,B}\in\set{AB^2,\abs{A}B^2,B\kappa_B\br{A}}.
	\end{align*}
	when $A$ is invertible. Then the second equality in Claim~\ref{claim:1} follows by the continuity of $D^2\zeta_{\lambda}\br{\cdot}$ due to Lemma~\ref{lem:dq} and Fact~\ref{fac:sendov}.
	
	The first two cases directly follow from Eq.~\eqref{eqn:bc2} in Claim~\ref{claim:bc}. To prove the final case, we use Fact~\ref{fac:lysol2},
	\begin{eqnarray*}
	&&\Tr~B\kappa_B\br{A}\\
	&=&\Tr~\br{AB+BA}\int_0^{\infty}e^{-t\abs{A}}\br{AB+BA}e^{-t\abs{A}}dt\\
	&=&2\int_0^{\infty}\Tr~\br{Ae^{-t\abs{A}}BAe^{-t\abs{A}}B+A^2e^{-t\abs{A}}Be^{-t\abs{A}}B}~dt
	\end{eqnarray*}
	 when $A$ is invertible. Then the result follows from Eq.~\eqref{eqn:bc2} in Claim~\ref{claim:bc}.
\end{proof}
\end{document}